\numberwithin{equation}{section}
\theoremstyle{plain}
\newtheorem{thm}{Theorem}[section]
\newtheorem{prp}[thm]{Proposition}
\newtheorem{cor}[thm]{Corollary}
\newtheorem{lem}[thm]{Lemma}
\theoremstyle{definition}
\newtheorem{defa}[thm]{Definition}
\newtheorem{rem}[thm]{Remark}
\newtheorem*{rem*}{Remark}
\newcommand{\dd}{\mathrm{d}}
\newcommand{\ee}{\mathrm{e}}
\newcommand{\ii}{\mathrm{i}}
\renewcommand{\Im}{\operatorname{Im}}
\renewcommand{\Re}{\operatorname{Re}}
\newcommand{\To}{\longrightarrow}
\newcommand{\Lim}{\mathop{\longrightarrow}\limits}
\newcommand{\N}{\mathbb{N}}
\newcommand{\Z}{\mathbb{Z}}
\newcommand{\R}{\mathbb{R}}
\newcommand{\C}{\mathbb{C}}
\newcommand{\T}{\mathbb{T}}
\newcommand{\cT}{\mathcal{T}}
\newcommand{\cN}{\mathcal{N}}
\newcommand{\cA}{\mathcal{A}}
\newcommand{\cC}{\mathcal{C}}
\newcommand{\cS}{\mathcal{S}}
\newcommand{\cB}{\mathcal{B}}
\newcommand{\cR}{\mathcal{R}}
\newcommand{\cO}{\mathcal{O}}
\newcommand{\cQ}{\mathcal{Q}}
\newcommand{\cP}{\mathcal{P}}
\newcommand{\cU}{\mathcal{U}}
\newcommand{\cE}{\mathcal{E}}
\newcommand{\cG}{\mathcal{G}}
\newcommand{\cK}{\mathcal{K}}
\newcommand{\cJ}{\mathcal{J}}
\newcommand{\cL}{\mathcal{L}}
\newcommand{\eps}{\epsilon}
\newcommand{\tilg}{{\tilde{g}}}
\DeclareMathOperator{\expect}{\mathbb{E}}
\DeclareMathOperator{\prob}{\mathbb{P}}
\DeclareMathOperator{\varie}{Var^{\mathit I}_{\eta_0}}
\newcommand{\Data}{{\textbf{(\hyperlink{lab:data}{Data})}}}
\newcommand{\EXP}{\textbf{(\hyperlink{lab:exp}{EXP})}}
\newcommand{\BST}{\textbf{(\hyperlink{lab:bst}{BST})}}
\newcommand{\Green}{\textbf{(\hyperlink{lab:green}{Green})}}
\newcommand{\NonDirichlet}{{\textbf{(\hyperlink{lab:non-dirichlet}%
{Non-Dirichlet})}}}
\newcommand{\Hol}{\textbf{(\hyperlink{lab:hol}{Hol})}}
\DeclareMathOperator{\Bad}{Bad}
\DeclareMathOperator{\Badp}{Badp}
\newcommand{\rmM}{\mathrm{M}}
\newcommand{\rmm}{\mathrm{m}}
\DeclareSymbolFont{extraup}{U}{zavm}{m}{n}
\DeclareMathSymbol{\varheart}{\mathalpha}{extraup}{86}
\DeclareMathSymbol{\vardiamond}{\mathalpha}{extraup}{87}
\DeclareMathOperator{\Lip}{Lip}
\title[Quantum ergodicity for quantum graphs in the AC regime]{Quantum ergodicity for expanding quantum
graphs in the regime of spectral delocalization}
\author{Nalini Anantharaman, Maxime Ingremeau, Mostafa Sabri, Brian Winn}
\address{Universit\'e de Strasbourg, CNRS, IRMA UMR 7501, F-67000 Strasbourg, France.}
\email{anantharaman@math.unistra.fr}
\address{Laboratoire J.A.Dieudonn\'e, UMR CNRS-UNS 7351, Universit\'e C\^ote d'Azur, 06108 Nice, France}
\email{maxime.ingremeau@univ-cotedazur.fr}
\address{Department of Mathematics, Faculty of Science, Cairo University, Giza 12613, Egypt.}
\email{mmsabri@sci.cu.edu.eg}
\address{Department of Mathematical Sciences, Loughborough University, Leicestershire, LE11 3TU, United Kingdom.}
\email{b.winn@lboro.ac.uk}
\subjclass[2010]{Primary 58J51. Secondary  34B45, Q1Q10.}
\keywords{Quantum ergodicity, quantum graphs, delocalization, trees.}
\newlength{\temp@wc@width}
\newlength{\temp@wc@height}
\newcommand{\widecheck}[1]{%
  \setlength{\temp@wc@width}{\widthof{$#1$}}%
  \setlength{\temp@wc@height}{\heightof{$#1$}}%
  #1\hspace{-\temp@wc@width}%
  \raisebox{\temp@wc@height+2pt}[\heightof{$\widehat{#1}$}]%
     {\rotatebox[origin=c]{180}{\vbox to 0pt{\hbox{$\widehat{\hphantom{#1}}$}}}}%
}
\begin{document}

\begin{abstract}
We consider a sequence of finite quantum graphs with few loops, so that they converge, in the sense of Benjamini-Schramm, to a random infinite quantum tree. We assume these quantum trees are spectrally delocalized in some interval $I$, in the sense that their spectrum in $I$ is purely absolutely continuous and their Green's functions are well controlled near the real axis. We furthermore suppose that the underlying sequence of discrete graphs is expanding. We deduce a quantum ergodicity result, showing that the eigenfunctions with eigenvalues lying in $I$ are spatially delocalized.
\end{abstract}

\maketitle

\section{Introduction}
In their far-reaching work \cite{KS97}, Kottos and Smilansky suggested that the ideas and results of quantum chaos should apply to quantum graphs. By quantum graphs, we mean a metric graph, equipped with a differential operator and suitable boundary conditions at each vertex. We refer the reader to Section~\ref{subsec:QG} for a more precise definition.

In this paper, we study an analogue on quantum graphs of one of the most famous properties of quantum chaotic systems, namely quantum ergodicity. In its original context \cite{Shni, CdV,Zel}, quantum ergodicity says that, on a compact Riemannian manifold whose geodesic flow is ergodic, the eigenfunctions of the Laplace-Beltrami operator become equidistributed {\em{in the high-frequency limit}}.

On a {\em{fixed}} quantum graph (with Kirchhoff boundary conditions), it was shown in \cite{CdV15} that quantum ergodicity generically does not hold in the high-frequency limit, unless the graph is very simple (homeomorphic to an interval or a circle).

However, instead of studying the asymptotics of eigenfunctions on a fixed quantum graph, one may study quantum ergodicity on \emph{sequences} of quantum graphs whose size goes to infinity. Some positive results appeared in \cite{BKS07}, \cite{GKP10}, \cite{KaSc14} \cite{BrWi16} for several families of quantum graphs, while it was shown in \cite{BKW04} that quantum ergodicity does not hold on star graphs. We refer the reader to the introduction of \cite{QEQGEQ} for an up-to-date survey of these recent developments.

All the preceding results study the high frequency behaviour of eigenfunctions. In \cite{QEQGEQ}, we proved a quantum ergodicity result for regular equilateral quantum graphs $\cQ_N$ converging to the regular equilateral tree $\mathbf{T}_q$ (in the sense of Benjamini-Schramm), in the \emph{bounded energy regime}, more precisely for energies lying in some bounded interval $I\subset \sigma_{\mathrm{ac}}(H_{\mathbf{T}_q})$, assuming the underlying discrete graphs are expanders. Our aim in this paper is to extend this result to the case of non-regular non-equilateral quantum graphs satisfying $\delta$-conditions (\S~\ref{subsec:QG}).

As in \cite{QEQGEQ}, we suppose that the quantum graphs have few short cycles. This can be restated as supposing that our quantum graphs converge in the sense of Benjamini-Schramm to a measure $\mathbb{P}$ supported on the set of quantum trees. We introduced the notion of Benjamini-Schramm convergence for quantum graphs in \cite{BSQG}, in analogy to the case of discrete graphs. We also assume the underlying discrete graphs are expanders.

Our main assumption is that in the energy interval we consider, the spectrum at the limit is absolutely continuous. More precisely, we need a good control over the Green's function near the real axis (see hypothesis \Green{} below). Our results thus convert \emph{spectral delocalization} at the limit (AC spectrum) into \emph{spatial delocalization} for the eigenfunctions  (i.e.\ quantum ergodicity). We give in \S~\ref{sec:examp} two important examples in which our assumptions hold.

The results of this paper can be regarded as a quantum graph counterpart of the results of \cite{AS2} for discrete graphs, we refer the reader to that paper for a more detailed introduction of the problem of quantum ergodicity and its implications.

Let us discuss our results in more detail. Given a sequence $\cQ_N$ of growing quantum graphs, the aim is to show that for any orthonormal basis of eigenfunctions $(\psi_j^{(N)})$ on $\cQ_N$, the probability measure $|\psi_j^{(N)}(x)|^2\,\dd x$ approaches the uniform measure $\frac{1}{|\cQ_N|}\,\dd x$ when $N$ is large enough. We only aim to prove this for most eigenfunctions in an interval $I$, so we consider Ces\`aro means. More precisely, denoting by $\mathbf{N}_N(I)$ the number of eigenvalues of $\mathcal{Q}_N$ in $I$, we set to prove that
\begin{equation}\label{e:mainlimit}
 \lim_{N\to\infty} \frac{1}{\mathbf{N}_N(I)} \sum_{\lambda_j^{(N)}\in I} \left|\langle \psi_j^{(N)}, f_N\psi_j^{(N)}\rangle_{L^2(\cG_N)} -\langle f_N\rangle_{\lambda_j^{(N)}} \right| =0
\end{equation}
for any uniformly bounded observable $f_N\in L^\infty(\cG_N)$, where $\cG_N$ is the underlying metric graph. Since $\langle \psi_j,f\psi_j\rangle = \int_{\cG_N}f(x)|\psi_j(x)|^2\,\dd x$, if we had $\langle f\rangle_{\lambda_j^{(N)}}=\frac{1}{|\cG_N|}\int_{\cG_N}f(x)\,\dd x$, i.e., if $\langle f \rangle$ were the uniform averages independently of $\lambda_j^{(N)}$ this would show that $|\psi_j(x)|^2\,\dd x$ approaches $\frac{1}{|\cG_N|}\,\dd x$ in some weak sense.

It turns out that such perfect uniform distribution can only be true in very special cases, cf. \cite{QEQGEQ}. In fact, since we consider a regime of bounded energies (lying in a fixed $I$, not the high frequency regime), we expect the potential we put on the edges to have some influence over the probability of finding the wavefunction in various places of the graph, which is given by $|\psi_j^{(N)}(x)|^2\,\dd x$. The true ``limiting measure'' is thus not the uniform measure in general, but one with a possibly non-constant density. The density is very satisfactory as it is directly related to the spectral density of the limiting quantum tree. In fact, our results show in a weak sense that $|\psi_j^{(N)}(x)|^2\,\dd x$ tends to the measure $\frac{\Im \tilg^{\lambda_j^{(N)}+\ii 0}_N(\tilde{x},\tilde{x})}{\int_{\cG_N}\Im \tilg^{\lambda_j^{(N)}+\ii 0}_N(\tilde{y},\tilde{y})\,\dd y}\,\dd x$, where $\tilg^z_N(\tilde{x},\tilde{x})$ is the Green's function of the universal cover of $\cQ_N$ ($\tilg^z_N(\tilde{x},\tilde{x})$ approaches the Green's function of the limiting tree when $N\to\infty$, see Appendix~\ref{sec:BS}). Accordingly, the mean $\langle f_N\rangle_{\lambda_j^{(N)}}$ above will actually depend on the energy\footnote{Note that in the special case where $\tilg^z(x,x)$ is independent of $x$ we get the uniform measure $\frac{1}{|\cG_N|}\dd x$. Roughly speaking, the general quotient detects the inhomogeneities in the limit object, a tree in our setting.}  $\lambda_j^{(N)}$.

We now discuss the main steps of the proof:
\begin{enumerate}[(1)]
\item In Sections~\ref{sec:nonback} and \ref{Sec:Reduction} we reduce \eqref{e:mainlimit} to proving that analogous Ces\`aro means defined on the \emph{discrete graph} (which we call \emph{quantum variances}) vanish as $N\To\infty$. In this process the $L^2$ scalar product $\langle \psi_j,f\psi_j\rangle=\int_{\cG_N}|\psi_j(x)|^2f(x)\,\dd x$ is replaced by $\ell^2$ scalar products of the form $\sum_{v\in V_N}|\psi_j(v)|^2K_{f,j}(v)$ or $\sum_{v\in V_N}\sum_{w\sim v}\psi_j(v)\psi_j(w)M_{f,j}(v,w)$, for some auxiliary (energy-dependent) observables $K_{f,j},M_{f,j}$ built from $f$.
Such discretization philosophy is well established in the quantum graphs literature, especially when the quantum graph is equilateral, in which case the restrictions $\psi_j|_{V_N}$ become eigenfunctions of some nice adjacency matrix. It is known however that when the graph is not equilateral, the discretization produces a complicated energy-dependent Schr\"odinger operator. In this paper we circumvent this problem by using \emph{non-backtracking eigenfunctions} $f_j,f_j^\ast$ living on the directed edges of the graph, an idea that already proved fruitful in discrete graphs \cite{A,AS2}, and it is quite remarkable that it also works for quantum graphs. This new construction is explained in Section~\ref{sec:nonback}. The eigenvalue equation for $f_j,f_j^\ast$ implies that the corresponding quantum variance is invariant under simple averaging operators, weighted by the Green's functions of the quantum universal cover (Proposition \ref{p:VarInv}).
In the usual proof of quantum ergodicity on manifolds, we would instead be using the invariance of eigenfunctions under the wave propagator.
\item We can bound the quantum variance by Hilbert-Schmidt norms. For this we follow the general scheme of \cite{AS2}, but the procedure is complicated by two problems: first, neither the restrictions $\psi_j|_V$ of the eigenfunctions to the vertices, nor the non-backtracking eigenfunctions $f_j,f_j^\ast$ form an orthonormal basis; second we have less \textit{a priori} bounds on the auxiliary observables $K_{f,j},M_{f,j}$. This calls for several technical innovations (Sections~\ref{sec:Complex}--\ref{Sec:UpperBound}).
\item \phantomsection\label{page:step3}Applying the averaging operators from step (1), and developing the Hilbert-Schmidt norms (Section~\ref{Sec:HSestimate}) reduces the proof to some contraction estimates on a family of sub-stochastic operators. These estimates require a careful analysis involving the expanding properties of the graphs (Section~\ref{Sec:Contrac}). Compared to \cite{AS2}, this part contains several novelties which are necessary due to the more complicated recursion relations satisfied by the Green's functions in the quantum setting (Section~\ref{sec:Notation}).
\end{enumerate}

Each of the preceding steps is not ``exact'' in the sense that it holds modulo error terms. To control them, we derive in Appendix~\ref{sec:BS} some important implications of Benjamini-Schramm convergence and spectral delocalization.

\subsubsection*{Acknowledgments}
N.A. was supported by Institut Universitaire de France, by the ANR project GeRaSic ANR-13-BS01-0007 and by USIAS (University of Strasbourg Institute of Advanced Study).

M.I. was supported by the Labex IRMIA  during part of the realization of this project.

M.S. was supported by a public grant as part of the \textit{Investissement d'avenir} project,
reference ANR-11-LABX-0056-LMH, LabEx LMH. He thanks the Universit\'e Paris Saclay
for excellent working conditions, where part of this work was done.

\section{Main results}\label{sec:mainres}
\subsection{Quantum graphs}\label{subsec:QG}
Let $G=(V,E)$ be a graph with vertex set $V$ and edge set $E$.
For each vertex $v\in V$, we denote by $d(v)$ the degree of $v$.  If $v_1, v_2\in V$, we write $v_1\sim v_2$ if $\{v_1,v_2\}\in E$.
We let $B= B(G)$ be the set of oriented edges (or bonds), so that $|B|=2|E|$. We assume that there is at most one edge between two vertices, so we will view $B$ as a subset of $V\times V$.
 If $b\in B$, we shall denote by $\hat{b}$ the reverse bond. We write $o_b$ for the origin of $b$ and $t_b$ for the terminus of $b$. We will also write $e(b)\in E$ for the edge obtained by forgetting the orientation of $b$.

For us, a \emph{quantum graph} $\cQ=(V,E,L,W,\alpha)$ is the data of:
\begin{itemize}
\item A connected combinatorial graph $(V,E)$.
\item A map $L: E\rightarrow (0,\infty)$. If $b\in B$, we denote $L_b:= L(e(b))$.
\item A potential $W=(W_b)_{b\in B} \in \bigoplus_{b\in B} C^0([0,L_b]; \R)$ satisfying for $x\in [0,L_b]$,
\begin{equation}\label{eq:ReversePot}
W_{\widehat{b}}(L_b-x) = W_b(x)\,.
\end{equation}
\item Coupling constants $\alpha = (\alpha_v)_{v\in V} \in \R^V$.
\end{itemize}

The underlying \emph{metric graph} is defined by
\[
\mathcal{G}:= \{ x= (b, x_b); b\in B, x_b\in [0,L_b]\} \slash \simeq \,,
\]
where $(b,x_b)\simeq (b',x'_{b'})$ if $b'= \hat{b}$ and $x'_{b'}= L_b-x_b$. In the sequel, we will sometimes write $x\in b$ to indicate that $x=(b,x_b)\in \mathcal{G}$.
Condition \eqref{eq:ReversePot} then simply ensures that $W$ is well-defined on $\mathcal{G}$.

In general a function $f : \mathcal{G}\longrightarrow \R$ can be identified with a collection of maps $(f_b)_{b\in B}$ such that $f_b (L_b - \cdot) = f_{\hat{b}}(\cdot)$. We say that $f$ is \emph{supported on} $e$ for some $e\in E$ if $f_b \equiv 0$ unless $e(b)= e$. In the sequel, we will often write $f(x_b)$ instead of $f((b,x_b))$ or $f_b(x_b)$.

If each $f_b$ is positive and measurable, we define $\int_\mathcal{G} f(x) \mathrm{d}x := \frac{1}{2}\sum_{b\in B} \int_0^{L_b} f_{b}(x_b) \mathrm{d}x_b$. We may then define the spaces $L^p(\mathcal{G})$ for $p\in [1, \infty]$ in the natural way (see below).

In the sequel, we will always make the following assumptions:

\medskip 

\raisebox{1.2\ht\strutbox}{\hypertarget{lab:data}{}}\textbf{(Data)}
There exist $0<\mathrm{m}<\mathrm{M}$ and $D\in \N$ such that for all $v\in V$ and all $b\in B$:
\begin{align*}
3\le d(v) &\le D\\
|\alpha_{v}|&\le \mathrm{M}\\
\mathrm{m}\le L_b &\le \mathrm{M}\\
W_b\in \Lip([0,L_b]) \text{ and }&  \max\left(\| W_b\|_{\infty},\Lip(W_b)\right) \le \mathrm{M} \\
W_b(L_b- \cdot) &= W_b(\cdot).
\end{align*}

This last assumption says that the potential on each edge is symmetric (in particular, this is the case if $W\equiv 0$). The symmetry condition is used in the ``trigonometric'' relations \eqref{e:simplif}, but we believe one should be able to adapt our proof
to non-symmetric potentials --- at the expense of using a modified version of \eqref{e:simplif}.

Here $\Lip(I)$ is the set of Lipschitz-continuous functions on $I$ and $\Lip(f)$ is the Lipschitz constant of $f$.

Let $\cQ$ be a quantum graph. Consider the Hilbert space\footnote{In the sequel, all the scalar products in Hilbert spaces will be linear in the right variable, and anti-linear in the left one: $\langle z u , z' v \rangle = \overline{z} z' \langle u, v \rangle$.}
\begin{equation*}
L^2(\cG) :=\Big\{(f_b)_{b\in B} \in \mathop\bigoplus_{b\in B} L^2(0,L_b) \,\Big{|}\, f_{\widehat{b}}(L_b-\cdot) = f_{b} (\cdot) \text{ and } \sum_{b\in B} \|f_b\|^2_{L^2(0,L_b)}<\infty \Big\}
\end{equation*}
 and its subset 
\[
H^2(\cG):=\Big\{ (f_b)_{b\in B} \in \mathop\bigoplus_{b\in B} H^2(0,L_b) \,\Big{|}\, f_{\widehat{b}}(L_b-\cdot) = f_{b} (\cdot) \text{ and } \sum_{b\in B} \|f_b\|^2_{H^2(0,L_b)}<\infty \Big\}\,.
\]

We say that a function $f\in H^2(\cG)$ satisfies the $\delta$-boundary conditions at $v\in V$ if
\begin{equation}\label{eq:Continuity}
\begin{cases} \forall b,b'\in B \text{ such that } o_b=o_{b'}=v, \text{ we have } f_b(0)= f_{b'}(0) =: f(v),\\
\qquad\qquad\qquad\qquad \sum\limits_{b\in B,\, o_b=v} f'_{o_b}(0) = \alpha_v f(v).
\end{cases}
\end{equation}
 
The set of such functions is denoted by: 
\begin{equation*}
\mathcal{H}^{\cQ}:=\Big{\{}f\in H^2(\cG) \,\big{|}\, \forall v\in V, f \text{ satisfies \eqref{eq:Continuity}} \Big{\}}\,.
\end{equation*}

We then define an operator $H_\cQ$ acting on $\psi= (\psi_b)_{b\in B}\in  \mathcal{H}^\cQ$ by 
\begin{equation}\label{e:H}
(H_{\cQ}\psi_b)(x_b) = -\psi''_b(x_b) + W_b(x_b)\psi_b(x_b).
\end{equation}

Our assumption \Data{} implies in particular that the operator $H_\cQ : \mathcal{H}^{\cQ} \to L^2(\cG)$ is then self-adjoint \cite[Theorem 1.4.19]{BeKu13}.

We say that the quantum graph is \emph{finite} if $V$ and $E$ are finite. In this case we denote $\cL(\cQ) = \sum_{e\in E}L(e)$. If $\cQ$ is finite, $H_\cQ$ has compact resolvent (\cite[Theorem 3.1.1]{BeKu13}). So for finite $\cQ_N$, there exists an orthonormal basis $(\psi_j^{(N)})_{j\ge 1}$ of $L^2(\cG_N)$ made of eigenfunctions of $H_{\cQ_N}$. We denote by $\big(\lambda_j^{(N)}\big)_{j\ge 1}$ the corresponding eigenvalues. Recall that we write
\[
\mathbf{N}_N(I):= \# \left\{ j\geq 1 \mid \lambda_j^{(N)}\in I\right\}.
\]

 Since our Schr\"odinger operators have real coefficients, we may---and will---restrict our attention to real valued eigenfunctions. This restriction was already present in the work \cite{AS2} on discrete graphs. It is only necessary in Appendix~\ref{sec:nonbared}. More precisely, what we need to assume is that
\[
\overline{\psi_j^{(N)}(o_b)}\psi_j^{(N)}(t_b)\in \R
\]
for any $N$, $j$ and $b\in B_N$.

\subsection{Eigenfunctions on the edges}\label{e:sols}
Let $\cQ$ be a quantum graph.
Given an oriented edge $b\in B$ and $\gamma\in \C$, let $C_{\gamma, b}(x_b)$ and $S_{\gamma,b}(x_b)$ be a basis of solutions of the equation
\begin{equation}\label{e:EigenProblem}
-\psi''(x_b) + W_b(x_b) \psi(x_b) = \gamma \psi(x_b)
\end{equation}
satisfying
\[
\begin{pmatrix} C_{\gamma,b}(0) & S_{\gamma,b}(0) \\ C_{\gamma,b}'(0) & S_{\gamma,b}'(0) \end{pmatrix} = \begin{pmatrix} 1 & 0 \\ 0 & 1 \end{pmatrix} .
\]
If $W_b\equiv 0$, these are the familiar cosine and sine functions, hence 
our notation.
Note that any solution $\psi$ of \eqref{e:EigenProblem} satisfies
\[
\begin{pmatrix} \psi(L_b)\\ \psi'(L_b) \end{pmatrix} = M_{\gamma}(b) \begin{pmatrix} \psi(0) \\ \psi'(0) \end{pmatrix} \qquad \text{where } M_{\gamma}(b) = \begin{pmatrix} C_{\gamma,b}(L_b) & S_{\gamma,b}(L_b) \\ C'_{\gamma,b}(L_b) & S'_{\gamma,b}(L_b) \end{pmatrix}.
\]

For all $b\in B$ and all $z\in \C$, we have
\begin{equation}\label{eq:SBar}
S_{\overline{\gamma},b} = \overline{S_{\gamma,b}},
\end{equation}
since both functions satisfy the same differential equation with the same boundary conditions.

For any $x_b\in [0,L_b]$, the maps $\gamma \mapsto C_{\gamma,b}(x_b)$ and $\gamma \mapsto S_{\gamma,b}(x_b)$ are holomorphic functions. A proof of this fact can be found, for instance, in \cite[Chapter 1]{PT87}.

Similarly, using the symmetry of the potential, we note that for any $\gamma \in \C$, we have the ``trigonometric'' relations (similar to the usual ones satisfied by $\cos$ and $\sin$)
\begin{equation}\label{e:simplif}
\begin{aligned}
S_{\gamma,b}(L_b)C_{\gamma,b}(x_b)-C_{\gamma,b}(L_b) S_{\gamma,b}(x_b) &= S_{\gamma,b}(L_b-x_b)\,,\\
S_{\gamma,b}'(L_b)C_{\gamma,b}(x_b)-C_{\gamma,b}'(L_b) S_{\gamma,b}(x_b) &= C_{\gamma,b}(L_b-x_b)\,,
\end{aligned}
\end{equation}
since the two pairs of functions satisfy the same equation with the same boundary conditions (see \cite[\S 2.1]{QEQGEQ} for more details). In the sequel, we will write $S_{\gamma}(x_b), C_{\gamma}(x_b)$ for $S_{\gamma, b}(x_b), C_{\gamma, b}(x_b)$ to lighten notations.

\subsection{Main result}
Let $\cQ_N=(V_N,E_N,L_N,W_N,\alpha_N)$ be a sequence of quantum graphs, each with $N$ vertices. We denote by $G_N=(V_N, E_N)$ the underlying discrete graphs.

Our first two assumptions concern the geometry of $G_N=(V_N, E_N)$:

\medskip

\raisebox{1.2\ht\strutbox}{\hypertarget{lab:exp}{}}\textbf{(EXP)}
The sequence $(G_N)$ forms an expander family. That is, if $\cA_{G_N}$ is the adjacency matrix\footnote{Recall that the adjacency matrix acts on $f\in \ell^2(V)$ by $\big{(}\mathcal{A}f\big{)}(v) = \sum_{w\sim v} f(w)$.} on $G_N$, then the operator $P_N = \frac{1}{d_N} \cA_{G_N}$, has a uniform spectral gap in $\ell^2(V_N)$. More precisely, the eigenvalue $1$ of $P_N$ is simple, and the spectrum of $P_N$ is contained in $[-1+\beta, 1-\beta]\cup \{1\}$, where $\beta>0$ is independent of $N$.

\medskip

\raisebox{1.2\ht\strutbox}{\hypertarget{lab:bst}{}}\textbf{(BST)}
For all $r>0$,
\[
\lim_{N \to \infty} \frac{|\{x \in V_N : \rho_{G_N}(x)<r\}|}{N} = 0 \, ,
\]
where $\rho_{G_N}(x)$ is the \emph{injectivity radius} at $x$, i.e.\ the largest $\rho$ such that the ball $B_{G_N}(x,\rho)$ is a tree. Here, because of our uniformity assumption \Data{},
it does not matter to choose a discrete distance or a continuous one for this property.

We will in addition suppose that $\cQ_N$ converges in the sense of Benjamini-Schramm to some probability measure $\mathbb{P}$ on the set of rooted graphs satisfying \Data{} --- as we may always do, up to extracting a subsequence \cite[Corollary 3.6]{BSQG}. Assumption \BST{} becomes equivalent to asking that $\mathbb{P}$ is supported on the set of quantum trees, i.e.\ quantum graphs without cycles. In a more probabilistic language, $\cQ_N$ converges to a random rooted quantum tree.

\medskip

The next two hypotheses can be seen as a condition of \emph{spectral delocalization}. Indeed, they imply that $\mathbb{P}$-almost all quantum trees have purely absolutely continuous spectrum in $I$, see \cite[Theorem A.6]{AC}. 

\medskip

\raisebox{1.2\ht\strutbox}{\hypertarget{lab:green}{}}\textbf{(Green)} 
There exists a bounded open interval $I_1\subset \R$ such that for all $s>0$,
\[
\sup_{\lambda\in I_1,\eta\in (0,1)} \expect_{\prob}\left(\left|\Im \hat{R}_{\lambda+\ii\eta}^{+}(o_{b})\right|^{-s} + \left|\Im \hat{R}_{\lambda+\ii\eta}^{+}(o_{\hat{b}})\right|^{-s} \right)<\infty \,,
\]
where $\hat{R}^+$ is the \emph{Weyl-Titchmarsh function} (defined in Section \ref{subsecGreen} below).

This assumption implies in particular that the Green's functions of the Schr\"odinger operator on the infinite tree has finite moments; see \S~\ref{sec:greencons}.

As the proof will show, we actually only need \Green{} to hold for all $0<s<s_0$ for some finite $s_0$ which can in principle be made explicit and is not too big; we chose the above formulation for comfort.

\medskip

Our last assumption is that $I_1$ avoids the ``Dirichlet spectrum''.

\medskip

\raisebox{1.2\ht\strutbox}{\hypertarget{lab:non-dirichlet}{}}%
\textbf{(Non-Dirichlet)}
Let
\[
\mathscr{D} = \bigcup_N\bigcup_{b\in B_N} \{\lambda\in \R: S_{\lambda}(L_b)=0\}\,.
\]

Then we assume
\[
I_1\cap \mathscr{D} = \emptyset\,,
\]
so any compact $\overline{I}\subset I_1$ is isolated from all Dirichlet values. In the applications we have in mind, $\overline{\mathscr{D}}$ is a discrete subset of $\R$, or an $\epsilon$-neighborhood  of a discrete subset.

It follows that there exists $C_{\mathrm{Dir}}>0$ such that for all $\lambda\in I$,
we have 
\[
|S_\lambda(L_b)|\geq C_{\mathrm{Dir}}.
\] 
By continuity, this implies the existence of $C_{\mathrm{Dir}}'>0$ and $0<\eta_{\mathrm{Dir}}\leq 1$ such that for all $\lambda\in I$, $\eta\in [0,\eta_{\mathrm{Dir}}]$,
\begin{equation}\label{eq:lowerDir}
|\Re S_{\lambda+\ii\eta}(L_b)|\geq C'_{\mathrm{Dir}}.
\end{equation}

We also note that $S_{\gamma}(L_b)\neq 0$ for any $\gamma\in \C^+=\{z:\Im z>0\}$, since otherwise the self-adjoint operator $Df=-f''+W_bf$ on $[0,L_b]$ with Dirichlet boundary conditions would have a complex eigenvalue $\gamma$ corresponding to $S_{\gamma}(x)$.

\begin{thm}\label{thm:qeqg}
Let $\cQ_N$ be a sequence of quantum graphs satisfying \emph{\Data{}} for each $N$, such that \emph{\EXP{}}, \emph{\BST{}}, \emph{\Green{}} and 
\emph{\NonDirichlet{}} hold true on the interval $I_1$. Fix an interval $I$ such that $\overline{I}\subset I_1$.

Let $(\psi_j^{(N)})_{j\in\N}$ be an orthonormal basis of eigenfunctions of $H_{\cQ_N}$. Then for any sequence of functions $f_N\in L^\infty(\mathcal{G}_N)$ satisfying $\|f_N\|_{\infty}  \le 1$, we have
\begin{equation}\label{e:main'}
\lim_{\eta \downarrow 0} \lim_{N\to\infty} \frac{1}{\mathbf{N}_N(I)} \sum_{\lambda_j^{(N)}\in I} \left|\langle \psi_j^{(N)}, f_N\psi_j^{(N)}\rangle_{L^2(\cG_N)} -\langle f_N\rangle_{\gamma_j^{(N)}} \right| =0 \,,
\end{equation}
where 
$\langle \psi_j, f_N\psi_j\rangle_{L^2(\cG_N)} = \int_{\mathcal{G}_N} f_N(x) |\psi_j(x)|^2\,\dd x$, $\gamma_j^{(N)} = \lambda_j^{(N)}+\ii\eta$ and
\begin{equation}\label{eq:DefBracket}
\langle f_N\rangle_{\gamma_j} = \frac{\int_{\mathcal{G}_N}f_N(x)\Im\tilg_N^{\gamma_j}(x,x)\,\dd x}{\int_{\mathcal{G}_N}\Im \tilg_N^{\gamma_j}(x,x)\,\dd x}.
\end{equation}
\end{thm}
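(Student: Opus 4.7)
My plan follows the three-step scheme outlined in the introduction: reduce the continuous quantum variance on $\cG_N$ to a discrete variance on $V_N$ (resp.\ $B_N$), bound the latter by Hilbert-Schmidt norms, and extract smallness from a contraction estimate driven by the expander gap \EXP{} and the Green's function moment bounds \Green{}. Throughout, \BST{} localises the analysis on the tree-like part of $G_N$ where the infinite-tree limit is relevant, while \NonDirichlet{} keeps energy-dependent denominators uniformly bounded.

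\emph{Discretization and non-backtracking lifts.} On each bond $b$, an eigenfunction $\psi_j$ with eigenvalue $\lambda_j$ has the form $\alpha C_{\lambda_j,b}(x_b)+\beta S_{\lambda_j,b}(x_b)$; using the $\delta$-conditions \eqref{eq:Continuity} and the trigonometric identities \eqref{e:simplif} (combined with the uniform lower bound $|S_{\lambda_j}(L_b)|\ge C_{\mathrm{Dir}}$ from \NonDirichlet{}), the coefficients $\alpha,\beta$ are recovered from the two vertex values $\psi_j(o_b)$ and $\psi_j(t_b)$. Plugging this into $\int_{\cG_N} f_N|\psi_j|^2$ produces an $\ell^2$-type expression of the form $\sum_v |\psi_j(v)|^2 K_{f_N,j}(v)+\sum_{v\sim w}\psi_j(v)\psi_j(w) M_{f_N,j}(v,w)$ for explicit energy-dependent observables $K_{f_N,j},M_{f_N,j}$ built out of $f_N, C_{\lambda_j}, S_{\lambda_j}$; the reference mean $\langle f_N\rangle_{\gamma_j}$ is discretized analogously using the Weyl-Titchmarsh functions of the universal cover. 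Since $\psi_j|_{V_N}$ does not solve a clean eigenvalue equation on a non-equilateral graph, I would lift the problem further to the directed edges by introducing non-backtracking eigenfunctions $(f_j,f_j^\ast)\in\C^{B_N}\oplus\C^{B_N}$ assembled from vertex values and derivatives, mirroring the construction of \cite{AS2}. These satisfy an eigenvalue-type equation whose coefficients are the Weyl-Titchmarsh functions on the universal cover $\widetilde{\cQ}_N$ at $\gamma_j=\lambda_j+\ii\eta$, and the corresponding discrete quantum variance is consequently invariant, up to controllable errors, under a family of sub-stochastic averaging operators $\cM_{\gamma_j,N}$ weighted by $\tilg_N^{\gamma_j}$. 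This invariance plays the role that the wave-propagator invariance plays in the classical Schnirelman scheme on manifolds.

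\emph{Hilbert-Schmidt bound and contraction.} I would next apply Cauchy-Schwarz to dominate the discrete variance by Hilbert-Schmidt norms of $\cM_{\gamma_j,N}^n K_{f_N,j}$ and $\cM_{\gamma_j,N}^n M_{f_N,j}$, with $n$ a large $N$-independent parameter to be optimized. Expanding these norms yields a sum over non-backtracking walks of length $2n$ in $G_N$: by \BST{} only walks staying inside the tree-like part of $G_N$ contribute to leading order, where they can be matched to walks on the limit tree with weights whose $L^s$-moments under $\prob$ are controlled by \Green{}. The expander spectral gap \EXP{} then forces $\cM_{\gamma_j,N}^n$ to contract in Hilbert-Schmidt norm at a rate uniform in $N$ (and in the spectral parameter $\gamma_j$ restricted to $I$). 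Sending $N\to\infty$, then $n\to\infty$, then $\eta\downarrow 0$ in this order yields \eqref{e:main'}.

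\emph{Main obstacle.} The hard part is the contraction estimate. In the discrete setting of \cite{AS2}, the Green's function of a tree satisfies a scalar M\"obius-type recursion, whereas here the Weyl-Titchmarsh function on a quantum tree is driven by the transfer matrices $M_\gamma(b)$ of the Schr\"odinger equation, so the weights entering $\cM_{\gamma_j,N}$ are considerably more intricate. Two further technical complications arise: neither $\psi_j|_{V_N}$ nor the non-backtracking lifts $(f_j,f_j^\ast)$ form orthonormal bases, so Gram-type overlap corrections must be tracked at every step; and the observables $K_{f_N,j},M_{f_N,j}$ lack a clean bound purely in terms of $\|f_N\|_\infty$, as they involve factors that degenerate when $\lambda_j$ approaches the Dirichlet spectrum --- which is precisely what \NonDirichlet{} controls. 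Reconciling these three ingredients so that a genuine, $N$-uniform contraction survives, and that the resulting error terms vanish in the order $N\to\infty$, $n\to\infty$, $\eta\downarrow 0$, is the central technical novelty of the argument.
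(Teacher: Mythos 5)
Your outline reproduces the paper's overall architecture (discretization via non-backtracking eigenfunctions, invariance of the quantum variance under Green-weighted averaging operators, domination by weighted Hilbert--Schmidt norms, and a contraction step), but the key analytic claim in your second step is not correct as stated, and fixing it is where most of the real work lies. You assert that \EXP{} ``forces $\cM_{\gamma_j,N}^n$ to contract in Hilbert--Schmidt norm at a rate uniform in $N$.'' This is false in general: the expander gap only yields contraction of the unweighted non-backtracking transfer operator on the orthogonal complements of $F$ (functions of the origin) and of the constants, and the relevant weighted operator $\cS_{u^\gamma}$ is a sub-stochastic operator multiplied by unimodular phases $u^\gamma(b)=\overline{\zeta^\gamma(b)}/\zeta^\gamma(b)$. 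For equilateral Kirchhoff graphs on an expander family (the setting of \cite{QEQGEQ}) the phases are constant in $b$, $\cS_{u^\gamma}^4$ does \emph{not} contract on the whole space, and your argument as written would break down precisely in the model case. The paper resolves this with a dichotomy (Proposition~\ref{prop:ContractPhases}): either $\cS_{u^\gamma}^4$ contracts up to controllable ``bad-set'' errors, or the phases are close to $c_0\,\ee^{\ii(\theta(o_b)-\theta(t_b))}$; in the second alternative one must show separately that $|1-c_0|$ is bounded below (Proposition~\ref{prop:PhasesReallyChange}, which uses the identity \eqref{e:zetainv}, \NonDirichlet{} and non-bipartiteness) so that the geometric sum $\sum_j c_0^j$ supplies the cancellation that contraction fails to provide. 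Your proposal contains no substitute for this dichotomy, so the central step is missing.

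Two further points. First, your order of limits ($N\to\infty$, then $n\to\infty$, then $\eta\downarrow 0$) is problematic: the error terms coming from the approximate invariance (the analogue of $\cE_{n,\eta_0}(K)$ in Lemma~\ref{lem:SmallO}) are $O(\eta_0)$ with constants depending on $n$, so one must send $\eta_0\downarrow 0$ for fixed $n$ before letting $n\to\infty$, as the paper does. Second, the ``bad'' edges and pairs of edges where the measure $\mu_1^\gamma$ peaks or the weighted walk nearly disconnects must be excised before the spectral-gap argument applies; controlling them requires the negative-moment bounds of \Green{} via Markov's inequality (Lemma~\ref{lem:moreerrorz}), a step your sketch does not anticipate.
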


Here, $\tilg_N^{\gamma}(x,y)$ is the Green kernel of the operator $H_{\widetilde{\cQ}_N}$, defined on the universal cover of $\cQ_N$. It will be defined precisely in Section~\ref{subsec:Univ}. We will see in Section~\ref{subsecGreen} that its imaginary part is always positive.  In the special case of regular equilateral quantum graphs treated in \cite{QEQGEQ}, the universal covering quantum tree is the regular equilateral quantum tree $\mathbf{T}_q$, independent of $N$, so the Green function $\tilg^{\gamma}_N(v,w) = G^{\gamma}_{\mathbf{T}_q}(v,w)$ itself is independent of $N$.

\begin{rem}
By the Cauchy-Schwarz inequality, if $f_N\geq 0$, we have
\[
\left( \int_{\mathcal{G}_N} (f_N(x))^{1/2} \dd x\right)^{2} \leq \left(\int_{\mathcal{G}_N}f_N(x)\Im\tilg_N^{\gamma_j}(x,x)\,\dd x\right) \left(\int_{\mathcal{G}_N}\left(\Im\tilg_N^{\gamma_j}(x,x)\right)^{-1} \,\dd x\right).
\]

By  Corollary~\ref{cor:ConfinedTilMay11}, there exists $C>0$ independent of $N$ such that
\[
\sup \limits_{\lambda\in I_1, \eta \in (0,\eta_{\mathrm{Dir}})} \limsup\limits_{N\longrightarrow \infty} \left|\frac{1}{N}\int_{\mathcal{G}_N}\left(\Im\tilg_N^{\gamma}(x,x)\right)^{\pm 1} \,\dd x \right|\leq C.
\]
 This implies that, when $f_N$ is non-negative,
\[
\liminf_{N\To\infty} \left\langle f_N\right\rangle_{\gamma_j} \geq \frac{1}{C^2}\left(\frac{1}{N} \int_{\mathcal{G}_N} (f_N(x))^{1/2} \dd x\right)^{2}.
\]
For example, if $f_N = \chi_{\Lambda_N}$ with $|\Lambda_N|=\alpha N$, this gives a lower bound $\frac{\alpha^2}{C^2}>0$. As $\Lambda_N$ is arbitrary (for fixed cardinality), Theorem \ref{thm:qeqg} is really a delocalization result: for most $\psi_j$, we cannot have $|\psi_j(x)|^2$ concentrating on a portion of $\cG_N$ of cardinality $o(N)$.
\end{rem}

\subsubsection*{Quantum ergodicity for integral operators}

In a weak sense, the previous theorem asserts that $|\psi_j^{(N)}(x)|^2$ behaves asymptotically like $\frac{\Im \tilde{g}_N^{\lambda_j}(x,x)}{\int_{\cG_N} \Im\tilde{g}_N^{\lambda_j}(x,x)\,\dd x}$. More generally, we have a quantum ergodicity result involving integral operators. The aim here is to study the eigenfunction correlator $\overline{\psi_j^{(N)}(x)}\psi_j^{(N)}(y)$.

In general, an integral operator $\mathcal{K}$ on $L^2(\cG)$ takes the form
\[
(\mathcal{K}\psi)_b(x_b) = \sum_{b'\in B} \int_0^{L_{b'}} \mathcal{K}_{b,b'}(x_b,y_{b'})\psi_{b'}(y_{b'})\,\dd y_{b'} \,,
\]
with the condition that for all $b,b'\in B$ and almost all $x_b\in [0, L_b]$, $y_{b'}\in [0,L_{b'}]$, we have
\[
 \mathcal{K}_{\hat{b},b'}(L_b-x_b,y_{b'}) = \mathcal{K}_{b,b'}(x_b,y_{b'})
\]
 and similarly for the second argument.
 In the sequel, we will denote by $\mathrm{B}_k$ the set of non-backtracking paths of length $k$ (see \eqref{eq:defNonBackPath} below for a precise definition), and work with the spaces of operators (indexed by $k\in \N$)
 
\[
 \begin{aligned}
\mathscr{K}_k := \Big\{& \mathcal{K} : L^2(\cG) \longrightarrow L^2(\cG) \mid \forall b,b'\in B,  \mathcal{K}_{b,b'} \equiv 0\\
& \text{ unless there exists } (b_1,\dots,b_k)\in \mathrm{B}_k \text{ with } b_1=b, b_k=b'  \Big\}.
\end{aligned}
\]

Thus, $\mathscr{K}_k$ is the space of operators with kernel supported on edges connected by a non-backtracking path of length $k$. Note that any integral operator with a bounded compactly supported kernel can be written as a finite linear combination of operators in $\mathscr{K}_k$ for various $k$. When we want to insist that these operators live on the graph $\cQ_N$ indexed by $N$, we will denote this space by $\mathscr{K}_k^{(N)}$.

\begin{thm}\label{thm:integralver}
Let $k\geq 0$. Under the assumptions of Theorem~\ref{thm:qeqg}, let $(\mathcal{K}_N)_{N\in \N}$ be a sequence of operators with $\mathcal{K}_N \in \mathscr{K}_k^{(N)}$, such that $|K_{N, b,b'}(x,y)|\leq 1$ for every $N\in \N$. Then
\begin{equation}\label{eq:MainStatement}
\lim_{\eta \downarrow 0} \lim_{N\to \infty} \frac{1}{\mathbf{N}_N(I)} \sum_{\lambda_j^{(N)}\in I} \left| \langle \psi_j^{(N)}, \mathcal{K}_N \psi_j^{(N)}\rangle_{L^2(\cG_N)} - \langle \mathcal{K}_N\rangle_{\gamma_j^{(N)}}\right| = 0\,,
\end{equation}
where $\langle \psi_j,\cK\psi_j \rangle = \sum\limits_{(b_1;b_k)\in \mathrm{B}_k}\int_0^{L_{b_1}}\int_0^{L_{b_k}}\cK(x_{b_1},y_{b_k})\overline{\psi_j(x_{b_1})}\psi_j(y_{b_k})\,\dd x_{b_1}\dd y_{b_k}$,  $\gamma_j^{(N)}=\lambda_j^{(N)}+\ii\eta$, and 
\[
\langle \mathcal{K}\rangle_{\gamma} =\frac{\sum_{(b_1;b_k)}\int\int \cK(x_{b_1},y_{b_k})\Im \tilg^{\gamma}(x_{b_1},y_{b_k})\,\dd x_{b_1}\dd y_{b_k}}{\int_{\mathcal{G}_N}\Im \tilg^{\gamma}(x,x)\,\dd x}.
\]
\end{thm}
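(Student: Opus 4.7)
The strategy is to reduce Theorem~\ref{thm:integralver} to the non-backtracking quantum variance machinery already built for Theorem~\ref{thm:qeqg}, taking advantage of the fact that $\mathcal{K}_N \in \mathscr{K}_k^{(N)}$ has kernel supported on bond pairs at non-backtracking distance exactly $k$---precisely the natural ``range'' built into the non-backtracking framework. For $k=0$ this would essentially be Theorem~\ref{thm:qeqg}; for general fixed $k$ the extra distance will be absorbed by finitely many applications of the non-backtracking averaging operator.

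Concretely, I would start from the expansion $\psi_j(x_b)=\psi_j(o_b)\,C_{\lambda_j}(x_b)+\psi_j'(o_b^+)\,S_{\lambda_j}(x_b)$, valid because $\psi_j$ solves \eqref{e:EigenProblem} on each bond, and substitute it for both $\psi_j(x_{b_1})$ and $\psi_j(y_{b_k})$ in the bilinear form $\langle\psi_j,\mathcal{K}_N\psi_j\rangle$. This produces a finite sum
\[
\sum_{(b_1,\ldots,b_k)\in \mathrm{B}_k}\ \sum_{i,i'\in\{C,S\}} \overline{A_j^{i}(b_1)}\,A_j^{i'}(b_k)\,\widetilde K^{i,i'}_{j,b_1,b_k},
\]
with $A_j^{C}(b)=\psi_j(o_b)$, $A_j^{S}(b)=\psi_j'(o_b^+)$---the boundary data read off from the non-backtracking eigenfunctions $f_j,f_j^\ast$ of Section~\ref{sec:nonback}---and
\[
\widetilde K^{i,i'}_{j,b_1,b_k}=\int_0^{L_{b_1}}\!\!\int_0^{L_{b_k}} K_{N,b_1,b_k}(x,y)\,\Xi^{i}_{\lambda_j}(x)\,\Xi^{i'}_{\lambda_j}(y)\,\dd x\,\dd y,
\]
where $\Xi^{C}=C_{\lambda_j}$, $\Xi^{S}=S_{\lambda_j}$. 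Under \Data{} and $|K_{N,b,b'}|\leq 1$, these coefficients are uniformly bounded for $\lambda_j\in\overline{I}$, in $N$ and in $(b_1,b_k)$.

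After this discretization, the Ces\`aro sum \eqref{eq:MainStatement} becomes (via Cauchy-Schwarz) a quantum variance of precisely the type treated in Sections~\ref{Sec:Reduction}--\ref{Sec:Contrac}, but with auxiliary observable $\widetilde K$ supported on bond pairs at non-backtracking distance $k$ instead of $0$ or $1$. The invariance property given by Proposition~\ref{p:VarInv}, the Hilbert-Schmidt estimates of Sections~\ref{Sec:HSestimate}--\ref{Sec:UpperBound}, and the contraction of the sub-stochastic operator in Section~\ref{Sec:Contrac} are insensitive to $k$: the non-backtracking averaging operator shifts its support by one bond per iteration, so absorbing the initial distance $k$ costs only a $k$-dependent, $N$-independent constant, after which the \EXP{}-driven contraction drives the variance to $0$ as $N\to\infty$ then $\eta\downarrow 0$, exactly as in the proof of Theorem~\ref{thm:qeqg}.

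The mean $\langle \mathcal{K}_N\rangle_{\gamma_j^{(N)}}$ appears as the diagonal term subtracted before the variance estimate: using Proposition~\ref{p:VarInv}, the average over $\lambda_j$-eigenvalues of $\overline{A_j^{i}(b_1)}A_j^{i'}(b_k)$ is replaced by the corresponding $(i,i')$-block of $\Im \widetilde g_N^{\gamma_j}$ at the endpoints of $b_1$ and $b_k$. The main obstacle, essentially computational, is then to verify that re-integrating these blocks against $K_{N,b_1,b_k}$ together with the $\Xi^i_{\lambda_j}$ weights from the initial expansion reconstitutes, through the edge-wise formula for the Green kernel on the universal cover $\widetilde{\cQ}_N$ (Section~\ref{subsec:Univ}), exactly $\iint K_{N,b_1,b_k}(x,y)\,\Im\tilg_N^{\gamma_j}(x,y)\,\dd x\,\dd y$, with the denominator $\int_{\cG_N}\Im\tilg_N^{\gamma_j}(x,x)\,\dd x$ arising from the normalization of the non-backtracking eigenfunctions. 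Once this identification is carried out, Theorem~\ref{thm:integralver} follows as a natural extension of Theorem~\ref{thm:qeqg}.
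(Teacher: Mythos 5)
Your proposal follows essentially the same route as the paper: expand $\psi_j$ on each edge in a basis of solutions of \eqref{e:EigenProblem}, reduce \eqref{eq:MainStatement} via Cauchy--Schwarz to discrete and then non-backtracking quantum variances of observables in $\mathscr{H}_{k'}$ with $k-2\le k'\le k$ (Propositions~\ref{prop:Reduc1}--\ref{prop:Reduc2}), run the invariance/Hilbert--Schmidt/contraction machinery which indeed costs only a $k$-dependent constant, and identify the limiting mean with $\langle\mathcal{K}\rangle_\gamma$ through the edge-wise formula for $\Im\tilg$ on the universal cover (the paper carries out this last ``computational'' step via Corollary~\ref{cor:psiiden} and the estimates of Appendix~\ref{sec:BS}). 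The only cosmetic difference is that the paper expands in the basis $S_{\lambda}(L_b-\cdot)/S_{\lambda}(L_b)$, $S_{\lambda}(\cdot)/S_{\lambda}(L_b)$ so that only the vertex values $\psi_j(o_b),\psi_j(t_b)$ appear (these are what feed into $\mathring{\psi}_j$ and $f_j,f_j^{\ast}$), rather than your $(C_{\lambda},S_{\lambda})$ basis involving $\psi_j'(o_b)$, which is equivalent under \NonDirichlet{}.
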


This says that in a weak sense, when $N$ gets large, the eigenfunction correlator $\overline{\psi_j(x)}\psi_j(y)$ looks like the quotient of spectral densities $\frac{\Im \tilg^{\lambda_j}_N(x,y)}{\int_{\mathcal{G}_N}\Im \tilg^{\lambda_j}_N(x,x)\,\dd x}$ on the universal cover.

\subsection{Examples}\label{sec:examp}

\subsubsection{N-lifts}
An important example is when $\cQ_N$ is some (connected) $N$-lift of a compact quantum graph $\cQ_1$. In other words, the underlying graph\footnote{We refer to \cite[Chapter 5]{Sunada} for more background on coverings of finite graphs.} $G_N$ is an $N$-fold covering over $G_1$ and the data is lifted naturally $L_{(v,w)} = L_{(\pi_N v,\pi_Nw)}$, $W_{(v,w)}=W_{(\pi_Nv,\pi_Nw)}$, $\alpha_v = \alpha_{\pi_N v}$, where $\pi_N:G_N\To G_1$ is the covering projection.

It is known that $N$-lifts -- when picked randomly -- are typically connected and most of their points have a large injectivity radius -- see \cite[Lemma 24]{Bornew}, \cite[Lemma 9]{BDGHT}. More precisely, condition \BST{} holds generically. It is also known that they are typically expanders; see \cite{Fri03,Pu}. Thus, our assumptions are generic.

It is known that such $(\cQ_N)$ converge in the Benjamini-Schramm sense to a deterministic limit, namely the universal covering tree $\cT = \widetilde{\cQ}_1$ with a random root (see \cite{BSQG}). More precisely $\cQ_N$ converges to the random rooted quantum tree defined by the measure

\[
\mathbb{P} = \frac{1}{\sum_{b\in B_1} L_b}\sum_{b\in B_1} \int_0^{L_b} \delta_{[\mathbb{T},\tilde{L}^1,\tilde{W}^1,\tilde{\alpha}^1,(\tilde{b},x_b)]}\,\dd x_b\,,
\]
where $(L^1,W^1,\alpha^1)$ is the data on the base graph $\cG_1$ and $\T=\widetilde{G}_1$ is the combinatorial tree underlying $\cT$. In particular
\[
\expect_{\prob}(|\Im \hat{R}_{\lambda+\ii\eta}^{\pm}(o_{b})|^{-s} + |\Im \hat{R}_{\lambda+\ii\eta}^{\pm}(o_{\hat{b}})|^{-s} ) = \frac{2}{\sum_{b\in B_1}L_b}\sum_{b\in B_1}L_b\,|\Im \hat{R}_{\lambda+\ii\eta}^{\pm}(o_{b})|^{-s} .
\]
Also note that in this example, $\mathscr{D} = \cup_{b\in B_1}\{\lambda\in \R:S_{\lambda}(L_b)=0\}$.

We showed in \cite{AC} that the spectrum of $H_{\cT}$ consists of bands of pure AC spectra along with a possible set of discrete eigenvalues (outside the bands). We also showed that within the bands, the limits $\hat{R}_{\lambda+\ii0}^{\pm}(o_b)$ exist, are finite and satisfy $\Im \hat{R}_{\lambda+\ii 0}^{\pm}(o_b)>0$. It follows that \Green{} is satisfied on any compact $\overline{I}\subset I_1$, where $I_1$ is some AC band.

Theorem~\ref{thm:integralver} thus tells us that such $(\cQ_N)$ are quantum-ergodic. This result can be regarded as a non-regular, non-equilateral generalization of \cite{QEQGEQ}.

\subsubsection{Random quantum graphs}

We may also consider weak random perturbations $\cQ_{\omega_N}$ of the previous example. We leave the precise definition to \cite[Section 8.4]{BSQG}, but essentially one endows the graphs with random independent, identically-distributed (i.i.d.) lengths $L_e^{\omega}$ and i.i.d.\ coupling constants $\alpha^{\omega}_v$. Note that here only the combinatorial graph $G_N$ covers $G_1$, the data on each $\cQ_{\omega_N}$ is entirely random, so each has its own universal cover. Assuming \BST{} holds (which is true generically as previously mentioned), we calculated the limit measure in \cite{BSQG}. In particular, we get
\[
\expect_{\prob}(|\Im \hat{R}_{\lambda+\ii\eta}^{\pm}(o_{b})|^{-s} + |\Im \hat{R}_{\lambda+\ii\eta}^{\pm}(o_{\hat{b}})|^{-s} ) = \frac{2}{|B_1|\mathbf{E}(L_b^{\omega})}\sum_{b\in B_1}\mathbf{E}(L_b^{\omega}|\Im \hat{R}_{\lambda+\ii\eta}^{\pm}(o_{b})|^{-s} ),
\]
where $\mathbf{E}$ is the expectation with respect to the random data $(L_b^{\omega},\alpha^{\omega})$ and the distributions are assumed to be the same for each $b\in B_1$. In other words, this is a random perturbation of an equilateral model on $\cQ_N$ (more general situations can be considered).

We showed in \cite{AC} that if the perturbation is weak enough, then the bands of AC spectra remain stable, and \Green{} holds in such bands. Note that here we assume there is no edge potential $W$. So $\mathscr{D} = \cup_{n\ge 0} [\frac{\pi^2n^2}{(L-\eps)^2},\frac{\pi^2n^2}{(L+\eps)^2}]$, where $\eps$ is the small disorder window around the unperturbed length $L$. We technically need the coupling constants to be nonnegative with a H\"older distribution. This can be seen as a result of Anderson (spectral) delocalization, strengthening earlier results in \cite{ASW06}.

Theorem~\ref{thm:integralver} implies that almost surely, the eigenfunctions of $\cQ_{\omega_N}$ are quantum ergodic. In other words, Anderson (spatial) delocalization holds, in addition to spectral delocalization.

\section{Preliminary constructions and notation}\label{sec:Notation}

\subsection{The Green function on a quantum tree}\label{subsecGreen}
Let $\mathbf{T}=(V,E,L,W,\alpha)$ be a quantum tree, i.e.\ a quantum graph such that $(V,E)$ is a tree, with underlying metric graph $\mathcal{T}$. We describe here the functional equations satisfied by the Green function on $\mathbf{T}$, due to the topological fact that trees are disconnected by removing a point. While such relations are well-known for discrete laplacians on trees, they have been less exploited for quantum trees. This paragraph builds on the work of Aizenman-Sims-Warzel \cite{ASW06}.

If $b\in B(\T)$ we denote $\T_b^{\pm}$ the two subtrees obtained by removing $b$, more precisely $t_b\in \T_b^+$ while $o_b\in \T_b^-$. Let $\mathbf{T}_b^{\pm}$ be the induced quantum trees and $x_b\in [0, L_b]$. If $x=(b, x_b)$ is the corresponding point in $\T$, we define $\mathbf{T}_{x}^+$ as the quantum tree $[x_b,t_b]\cup \mathbf{T}_b^+$ see \cite{AC} for a more precise definition. The quantum tree $\mathbf{T}_x^-$ is defined in a similar fashion.

Let us define $H_{\mathbf{T}_x^{\pm}}^{\max}$ on $\mathcal{T}_x^{\pm}$ to be the Schr\"odinger operator $-\Delta+W$ with domain $D(H_{\mathbf{T}_x^{\pm}}^{\max})$, the set of $\psi\in H^2(\mathbf{T}_x^{\pm})$ satisfying $\delta$-conditions on inner vertices of $\mathbf{T}_x^{\pm}$.  Note that $H_{\mathbf{T}_x^\pm}^{\max}$ is not self-adjoint, due to the absence of boundary condition at the root $x$. By \cite[Theorem 2.1]{ASW06},  for any $\gamma\in\C^+ =\{z\in \C; \Im z>0\}$, there are unique eigenfunctions $V_{\gamma;x}^+\in D(H_{\mathbf{T}_x^+}^{\max})$, $U_{\gamma;x}^-\in D(H_{\mathbf{T}_x^-}^{\max})$ of $H_{\mathbf{T}_x^{\pm}}^{\max}$, for the eigenvalue $\gamma$, satisfying $U_{\gamma;x}^-(x)=V_{\gamma;x}^+(x)=1$.

One can use the functions $U^-_{\gamma},V^+_{\gamma}$ to construct the Green's function $G^\gamma$ of $H_{\mathbf{T}}$, see \cite[Lemma 2.1]{AC}. For our purposes, we use them to define the \emph{Weyl-Titchmarch} functions \cite{ASW06} as follows: if $x\in \mathbf{T}_o^+\cap\mathbf{T}_v^-$,
\begin{equation}\label{e:WT}
R^+_{\gamma}(x) = \frac{(V_{\gamma;o}^+)'(x)}{V_{\gamma;o}^+(x)} \quad \text{and} \quad R^-_{\gamma}(x) =  \frac{-(U_{\gamma;v}^-)'(x)}{U_{\gamma;v}^-(x)} \,.
\end{equation}
This does not depend on $o,v$. Given an oriented edge $b=(o_b,t_b)$, we define
\begin{equation}\label{e:zetadef}
\zeta^{\gamma}(b) = \frac{G^{\gamma}(o_b,t_b)}{G^{\gamma}(o_b,o_b)} \,.
\end{equation}

This quotient of Green kernels will appear in the definition of the non-backtracking eigenfunctions. See \cite[\S~2]{AC} for more comments on this quantity.

Given an oriented edge $b$, let $\cN_b^+$ be the set of outgoing bonds from $b$, and let  $\cN_b^-$ be the set of incoming bonds to $b$ i.e.
\begin{equation}\label{eq:DefOutInBond}
\begin{aligned}
\cN_b^+ &:= \left\{b'\in B ; o_{b'} = t_b, b' \neq \hat{b}\right\}\\
\cN_b^- &:= \left\{b'\in B ; t_{b'} = o_b, b' \neq \hat{b}\right\} .
\end{aligned}
\end{equation}
(Later these definitions will apply to more general graphs than trees.)

The following lemma gives a quantum graph analog for the classical recursive identities of Green's functions on discrete trees. It tells us that the functions $\zeta^\gamma, C_\gamma$ and $S_\gamma$ can be used as building blocks to understand the function $G^\gamma$. In particular, \eqref{e:greenmul} is the well-known multiplicative property of the Green function on a tree. See \cite[Section 2]{AC} for a proof, and Appendix \ref{sec:greenquan} for a complement.

\begin{lem}\label{lem:IdentitiesGreen}
Let $\gamma\in \C^+$. We have the following relations between $\zeta^{\gamma}$ and the WT functions $R_{\gamma}^{\pm}$:
\begin{equation}\label{e:zetawt}
\zeta^{\gamma}(b) = C_{\gamma}(L_b) + R_{\gamma}^+(o_b)S_{\gamma}(L_b)\,, \qquad \zeta^{\gamma}(\hat{b}) = S_{\gamma}'(L_b) + R_{\gamma}^-(t_b) S_{\gamma}(L_b) \,,
\end{equation}
\begin{equation}\label{e:r+-id}
R^+_{\gamma}(t_b) = \frac{S_{\gamma}'(L_b)}{S_{\gamma}(L_b)} - \frac{1}{S_{\gamma}(L_b)\zeta^{\gamma}(b)}\,, \qquad R_{\gamma}^-(o_b) = \frac{C_{\gamma}(L_b)}{S_{\gamma}(L_b)} - \frac{1}{S_{\gamma}(L_b) \zeta^{\gamma}(\hat{b})} \,.
\end{equation}
Moreover,
\begin{equation}\label{e:1}
 \frac{1}{\zeta^{\gamma}(b) S_{\gamma}(L_b)} + \sum_{b^+\in \cN_b^+} \frac{\zeta^{\gamma}(b^+)}{S_{\gamma}(L_{b^+})} = \sum_{b^+\in \cN_b^+} \frac{C_{\gamma}(L_{b^+})}{S_{\gamma}(L_{b^+})} + \frac{S_{\gamma}'(L_b)}{S_{\gamma}(L_b)} + \alpha_{t_b} \,,
\end{equation}
\begin{equation}\label{e:zetainv}
\frac{1}{\zeta^{\gamma}(b)} - \zeta^{\gamma}(\hat{b}) = \frac{S_{\gamma}(L_b)}{G^{\gamma}(t_b,t_b)}\,, \qquad \frac{\zeta^{\gamma}(\hat{b})}{\zeta^{\gamma}(b)} = \frac{G^{\gamma}(o_b,o_b)}{G^{\gamma}(t_b,t_b)} \,,
\end{equation}
\begin{equation}\label{e:midgreen}
\frac{-1}{G^{\gamma}(o_b,o_b)} = R_{\gamma}^+(o_b) + R_{\gamma}^-(o_b)
\end{equation}
and
\begin{equation}\label{e:2}
 \sum_{b^+\in \cN_b^+} \frac{C_{\gamma}(L_{b^+})}{S_{\gamma}(L_{b^+})} + \frac{S_{\gamma}'(L_b)}{S_{\gamma}(L_b)} + \alpha_{t_b} = \sum_{t_{b'}\sim t_b} \frac{\zeta^{\gamma}(b')}{S_{\gamma}(L_{b'})} + \frac{1}{G^{\gamma}(t_b,t_b)} \,,
\end{equation}
where $b'=(t_b,t_{b'})$. Given a non-backtracking path $(v_0;v_k)\in \T$, if $b_j = (v_{j-1},v_j)$, then
\begin{equation}\label{e:greenmul}
G^{\gamma}(v_0,v_k) = G^{\gamma}(v_0,v_0)\zeta^{\gamma}(b_1)\cdots \zeta^{\gamma}(b_k) = G^{\gamma}(v_k,v_k) \zeta^{\gamma}(\hat{b}_1) \cdots \zeta^{\gamma}(\hat{b}_k)\,.
\end{equation}
Finally, for any path $(v_0;v_k)\in \T$,
\begin{equation}\label{e:sym}
G^{\gamma}(v_0,v_k) = G^{\gamma}(v_k,v_0)\,.
\end{equation}
\end{lem}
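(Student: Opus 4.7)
The plan is to derive all the identities from three structural ingredients: (i) on each edge $b$, the pair $(C_\gamma,S_\gamma)$ spans all solutions of \eqref{e:EigenProblem}, so any eigenfunction restricted to $b$ is determined by its Cauchy data at $o_b$; (ii) the Weyl--Titchmarsh functions $R^\pm_\gamma$ encode, by their very definition \eqref{e:WT}, the logarithmic derivatives of the distinguished ``outgoing'' solutions $V^+_{\gamma;x}, U^-_{\gamma;x}$ at each point; and (iii) on a tree, for fixed $v_0$, the map $x\mapsto G^\gamma(v_0,x)$ restricted to any component of $\mathcal{T}\setminus\{v_0\}$ solves the homogeneous equation \eqref{e:EigenProblem} with $\delta$-conditions at interior vertices and lies in the domain of the half-tree operator, so it is proportional to the corresponding $V^+_{\gamma;v_0}$ (or $U^-_{\gamma;v_0}$). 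The proportionality constant is fixed by the value at $v_0$, which gives the key relation $G^\gamma(v_0,x)=G^\gamma(v_0,v_0)\,V^+_{\gamma;v_0}(x)$ on the half-tree.

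First I would establish \eqref{e:zetawt} and \eqref{e:greenmul}. By (ii), the restriction of $V^+_{\gamma;o_b}$ to the edge $b$ reads $V^+_{\gamma;o_b}(x_b)=C_\gamma(x_b)+R^+_\gamma(o_b)S_\gamma(x_b)$; evaluating at $x_b=L_b$ and using (iii) together with the definition \eqref{e:zetadef} yields $\zeta^\gamma(b)=C_\gamma(L_b)+R^+_\gamma(o_b)S_\gamma(L_b)$, and the second half of \eqref{e:zetawt} follows by the symmetric argument with $U^-_{\gamma;t_b}$ on the reversed edge $\hat b$ (here one uses the symmetry of $W_b$ encoded in \eqref{e:simplif} to identify the Cauchy data of $U^-$ at $t_b$). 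Iterating (iii) along a non-backtracking path and telescoping the ratios gives the multiplicative property \eqref{e:greenmul}, while the symmetry \eqref{e:sym} is simply the standard symmetry of the resolvent kernel of the self-adjoint operator $H_\mathbf{T}$.

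Next I would derive the ``inverse'' identities. Solving the first equation in \eqref{e:zetawt} for $R^+_\gamma(o_b)$ and then shifting one edge forward (i.e.\ replacing $o_b$ by $t_b$ via the corresponding $V^+$ on the opposite side) produces \eqref{e:r+-id}, using the Wronskian-type relations \eqref{e:simplif} to simplify. For \eqref{e:midgreen} I would split the tree at $o_b$: writing $G^\gamma(o_b,\cdot)$ as a combination of $V^+$'s on each component, the local Wronskian at $o_b$ together with the $\delta$-condition \eqref{eq:Continuity} gives $-1/G^\gamma(o_b,o_b)=R^+_\gamma(o_b)+R^-_\gamma(o_b)$. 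Substituting \eqref{e:r+-id} into \eqref{e:midgreen} and using \eqref{e:zetawt} yields the two formulas of \eqref{e:zetainv}. The sum identities \eqref{e:1} and \eqref{e:2} are the $\delta$-condition \eqref{eq:Continuity} at the vertex $t_b$ written in two different bases: \eqref{e:1} expresses it for $V^+_{\gamma;o_b}$ (matching outgoing derivatives from $t_b$ into each $b^+\in\cN_b^+$), and \eqref{e:2} rewrites the same relation in terms of $\zeta^\gamma$ and $1/G^\gamma(t_b,t_b)$ via \eqref{e:r+-id}--\eqref{e:midgreen}.

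The main obstacle I expect is bookkeeping rather than mathematical depth: one must keep track of orientations and of which half-tree each $V^+$, $U^-$, or $R^\pm$ refers to, and translate the $\delta$-condition \eqref{eq:Continuity} (a statement about values and derivatives) consistently into statements about $\zeta^\gamma$ (ratios of Green kernels) on each oriented bond emanating from a common vertex. Once a convention is fixed and the basic representation $G^\gamma(v,x)=G^\gamma(v,v)V^+_{\gamma;v}(x)$ is in place, each of the seven identities reduces to a short algebraic manipulation using the Wronskian relations \eqref{e:simplif}.
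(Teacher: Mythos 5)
Your outline is correct and follows essentially the same route as the paper's source for this lemma: the paper does not reprove these identities but cites \cite{AC} (Section~2), where they are obtained exactly as you describe --- from the half-tree representation $G^\gamma(v,\cdot)=G^\gamma(v,v)V^+_{\gamma;v}(\cdot)$, the Cauchy-data expansion $V^+_{\gamma;o_b}(x_b)=C_\gamma(x_b)+R^+_\gamma(o_b)S_\gamma(x_b)$ (together with its $U^-$ counterpart and the symmetry $C_\gamma(L_b)=S'_\gamma(L_b)$), the constancy of the Wronskian $C_\gamma S'_\gamma-C'_\gamma S_\gamma=1$, and the $\delta$-conditions at $t_b$. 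Each of your reductions does check out: \eqref{e:zetainv} follows from \eqref{e:r+-id} and \eqref{e:midgreen} applied at $t_b$, and \eqref{e:2} is \eqref{e:1} combined with the first identity of \eqref{e:zetainv}.
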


The following lemma is an important result on the properties of the Weyl-Titchmarsh functions \eqref{e:WT} and the fact that they are involved in ``currents'' passing through the edges from some fixed arbitrary source $\ast$ (the ``current'' is $I_{\ast}^{\lambda}(b)=|G^{\lambda+\ii0}(\ast,o_b)|^2\Im R_{\lambda+\ii0}^+(o_b)$).

\begin{lem}\label{lem:ASW}
The functions $F(\gamma) = R_{\gamma}^+(o_b)$, $R_{\gamma}^-(t_b)$ and $G^{\gamma}(v,v)$ are Herglotz functions: $\Im F(\gamma) \ge 0$ for $\gamma\in\C^+$. Moreover, we have the following ``current'' relations:
\begin{equation}\label{e:ASW}
\sum_{b^+\in \cN_b^+} \Im R_{\gamma}^+(o_{b^+}) \le \frac{\Im R_{\gamma}^+(o_b)}{|\zeta^{\gamma}(b)|^2} \qquad \text{and} \qquad \sum_{b^-\in \cN_b^-} \Im R_{\gamma}^-(t_{b^-}) \le \frac{\Im R_{\gamma}^-(t_b)}{|\zeta^{\gamma}(\hat{b})|^2} \,.
\end{equation}
Equality holds in both cases if $\Im \gamma=0$, whenever defined. 

More precisely, we have
\begin{equation}\label{e:cur1}
\sum_{b^+\in \cN_b^+} \Im R_{\gamma}^+(o_{b^+}) = \frac{\Im R^+_{\gamma}(o_b)}{|\zeta^{\gamma}(b)|^2} - \frac{\Im \gamma}{|\zeta^{\gamma}(b)|^2} \int_{0}^{L_b} |\xi_{+}^{\gamma}(x_b)|^2\,\dd x_b \,,
\end{equation}
\begin{equation}\label{e:cur2}
\sum_{b^-\in \cN_b^-} \Im R_{\gamma}^-(t_{b^-}) = \frac{\Im R_{\gamma}^-(t_b)}{|\zeta^{\gamma}(\hat{b})|^2} - \frac{\Im \gamma}{|\zeta^{\gamma}(\hat{b})|^2} \int_{0}^{L_b} |\xi_{-}^{\gamma}(x_b)|^2\,\dd x_b \,,
\end{equation}
where
\begin{equation}\label{eq:DefXi}
\begin{aligned}
\xi_{+}^{\gamma}(x_b) &= \frac{V_{\gamma;o}^+(x_b)}{V_{\gamma;o}^+(o_b)} = C_{\gamma}(x_b) + R_{\gamma}^+(o_b)S_{\gamma}(x_b) \\
\xi_{-}^{\gamma}(x_b) &= \frac{U_{\gamma;v}^-(x_b)}{U_{\gamma;v}^-(t_b)} = C_{\gamma}(L_b-x_b) + R_{\gamma}^-(t_b)S_{\gamma}(L_b-x_b) \,.
\end{aligned}
\end{equation}
\end{lem}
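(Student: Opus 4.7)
The plan is to derive the precise current relations \eqref{e:cur1}--\eqref{e:cur2} by combining a Wronskian/Lagrange-identity computation along a single edge with the $\delta$-boundary condition at its endpoint; the inequality \eqref{e:ASW} will then follow immediately from $\Im\gamma\ge 0$, while the Herglotz properties are either classical or drop out as a byproduct.

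First, I would fix $\gamma\in\C^+$ and take $V=V^+_{\gamma;o_b}$, which by \eqref{e:zetawt} and \eqref{eq:DefXi} equals $\xi^\gamma_+$ on the initial edge $b$ and satisfies $V(L_b)=\zeta^\gamma(b)$. The Lagrange identity, combined with the eigenvalue equation $-V''+WV=\gamma V$, gives $(V'\bar V-V\bar V')'=-2\ii\,\Im\gamma\,|V|^2$; using the algebraic identity $V'\bar V-V\bar V'=2\ii\,\Im R^+_\gamma(x)\,|V(x)|^2$, this reads $[\Im R^+_\gamma(x)|V(x)|^2]'=-\Im\gamma\,|V(x)|^2$ along edge $b$. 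Integrating over $[0,L_b]$ produces the key single-edge identity
\[
\Im R^+_\gamma(L_b^-)\,|\zeta^\gamma(b)|^2 \;=\; \Im R^+_\gamma(o_b) \;-\; \Im\gamma\int_0^{L_b}|\xi^\gamma_+(x)|^2\,\dd x,
\]
where $R^+_\gamma(L_b^-)$ denotes the one-sided logarithmic derivative at $t_b$ taken inside edge $b$.

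Next, at the inner vertex $t_b$, I would apply the $\delta$-condition, which with outgoing bonds $\hat b$ and $\cN_b^+$ reads $-V'(L_b)+\sum_{b^+\in\cN_b^+}\partial_{b^+}V(t_b)=\alpha_{t_b}V(t_b)$. By uniqueness of the $L^2$-eigenfunction on $\mathbf{T}^+_{t_b}$ with prescribed value at the root, one has the identification $V^+_{\gamma;o_b}|_{\mathbf{T}^+_{t_b}}=\zeta^\gamma(b)\cdot V^+_{\gamma;t_b}$, so that $\partial_{b^+}V(t_b)/V(t_b)=R^+_\gamma(o_{b^+})$. Dividing the $\delta$-condition by $V(t_b)$ and taking imaginary parts (noting $\alpha_{t_b}\in\R$) gives $\Im R^+_\gamma(L_b^-)=\sum_{b^+\in\cN_b^+}\Im R^+_\gamma(o_{b^+})$. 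Substituting into the single-edge identity and dividing by $|\zeta^\gamma(b)|^2$ yields exactly \eqref{e:cur1}; the companion identity \eqref{e:cur2} is obtained by the same argument applied to $U^-_{\gamma;t_b}$.

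For the Herglotz claims, $G^\gamma(v,v)$ is a diagonal resolvent matrix element of the self-adjoint operator $H_\mathbf{T}$, hence Herglotz by standard spectral theory. For $R^\pm_\gamma$, I would either quote the classical Weyl--Titchmarsh theory in the tree version developed in \cite{ASW06}, or, more in keeping with the rest of the argument, iterate \eqref{e:cur1} through an exhaustion of $\mathbf{T}^+_{o_b}$ by finite subtrees; using $V\in L^2(\mathbf{T}^+_{o_b})$ to force the ``surface currents'' at depth $n$ to vanish in the limit then yields the integral representation $\Im R^+_\gamma(o_b)=\Im\gamma\int_{\mathbf{T}^+_{o_b}}|V|^2\,\dd x\ge 0$. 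The main delicate points are the identification $V^+_{\gamma;o_b}|_{\mathbf{T}^+_{t_b}}=\zeta^\gamma(b)\cdot V^+_{\gamma;t_b}$ used above and, for the Herglotz property on an infinite tree, the justification of the surface-current limit when exhausting by finite subtrees.
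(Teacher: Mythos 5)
Your proposal is correct and follows essentially the same route as the paper: the quantity $\Im R^+_\gamma(x)|V(x)|^2$ is exactly the current $J_\gamma=\Im[\overline{f}f']$ used in the paper's proof, the Lagrange identity $J_\gamma'=-\Im\gamma\,|f|^2$ integrated over the edge is the paper's key step, and the $\delta$-condition at $t_b$ (with $\alpha_{t_b}\in\R$) supplies the vertex relation $\Im R^+_\gamma(t_b)=\sum_{b^+\in\cN_b^+}\Im R^+_\gamma(o_{b^+})$ just as in the paper; the Herglotz property is likewise delegated to \cite{ASW06}/\cite{AC} in both.
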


See Appendix~\ref{sec:greenquan} for a proof.

\subsection{Operators on graphs}\label{subsec:OpOnGr}
Let now $\cQ$ be a finite quantum graph (typically one in our family $\cQ_N$) with vertex set $V$ and bond set $B$. When $\cQ=\cQ_N$ the corresponding notions will be indexed by $N$ (e.g. $B_N$, $V_N$), but we will often tend to drop the index $N$ from the notation.

In the study of quantum ergodicity, we need to go back and forth between ``classical observables'' (i.e.\ functions on a classical phase space) and ``quantum observables'' (i.e.\ operators on a Hilbert space). Here this will be done in a simple-minded way: starting from a function on the set of non-backtracking paths, we explain how to build an operator on $\ell^2(B)$ or $\ell^2(V)$.

If $k\geq 1$, we denote by $\mathrm{B}_k$ the set of non-backtracking paths in $B^k$ of length $k$: 
\begin{equation}\label{eq:defNonBackPath}
\mathrm{B}_k:= \big{\{}(b_1,\ldots,b_k)\in B^k ~ ;~  \forall i=1,\ldots, k-1, t_{b_i} = o_{b_{i+1}} \text{ and } t_{b_{i+1}} \neq o_{b_i}\big{\}}.
\end{equation}

If $b_0\in B$, we will also write
\begin{align*}
\mathrm{B}^{b_0}_k:= \big{\{}(b_{1},\ldots,b_{k})\in B^k ~ \text{such that} ~ (b_0,\ldots,b_{k})\in \mathrm{B}_{k+1} \big{\}}\\
\mathrm{B}_{k,b_0}:= \big{\{}(b_{-k},\ldots,b_{-1})\in B^k ~ \text{such that} ~ (b_{-k},\ldots,b_{0})\in \mathrm{B}_{k+1} \big{\}}.
\end{align*}

Note that $\mathrm{B}_1^{b_0} = \cN_{b_0}^+$ and $\mathrm{B}_{1,b_0} = \cN_{b_0}^-$.

In the sequel, we will often write $(b_1;b_k)$ instead of $(b_1,\ldots,b_k)$ to lighten the notation.

For all $k\geq 1$, we also define
\[
\mathscr{H}_k := \C^{\mathrm{B}_k} .
\]
If $K\in \mathscr{H}_k$, then $K$ is a map from $\mathrm{B}_k$ to $\C$, and we extend it to a map $B^k\to \C$ by zero on $B^k\setminus \mathrm{B}_k$. This extension will still be denoted by $K$.

If $K\in \mathscr{H}_k$, we define an operator $K_B : \ell^2(B)\To \ell^2(B)$ by
\begin{equation}\label{eq:DefKB}
\forall f\in \ell^2(B),\forall b_1\in B, \quad (K_B f)(b_1) = \sum_{(b_2 ; b_k)\in \mathrm{B}^{b_1}_{k-1}} K(b_1;b_k) f(b_k).
\end{equation}
In particular, if $K\in \mathscr{H}_1$, then $K_B$ is a diagonal operator.

For $k=0$, we will write $\mathrm{B}_0:= V$, and $\mathscr{H}_0 := \C^V$.

For all $k\geq 0$, we also define $K_G : \ell^2(V) \longrightarrow \ell^2(V)$ by 
\begin{align*}
\forall h\in \ell^2(V),\forall v\in V, \quad (K_G h)(v) &:= \sum_{\substack{(b_1 ; b_k)\in \mathrm{B}_{k}\\o_{b_1}=v}} K(b_1 ; b_k) h(t_{b_k}) \qquad \text{ if } k\geq 1\\
(K_G h)(v) &:=  (Kh)(v) = K(v) h(v)  \qquad\quad \text{ if } k=0.
\end{align*}

\subsection{Green functions notation}\label{subsec:Univ}
In the paper we consider a variety of quantum graphs, and we need to adopt a notation for the Green function of each of them: the sequence $\cQ_N$, their universal covers $\widetilde\cQ_N$, the limiting random quantum tree $\cT$.

Let us first define notations pertaining to universal covers. Let $\cQ$ be a quantum graph. Let $\tilde{G}= (\tilde{V}, \tilde{E})$ be the universal cover of the combinatorial graph $G$. We endow $\tilde{G}$ with the lifted data $\tilde{L}_b := L_{\pi b}$, $\tilde{W}_b := W_{\pi b}$ and $\tilde{\alpha}_v = \alpha_{\pi v}$, where $\pi:\tilde{G} \to G$ is the covering map. This yields a quantum tree $\widetilde{\cQ}:= (\tilde{V},\tilde{E},\tilde{L},\tilde{W},\tilde{\alpha})$, which is called the \emph{universal cover} of $\cQ$. The underlying metric graph of $\widetilde{\cQ}$ will be denoted by $\widetilde{\mathcal{G}}$. We then have a natural projection $\pi : \widetilde{\mathcal{G}} \longrightarrow \mathcal{G}$.

Throughout the paper, if $v,w\in V_N$ and $z\in \C\setminus \R$, we will write
\[
g_N^\gamma(v,w) := \left( H_{\cQ_N}-\gamma\right)^{-1}(v,w)
\]
for the Green function of the compact quantum graph $\cQ_N$.

Let $\tilg_N^\gamma$ be the Green's functions of $H_{\widetilde{\cQ}_N}$. 
Throughout the paper we will encounter quantities of the form $\tilg_N^\gamma(o_{b_i},t_{b_j})$ where $(b_1,\dots,b_k)$ is a fixed non-backtracking path in $G_N$ and $i,j\le k$. We define this as follows.

Given $(b_1;b_k)\in \mathrm{B}_k(\cQ_N)$, choose any lift $\tilde{b}_1 \in B(\widetilde{\cQ}_N)$ and let $(\tilde{b}_2;\tilde{b}_k)\in \mathrm{B}_{k-1}^{\tilde{b}_0}(\widetilde{\cQ})$ be the path such that $\pi(\tilde{b}_1,\dots,\tilde{b}_k)=(b_1,\dots,b_k)$. Then we define
\[
\tilg_N^\gamma(o_{b_1},t_{b_k}) := (H_{\widetilde{\cQ_N}}-\gamma)^{-1}(o_{\tilde{b}_1},t_{\tilde{b}_k})\,.
\]

This depends on the full path $(b_1,\dots,b_k)$ (although not apparent in our notation), however it does not depend on the choice of the lift $(\tilde{b}_1,\dots,\tilde{b}_k)$. We define $\tilg_N^\gamma(o_{b_i},t_{b_j}):=(H_{\widetilde{\cQ_N}}-\gamma)^{-1}(o_{\tilde{b}_i},t_{\tilde{b}_j})$ for $i,j\le k$, where $(\tilde{b}_1,\dots,\tilde{b}_k)$ is the lift we fixed. The definition extends naturally to $\tilg_N^\gamma(x,y)$ with $x\in b_i$ and $y\in b_j$.

Throughout the paper, we always let for $b\in B_N$,
\begin{equation}\label{e:reczet}
\zeta^\gamma(b) := \frac{\tilg^\gamma_N(o_b,t_b)}{\tilg^\gamma_N(o_b,o_b)}
\end{equation}
thus suppressing the index $N$, which should cause no confusion.

Similarly, the Weyl-Titchmarsh functions \eqref{e:WT} denoted by $R^{\pm}_{\gamma}(x)$ will stand (without index $N$) for the WT-functions of the universal covering tree $\widetilde{Q}_N$.

For the Benjamini-Schramm limiting random tree $\cT$, we use the notation
\[
G^\gamma(x,y) = (H_{\cT}-\gamma)^{-1}(x,y)
\]
for $x,y\in \cT$. For $\zeta$ and the WT-functions, we simply add a hat. More precisely, we let
\[
\hat{\zeta}^\gamma(b) := \frac{G^\gamma(o_b,t_b)}{G^\gamma(o_b,o_b)} 
\]
for $b\in B(\cT)$, and similarly denote the Weyl-Titchmarsh functions of $\cT$ by $\hat{R}_{\gamma}^\pm(x)$.

\subsection{A scalar product expression for boundary values of eigenfunctions}\label{sec:ScalarProd}
For each $\gamma\in \C$ and $b\in B_N$, let us define
\[
\begin{aligned}
\Sigma_1(\gamma;b)&:= \int_0^{L_b} |S_{\gamma}(x_b)|^2 \mathrm{d}x_b\\
\Sigma_2(\gamma;b)&:= \int_0^{L_b} S_{\gamma}(L_b-x_b) \overline{S_{\gamma}(x_b)} \mathrm{d}x_b.
\end{aligned}
\]

Note that, by the Cauchy-Schwarz inequality, $\Sigma_1^2(\gamma;b) - |\Sigma_2(\gamma;b)|^2 >0$. As the lower bound only depends on $L_b,W_b,\gamma$, we have $\Sigma_1^2(\gamma;b) - |\Sigma_2^2(\gamma;b)|\geq c>0$ in the compact set $(\gamma,L,W)\in (\overline{I}+\ii[0,1])\times \mathrm{Lip}_{\mathrm{M}}[\mathrm{m},\mathrm{M}]$, where $\mathrm{Lip}_{\mathrm{M}}[\mathrm{m},\mathrm{M}]$ denotes the set of $(L, f)$, where $m\leq L\leq M$, and $f$ is a Lipschitz function on $[0,L]$ with both norm and Lipschitz constant $\le \rmM$.

For each $\gamma\in \C$ and each $b\in B_N$, we denote by $S^b_{\gamma,+}$ and $S^b_{\gamma,-}$ the functions on $\mathcal{G}_N$ defined respectively as $x_b\mapsto S_{\gamma}(x_b)$ and $x_b\mapsto S_{\gamma}(L_b-x_b)$ on the edge $b$, and vanishing on the other edges.

If $\psi_\gamma$ satisfies $H_\cQ \psi_\gamma = \gamma \psi_\gamma$, and if the potentials are symmetric, then we have
\[
\psi_{\gamma}(x_b) = \frac{S_{\gamma}(L_b-x_b)}{S_{\gamma}(L_b)} \psi_\gamma(o_b) + \frac{S_{\gamma}(x_b)}{S_{\gamma}(L_b)}\psi_\gamma(t_b),
\]
 so that
\[
\begin{aligned}
S_{\gamma}(L_b) \langle S_{\gamma,+}^b, \psi_\gamma \rangle &= \psi_\gamma(o_b) \Sigma_2(\gamma;b) + \psi_\gamma(t_b) \Sigma_1(\gamma;b)\\
S_{\gamma}(L_b) \langle  S_{\gamma,-}^b, \psi_\gamma \rangle &= \psi_\gamma(o_b) \Sigma_1(\gamma;b) + \psi_\gamma(t_b) \overline{\Sigma_2(\gamma;b)}.
\end{aligned}
\]

We therefore have
\begin{equation}\label{e:Maxime}
\begin{aligned}
\psi_\gamma(o_b) &= \frac{S_\gamma(L_b)}{|\Sigma^2_2(\gamma;b)|-\Sigma^2_1(\gamma;b)} \left\langle  \Sigma_2(\gamma;b) S_{\gamma,+}^b-\Sigma_1(\gamma;b) S_{\gamma,-}^b, \psi_\gamma \right\rangle =: \langle Y_\gamma^b, \psi_\gamma \rangle_{L^2(\mathcal{G}_N)} \\
\psi_\gamma(t_b) &= \frac{S_\gamma(L_b)}{|\Sigma^2_2(\gamma;b)|-\Sigma^2_1(\gamma;b)}\left\langle  \overline{\Sigma_2(\gamma;b)} S_{\gamma,-}^b- \Sigma_1(\gamma;b) S_{\gamma,+}^b, \psi_\gamma \right\rangle  =: \langle Z_\gamma^b, \psi_\gamma \rangle_{L^2(\mathcal{G}_N)} 
\end{aligned}
\end{equation}

Note that, thanks to hypothesis \Data{}, we have
\[
\sup_N\max_{b\in B_N}\sup_{\gamma\in I+\ii[0,1]} \max (\|Y^b_\gamma\|_{L^2(\mathcal{G}_N)}, \|Z^b_\gamma\|_{L^2(\mathcal{G}_N)})  \le C_{I,\mathrm{M}}.
\]

That $\gamma\mapsto Z_\gamma^b$ is not analytic poses a technical
obstacle.  Since we will be using holomorphic tools, we prefer to use
\[
\widehat{\Sigma_1}(\gamma;b):= \int_0^{L_b} S_{\gamma}(x_b)^2 \dd x_b ,\qquad
\widehat{\Sigma_2}(\gamma;b):= \int_0^{L_b} S_{\gamma}(L_b-x_b) S_{\gamma}(x_b) \dd x_b\,,
\]
in other words the analytic functions that coincide with $\Sigma_1, \Sigma_2$ on the real line.

By Cauchy-Schwarz and by continuity, we may find $\eta_I, c_I>0$ such that for all $N$, $b\in B_N$, $\gamma\in\Omega_I:= I+ \ii [-\eta_I, \eta_I]$, we have
\[
\Re\left(\widehat{\Sigma^2_1}(\gamma;b)-\widehat{\Sigma^2_2}(\gamma;b)\right) >c_I.
\]

Therefore, we may define, for $\gamma \in \Omega_I$, the functions
\[
\widehat{Z}_\gamma^b:=  \frac{S_\gamma(L_b)}{\widehat{\Sigma^2_2}(\gamma;b)-\widehat{\Sigma^2_1}(\gamma;b)} \left(\widehat{\Sigma_2}(\gamma;b) S_{\gamma,-}^b- \widehat{\Sigma_1}(\gamma;b) S_{\gamma,+}^b\right).
\]

Note that $\Omega_I\ni \gamma\mapsto \widehat{Z}_\gamma^b\in L^2(\mathcal{G})$ is holomorphic, coincides with $Z_\gamma^b$ when $\gamma\in I$, and

\begin{equation}\label{eq:ZandHatZ}
  \left\| \widehat{Z}_\gamma^b - Z_\gamma^b \right\|_{L^2(\mathcal{G}_N)} \le \left\| \widehat{Z}_\gamma^b - \widehat{Z}_{\Re(\gamma)}^b \right\|_{L^2(\mathcal{G}_N)} + \left\| Z_\gamma^b - Z_{\Re(\gamma)}^b \right\|_{L^2(\mathcal{G}_N)} \le C_I'\eta_0
\end{equation}
uniformly in $\gamma\in \Omega_I$, $b\in B_N$, $N\in \N$.

\subsection{Notation for the remainders}
In all the paper, we will be dealing with quantities depending on $N$, and on a complex parameter $\gamma$, and we will use the following notation.
Let $f_N : \C^+ \To \C$ be a sequence of functions. We will write that $f_N = O_{N\to +\infty, \gamma}(1)$ if
\[
\sup \limits_{\eta_0\in (0,\eta_{\mathrm{Dir}})} \limsup_{N\longrightarrow \infty}\sup_{\lambda\in I_1} |f_N(\lambda+\ii\eta_0) | < \infty,
\]
with $\eta_{\mathrm{Dir}}$ as in \eqref{eq:lowerDir}.

Similarly, we will write $f_N= O_{N\to +\infty, \gamma}(\Im \gamma)$ if $\frac{f_N(\gamma)}{\Im \gamma}=O_{N\to +\infty, \gamma}(1)$.
In fact, most of time the imaginary part of $\gamma$ will be denoted $\eta_0$, so we will write $f_N= O_{N\to +\infty, \gamma}(\eta_0)$.

If $f_N$ depends on an additional parameter $\kappa$, we write $f_N=O_{N\to +\infty, \gamma}^{(\kappa)}(1)$.

If the quantity $f_N$ we consider depends on $N$ and $\eta_0$, but not on $\lambda$, we will write $f_N=O_{N\to +\infty, \eta_0}(1)$ or $f_N=O_{N\to +\infty, \eta_0}(\eta_0)$, with the same definition.

\begin{rem}
Our main statements, Theorems \ref{thm:qeqg} and \ref{thm:integralver} say that some quantity, divided by $\mathbf{N}_N(I)$ goes to zero as $N\longrightarrow + \infty$ followed by $\eta \downarrow 0$. We will recall in Appendix \ref{sec:BS}, and more precisely in \eqref{eq:UpperNumberEigen} and \eqref{eq:LowerNumerEigen} that, under the assumptions we make, there exist constants $C_1, C_2>0$ such that for all $N$ large enough,
\[
C_1 N \leq \mathbf{N}_N(I) \leq C_2 N.
\]
Therefore, in the course of the proof, when trying to show that a quantity divided by $\mathbf{N}_N(I)$ goes to zero, we will sometimes replace $\mathbf{N}_N(I)$ by $N$.
\end{rem}

\section{Non-backtracking eigenfunctions}\label{sec:nonback}

The quantum variance \eqref{e:main'} involves functions living on the \emph{metric graph} $\cG_N$. Through the main part of the paper, we shall prefer to work with quantum variances defined on the \emph{combinatorial graph} $G_N$. It is shown in Section~\ref{Sec:Reduction} how to pass from one to the other. Such discretization is generally nontrivial for non-equilateral quantum graphs. We will show however that we can construct functions on the directed edges $B_N$ which, quite miraculously, are eigenfunctions of a simple non-backtracking operator denoted below by $\zeta^{\gamma}\cB$. This reduction from continuous to discrete will use the quantum Green's functions identities derived in Section~\ref{subsecGreen}, and may be relevant to other problems on quantum graphs.

Let $\cQ$ be a quantum graph, whose set of oriented edges is denoted by $B$ (later, the following construction will be applied to $\cQ=\cQ_N$, and all the objects depend on $N$). From now on, we fix $\eta_0>0$ which will go to zero in the sequel. Recall that $\zeta^\gamma$ was defined in Section~\ref{subsec:Univ} using the universal cover of $\cQ$.

Let $\psi_j$ be an eigenfunction of $H_{\cQ}$ with eigenvalue $\lambda_j$. We define $f_j,f_j^{\ast}\in \C^{B}$ by
\begin{equation}\label{e:fj}
f_j(b) = \frac{\psi_j(t_b)}{S_{\lambda_j}(L_b)} - \frac{\zeta^{\gamma_j}(b) \psi_j(o_b)}{S_{\lambda_j}(L_b)} , \qquad f_j^{\ast}(b) = \frac{\psi_j(o_b)}{S_{\lambda_j}(L_b)} - \frac{\zeta^{\gamma_j}(\widehat{b})\psi_j(t_b)}{S_{\lambda_j}(L_b)}
\end{equation}
where $\gamma_j = \lambda_j + \ii\eta_0$ and $\zeta^{\gamma_j}(b)$ is as in \eqref{e:reczet}.

Note that $\psi_j(o_b)$ and $\psi_j(t_b)$ can be recovered from $f_j(b)$ and $f_j^*(b)$ as follows: $\psi_j(o_b) = \frac{S_{\lambda_j}(L_b)}{{1- \zeta^{\gamma_j}(b)\zeta^{\gamma_j}(\widehat{b})}} \big(f^*_j(b) + \zeta^{\gamma_j}(\widehat{b}) f_j(b)\big)$ and $\psi_j(t_b) = \frac{S_{\lambda_j}(L_b)}{1- \zeta^{\gamma_j}(b)\zeta^{\gamma_j}(\widehat{b})} \big(f_j(b) + \zeta^{\gamma_j}(b) f_j^*(b)\big)$.   These expressions are well defined, since by \eqref{e:zetainv} we
have
\[
  1 - \zeta^{\gamma}(b)\zeta^{\gamma}(\hat{b}) =   
\frac{S_{\gamma}(L_b) \zeta^{\gamma}(b)}{G^{\gamma}(t_b,t_b)} =
\frac{S_{\gamma}(L_b) \zeta^{\gamma}(\hat{b})}{G^{\gamma}(o_b,o_b)},
\]
which does not vanish using \eqref{e:1}.

Recall that $\cN_b^+$ was introduced in \eqref{eq:DefOutInBond}. We define the non-backtracking operator $\cB:\C^B\to \C^B$ by 
\[
(\cB f)(b) = \sum_{b^+\in \cN_b^+} f(b^+) \,.
\]
We observe that $\psi_j(t_b) = \psi_j(o_b) C_{\lambda_j}(L_b) + \psi_j'(o_b)S_{\lambda_j}(L_b)$, so
\begin{align*}
\sum_{b^+\in \cN_b^+} \frac{\psi_j(t_{b^+})}{S_{\lambda_j}(L_{b^+})} & = \psi_j(t_b)\sum_{b^+\in \cN_b^+}\frac{C_{\lambda_j}(L_{b^+})}{S_{\lambda_j}(L_{b^+})} + \sum_{b^+\in \cN_b^+} \psi_j'(o_{b^+}) \\
& = \psi_j(t_b)\sum_{b^+\in \cN_b^+}\frac{C_{\lambda_j}(L_{b^+})}{S_{\lambda_j}(L_{b^+})} + \psi_j'(t_b) + \alpha_{t_b}\psi_j(t_b) \\
& = \psi_j(t_b)\sum_{b^+\in \cN_b^+}\frac{C_{\lambda_j}(L_{b^+})}{S_{\lambda_j}(L_{b^+})} + \frac{S_{\lambda_j}'(L_b)\psi_j(t_b)-\psi_j(o_b)}{S_{\lambda_j}(L_b)} + \alpha_{t_b}\psi_j(t_b)\,,
\end{align*}
where we used the $\delta$-conditions and the fact that $\begin{pmatrix} \psi_{\lambda}(o_b) \\ \psi'_{\lambda}(o_b) \end{pmatrix} = M_{\lambda}(b)^{-1} \begin{pmatrix} \psi_{\lambda}(t_b) \\ \psi'_{\lambda}(t_b) \end{pmatrix}$.

We thus get
\begin{equation}\label{eq:InvNonBack}
\begin{aligned}
(\cB f_j)(b) & = \psi_j(t_b) \bigg(\sum_{b^+\in \cN_b^+} \left[\frac{C_{\lambda_j}(L_{b^+})}{S_{\lambda_j}(L_{b^+})} - \frac{\zeta^{\gamma_j}(b^+)}{S_{\lambda_j}(L_{b^+})}\right] + \frac{S_{\lambda_j}'(L_b)}{S_{\lambda_j}(L_b)} +\alpha_{t_b} \bigg)- \frac{\psi_j(o_b)}{S_{\lambda_j}(L_b)} \\
& = \frac{1}{\zeta^{\gamma_j}(b)} f_j(b) + O_{\psi_j, \eta_0}(b) \,,
\end{aligned}
\end{equation}
with $O_{\psi_j, \eta_0}(b) = \psi_j(t_b)O_{\gamma_j}(b)$ 
where, by \eqref{e:1}, we have
\begin{align*}
O_{\gamma_j}(b) = &  \sum_{b^+\in \cN_b^+} \left[ \frac{C_{\lambda_j}(L_{b^+}) - \zeta^{\gamma_j}(b^+)}{S_{\lambda_j}(L_{b^+})}- \frac{C_{\gamma_j}(L_{b^+})- \zeta^{\gamma_j}(b^+)}{S_{\gamma_j}(L_{b^+})} \right]\\
& + \frac{S_{\lambda_j}'(L_b)}{S_{\lambda_j}(L_b)} - \frac{S_{\gamma_j}'(L_b)}{S_{\gamma_j}(L_b)} +\frac{1}{\zeta^{\gamma_j}(b)S_{\gamma_j}(L_b)} - \frac{1}{\zeta^{\gamma_j}(b)S_{\lambda_j}(L_b)} .
\end{align*}

Similarly, since $f_j^{\ast} = \iota f_j$, where $\iota$ is the edge-reversal, and since $\cB^{\ast} = \iota \cB \iota$, we get $\cB^{\ast} f_j^{\ast} = \frac{1}{\iota \zeta^{\gamma_j}} f_j^{\ast} + \iota O_{\psi_j, \eta_0}(b)$,
with $\iota O_{\psi_j, \eta_0}(b) =  \psi_j(o_b)\iota O_{\gamma_j}(b)$.

Note that, by Corollary~\ref{cor:ConfinedTilMay11} and analyticity of $C_z(L_b),S_z(L_b)$,  we have for all $s>0$
\begin{equation}\label{eq:OSmall}
\frac{1}{N} \|O_{\gamma_j}\|^s_{\ell^s(B_N)}=O^{(s)}_{N\to +\infty, \gamma}(\eta_0).
\end{equation}

If $K=K^\gamma\in \mathscr{H}_k$ for some $k\geq 1$, is some operator, possibly depending on $\gamma\in \C^+$, we define its \emph{non-backtracking quantum variance} by
\begin{align}\label{e:nbtqv}
\mathrm{Var}^I_{\mathrm{nb}, \eta_0}(K) &:= \frac{1}{\mathbf{N}_N(I)} \sum_{\lambda_j\in I} \left| \langle f_j^{\ast}, K^{\lambda_j+\ii \eta_0}_B f_j \rangle \right| \\
&=  \frac{1}{\mathbf{N}_N(I)} \sum_{\lambda_j\in I} \left| \sum_{(b_1,\dots, b_k)\in \mathrm{B}_k} \overline{f_j^{\ast}(b_1)} K^{\lambda_j+\ii \eta_0} (b_1,\dots, b_k) f_j(b_k) \right| .
\end{align}

The advantage of this non-backtracking variance is that it is invariant under a spectral averaging operator, whose kernel is relatively simple due to non-backtracking.

More precisely, we have by \eqref{eq:InvNonBack} and its analog for $f_j^{\ast}$ that
\begin{align}\label{e:theerror}
\langle f_j^{\ast}, K_B f_j\rangle = \langle (\iota \zeta^{\gamma_j}\cB^{\ast})^{n-r} f_j^{\ast}, K_B (\zeta^{\gamma_j}\cB)^r f_j \rangle + \mbox{ error}
\end{align}
for any $n\geq 1$ and $0\leq r\leq n$, where the error depends on $O_{\psi_j,\eta_0}$ and will be developed in \eqref{eq:RInv2} and \eqref{eq:RInv4}. But before starting the calculation, we would like to express the ``sandwich'' $(\cB\overline{\iota \zeta^{\gamma_j}})^{n-r}K_B(\zeta^{\gamma_j}\cB)^r$ as a new observable in phase space\footnote{Very roughly, this heuristic can be likened to the classical proof of quantum ergodicity on manifolds \cite{CdV,Zel} where we replace the sandwich $\ee^{-\ii t\Delta}a\ee^{\ii t\Delta}$ by $\mathrm{Op}(a\circ G^t)$, with $G^t$ the geodesic flow, using Egorov's theorem. We see that even if $a(x)$ is a function, one must consider the phase space observable $a\circ G^t(x,\xi)$. In our case, even if $K(f_N)\in\mathscr{H}_0$ is a function, the new observable $\cR_{n,r}^{\gamma}K$ will lie in $\mathscr{H}_n$.}. 

First note that 
\begin{align*}[(\zeta^{\gamma}\cB)^r f](b_1) = \sum_{(b_2,\dots,b_{r+1})\in B^{b_1}_r} \zeta^{\gamma}(b_1)\cdots \zeta^{\gamma}(b_r) f(b_{r+1})\\
[(\iota \zeta^{\gamma}\cB^{\ast} )^k f](b_1) = \sum_{(b_{-k+1},\dots,b_0)\in B_{r,b_1}} \zeta^{\gamma}(\hat{b}_1) \cdots \zeta^{\gamma}(\hat{b}_{-k+2})f(b_{-k+1}).
\end{align*}
Recalling \eqref{eq:DefKB} this yields:
\begin{prp}If we define $\cR_{n,r}^{\gamma_j}: \mathscr{H}_k\longrightarrow \mathscr{H}_{n+k}$ by
\begin{multline}\label{e:rnr}
(\cR_{n,r}^{\gamma_j}K)(b_1 ; b_{n+k}) \\
= \overline{\zeta^{\gamma_j}(\hat{b}_2)\cdots \zeta^{\gamma_j}(\hat{b}_{n-r+1})} K(b_{n-r+1}; b_{n-r+k})\zeta^{\gamma_j}(b_{n-r+k})\cdots \zeta^{\gamma_j}(b_{n+k-1}) \,,
\end{multline}
we have
\[
\langle (\iota \zeta^{\gamma_j}\cB^{\ast})^{n-r} f_j^{\ast}, K_B (\zeta^{\gamma_j}\cB)^r f_j \rangle = \langle f_j^{\ast}, (\cR_{n,r}^{\gamma_j}K)_Bf_j\rangle .
\]
\end{prp}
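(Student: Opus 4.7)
The plan is to verify the identity by expanding both sides as the same sum over non-backtracking paths of length $n+k$, and recognizing the coefficient in the middle as precisely $\cR_{n,r}^{\gamma_j}K$.

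First, on the right-hand side, since $\cR_{n,r}^{\gamma_j}K\in\mathscr{H}_{n+k}$, the definition \eqref{eq:DefKB} of $K_B$ gives at once
\[
\langle f_j^\ast,(\cR_{n,r}^{\gamma_j}K)_Bf_j\rangle = \sum_{(b_1;b_{n+k})\in\mathrm{B}_{n+k}}\overline{f_j^\ast(b_1)}\,(\cR_{n,r}^{\gamma_j}K)(b_1;b_{n+k})\,f_j(b_{n+k}).
\]
So it will suffice to show that the left-hand side admits the same expansion.

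For the left-hand side, I plug in the two explicit formulas displayed just before the proposition for $(\zeta^{\gamma_j}\cB)^r f_j$ and $(\iota\zeta^{\gamma_j}\cB^\ast)^{n-r} f_j^\ast$, together with the definition \eqref{eq:DefKB} of $K_B$. Evaluating both sides of the inner product at a common bond $c$ and renaming $c=b_{n-r+1}$, the three iterated operators produce respectively a backward non-backtracking extension $(b_1,\ldots,b_{n-r})$ of $c$ on the left, the $k$-tuple $(b_{n-r+1},\ldots,b_{n-r+k})$ on which $K$ is supported in the middle, and a forward non-backtracking extension $(b_{n-r+k+1},\ldots,b_{n+k})$ on the right. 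After summing in $c$, regrouping yields a single sum over $(n+k)$-tuples $(b_1,\ldots,b_{n+k})$, weighted by $\overline{f_j^\ast(b_1)}f_j(b_{n+k})$ times a product of $\zeta^{\gamma_j}$-factors and $K(b_{n-r+1};b_{n-r+k})$.

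It then remains to check that this sum ranges over $\mathrm{B}_{n+k}$ and that the coefficient matches the formula \eqref{e:rnr}. The first point is immediate from \eqref{eq:defNonBackPath}: the non-backtracking condition only involves consecutive pairs, and every such pair lies inside one of the three (overlapping) non-backtracking segments, each already assumed non-backtracking by the preceding formulas. For the coefficient, the complex conjugation of the left slot of the sesquilinear inner product turns the $\zeta^{\gamma_j}(\hat{b}_i)$-factors coming from $(\iota\zeta^{\gamma_j}\cB^\ast)^{n-r}$ into $\overline{\zeta^{\gamma_j}(\hat{b}_2)\cdots\zeta^{\gamma_j}(\hat{b}_{n-r+1})}$, $K_B$ contributes the factor $K(b_{n-r+1};b_{n-r+k})$, and $(\zeta^{\gamma_j}\cB)^r$ contributes $\zeta^{\gamma_j}(b_{n-r+k})\cdots\zeta^{\gamma_j}(b_{n+k-1})$ --- exactly the three factors appearing in \eqref{e:rnr}.

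There is no substantive obstacle here: the proof is pure bookkeeping. The only care required is in tracking orientation conventions, namely that the reversed edges $\hat{b}_i$ naturally produced by the adjoint $\cB^\ast$ on the left of the inner product match, after complex conjugation, the overlined $\zeta^{\gamma_j}(\hat{b}_i)$ factors prescribed in \eqref{e:rnr}, and that the index shift $c=b_{n-r+1}$ correctly aligns the middle block with the $K$-support.
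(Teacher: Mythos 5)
Your proof is correct and follows essentially the same route as the paper, which likewise just expands $(\zeta^{\gamma_j}\cB)^r f_j$ and $(\iota\zeta^{\gamma_j}\cB^{\ast})^{n-r}f_j^{\ast}$ via the two displayed formulas preceding the proposition, applies \eqref{eq:DefKB}, and reads off the coefficient \eqref{e:rnr} after regrouping the three overlapping non-backtracking segments into a single path in $\mathrm{B}_{n+k}$. Your explicit checks of the non-backtracking condition on consecutive pairs and of the conjugation of the $\zeta^{\gamma_j}(\hat{b}_i)$ factors are exactly the bookkeeping the paper leaves implicit.
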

Thus, $\cR_{n,r}^{\gamma_j}K$ is the ``observable'' we seek.

We now derive the expression of the error in \eqref{e:theerror} more precisely. We have
\begin{equation}\label{eq:RInv1}
\begin{aligned}
&\sum_{(b_1 ; b_{n+k})\in \mathrm{B}_{n+k}} \overline{f_j^{\ast}(b_1)} (\cR_{n,r}^{\gamma_j}K)(b_1 ; b_{n+k}) f_j(b_{n+k}) \\
&=\sum_{(b_1 ; b_{n+k-1})\in \mathrm{B}_{n+k-1}}\Big{[} \overline{f_j^*(b_1) \zeta^{\gamma_j}(\hat{b}_2)\cdots \zeta^{\gamma_j}(\hat{b}_{n-r+1})} K(b_{n-r+1} ; b_{n-r+k})\\
& \qquad\qquad\qquad\qquad\qquad\qquad \times \zeta^{\gamma_j}(b_{n-r+k})\cdots \zeta^{\gamma_j}(b_{n+k-2}) \cdot [\zeta^{\gamma_j}\cB f_j](b_{n+k-1})\Big{]}\\
&= \sum_{(b_1 ; b_{n+k-1})\in \mathrm{B}_{n+k-1}} \overline{f_j^{\ast}(b_1)} (\cR_{n-1,r-1}^{\gamma_j}K)(b_1 ; b_{n-1+k})  [f_j+\zeta^{\gamma_j} O_{\psi_j,\eta_0}](b_{n+k-1}) \\
&= \langle f_j^{\ast},(\cR_{n-1,r-1}^{\gamma_j}K)_B [f_j+\zeta^{\gamma_j}O_{\psi_j,\eta_0}]\rangle.
\end{aligned}
\end{equation}
where we used \eqref{eq:InvNonBack}. Iterating this equation $r$ times, we obtain
\begin{equation}\label{eq:RInv2}
\langle f_j^{\ast},(\cR_{n,r}^{\gamma_j}K)_Bf_j\rangle = \langle f_j^{\ast},(\cR_{n-r,0}^{\gamma_j}K)_Bf_j\rangle + \sum_{\ell=1}^r\langle f_j^{\ast},(\cR_{n-\ell,r-\ell}^{\gamma_j}K)_B\zeta^{\gamma_j}O_{\psi_j,\eta_0}\rangle .
\end{equation}

By a computation similar to \eqref{eq:RInv1}, we obtain that for all $m>1$,
\begin{equation}\label{eq:RInv3}
\begin{aligned}
&\sum_{(b_1 ; b_{m+k})\in \mathrm{B}_{m+k}} \overline{f_j^{\ast}(b_1)} (\cR_{m,s}^{\gamma_j}K)(b_1 ; b_{m+k}) f_j(b_{m+k}) \\
&= \sum_{(b_2,\ldots,b_{m+k})\in \mathrm{B}_{m+k-1}} \big{(}\overline{f_j^{\ast}(b_2) + \iota \zeta^{\gamma_j}\iota O_{\psi_j, \eta_0}(b_2)}\big{)} (\cR_{m-1,s}^{\gamma_j}K)(b_2 ; b_{m+k}) f_j(b_{m+k})\\
& = \langle f_j^{\ast}+\iota\zeta^{\gamma_j}\iota O_{\psi_j,\eta_0},(\cR_{m-1,s}^{\gamma_j}K)_B f_j\rangle.
\end{aligned}
\end{equation}

Applying \eqref{eq:RInv3} $n-r$ times to the first term in the left-hand side of \eqref{eq:RInv2}, we obtain
\begin{equation}\label{eq:RInv4}
\langle f_j^{\ast},(\cR_{n-r,0}^{\gamma_j} K)_Bf_j\rangle  = \langle f_j^{\ast}, K_Bf_j\rangle  + \sum_{\ell'=1}^{n-r} \langle \iota\zeta^{\gamma_j}\iota O_{\psi_j,\eta_0},(\cR^{\gamma_j}_{n-r-\ell',0}K)_Bf_j\rangle.
\end{equation}

As \eqref{eq:RInv2} and \eqref{eq:RInv4} hold for any $r\le n$, we thus get the desired (approximate) invariance of the quantum variance under the averaging operator $\cR_{n,r}^{\gamma}$:

\begin{prp}\label{p:VarInv}
\begin{equation}\label{eq:VarInv}
\mathrm{Var}^I_{\mathrm{nb}, \eta_0}(K) \le \mathrm{Var}^I_{\mathrm{nb}, \eta_0}\bigg(\frac{1}{n}\sum_{r=1}^n\cR_{n,r}^{\gamma}K\bigg) + \cE_{n ,\eta_0}(K),
\end{equation}
where
\begin{multline}\label{eq:RemainderVarInv}
\cE_{n ,\eta_0}(K) = \frac{1}{n} \sum_{r=1}^n \frac{1}{\mathbf{N}_N(I)}\sum_{\lambda_j\in I}\bigg| \sum_{\ell=1}^r\langle f_j^{\ast},(\cR_{n-\ell,r-\ell}^{\gamma_j}K)_B\zeta^{\gamma_j}O_{\psi_j,\eta_0}\rangle\\
+ \sum_{\ell'=1}^{n-r} \langle \iota\zeta^{\gamma_j}\iota O_{\psi_j,\eta_0},(\cR^{\gamma_j}_{n-r-\ell',0}K)_Bf_j\rangle \bigg|\,.
\end{multline}
\end{prp}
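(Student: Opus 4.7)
The work is essentially done already in equations \eqref{eq:RInv2} and \eqref{eq:RInv4}; the proposition just packages them. My plan is to combine those two identities to express $\langle f_j^{\ast}, K_B f_j\rangle$ in terms of $\langle f_j^{\ast},(\cR_{n,r}^{\gamma_j}K)_B f_j\rangle$ plus remainders, then average over $r$, take absolute values, sum over $\lambda_j \in I$, and divide by $\mathbf{N}_N(I)$.

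More precisely, I would first chain \eqref{eq:RInv2} and \eqref{eq:RInv4}. Substituting the expression for $\langle f_j^{\ast},(\cR_{n-r,0}^{\gamma_j}K)_B f_j\rangle$ given by \eqref{eq:RInv4} into \eqref{eq:RInv2} yields, for every $0\le r\le n$,
\[
\langle f_j^{\ast},(\cR_{n,r}^{\gamma_j}K)_B f_j\rangle = \langle f_j^{\ast},K_B f_j\rangle + A_j(r) + B_j(r),
\]
where $A_j(r) := \sum_{\ell=1}^r\langle f_j^{\ast},(\cR_{n-\ell,r-\ell}^{\gamma_j}K)_B\zeta^{\gamma_j}O_{\psi_j,\eta_0}\rangle$ and $B_j(r) := \sum_{\ell'=1}^{n-r} \langle \iota\zeta^{\gamma_j}\iota O_{\psi_j,\eta_0},(\cR^{\gamma_j}_{n-r-\ell',0}K)_B f_j\rangle$. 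Rearranging, we get
\[
\langle f_j^{\ast}, K_B f_j\rangle = \langle f_j^{\ast}, (\cR_{n,r}^{\gamma_j}K)_B f_j\rangle - A_j(r) - B_j(r).
\]

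Next I average this identity over $r=1,\ldots,n$, using linearity of the map $K\mapsto \langle f_j^{\ast},K_B f_j\rangle$:
\[
\langle f_j^{\ast}, K_B f_j\rangle = \Big\langle f_j^{\ast}, \Big(\tfrac{1}{n}\sum_{r=1}^n \cR_{n,r}^{\gamma_j}K\Big)_B f_j\Big\rangle - \tfrac{1}{n}\sum_{r=1}^n\bigl(A_j(r)+B_j(r)\bigr).
\]
Applying $|\cdot|$ on both sides and the triangle inequality, then summing over $\lambda_j\in I$ and dividing by $\mathbf{N}_N(I)$, gives by the definition \eqref{e:nbtqv} of the non-backtracking quantum variance:
\[
\mathrm{Var}^I_{\mathrm{nb},\eta_0}(K) \le \mathrm{Var}^I_{\mathrm{nb},\eta_0}\Big(\tfrac{1}{n}\sum_{r=1}^n \cR_{n,r}^{\gamma}K\Big) + \tfrac{1}{\mathbf{N}_N(I)}\sum_{\lambda_j\in I}\Big|\tfrac{1}{n}\sum_{r=1}^n\bigl(A_j(r)+B_j(r)\bigr)\Big|,
\]
and the last term is exactly $\cE_{n,\eta_0}(K)$ as defined in \eqref{eq:RemainderVarInv}.

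There is no real obstacle: the only subtlety is to make sure that the two iterations performed in \eqref{eq:RInv1}--\eqref{eq:RInv3} are correctly chained, which is a purely algebraic book-keeping step using the eigenvalue relation \eqref{eq:InvNonBack} and its analog $\cB^{\ast}f_j^{\ast} = \tfrac{1}{\iota\zeta^{\gamma_j}} f_j^{\ast} + \iota O_{\psi_j,\eta_0}$. The harder work, namely showing that $\cE_{n,\eta_0}(K)$ is negligible in the regime $N\to\infty$ then $\eta_0\downarrow 0$, is not part of this statement: it will rely on the smallness estimate \eqref{eq:OSmall} for $O_{\gamma_j}$ together with upper bounds on the iterates $\cR_{n-\ell,r-\ell}^{\gamma_j}K$ derived later in the paper from \Green{} and \NonDirichlet{}.
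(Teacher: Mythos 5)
Your proposal is correct and follows exactly the paper's (implicit) argument: chain \eqref{eq:RInv2} with \eqref{eq:RInv4}, average over $r$, and apply the triangle inequality. The only nitpick is that your final remainder $\frac{1}{\mathbf{N}_N(I)}\sum_{\lambda_j\in I}\bigl|\frac1n\sum_{r=1}^n(A_j(r)+B_j(r))\bigr|$ is not \emph{exactly} $\cE_{n,\eta_0}(K)$ but is bounded by it after one more triangle inequality (the paper places the absolute value inside the average over $r$), which is harmless since it only strengthens \eqref{eq:VarInv}.
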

The quantity $\cE_{n ,\eta_0}(K)$ is negligible thanks to the following lemma proven in Section~\ref{subsec:ApplicHolom}.
\begin{lem}\label{lem:SmallO}
For any $n\in \N$, $k\geq 0$ and any $K\in \mathscr{H}_k$ possibly depending on $\gamma$, but satisfying \eqref{eq:APrioriBoundOperators}, we have
\[
\cE_{n ,\eta_0}(K) =O_{N\to +\infty, \eta_0}^{(n)}(\eta_0).
\]
\end{lem}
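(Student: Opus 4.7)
The remainder $\cE_{n,\eta_0}(K)$ defined in \eqref{eq:RemainderVarInv} is, after the triangle inequality, a finite $O(n^2)$ linear combination (with weight $1/n$) of quantities of two symmetric types,
\begin{equation*}
T_j^{(1)} = \bigl|\langle f_j^{\ast},\,(\cR^{\gamma_j}_{n-\ell,r-\ell} K)_B\, \zeta^{\gamma_j} O_{\psi_j,\eta_0}\rangle\bigr|, \qquad T_j^{(2)} = \bigl|\langle \iota\zeta^{\gamma_j}\iota O_{\psi_j,\eta_0},\,(\cR^{\gamma_j}_{n-r-\ell',0}K)_B f_j\rangle\bigr|,
\end{equation*}
averaged as $\frac{1}{\mathbf{N}_N(I)}\sum_{\lambda_j\in I}$. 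The two families are symmetric, so I focus on the first. The global strategy is to apply Cauchy-Schwarz in $j$ to separate each $T_j^{(1)}$ into a regular eigenfunction factor $\|f_j^{\ast}\|_{\ell^2(B_N)}^2$ and a small factor of order $\|(\cR^{\gamma_j}_{n-\ell,r-\ell}K)_B\zeta^{\gamma_j}O_{\psi_j,\eta_0}\|_{\ell^2(B_N)}^2$, and to then show that after summation over $j$ the first is $O(1)$ and the second is $O(\eta_0^2)$.

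For the eigenfunction factor $\frac{1}{\mathbf{N}_N(I)}\sum_{\lambda_j\in I}\|f_j^{\ast}\|^2$, I would unfold \eqref{e:fj}, use the lower bound $|S_{\lambda_j}(L_b)|\geq C_{\mathrm{Dir}}$ from \NonDirichlet{} and moment bounds on $\zeta^{\gamma_j}$ (consequences of \Green{} and the Benjamini-Schramm transfer estimates of Appendix~\ref{sec:BS}) to reduce $\sum_{b\in B_N}|f_j^{\ast}(b)|^2$ to a constant multiple of $\sum_{v\in V_N}|\psi_j(v)|^2$, the factor $d(v)\leq D$ from \Data{} being absorbed. Switching the $j$ and $v$ sums then invokes the uniform spectral-projector bound
\begin{equation*}
\sum_{\lambda_j\in I}|\psi_j(v)|^2 \leq C_I,
\end{equation*}
which follows from the uniform boundedness of $\Im g_N^{\lambda+\ii\eta}(v,v)$ (cf.\ Corollary~\ref{cor:ConfinedTilMay11}). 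Combined with $|V_N|=N$ and $\mathbf{N}_N(I)\asymp N$, this gives $\frac{1}{\mathbf{N}_N(I)}\sum_j \|f_j^{\ast}\|^2 = O(1)$.

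For the small factor, I would first control the action of $(\cR^{\gamma_j}_{n-\ell,r-\ell}K)_B$ on $\ell^2(B_N)$: its kernel is supported on non-backtracking paths of length at most $n+k$, so each row and column has at most $D^{n+k}$ nonzero entries, each a product of at most $n$ factors $\zeta^{\gamma_j}$ and an entry of $K$. Using Schur's test combined with the a priori bound \eqref{eq:APrioriBoundOperators} and moment control on $\zeta^{\gamma}$, the contribution of $(\cR^{\gamma_j}_{n-\ell,r-\ell}K)_B$ reduces on average to a constant $C_{n,K}$, and the task becomes bounding
\begin{equation*}
\frac{1}{\mathbf{N}_N(I)}\sum_{\lambda_j\in I}\sum_{b\in B_N} |\zeta^{\gamma_j}(b)|^2\,|\psi_j(t_b)|^2\,|O_{\gamma_j}(b)|^2 .
\end{equation*}
Here I would exploit the holomorphic nature of $\gamma\mapsto O_{\gamma}(b)$ on $\Omega_I$ (available because \NonDirichlet{} keeps $S_{\gamma}$ bounded away from zero on $\Omega_I$): each summand in the definition of $O_{\gamma_j}(b)$ is the difference of a $\gamma$-holomorphic quantity evaluated at $\gamma_j=\lambda_j+\ii\eta_0$ and at $\lambda_j$, so Cauchy's integral formula yields a pointwise estimate $|O_{\gamma_j}(b)| \leq \eta_0 \cdot M(b)$, where $M(b)$ is a supremum over a compact neighborhood of $I$ of holomorphic combinations of $C_{\gamma}, S_{\gamma}, \zeta^{\gamma}$ whose moments are finite thanks to \Green{}. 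A H\"older inequality with exponents tuned to balance the finite moments of $\zeta^{\gamma}$, of $M$, and the spectral-projector bound on $\sum_{\lambda_j\in I}|\psi_j(t_b)|^2$, then produces a bound of order $\eta_0^2$ for the averaged double sum. Combined with the $O(1)$ bound on the first factor, this yields $O^{(n)}_{N\to +\infty,\eta_0}(\eta_0)$.

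The hard part will be the pointwise holomorphic estimate on $O_{\gamma_j}(b)$ and its interaction with the moments of $\zeta^{\gamma}$. The naive differentiation suggests a uniform pointwise bound $|O_{\gamma_j}(b)|\lesssim \eta_0$, but the factor $1/\zeta^{\gamma_j}(b)$ appearing in $O_{\gamma_j}(b)$ prevents this: only a local constant $M(b)$ with finite but unbounded moments can be extracted, which is precisely why \eqref{eq:OSmall} gives an $\ell^s$ gain of $\eta_0^{1/s}$ rather than $\eta_0$. Balancing $\eta_0 M(b)$ against $|\zeta^{\gamma_j}(b)|^2$ and $|\psi_j(t_b)|^2$ requires the full strength of \Green{} for every $s>0$, together with a careful choice of H\"older exponents, so that the various moment blowups are absorbed while still extracting the key $\eta_0^2$ power in the averaged second factor.
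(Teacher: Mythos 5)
Your overall skeleton matches the paper's: both proofs apply Cauchy--Schwarz in $j$ to split each term of $\cE_{n,\eta_0}(K)$ into an eigenfunction factor $\frac1N\sum_j\chi(\lambda_j)\|f_j^{\ast}\|^2$ that must be shown to be $O_{N\to+\infty,\eta_0}(1)$ and a factor $\frac1N\sum_j\chi(\lambda_j)\|(\cR\cdots K)_B\zeta^{\gamma_j}O_{\psi_j,\eta_0}\|^2$ that carries the smallness via \eqref{eq:OSmall}. However, there is a genuine gap in how you control these eigenvalue sums. Both factors have the form $\frac1N\sum_j\sum_b \chi(\lambda_j)\,c^{\gamma_j}(b)\,\psi_j(\cdot)\overline{\psi_j(\cdot)}$ where the coefficients $c^{\gamma}(b)$ involve $\zeta^{\gamma}(\hat b)$, $1/S_{\lambda}^2(L_b)$, $O_{\gamma}(b)$, etc. These coefficients are \emph{not} uniformly bounded over $b$; the hypotheses \Green{} and the Benjamini--Schramm machinery of Appendix~\ref{sec:BS} only control their moments \emph{averaged over the graph} ($\frac1N\sum_b|c^{\gamma}(b)|^s$), not pointwise. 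Consequently you cannot ``reduce $\sum_b|f_j^{\ast}(b)|^2$ to a constant multiple of $\sum_v|\psi_j(v)|^2$'' for fixed $j$. Moreover the uniform spectral-projector bound $\sum_{\lambda_j\in I}|\psi_j(v)|^2\le C_I$ that you then invoke, pointwise in $v$ and uniformly in $N$, does not follow from Corollary~\ref{cor:ConfinedTilMay11}: that corollary bounds graph-averages of Green functions, not $\sup_v \Im g_N^{\lambda+\ii\eta}(v,v)$, and no pointwise Wegner-type estimate is assumed. Since the unbounded coefficients and the non-uniform weights $|\psi_j(v)|^2$ sit in the same double sum, a H\"older juggling of exponents in $b$ alone cannot decouple them.

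The paper's way around this is precisely the contour-integral machinery of Section~\ref{sec:Complex}: Proposition~\ref{Cor:SumToInt} and Corollary~\ref{cor:AAMagic} rewrite $\frac1N\sum_j\chi(\lambda_j)K^{\gamma_j}(b_1;b_k)K'^{\gamma_j}(b_1';b_{k'}')\psi_j\overline{\psi_j}$ as a contour integral of $g_N^z$ against the analytic extensions of the observables, which is then estimated by H\"older together with \eqref{eq:APrioriBoundOperators} and the averaged Green-function moments. This is the step your argument is missing; your pointwise holomorphic estimate $|O_{\gamma_j}(b)|\le\eta_0 M(b)$ is fine (it is essentially \eqref{eq:OSmall}), but it must be fed into that contour-integral bound rather than into a pointwise spectral-projector estimate.
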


\section{Reduction to non-backtracking variances}\label{Sec:Reduction}
From now on, all the constructions take place on the graph $\cQ_N$, and $N$ varies and goes to $+\infty$. So it should be understood that all objects depend on $N$, although it is not always apparent in the notation.

In this section, we will show that the statement of Theorem \ref{thm:integralver} can be reduced to an estimate on the non-backtracking variances \eqref{e:nbtqv}. To this end, we will start from an integral operator $\mathcal{K}\in \mathscr{K}_k$, and build several new operators $J_{\mathcal{K}}^\gamma\in \mathscr{H}_{k'}$ from it. These $J_{\cK}^\gamma$ will depend on the parameter $\gamma\in \C^+$ (and $N$), but this dependence will always satisfy the following hypothesis.

\begin{defa}\label{def:Hol}
Let $k\ge 0$ and let $\C^+\ni \gamma\mapsto K^\gamma = K_N^\gamma\in \mathscr{H}_k$. We say that $K^\gamma$ satisfies hypothesis %
\raisebox{1.2\ht\strutbox}{\hypertarget{lab:hol}{}}\textbf{(Hol)} if 
\begin{itemize}
\item For all $0<\eta_0< \eta_{\mathrm{Dir}}$ and $(b_1,\dots,b_k)\in \mathrm{B}_k$, the map $\R\ni \lambda \mapsto K^{\lambda +\ii\eta_0}(b_1,\dots,b_k)$ has an analytic extension $K_{\eta_0}^z(b_1,\dots,b_k)$ to the strip $\{z : |\Im z|<\eta_0/2\}$.
\item For all $s>0$, we have
\begin{equation}\label{eq:APrioriBoundOperators}
\sup\limits_{\eta_0\in (0,\eta_{\mathrm{Dir}})} \limsup\limits_{N\to +\infty}    \sup\limits_{\eta_1\in (-\frac{\eta_0}{2},\frac{\eta_0}{2})}\sup\limits_{\lambda\in I_1} \frac{1}{N}\sum_{(b_1,\dots,b_k)\in \mathrm{B}_k} | K_{\eta_0}^{\lambda+\ii\eta_1}(b_0,\dots,b_k)|^s < +\infty.
\end{equation}
\item For almost all $\lambda\in I_1$, for all $s>0$ and all $t\in (-1/2,1/2)$, we have 
\begin{equation}\label{eq:APrioriBoundOperators3}
 \lim\limits_{\eta_0\downarrow 0} \limsup\limits_{N\to +\infty}  \frac{1}{N} \sum_{(b_1,\dots,b_k)\in \mathrm{B}_k} | K_{\eta_0}^{\lambda+t\ii\eta_0}(b_0,\dots,b_k) - K^{\lambda+\ii\eta_0}(b_0,\dots,b_k) |^s =0.
\end{equation}
\end{itemize}
\end{defa}

Beware that, unless $\gamma \mapsto K^{\gamma}(b_1,\dots,b_k)$ is analytic, the quantities $K^{\lambda + \ii \eta_0 +\ii \eta_1}(b_1,\dots,b_k)$ and $K_{\eta_0}^{\lambda +\ii \eta_1}(b_1,\dots,b_k)$ are in general different for $\eta_1\neq 0$.

\begin{rem}\label{rem:HolStable}
Note that if $K^\gamma$ and $J^\gamma$ satisfy \Hol{}, so does their sum and product. If $K^\gamma$ satisfies \Hol{}, then $\overline{K^\gamma}$ also satisfies \Hol{}. Indeed, its restriction to a horizontal line is real-analytic, so it can be extended holomorphically to a strip by an extension satisfying the desired properties.

Last but not least, thanks to Corollary \ref{cor:ConfinedTilMay11} and Proposition \ref{prop:GreenisCool3}, the functions in the class $\mathscr{L}_k^\gamma$ introduced in Definition \ref{def:GeneralContinuousOp} all satisfy \Hol{}.
\end{rem}

Given a linear operator $F^\gamma : \ell^2(V)\longrightarrow \ell^2(V)$ possibly depending on $\gamma\in \C^+$, we define 
\[
\varie(F^\gamma):=  \frac{1}{\mathbf{N}_N(I)}\sum_{\lambda_j\in I} |\langle \mathring{\psi}_j, F^{\lambda_j+\ii \eta_0} \mathring{\psi}_j \rangle_{\ell^2(V)}| \, ,
\]
where $\mathring{\psi}_j$ is the restriction of $\psi_j$ to the vertices, which  is well-defined thanks to \eqref{eq:Continuity}
\[
\mathring{\psi}_j(v) = \psi_j(v).
\]

Let us write $\Psi_{\gamma,v}(w) := \Im \tilg_N^{\gamma}(v,w)$ and define, for  $K\in \mathscr{H}_k$ (possibly depending on $\gamma\in \C^+$):
\begin{equation}\label{eq:DefBracket2}
\begin{aligned}
\langle K \rangle^{\gamma} &:= \frac{1}{\sum_{v\in V}\Psi_{\gamma,v}(v)} \sum_{(b_1,\ldots, b_k)\in \mathrm{B}_k} K(b_1,\ldots, b_k)\Psi_{\gamma,o_{b_1}}(t_{b_k})  &\text{ if } k\geq 1 \\
\langle K \rangle^{\gamma} &:= \frac{1}{\sum_{v\in V}\Psi_{\gamma,v}(v)} \sum_{v\in V} K(v)\Psi_{\gamma,v}(v)  &\text{ if } k=0.
\end{aligned}
\end{equation}

Finally, we denote by $\mathbf{1} = \mathbf{1}_0 \in \mathscr{H}_0$ the constant function equal to one on every vertex.
In the sequel, we will also denote by $\hat{\mathbf{1}}$ the function on $\mathcal{G}_N$ which is constant equal to one.

As a first step towards the reduction to non-backtracking variances, we control the quantities appearing in Theorems~\ref{thm:qeqg} and \ref{thm:integralver} by some discrete variances of the form $\varie$.

\begin{prp}\phantomsection \label{prop:Reduc1}
\begin{enumerate}[\rm (1)]
\item Let $f_N\in L^\infty(\mathcal{G}_N)$ be a sequence of functions with $\|f_N\|_{\infty}  \le 1$. We may build $(J^\gamma_{f,p})_{p=1,\dots,8}$, each belonging to $\mathscr{H}_{0}$ or $\mathscr{H}_{1}$  and satisfying \emph{\Hol{}}, such that
\[
\lim_{\eta_0\downarrow 0}\limsup_{N\to\infty} \frac{1}{\mathbf{N}_N(I)}\sum_{\lambda_j\in I}|\langle \psi_j, f \psi_j\rangle-\langle  f \rangle_{\gamma_j}|
\le C \sum_{p=1}^8\lim_{\eta_0\downarrow 0}\limsup_{N\to\infty} \varie\left( (J_{f,p}^\gamma)_G -\langle J_{f,p}^\gamma\rangle^{\gamma} \mathbf{1}\right).
\]
\item Let $k\geq 1$ and let $\mathcal{K}_N\in \mathscr{K}_k$ be a sequence of non-backtracking integral operators satisfying $|\mathcal{K}_N(x,y)|\leq 1$ for all $x,y\in \mathcal{G}_N$.

We may build $(J^\gamma_{\mathcal{K},p})_{p=1,\ldots,6}$, each belonging to $\mathscr{H}_{k_p}$ for $k-2\le k_p\le k$, and satisfying \emph{\Hol{}}, such that
\[
\lim_{\eta_0\downarrow 0}\limsup_{N\to\infty} \frac{1}{\mathbf{N}_N(I)}\sum_{\lambda_j\in I}|\langle \psi_j, \mathcal{K} \psi_j\rangle-\langle  \mathcal{K} \rangle_{\gamma_j}|
\le C \sum_{p=1}^{6}\lim_{\eta_0\downarrow 0}\limsup_{N\to\infty} \varie\left( (J_{\mathcal{K},p}^\gamma)_G -\langle J_{\mathcal{K},p}^\gamma\rangle^{\gamma} \mathbf{1}\right).
\]
\end{enumerate}
\end{prp}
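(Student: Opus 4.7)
My plan is to exploit the fact that on each edge $b$, both the real eigenfunction $\psi_j$ and the Green's function $\tilg_N^\gamma$ of the universal cover are reconstructible from their vertex values. The standard formula $\psi_j(x_b) = [S_{\lambda_j}(L_b-x_b)\psi_j(o_b) + S_{\lambda_j}(x_b)\psi_j(t_b)]/S_{\lambda_j}(L_b)$ makes $|\psi_j(x_b)|^2$ a quadratic form in the two vertex values; a direct computation using \eqref{eq:DefXi} and the Weyl--Titchmarsh identities of Lemma~\ref{lem:IdentitiesGreen} (in particular the multiplicative property \eqref{e:greenmul} and the quotient identity \eqref{e:zetainv}) yields the parallel edge formula
\[
\tilg_N^\gamma(x_b,x_b) = \frac{\tilg_N^\gamma(o_b,o_b)S_\gamma(L_b-x_b)^2 + \tilg_N^\gamma(t_b,t_b)S_\gamma(x_b)^2 + 2\tilg_N^\gamma(o_b,t_b)S_\gamma(x_b)S_\gamma(L_b-x_b)}{S_\gamma(L_b)^2} + \frac{S_\gamma(x_b)S_\gamma(L_b-x_b)}{S_\gamma(L_b)},
\]
so that the quadratic parts share the same edge coefficients (read as analytic functions of $\gamma$) as those of $|\psi_j(x_b)|^2$, while the ``extra'' last summand is real on the real axis and therefore contributes only $O(\eta_0)$ once we take $\Im$, integrate against $f$, sum over edges and divide by $\int_{\cG_N}\Im\tilg_N^{\gamma_j}(x,x)\,\dd x$ (which is of order $N$ by Corollary~\ref{cor:ConfinedTilMay11}).

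Integrating both expansions against $f$ and regrouping the edge sums as sums over vertices and over bonds, I would build observables $J_{f,1}^\gamma\in \mathscr{H}_0$ and $J_{f,2}^\gamma\in \mathscr{H}_1$ (essentially the edge integrals of $f$ against $S_\gamma(L_b-\cdot)^2/S_\gamma(L_b)^2$ attached to $o_b$, and against $2S_\gamma(x_b)S_\gamma(L_b-x_b)/S_\gamma(L_b)^2$ attached to the bond $b$), together with the normalisation observables $K_1^\gamma\in \mathscr{H}_0,K_2^\gamma\in \mathscr{H}_1$ obtained by setting $f\equiv \mathbf{1}$. These satisfy
\[
\int_{\cG_N} f|\psi_j|^2 = \sum_{p=1,2}\langle\mathring{\psi}_j,(J_{f,p}^{\gamma_j})_G\mathring{\psi}_j\rangle + O(\eta_0), \qquad 1=\int|\psi_j|^2 = \sum_{p=1,2}\langle\mathring{\psi}_j,(K_p^{\gamma_j})_G\mathring{\psi}_j\rangle + O(\eta_0),
\]
while the parallel edge formula for $\tilg_N^\gamma$ gives $\langle f\rangle_{\gamma_j} = (\langle J_{f,1}^{\gamma_j}\rangle^{\gamma_j}+\langle J_{f,2}^{\gamma_j}\rangle^{\gamma_j})/(\langle K_1^{\gamma_j}\rangle^{\gamma_j}+\langle K_2^{\gamma_j}\rangle^{\gamma_j}) + O(\eta_0)$. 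Substituting these three identities into the tautology $\int f|\psi_j|^2 - \langle f\rangle_{\gamma_j}\cdot\int|\psi_j|^2$ and using the algebraic cancellation $\sum_p\langle J_{f,p}^{\gamma_j}-\langle f\rangle_{\gamma_j}K_p^{\gamma_j}\rangle^{\gamma_j}\approx 0$ forced by the ratio identity, I would obtain, writing $\widetilde J_p^\gamma := J_{f,p}^\gamma - \langle f\rangle_{\gamma_j}K_p^\gamma$,
\[
\int f|\psi_j|^2 - \langle f\rangle_{\gamma_j} = \sum_{p=1,2} \langle\mathring{\psi}_j,\bigl[(\widetilde J_p^{\gamma_j})_G - \langle \widetilde J_p^{\gamma_j}\rangle^{\gamma_j}\mathbf{1}_G\bigr]\mathring{\psi}_j\rangle + O(\eta_0).
\]
Splitting each of the four summands $J_{f,p}^\gamma, \langle f\rangle_{\gamma_j}K_p^\gamma$ into its real and imaginary parts (which is necessary because the defining edge integrals are complex-valued on $I+\ii\eta_0$) yields the eight advertised observables, all in $\mathscr{H}_0\cup\mathscr{H}_1$; the hypothesis \Hol{} on each is inherited from the manifest analyticity of the edge coefficients in $\gamma$ together with the uniform lower bound \eqref{eq:lowerDir} on $|S_\gamma(L_b)|$ supplied by \NonDirichlet{} and the moment bounds of Corollary~\ref{cor:ConfinedTilMay11}.

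Part (2) follows the same plan applied to the pair $(x_{b_1},y_{b_k})$: factorise $\psi_j(x_{b_1})\psi_j(y_{b_k})$ bilinearly in the four endpoint values $\psi_j(o_{b_1}),\psi_j(t_{b_1}),\psi_j(o_{b_k}),\psi_j(t_{b_k})$, and likewise factorise $\tilg_N^\gamma(x_{b_1},y_{b_k})$ by combining the edge formula at the two extremities with the multiplicative identity \eqref{e:greenmul} to propagate along the intermediate bonds $(b_2,\ldots,b_{k-1})$. The four cross-products correspond to non-backtracking sub-paths of $(b_1,\ldots,b_k)$ of lengths $k-1,k,k-2,k-1$, producing four observables in $\mathscr{H}_{k_p}$ with $k-2\le k_p\le k$; adding the two normalisation operators from $\int|\psi_j|^2=1$ yields the six of the statement. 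The main obstacle throughout is the careful bookkeeping of the residual $O(\eta_0)$ terms --- the mismatch between $S_{\lambda_j}$ and $S_{\gamma_j}$ in the two expansions, the non-homogeneous ``extra'' summand in the edge formula for $\tilg_N^\gamma$, and the cross-terms produced by $\Im(cZ)=\Re(c)\Im Z + \Im(c)\Re Z$ when $c$ is a holomorphic edge coefficient with $\Im c = O(\eta_0)$ and $Z$ is an entry of $\tilg_N^\gamma$ --- and checking that after averaging over $\lambda_j\in I$ and sending $N\to\infty$ then $\eta_0\downarrow 0$, each such remainder vanishes; this is precisely where the integrability provided by \Green{} and Corollary~\ref{cor:ConfinedTilMay11} is consumed.
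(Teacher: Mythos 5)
Your proposal is correct and follows essentially the same route as the paper's proof in Appendix~\ref{sec:disred}: the same quadratic edge expansion of $|\psi_j|^2$ in its vertex values, the same parallel expansion \eqref{e:expansame} of $\tilg_N^\gamma(x_b,x_b)$ whose extra summand and $S_{\lambda_j}$-versus-$S_{\gamma_j}$ mismatches are absorbed as $O(\eta_0)$ via \Green{} and Corollary~\ref{cor:ConfinedTilMay11}, and the same bilinear four-endpoint factorisation for part (2). The only difference is bookkeeping: where you subtract $\langle f\rangle_{\gamma_j}$ times the discretized normalisation identity and split real/imaginary parts to reach eight observables, the paper routes the comparison through the intermediate quantity $\mathbf{f}^{\gamma_j}$ and the identity $[f]^{\gamma_j}=[\mathbf{f}^{\gamma_j}\hat{\mathbf{1}}]^{\gamma_j}$, obtaining its eight observables as two sets of four.
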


The idea of the proof is quite simple: expand the eigenfunction $\psi_j$ in the basis of solutions $S_{\lambda},C_{\lambda}$ given in \S~\ref{e:sols}, then split the average $\langle f\rangle_{\gamma_j}$ into convenient pieces, so that each operator in the discrete variance has zero mean. The scheme is quite similar to the equilateral case \cite{QEQGEQ}. We give the technical details in Appendix~\ref{sec:disred} 

\medskip

Our next step is to show that the discrete variance of $J-\langle J \rangle^{\gamma}\mathbf{1}$ can be bounded by some non-backtracking variances, plus some terms which will be negligible. Before stating the proposition, we need to introduce some notation. In the sequel, we will often write $\langle J \rangle^{\gamma}$ instead of $\langle J \rangle^{\gamma}\mathbf{1}$ in the computations of the variances.

If $d(x)$ is the degree of $x\in V$, we denote $P = \frac{1}{d} \cA_G$. Let
\[
N_{\gamma}(x) = \Im \tilg^{\gamma}(\tilde{x},\tilde{x}) \, ,\qquad P_{\gamma} = \frac{d}{N_{\gamma}} P \frac{N_{\gamma}}{d} \, .
\]
We define $\mathcal{S}_{T,\gamma}:\ell^2(V)\to \ell^2(V)$ and $\widetilde{\mathcal{S}}_{T,\gamma}:\ell^2(V)\to\ell^2(V)$ by
\[
\mathcal{S}_{T,\gamma}J = \frac{1}{T} \sum_{s=0}^{T-1} (T-s) P_{\gamma}^s J \quad \text{and} \quad \widetilde{\mathcal{S}}_{T,\gamma}J = \frac{1}{T} \sum_{s=1}^T P_{\gamma}^s J \, .
\]

We also define $\mathcal{L}^{\gamma} : \ell^2(V) \to \ell^2(B)$ and $\cE_{\gamma}:\ell^2(V)\to\ell^2(V)$ by
\[
(\cL^\gamma J)(b) = \frac{S_{\Re \gamma}^2(L_b)|\tilg^{\gamma}(t_b,t_b)|^2(1+\overline{\zeta^\gamma(b)}\zeta^\gamma(\widehat{b}))}{\Re S_{\gamma}(L_b)|\zeta^\gamma(b)|^2} \bigg(\frac{J(t_b)}{N_{\gamma}(o_b)} - \frac{J(o_b)}{N_{\gamma}(t_b)}\bigg),
\]
\[
(\cE_{\gamma}J)(o_b) = \sum_{t_b\sim o_b }\left[\frac{\Im S_{\gamma}(L_b)\Re \tilg^{\gamma}(t_b,t_b)}{\Re S_{\gamma}(L_b)}\Big(\frac{J(t_b)}{N_{\gamma}(o_b)}-\frac{J(o_b)}{N_{\gamma}(t_b)}\Big)\right].
\]

\begin{prp}\phantomsection \label{prop:Reduc2}
\begin{enumerate}[\rm (1)]
\item Let $J\in \mathscr{H}_0$. For any $T\in \N$, we have
\begin{equation}\label{eq:VarToNB}
\begin{aligned}
\mathrm{Var}^I_{\eta_0}(J-\langle J \rangle^{\gamma} )\le& \mathrm{Var}^I_{\mathrm{nb},\eta_0}[\mathcal{L}^{\gamma}d^{-1}\mathcal{S}_{T,\gamma}(J-\langle J\rangle^{\gamma})] \\
&	+\mathrm{Var}^I_{\eta_0} [\cE_{\gamma}d^{-1}\cS_{T,\gamma}(J-\langle J\rangle^{\gamma})] + \mathrm{Var}^I_{\eta_0}(\widetilde{\mathcal{S}}_{T,\gamma}(J - \langle J \rangle^{\gamma})) \, .
\end{aligned}
\end{equation}
\item Let $J\in \mathscr{H}_k$ satisfy \emph{\Hol{}}. There exists $P_1, P_2\in \N$ depending only on $k$, and operators $(L_p^\gamma J)_{1\leq p\leq P_1}$ and $(F_p^\gamma J)_{1\leq p\leq P_2}$ with $L_p^\gamma J\in \mathscr{H}_0$ and $F_p^\gamma J\in \mathscr{H}_{k_p}$ for some $1\leq k_p\leq k$, all satisfying \emph{\Hol{}}, such that
\[
\mathrm{Var}^I_{\eta_0}\left(J_G - \langle J \rangle^\gamma \right)\leq \sum_{p=1}^{P_1} \mathrm{Var}^I_{\eta_0}(L_p^\gamma J-\langle L_p^\gamma J \rangle^{\gamma}) + \sum_{p=1}^{P_2}  \mathrm{Var}^I_{\mathrm{nb},\eta_0}\left( F_p^\gamma J\right) + O_{N\to +\infty, \gamma} (\eta_0).
\]
\end{enumerate}
\end{prp}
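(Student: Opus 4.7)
I start from the Abel/telescoping identity for the Cesàro averages: for any $g\in\ell^2(V)$,
\[
g = (I-P_\gamma)\mathcal{S}_{T,\gamma}g + \widetilde{\mathcal{S}}_{T,\gamma}g,
\]
which follows from $(I-P_\gamma^s) = (I-P_\gamma)(I+P_\gamma+\cdots+P_\gamma^{s-1})$ by regrouping $\frac{1}{T}\sum_{s=1}^T(I-P_\gamma^s)g$. Applied to $g = J-\langle J\rangle^\gamma$ and combined with the triangle inequality for the variance, this peels off the term $\mathrm{Var}^I_{\eta_0}(\widetilde{\mathcal{S}}_{T,\gamma}(J-\langle J\rangle^\gamma))$ on the right of \eqref{eq:VarToNB} and reduces the task to bounding $\mathrm{Var}^I_{\eta_0}((I-P_\gamma)F)$, where $F:=\mathcal{S}_{T,\gamma}(J-\langle J\rangle^\gamma)$.

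\textbf{Extracting the non-backtracking form.} Writing out $\langle \mathring\psi_j,(I-P_\gamma)F\mathring\psi_j\rangle_{\ell^2(V)}$ using the definition of $P_\gamma$ and the change of summation $\sum_v X(v) = \sum_b X(o_b)/d(o_b)$, then symmetrising under $b\leftrightarrow\hat b$, yields
\[
\langle\mathring\psi_j,(I-P_\gamma)F\mathring\psi_j\rangle = \tfrac12\sum_{b\in B_N} Y_F^\gamma(b)\!\left[\frac{|\psi_j(o_b)|^2}{N_\gamma(o_b)}-\frac{|\psi_j(t_b)|^2}{N_\gamma(t_b)}\right]
\]
with $Y_F^\gamma(b) = \frac{F(o_b)N_\gamma(o_b)d(t_b) - F(t_b)N_\gamma(t_b)d(o_b)}{d(o_b)d(t_b)}$. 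To convert the bracket into a bilinear form in $f_j,f_j^*$, I invoke \eqref{e:fj} in the form $\psi_j(t_b) = S_{\lambda_j}(L_b)f_j(b) + \zeta^{\gamma_j}(b)\psi_j(o_b)$ and its complex conjugate (valid since $\psi_j$ is real), which expresses $|\psi_j(t_b)|^2 - |\zeta^{\gamma_j}(b)|^2|\psi_j(o_b)|^2$ as a bilinear quantity in $f_j(b),f_j^*(b)$ up to lower-order terms. Then \eqref{e:zetainv}, which provides both $\tilg^\gamma(o_b,o_b)/\tilg^\gamma(t_b,t_b) = \zeta^\gamma(\hat b)/\zeta^\gamma(b)$ and $1-\zeta^\gamma(b)\zeta^\gamma(\hat b) = S_\gamma(L_b)\zeta^\gamma(b)/\tilg^\gamma(t_b,t_b)$, lets me rewrite the bracket as a principal bilinear part whose coefficient matches $(\cL^\gamma d^{-1}F)(b)$, plus an error that factorises through $\Im S_\gamma(L_b)$ and $\Re\tilg^\gamma$ and reassembles into $\cE_\gamma d^{-1}F$ tested against $\mathring\psi_j$. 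Taking moduli and averaging over $\lambda_j\in I$ gives \eqref{eq:VarToNB}.

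\textbf{Plan for part~(2).} For $J\in\mathscr{H}_k$ with $k\ge 1$,
\[
\langle\mathring\psi_j,J_G\mathring\psi_j\rangle = \sum_{(b_1;b_k)\in\mathrm{B}_k} J(b_1;b_k)\,\psi_j(o_{b_1})\psi_j(t_{b_k}),
\]
and $\langle J\rangle^\gamma$ involves $\Im\tilg^\gamma(o_{b_1},t_{b_k}) = \Im[\tilg^\gamma(o_{b_1},o_{b_1})\zeta^\gamma(b_1)\cdots\zeta^\gamma(b_k)]$ by the multiplicative property \eqref{e:greenmul}. I apply iteratively, along the path $b_1,\dots,b_k$, the identity $\psi_j(t_{b_i}) = \zeta^{\gamma_j}(b_i)\psi_j(o_{b_i}) + S_{\lambda_j}(L_{b_i})f_j(b_i)$ together with $\psi_j(o_{b_{i+1}}) = \psi_j(t_{b_i})$, which expands $\psi_j(o_{b_1})\psi_j(t_{b_k})$ into boundedly many terms. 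The ``fully contracted'' term $\psi_j(o_{b_1})^2\prod_i\zeta^{\gamma_j}(b_i)$, summed against $J$, folds into an $\mathscr{H}_0$-observable $L_p^\gamma J$ whose mean equals, thanks again to \eqref{e:greenmul}, the matching piece of $\langle J\rangle^\gamma$, so this contribution becomes a variance $\mathrm{Var}^I_{\eta_0}(L_p^\gamma J-\langle L_p^\gamma J\rangle^\gamma)$. Every other term contains at least one $f_j$ or $f_j^*$ factor; grouping these at the two endpoints of a truncated sub-path rewrites the contribution as a non-backtracking pairing $\langle f_j^*,(F_p^\gamma J)_Bf_j\rangle$ with $F_p^\gamma J\in\mathscr{H}_{k_p}$, $1\le k_p\le k$. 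The \Hol{} property for $L_p^\gamma J,F_p^\gamma J$ follows from Remark~\ref{rem:HolStable}, since they are built from $J$ by products, ratios and real parts of quantities analytic in $\gamma$ ($\zeta^\gamma$, $S_\gamma(L_b)$, $C_\gamma(L_b)$, $\tilg^\gamma(v,v)$, $N_\gamma$); the remainder $O_{N\to\infty,\gamma}(\eta_0)$ absorbs the $\Im$-contributions that cannot be captured by $L_p^\gamma J$ or $F_p^\gamma J$.

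\textbf{Main obstacle.} The delicate step is the algebraic identification in part~(1) of the precise prefactor $\frac{S_{\Re\gamma}^2(L_b)|\tilg^\gamma(t_b,t_b)|^2(1+\overline{\zeta^\gamma(b)}\zeta^\gamma(\hat b))}{\Re S_\gamma(L_b)|\zeta^\gamma(b)|^2}$ defining $\cL^\gamma$ and of the complementary $\Im$-terms assembling into $\cE_\gamma$. This requires careful bookkeeping of the decomposition $S_{\gamma_j}(L_b) = S_{\lambda_j}(L_b) + \ii\,\Im S_{\gamma_j}(L_b)$, with $\Im S_{\gamma_j}(L_b) = O(\eta_0)$, and similarly for $\tilg^\gamma$, used in combination with \eqref{e:zetainv}. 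In part~(2), the $O(2^k)$ expansion along the path is what makes $P_1+P_2$ grow; but it remains bounded independently of $N$, depending only on $k$, as required.
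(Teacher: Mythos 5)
Your overall architecture coincides with the paper's: the telescoping identity $g=(I-P_\gamma)\mathcal{S}_{T,\gamma}g+\widetilde{\mathcal{S}}_{T,\gamma}g$ for part (1), and an iterative reduction in $k$ for part (2). However, the step on which everything hinges --- converting the discrete quadratic form into a non-backtracking one --- is not carried out correctly, and the mechanism you describe would not close. In part (1), substituting $\psi_j(t_b)=S_{\lambda_j}(L_b)f_j(b)+\zeta^{\gamma_j}(b)\psi_j(o_b)$ into $|\psi_j(t_b)|^2$ gives $S_{\lambda_j}^2(L_b)|f_j(b)|^2+|\zeta^{\gamma_j}(b)|^2|\psi_j(o_b)|^2+2S_{\lambda_j}(L_b)\psi_j(o_b)\Re[\overline{f_j(b)}\zeta^{\gamma_j}(b)]$; the resulting terms $|f_j(b)|^2=\overline{f_j(b)}f_j(b)$ and $\psi_j(o_b)f_j(b)$ are \emph{not} of the admissible form $\overline{f_j^{\ast}(b)}(\cdots)f_j(b)$ entering $\mathrm{Var}^I_{\mathrm{nb},\eta_0}$ (recall $f_j^{\ast}(b)=f_j(\hat b)\neq f_j(b)$), so your claimed "bilinear quantity in $f_j,f_j^{\ast}$" is not one. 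The correct mechanism goes in the opposite direction: one expands $\overline{f_j^{\ast}(b)}f_j(b)=S_{\lambda}^{-2}(L_b)\big[(1+\overline{\zeta^{\gamma}(\hat b)}\zeta^{\gamma}(b))\overline{\psi_j(o_b)}\psi_j(t_b)-\zeta^{\gamma}(b)|\psi_j(o_b)|^2-\overline{\zeta^{\gamma}(\hat b)}|\psi_j(t_b)|^2\big]$ and takes the weighted difference of the $b$ and $\hat b$ contributions with weights $(1+\overline{\zeta^{\gamma}(\hat b)}\zeta^{\gamma}(b))^{-1}$; the cross term $\overline{\psi_j(o_b)}\psi_j(t_b)$ is real and cancels, leaving $2\ii\,\Im\big(\tfrac{\zeta^{\gamma}(b)}{1+\zeta^{\gamma}(b)\overline{\zeta^{\gamma}(\hat b)}}\big)$ coefficients of $|\psi_j(o_b)|^2$, $|\psi_j(t_b)|^2$ which, after the choice of weight $\beta_b$ and the use of \eqref{e:zetainv}, reproduce exactly $(I-P_\gamma)d\,J$ plus $\cE_\gamma J$. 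This antisymmetrization is the whole point; your proposal never produces it.

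In part (2) there are two further gaps. First, the fully contracted term has mean proportional to $\sum J(b_1;b_k)\,\zeta^{\gamma}(b_1)\cdots\zeta^{\gamma}(b_k)\,\Im\tilg^{\gamma}(o_{b_1},o_{b_1})$, whereas $\langle J\rangle^{\gamma}$ involves $\Im\big[\tilg^{\gamma}(o_{b_1},o_{b_1})\zeta^{\gamma}(b_1)\cdots\zeta^{\gamma}(b_k)\big]$; since $\Im(GZ)=Z\Im G+\overline{G}\,\Im Z$, these differ by a term that is not $O(\eta_0)$, so the means do not match as you assert. The matching actually requires combining the means of \emph{several} $\psi$--$\psi$ observables (the operators $\cU^{\gamma},\cO_k^{\gamma},\cP_k^{\gamma}$ of the paper) through the nontrivial identities \eqref{eq:psiiden3}--\eqref{eq:psiiden2} of Corollary~\ref{cor:psiiden}, which encode precisely the needed cancellation of the $\overline{G}\,\Im Z$ contributions. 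Second, the leftover terms $\psi_j(o_{b_1})f_j(b_i)$ cannot be written as $\overline{f_j^{\ast}(b_1)}(\cdots)f_j(b_i)$ by inverting \eqref{e:fj}, because that inversion produces $f_j^{\ast}(b_1)$ without the complex conjugate (and an extra $f_j(b_1)f_j(b_i)$ term), which again is not an admissible non-backtracking pairing. Both issues are resolved in the paper by starting from $\langle f_j^{\ast},(\cT_k^{\gamma}K)_Bf_j\rangle$ and expanding towards the $\psi$'s rather than the reverse.
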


This proposition will be applied to the $J=J_{f,p}^\gamma,J_{\cK,p}^\gamma$ appearing in Proposition~\ref{prop:Reduc1}.

The proof is given in Appendix~\ref{sec:nonbared}.

\medskip

To finish this section, we will show that the error variances $\mathrm{Var}^I_{\eta_0} [\cE_{\gamma} d^{-1} \cS_{T,\gamma}(J-\langle J\rangle^{\gamma})]$ and $\mathrm{Var}^I_{\eta_0}(\widetilde{\mathcal{S}}_{T,\gamma}(J - \langle J \rangle^{\gamma}))$ from \eqref{eq:VarToNB} vanish asymptotically. Note that if we apply part (1) of Proposition~\ref{prop:Reduc2} to the first sum of variances in part (2), then all that remains will be to control non-backtracking variances.

If $K\in \C^V\cong \mathscr{H}_0$, we define the ``weighted Hilbert-Schmidt norm''
\begin{equation}\label{eq:NormeGamma0}
\|K\|^2_{\gamma}:= \frac{1}{|V|} \sum_{v\in V} |K(v)|^2 N_\gamma(x)^2.
\end{equation}

Then we have the following bound proven in Section~\ref{subsec:ApplicHolom}. An analogous bound for non-backtracking variances will be given in Proposition~\ref{thm:upvar}. These two propositions tell us the quantum variance(s) are dominated by weighted Hilbert-Schmidt norms.

\begin{lem}\label{lem:EasyVarBound}
Let $K^\gamma\in \mathscr{H}_0$ satisfy hypothesis \emph{\Hol{}} from Definition \ref{def:Hol}.
We have
\[
\lim_{\eta_0 \downarrow 0}\limsup_{N\to\infty} \mathrm{Var}^I_{\eta_0}(K^{\gamma})^2 \lesssim \lim_{\eta_0\downarrow 0}\limsup_{N\to\infty} \int_{I} \|K^{\lambda+\ii\eta_0}\|_{\lambda+\ii\eta_0}^2\,\dd\lambda.
\]
\end{lem}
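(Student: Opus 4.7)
The strategy is a Hilbert--Schmidt bound obtained by a double Cauchy--Schwarz followed by an analytic discretization that converts the sum over eigenvalues into an integral over the spectral parameter $\lambda$.

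First, by Jensen's inequality,
\[
\mathrm{Var}^I_{\eta_0}(K^{\gamma})^2 \le \frac{1}{\mathbf{N}_N(I)}\sum_{\lambda_j\in I}\bigg|\sum_{v\in V_N} K^{\gamma_j}(v)|\psi_j(v)|^2\bigg|^2.
\]
Writing $K^{\gamma_j}(v)|\psi_j(v)|^2 = \bigl(K^{\gamma_j}(v)\psi_j(v)\bigr)\overline{\psi_j(v)}$ and applying Cauchy--Schwarz in $v$, each inner bracket is bounded by $\|\mathring{\psi}_j\|^2_{\ell^2(V)} \sum_v |K^{\gamma_j}(v)|^2|\psi_j(v)|^2$. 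Since $-\psi_j'' + W_b\psi_j = \lambda_j \psi_j$ on each edge with coefficients bounded by \Data{} and $\lambda_j\in I$ bounded, standard ODE regularity gives $\|\psi_j\|_{L^\infty(b)}\lesssim \|\psi_j\|_{L^2(b)}$, whence $\|\mathring{\psi}_j\|^2_{\ell^2(V)}\leq C\|\psi_j\|_{L^2(\cG_N)}^2 = C$. Thus
\[
\mathrm{Var}^I_{\eta_0}(K^{\gamma})^2 \le \frac{C}{\mathbf{N}_N(I)}\sum_{v\in V_N}\sum_{\lambda_j\in I}|K^{\lambda_j+\ii\eta_0}(v)|^2|\psi_j(v)|^2.
\]

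The core step is to replace the discrete sum over $\lambda_j$ by an integral over $\lambda$, using the holomorphy hypothesis \Hol{}. Since $\lambda\mapsto K^{\lambda+\ii\eta_0}(v)$ admits an analytic extension $K_{\eta_0}^{z}(v)$ to the strip $\{|\Im z|<\eta_0/2\}$, the submean value inequality on the disk $\{|z-\lambda_j|<\eta_0/4\}$ gives
\[
|K^{\lambda_j+\ii\eta_0}(v)|^2 \le \frac{16}{\pi\eta_0^2}\iint_{|z|<\eta_0/4}|K_{\eta_0}^{\lambda_j+z}(v)|^2\,\dd A(z).
\]
Plugging this in, exchanging the order of the $j$-sum and the area integral, and combining the elementary domination $\mathbf{1}_{|\mu-\lambda_j|<\eta_0/4}\lesssim \eta_0 \cdot \frac{\eta_0/\pi}{(\mu-\lambda_j)^2+\eta_0^2}$ with the Poisson representation
\[
\Im g_N^{\mu+\ii\eta_0}(v,v) \;=\; \eta_0\sum_j\frac{|\psi_j(v)|^2}{(\mu-\lambda_j)^2+\eta_0^2},
\]
the inner sum is collapsed into
\[
\sum_{\lambda_j\in I}|K^{\lambda_j+\ii\eta_0}(v)|^2|\psi_j(v)|^2 \lesssim \frac{1}{\eta_0}\int_{-\eta_0/4}^{\eta_0/4}\!\!\int_{I'}|K_{\eta_0}^{\mu+\ii\nu}(v)|^2\,\Im g_N^{\mu+\ii\eta_0}(v,v)\,\dd\mu\,\dd\nu,
\]
where $I'$ is the $\eta_0/4$-neighborhood of $I$.

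Summing over $v$, dividing by $\mathbf{N}_N(I)\asymp N$, and using part (3) of \Hol{} to replace $K_{\eta_0}^{\mu+\ii\nu}(v)$ by $K^{\mu+\ii\eta_0}(v)$ with an error that vanishes after $\limsup_N$ then $\lim_{\eta_0\downarrow 0}$, the prefactor $1/\eta_0$ is absorbed by the length $\eta_0/2$ of the $\nu$-integration, leaving
\[
\mathrm{Var}^I_{\eta_0}(K^{\gamma})^2 \lesssim \frac{1}{N}\int_{I'}\sum_{v\in V_N}|K^{\mu+\ii\eta_0}(v)|^2\,\Im g_N^{\mu+\ii\eta_0}(v,v)\,\dd\mu + o(1).
\]
The proof is then concluded by invoking the Benjamini--Schramm convergence results of Appendix~\ref{sec:BS} to replace $\Im g_N^{\mu+\ii\eta_0}(v,v)$ by $N_{\mu+\ii\eta_0}(v)$ in averaged $v$-sense, together with the uniform lower bound $N_\gamma(v)\geq c>0$ coming from \Green{} (which yields the pointwise inequality $N_\gamma(v)\leq c^{-1}N_\gamma(v)^2$). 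The right-hand side then becomes $\lesssim \int_I\|K^{\mu+\ii\eta_0}\|^2_{\mu+\ii\eta_0}\,\dd\mu$ up to terms vanishing in the prescribed limits.

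The main obstacle is coordinating the three scales $(\eta_0/4,\eta_0/2,\eta_0)$ with the order of limits $\limsup_N$ then $\lim_{\eta_0\downarrow 0}$: every substitution of $K_{\eta_0}^{\mu+\ii\nu}$ by $K^{\mu+\ii\eta_0}$ and of $\Im g_N^{\gamma}(v,v)$ by $N_\gamma(v)$ must be controlled in averaged $L^s$-norm so as to survive the division by $N$ and the vanishing of $\eta_0$. This is precisely the role of the three bullets in Definition~\ref{def:Hol} and of the BS-convergence estimates in integral form from Appendix~\ref{sec:BS}.
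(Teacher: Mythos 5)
Your opening reductions (Jensen, Cauchy--Schwarz in $v$, the ODE bound $\|\mathring{\psi}_j\|_{\ell^2(V)}^2\lesssim\|\psi_j\|_{L^2(\cG_N)}^2$) and your conversion of the eigenvalue sum into a spectral integral are sound: the combination of the submean-value inequality for $|K_{\eta_0}^z(v)|^2$ with the Poisson-kernel identity for $\Im g_N^{\mu+\ii\eta_0}(v,v)$ is a legitimate alternative to the paper's route, which instead inserts a cutoff $\chi$ and applies the almost-analytic contour formula of Remark~\ref{rem:CaseDiag}; both devices replace the unknown $\psi_j$ by the Green function $g_N$, and both then pass from $g_N$ to $\tilg_N$ via Benjamini--Schramm.

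The genuine gap is in the last step. Your unweighted Cauchy--Schwarz produces the weight $\Im g_N^{\gamma}(v,v)\approx N_{\gamma}(v)$ to the \emph{first} power, whereas the target norm \eqref{eq:NormeGamma0} carries $N_{\gamma}(v)^2$. You bridge this by asserting a uniform pointwise lower bound $N_{\gamma}(v)\ge c>0$ ``coming from \Green{}''. No such bound follows from \Green{}: that hypothesis only provides finiteness of the \emph{inverse moments} $\expect_{\prob}\bigl(|\Im \hat{R}^{+}_{\gamma}|^{-s}\bigr)$, hence of $\expect_{\prob}\bigl((\Im G^{\gamma}(o_b,o_b))^{-s}\bigr)$, which controls how often $N_\gamma$ is small but does not exclude it being arbitrarily small at some vertices (indeed the entire apparatus of $\Bad(M)$ sets and negative-moment estimates in Sections~\ref{Sec:Contrac} and Appendix~\ref{sec:BS} exists because no uniform two-sided bounds on these Green quantities are available). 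Without that lower bound your final display gives $\mathrm{Var}^I_{\eta_0}(K^\gamma)^2\lesssim \frac1N\int\sum_v|K|^2N_\gamma(v)\,\dd\mu$, which is not $\lesssim\int_I\|K\|_\gamma^2$; a further Cauchy--Schwarz only yields the square root of the claimed right-hand side. The correct repair is the one the paper uses: perform the initial Cauchy--Schwarz with the weight $M_{\gamma}(v)=N_{\gamma}(v)|g_N^{\gamma}(v,v)|^{-1/2}$, so that after the sum-to-integral step the second factor acquires the weight $|g_N^z(v,v)|\,|M^z(v)|^2=N^z(v)^2$ exactly, while the first factor, $\frac1N\sum_j\chi(\lambda_j)\sum_v |M_{\gamma_j}(v)|^{-2}|\psi_j(v)|^2$, is shown to be $O_{N\to+\infty,\gamma}(1)$ using precisely the inverse-moment consequences of \Green{} (via \eqref{rem:coro} and Corollary~\ref{cor:ConfinedTilMay11}), not a pointwise bound.
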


This lemma makes it easy to deal with $\mathrm{Var}^I_{\eta_0} [\cE_{\gamma}d^{-1}\cS_{T,\gamma}(J-\langle J\rangle^{\gamma})]$. Indeed, for any fixed $T\in \N$, $\cE_\gamma d^{-1}S_{T,\gamma} (K^\gamma - \langle K^\gamma \rangle^\gamma)\in \mathscr{H}_0$ satisfies \Hol{} by Remark~\ref{rem:HolStable}. Therefore, from the expression of $\cE_\gamma$, which involves $\Im S_\gamma(L_b) = O(\eta_0)$, we see using \eqref{eq:APrioriBoundOperators} that
\begin{equation}\label{eq:BoundEVar}
\lim_{\eta_0\downarrow 0}\lim_{N\to \infty} \mathrm{Var}^I_{\eta_0} \left( \cE_{\gamma}d^{-1}S_{T,\gamma}(K^{\gamma}-\langle K^{\gamma}\rangle^{\gamma}\right) = 0.
\end{equation}

The error $\mathrm{Var}^I_{\eta_0}(\widetilde{\mathcal{S}}_{T,\gamma}(J - \langle J \rangle^{\gamma}))$ is dealt with thanks to the following lemma.
\begin{lem}\label{lem:EXPisCool}
Let $K\in \mathscr{H}_0$ satisfy \emph{\Hol{}}. We then have
\[
\lim_{T\to\infty}\sup_{\eta_0\in (0,\eta_{\mathrm{Dir}})}\limsup_{N\to\infty}\sup_{\lambda\in I}\big\|\widetilde{S}_{T,\gamma} \left( K^{\gamma}-\langle K^{\gamma}\rangle^{\gamma}\right)\big\|_{\gamma}=0.
\]
\end{lem}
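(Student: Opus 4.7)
The plan is to exploit the spectral gap of \EXP{} by realizing $P_\gamma$ as a self-adjoint operator on a well-chosen weighted $\ell^2$ space, in which $f^\gamma := K^\gamma - \langle K^\gamma\rangle^\gamma \mathbf{1}$ lies in the orthogonal complement of the Perron eigenvector. Let $M$ be the real diagonal multiplication operator $v\mapsto N_\gamma(v)/d(v)$ on $\C^{V_N}$, so that $P_\gamma = M^{-1} P M$. Since $P = d^{-1}\cA_{G_N}$ is self-adjoint on $\ell^2(V_N,d)$ (the weighted inner product $\langle u,w\rangle_d = \sum_v d(v) \overline{u(v)}\, w(v)$), a direct similarity calculation shows that $P_\gamma$ is self-adjoint on $\ell^2(V_N,\nu)$ with $\nu(v):=N_\gamma(v)^2/d(v)$, has the same spectrum as $P$, and admits $M^{-1}\mathbf 1 = d/N_\gamma$ as its simple eigenvector for the eigenvalue $1$. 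Moreover, using that $N_\gamma$ is real,
\[
\langle f^\gamma,\, d/N_\gamma\rangle_\nu \;=\; \sum_{v\in V_N}\nu(v)\,\overline{f^\gamma(v)}\,\frac{d(v)}{N_\gamma(v)} \;=\; \sum_{v\in V_N}\overline{f^\gamma(v)}\,N_\gamma(v)\;=\;0,
\]
the last equality being exactly the defining property of $\langle K^\gamma\rangle^\gamma$ in \eqref{eq:DefBracket2}. Thus $f^\gamma$ lies in the orthogonal complement of the Perron eigenspace of $P_\gamma$ in $\ell^2(V_N,\nu)$.

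By \EXP{}, the restriction of the spectrum of $P_\gamma$ to this complement is contained in $[-1+\beta,1-\beta]$, so the spectral theorem yields $\|P_\gamma^s f^\gamma\|_\nu \le (1-\beta)^s \|f^\gamma\|_\nu$, whence
\[
\|\widetilde{\mathcal{S}}_{T,\gamma} f^\gamma\|_\nu \;\le\; \frac{1}{T}\sum_{s=1}^{T}(1-\beta)^s\|f^\gamma\|_\nu \;\le\; \frac{1}{\beta T}\,\|f^\gamma\|_\nu.
\]
Because the degrees satisfy $3\le d(v)\le D$ by \Data{}, the norms $\|\cdot\|_\nu$ and $|V_N|^{1/2}\|\cdot\|_\gamma$ are equivalent with constants depending only on $D$, which gives $\|\widetilde{\mathcal{S}}_{T,\gamma}f^\gamma\|_\gamma \le (D/3)^{1/2}(\beta T)^{-1}\|f^\gamma\|_\gamma$.

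It remains to check that $\|f^\gamma\|_\gamma$ is bounded uniformly in $N$, $\lambda\in I$ and $\eta_0\in(0,\eta_{\mathrm{Dir}})$. This follows by combining the moment bound \eqref{eq:APrioriBoundOperators} from \Hol{}, which controls $|V_N|^{-1}\sum_{v} |K^\gamma(v)|^s$ for arbitrary $s>0$, with the bounds on $|V_N|^{-1}\sum_v N_\gamma(v)^{\pm s}$ coming from Corollary \ref{cor:ConfinedTilMay11} (in its vertex form) under \Green{}; Cauchy--Schwarz then yields both $\|K^\gamma\|_\gamma = O(1)$ and $|\langle K^\gamma\rangle^\gamma|\cdot\|\mathbf{1}\|_\gamma = O(1)$, so $\|f^\gamma\|_\gamma = O(1)$. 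Letting $T\to\infty$ after $\sup_\lambda\limsup_N$ then gives the claimed limit, uniformly in $\eta_0$. The principal conceptual step, and essentially the only non-routine point, is the identification of the natural inner product $\ell^2(V_N,\nu)$ in which $P_\gamma$ is self-adjoint, together with the recognition that the weighted mean $\langle K^\gamma\rangle^\gamma$ was designed precisely to force $f^\gamma$ into the orthocomplement of the constant eigenvector $d/N_\gamma$ of $P_\gamma$; once this is in place, \EXP{} combined with a Cesàro average produces the $O(1/T)$ decay.
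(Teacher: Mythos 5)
Your proof is correct and follows essentially the same route as the paper: both rest on the similarity $P_{\gamma} = \frac{d}{N_{\gamma}} P \frac{N_{\gamma}}{d}$, the spectral gap from \EXP{} applied on the orthogonal complement of the Perron vector, and the observation that $\langle \cdot\rangle^{\gamma}$ is defined exactly so that $K^{\gamma}-\langle K^{\gamma}\rangle^{\gamma}\mathbf{1}$ lands in that complement, with the \textit{a priori} bound $\|f^{\gamma}\|_{\gamma}=O_{N\to+\infty,\gamma}(1)$ supplied by \Hol{} and Corollary~\ref{cor:ConfinedTilMay11}. The only difference is packaging: you make $P_{\gamma}$ self-adjoint on $\ell^2(V_N,\nu)$ with $\nu = N_{\gamma}^2/d$ and argue there, whereas the paper conjugates back and estimates $P^s$ directly in $\ell^2(V_N,d)$ — these are the same computation seen from the two sides of the similarity.
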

\begin{proof}
Let us write $(Y_\gamma K)(v)= \frac{d(v)}{N_\gamma(v)} \frac{\sum_{w\in V} N_\gamma(w) K(w)}{\sum_{w\in V}d(w)}$, so that $Y_\gamma K = \frac{ \left\langle N_\gamma K\right\rangle_U}{ \left\langle d\right\rangle_U} \frac{d}{N_\gamma}$, where $\langle J \rangle_U := \frac{1}{N}\sum_{v\in V} J(v)$ is the average with respect to the uniform scalar product.

Noting that for any $s\in \N$, we have  $P^s_{\gamma} = \frac{d}{N_{\gamma}} P^s \frac{N_{\gamma}}{d}$, we get

\begin{align*}
\big{\|}\widetilde{S}_{T,\gamma} K^{\gamma}-Y_\gamma K^\gamma \big{\|}^2_{\gamma} &= \frac{1}{N}\sum_{v\in V} \Big{|} \frac{1}{T}\sum_{s=1}^T d(v) \Big{(}P^s \frac{N_\gamma K^\gamma}{d}\Big{)}(v) -  \frac{ \left\langle N_\gamma K^{\gamma}\right\rangle_U}{ \left\langle d\right\rangle_U} d(v) \Big{|}^2\\
&\leq \frac{D}{NT} \sum_{s=1}^T \left\| P^s \Big{(}\frac{N_\gamma K^\gamma}{d} - \frac{ \left\langle N_\gamma K^{\gamma}\right\rangle_U}{ \left\langle d\right\rangle_U} \mathbf{1} \Big{)} \right\|_{\ell^2(V,d)}^2\\
& \leq \frac{D}{NT} \sum_{s=1}^T (1-\beta)^{2s} \left\| \frac{N_\gamma K^\gamma}{d} - \frac{ \left\langle N_\gamma K^{\gamma}\right\rangle_U}{ \left\langle d\right\rangle_U} \mathbf{1}  \right\|_{\ell^2(V,d)}^2\\
&\leq \frac{D}{\beta NT} \left\| \frac{N_\gamma K^\gamma}{d} - \frac{ \left\langle N_\gamma K^{\gamma}\right\rangle_U}{ \left\langle d\right\rangle_U} \mathbf{1}  \right\|_{\ell^2(V,d)}^2\leq \frac{4D}{\beta NT} \left\| N_\gamma K^\gamma\right\|_{\ell^2(V,d)}^2,
\end{align*}
where we used the fact that $\frac{N_\gamma K^\gamma}{d} - \frac{ \left\langle N_\gamma K^{\gamma}\right\rangle_U}{ \left\langle d\right\rangle_U} \mathbf{1}$ is orthogonal to constants in the space $\ell^2(V_N, d_N)$, the assumption \EXP{}, and the fact that $d\geq 1$.

Now, since $K^\gamma$ satisfies \Hol{}, we know by Remark \ref{rem:HolStable} that $N_\gamma K^\gamma$ satisfies hypothesis \Hol{}, so that $\frac{1}{N}\left\| N_\gamma K^\gamma\right\|_{\ell^2(V)}^2 = O_{N\to +\infty, \gamma}(1)$. Therefore,
\begin{equation}\label{eq:GoingToConstant}
\lim_{T\to\infty}\sup_{\eta_0\in (0,\eta_{\mathrm{Dir}})}\limsup_{N\to\infty}\sup_{\lambda\in I}\big\|\widetilde{S}_{T,\gamma} K^{\gamma}-Y_\gamma K^\gamma \big\|_{\gamma}  = 0 .
\end{equation}

Now, the result follows by applying \eqref{eq:GoingToConstant} to $\mathring{K}^\gamma :=K^\gamma - \langle K^\gamma \rangle^\gamma \mathbf{1}$, and by noting that $Y_\gamma \mathring{K}^\gamma=0$, since $Y_\gamma  \langle K^\gamma \rangle^\gamma \mathbf{1} =  \langle K^\gamma \rangle^\gamma \frac{\langle N_\gamma \rangle_U}{\langle d \rangle_U} \frac{d}{N_\gamma}$ and, by definition, $\langle K^\gamma \rangle^\gamma= \frac{\langle N_\gamma K^\gamma \rangle_U}{\langle N_\gamma \rangle_U}$.
\end{proof}

Combining Propositions~\ref{prop:Reduc1} and \ref{prop:Reduc2}, Lemma \ref{lem:EasyVarBound}, Lemma \ref{lem:EXPisCool} and equation \eqref{eq:BoundEVar}  we obtain the following corollary.

\begin{cor}\label{cor:Reduction2}
Let $\cQ_N$ be a sequence of quantum graphs satisfying \emph{\Data{}} for each $N$, such that \emph{\EXP{}}, \emph{\BST{}}, \emph{\Green{}} and \emph{\NonDirichlet{}} hold true on the interval $I_1$. Suppose that for any $K^\gamma\in \mathscr{H}_k$ satisfying \emph{\Hol{}}, $k\ge 0$, we have 
\begin{equation}\label{eq:NBVarToShow}
\lim_{\eta_0 \downarrow 0}\lim_{N\to\infty}\mathrm{Var}^I_{\mathrm{nb},\eta_0}(K^\gamma)=0.
\end{equation}

Then, if $f_N\in L^\infty(\mathcal{G}_N)$ is a sequence of functions satisfying $\|f_N\|_{\infty}  \le 1$, \eqref{e:main'} holds. Furthermore, for any $k\geq 0$, if $\mathcal{K}_N\in \mathscr{K}_k$ is a sequence non-backtracking integral operator satisfying $|\mathcal{K}_N(x,y)|\leq 1$ for all $x,y\in \mathcal{G}_N$, then \eqref{eq:MainStatement} holds.
\end{cor}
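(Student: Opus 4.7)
The plan is to chain together the reductions already established in this section so that the statements \eqref{e:main'} and \eqref{eq:MainStatement} follow from the single hypothesis \eqref{eq:NBVarToShow}. Starting from the Ces\`aro sums in Theorems~\ref{thm:qeqg} and~\ref{thm:integralver}, Proposition~\ref{prop:Reduc1} first replaces these sums by a finite sum of discrete variances of the form $\mathrm{Var}^I_{\eta_0}((J^\gamma)_G-\langle J^\gamma\rangle^\gamma\mathbf{1})$, where each observable $J^\gamma$ belongs to some $\mathscr{H}_{k'}$ with $k'\leq \max(1,k)$ and satisfies \Hol{}. Hence the target becomes showing that $\lim_{\eta_0\downarrow 0}\limsup_{N\to\infty}\mathrm{Var}^I_{\eta_0}((J^\gamma)_G-\langle J^\gamma\rangle^\gamma\mathbf{1})=0$ for every such $J^\gamma$.

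Next I would apply Proposition~\ref{prop:Reduc2}(2) to handle the case $k'\geq 1$: it reduces such a variance to a finite sum of non-backtracking variances (killed directly by the hypothesis \eqref{eq:NBVarToShow}) plus a finite sum of variances of $\mathscr{H}_0$-valued observables $L_p^\gamma J-\langle L_p^\gamma J\rangle^\gamma$, modulo an $O_{N\to+\infty,\gamma}(\eta_0)$ term. Feeding each of these $\mathscr{H}_0$ variances --- together with the $k'=0$ ones already produced by Proposition~\ref{prop:Reduc1} --- into Proposition~\ref{prop:Reduc2}(1) yields, for any $T\in\N$, a bound by yet another non-backtracking variance (again absorbed by \eqref{eq:NBVarToShow}) together with the two error variances $\mathrm{Var}^I_{\eta_0}(\mathcal{E}_\gamma d^{-1}\mathcal{S}_{T,\gamma}(\cdot))$ and $\mathrm{Var}^I_{\eta_0}(\widetilde{\mathcal{S}}_{T,\gamma}(\cdot))$.

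The final step is to dispose of these two error variances by an appropriate choice of limits in $\eta_0$ and $T$. For the $\mathcal{E}_\gamma$-term the key observation is that $\mathcal{E}_\gamma$ carries a prefactor $\Im S_\gamma(L_b)=O(\eta_0)$, so combining \eqref{eq:APrioriBoundOperators} with Lemma~\ref{lem:EasyVarBound} gives precisely \eqref{eq:BoundEVar} and hence vanishing for each fixed $T$. For the $\widetilde{\mathcal{S}}_{T,\gamma}$-term, Lemma~\ref{lem:EXPisCool} --- which is where \EXP{} finally enters --- ensures that the weighted Hilbert--Schmidt norm $\|\widetilde{\mathcal{S}}_{T,\gamma}(K^\gamma-\langle K^\gamma\rangle^\gamma)\|_\gamma$ tends to $0$ as $T\to\infty$, uniformly in $N$, $\eta_0$ and $\lambda\in I$; plugging this into Lemma~\ref{lem:EasyVarBound} and letting $T\to\infty$ at the very end kills this term as well. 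The main obstacle in this assembly is purely bookkeeping: at every intermediate step one must verify that the observables generated by Propositions~\ref{prop:Reduc1} and~\ref{prop:Reduc2} still satisfy \Hol{}, so that Lemmas~\ref{lem:EasyVarBound} and~\ref{lem:EXPisCool} may be invoked; this is exactly what the stability properties collected in Remark~\ref{rem:HolStable} guarantee.
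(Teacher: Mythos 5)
Your proposal is correct and follows exactly the route the paper intends: Proposition~\ref{prop:Reduc1}, then Proposition~\ref{prop:Reduc2}(2) and (1), with the residual error variances killed by \eqref{eq:BoundEVar} and by Lemma~\ref{lem:EXPisCool} combined with Lemma~\ref{lem:EasyVarBound}, taking $T\to\infty$ last and checking \Hol{} via Remark~\ref{rem:HolStable} at each stage. Nothing is missing.
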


The rest of the paper will be devoted to proving \eqref{eq:NBVarToShow}, thus establishing Theorem \ref{thm:integralver}.

\section{Contour integrals and complex analysis}\label{sec:Complex}
In this section, we shall prove Lemmas~\ref{lem:SmallO} and \ref{lem:EasyVarBound} and develop tools from complex analysis that will be used later on. The quantities we wish to estimate are expressed as sums over the eigenvalues of the quantum graph $\cQ_N$, which are the
poles of the Green function $g_N$. Thanks to Cauchy's formula, these sums will be expressed as contour integrals involving the Green functions. The manipulation of the unknown
eigenfunctions $\psi_j^{(N)}$ is thus replaced by manipulation of $g_N$. Later on, this will be replaced by the Green function $\tilde g_N$ of the universal cover, and finally
we will use that it converges in distribution to the Green function of the limiting random tree $\cT$.

If $\chi \in C_c^\infty(\R)$, we will denote by $\tilde{\chi}$  an \emph{almost analytic} extension of $\chi$, i.e., a smooth function $\tilde{\chi} : \C \mapsto \C$ such that $\tilde{\chi}(z) = \chi(z)$ for $z\in \R$, $\frac{\partial \tilde{\chi}}{\partial \overline{z}}(z) = O ((\Im z)^2)$, and 
\begin{equation}\label{eq:support extension}
\mathrm{supp }~ \tilde{\chi}\subset \{z; \Re z \in \mathrm{supp}~\chi\}.
\end{equation}
Here $\frac{\partial}{\partial\overline{z}}=\frac{1}{2}(\frac{\partial}{\partial x}+\ii\frac{\partial}{\partial y})$.
For instance, one can take $\tilde{\chi}(x+\ii y) = \chi(x) + \ii y \chi'(x) - \frac{y^2}{2} \chi''(x)$.
We refer the reader to \cite[\S~2.2]{Davies} for more details about almost analytic extensions.

Recall that if $K^\gamma\in \mathscr{H}_k$ satisfies \Hol{}, then for any $0<\eta_0<\eta_{\mathrm{Dir}}$ and $(b_1,\dots,b_k)\in \mathrm{B}_k$, $\lambda\mapsto K^{\lambda+\ii \eta_0} (b_1,\dots,b_k)$ admits a holomorphic extension to $\{|\Im z|<\eta_0/2\}$, which is denoted by $K^z_{\eta_0}(b_1,\dots,b_k)$.

\begin{prp}\label{Cor:SumToInt}
Let $\chi \in C_c^\infty(I_1)$. Define $\tilde{\chi}$ as above. Let $k,k'\in \N$ and let $K^\gamma\in \mathscr{H}_k$, and  $K'^\gamma\in \mathscr{H}_{k'}$ satisfy \emph{\Hol{}}. Then for any $0<\eta_0<\min(\frac{\eta_{\mathrm{Dir}}}{2},\eta_{I_1})$,  we have
\begin{multline}
\frac{1}{N}\sum_{j\geq 1} \sum_{(b_1; b_k)} \sum_{\substack{(b'_1; b'_{k'})\\ b'_1=b_1}} \chi(\lambda_j) K^{\gamma_j}(b_1; b_k) K'^{\gamma_j} (b'_1; b'_{k'}) \psi_j(o_{b_k}) \overline{\psi_j(o_{b_{k'}'})} \\
=-\frac{1}{N}\sum_{(b_1; b_k)} \sum_{\substack{(b'_1; b'_{k'})\\ b'_1=b_1}}\frac{1}{2\pi\ii} \int_{\Gamma_{\eta_0/4}} \tilde{\chi}(z) g_N^z(o_{b_k},o_{b'_{k'}}) K^{z}_{\eta_0}(b_1; b_k) K'^{z}_{\eta_0} (b'_1; b'_{k'})  \dd z
+ O_{N\to +\infty, \eta_0} (\eta_0),
\end{multline}
where $\Gamma_{\eta}$ is the boundary of $\Omega_{\eta} = I_1+\ii[-\eta,\eta]$.

The same formula holds if $o_{b_k}$ and/or $o_{b'_{k'}}$ are replaced by $t_{b_k}$ and/or $t_{b'_{k'}}$ in both terms.
\end{prp}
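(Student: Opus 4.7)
The proof is a Helffer--Sj\"ostrand-type contour deformation, modified to exploit the fact that the observables $K^\gamma, K'^\gamma$ are only holomorphic in $\gamma$ inside the horizontal strip of width $\eta_0/2$ supplied by \Hol{}. I would start from the spectral decomposition
\[
g_N^z(v,w) \;=\; \sum_{j}\frac{\psi_j(v)\,\psi_j(w)}{\lambda_j - z}
\]
of the finite quantum-graph resolvent, valid with no complex conjugates because the eigenfunctions are taken to be real. The residue at each $\lambda_j$ equals $-\psi_j(v)\psi_j(w)$, so the left-hand side of the statement can be read off from residues of a suitable meromorphic function. For each pair of paths $(b_1;b_k)$, $(b_1;b'_{k'})$ (same initial bond $b_1$) I would set
\[
F_{b,b'}(z) \;:=\; K_{\eta_0}^{z}(b_1;b_k)\;{K'}_{\eta_0}^{z}(b_1;b'_{k'}),
\]
which by \Hol{} is holomorphic in $\{|\Im z|<\eta_0/2\}$ and hence on a neighbourhood of $\overline{\Omega_{\eta_0/4}}$; moreover $F_{b,b'}(\lambda_j) = K^{\gamma_j}(b_1;b_k)\,K'^{\gamma_j}(b_1;b'_{k'})$ at every real eigenvalue, by construction of the extension.

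\smallskip

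Next I would apply Stokes' formula for $\bar\partial$ (Cauchy--Pompeiu for meromorphic integrands) to
\[
h_{b,b'}(z) \;:=\; \tilde\chi(z)\,F_{b,b'}(z)\,g_N^z(o_{b_k},o_{b'_{k'}})
\]
on $\Omega_{\eta_0/4}$. Its only singularities are simple poles at the eigenvalues $\lambda_j \in \mathrm{supp}\,\chi \subset I_1$, all strictly interior to $\Omega_{\eta_0/4}$ (because $\tilde\chi$ vanishes in $\{\Re z\notin \mathrm{supp}\,\chi\}$ by \eqref{eq:support extension}, so no boundary contributions arise near the vertical sides of $\Gamma_{\eta_0/4}$). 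Since $F_{b,b'}$ is holomorphic and $g_N^z$ is meromorphic, away from the $\lambda_j$ we have $\bar\partial h_{b,b'} = \bar\partial\tilde\chi\cdot F_{b,b'}\cdot g_N^z$; combining Stokes with the residue computation gives
\[
\sum_{j} \chi(\lambda_j)\,F_{b,b'}(\lambda_j)\,\psi_j(o_{b_k})\psi_j(o_{b'_{k'}}) \;=\; -\frac{1}{2\pi\ii}\oint_{\Gamma_{\eta_0/4}} h_{b,b'}\,\dd z \;+\; \frac{1}{\pi}\int_{\Omega_{\eta_0/4}}\bar\partial\tilde\chi\cdot F_{b,b'}\cdot g_N^z\,\dd L(z).
\]
Summing over the pairs of paths and dividing by $N$ produces the stated contour-integral representation, with the second term as the claimed remainder.

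\smallskip

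It remains to show that this remainder is $O_{N\to +\infty,\eta_0}(\eta_0)$, which is where the almost-analytic construction pays off: $|\bar\partial\tilde\chi(z)|\leq C(\Im z)^2$ on $\Omega_{\eta_0/4}$, gaining two powers of $|\Im z|$. One then estimates
\[
\frac{1}{N}\sum_{(b_1;b_k)}\sum_{b'_1=b_1} |F_{b,b'}(z)|\,|g_N^z(o_{b_k},o_{b'_{k'}})|
\]
by Cauchy--Schwarz in the path sum: hypothesis \Hol{} with $s=2$ controls $\frac{1}{N}\sum |F_{b,b'}|^2$ uniformly in $z\in\Omega_{\eta_0/4}$ (which lies in the strip $|\Im z|<\eta_0/2$ of analyticity), while a standard quantum-graph resolvent estimate gives $|g_N^z(v,w)|\lesssim|\Im z|^{-1}$ uniformly in $N$, $v$, $w$ (this follows from the explicit representation of $g_N^z$ in terms of the fundamental solutions $S_z,C_z$ of \S\ref{e:sols} and the non-Dirichlet hypothesis). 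Consequently the integrand is pointwise bounded by $C\,(\Im z)^2/|\Im z|$, and integration over $\Omega_{\eta_0/4}$ yields a bound $\le C|I_1|\,\eta_0^2$, uniformly in $N$ and in $\eta_0\in(0,\eta_{\mathrm{Dir}})$. The variant with $t_{b_k}$, $t_{b'_{k'}}$ in place of the origins is identical, as the Cauchy--Pompeiu argument is symmetric in the two arguments of $g_N^z$. The main obstacle is precisely this uniform pointwise Green-function bound: it must be insensitive to the actual location of eigenvalues of $\cQ_N$ near $\Re z$, and is the one analytic input that must be established outside the contour-integral machinery itself.
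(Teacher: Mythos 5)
Your proof is correct and arrives at the same contour-integral identity, but it implements the residue step in a genuinely different way from the paper, and the difference is worth recording. The paper never applies the residue theorem to the pointwise kernel $g_N^z(v,w)$: it writes $\psi_j(v)=\langle Z_{\lambda_j}^{b_v},\psi_j\rangle$ via \eqref{e:Maxime}, applies the Cauchy--Pompeiu formula to each $j$ separately, and resums using $\sum_j\frac{\langle\widehat{Z}_z^{b_v},\psi_j\rangle\langle\widehat{Z}_z^{b_w},\overline{\psi_j}\rangle}{\lambda_j-z}=\langle\widehat{Z}_z^{b_v},(H_{\cQ_N}-z)^{-1}\overline{\widehat{Z}_z^{b_w}}\rangle$. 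Since $z\mapsto Z_z^{b}$ is not analytic, this forces the holomorphic surrogates $\widehat{Z}_z^{b}$, and the cost of converting $\langle\widehat{Z}_z^{b_v},(H_{\cQ_N}-z)^{-1}\overline{\widehat{Z}_z^{b_w}}\rangle$ back into $g_N^z(v,w)$ produces boundary error terms on $\Gamma_{\eta_0/4}$ involving $|g_N^z(o_{b_k},o_{b'_{k'}})|$ and its neighbours; these are only $O(\eta_0)$ and are controlled by the \emph{averaged} Green-function bounds of Corollary~\ref{cor:ConfinedTilMay11}, hence by \BST{} and \Green{}. Your route --- Stokes/residues applied directly to $\tilde{\chi}(z)F_{b,b'}(z)g_N^z(v,w)$ --- bypasses the $\widehat{Z}_z^{b}$ construction entirely, leaves only the $\bar{\partial}$ area term, and that term is $O(\eta_0^2)$ using just the elementary pointwise bound $|g_N^z(v,w)|\le C/|\Im z|$, which is indeed available uniformly in $N$, $v$, $w$ from \eqref{e:Maxime} together with \Data{} and \NonDirichlet{}, exactly as in the paper's Step~1. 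So your argument is, if anything, leaner: for this proposition it does not need Corollary~\ref{cor:ConfinedTilMay11} at all.

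The one input you use that the paper deliberately avoids is the pointwise meromorphy of $z\mapsto g_N^z(v,w)$ for fixed vertices, with residue $-\sum_{\lambda_i=\lambda_j}\psi_i(v)\psi_i(w)$ at each eigenvalue. This is true and standard for compact quantum graphs (for instance, the eigenfunction expansion of the kernel converges absolutely and locally uniformly in $z$ off the spectrum, because $\sup_j\|\psi_j\|_\infty<\infty$ under \Data{} and $\lambda_j\asymp j^2$ by Weyl's law), but it is precisely the kind of pointwise-analyticity statement that the $\widehat{Z}_z^{b}$ machinery is designed to circumvent, so you should state and justify it explicitly rather than fold it into the phrase ``spectral decomposition''.
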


\begin{proof}
\textbf{Step 1 : From a sum to an integral}

Let $v,w\in V_N$, and let $h$ be a holomorphic function in a strip $\{z\in \C ; |\Im z |<\frac{\eta_{\mathrm{Dir}}}{2}\}$. The function $h$ may depend on $N$, $v,w$.

Using \eqref{e:Maxime}, and noting that $Z_{\lambda}^b=\overline{Z_\lambda^b}$ for $\lambda\in \R$, we have
\[
\sum_{j\ge 1} \chi(\lambda_j) h(\lambda_j) \psi_j(v) \overline{\psi_j(w)}=\sum_{j\ge 1} \chi(\lambda_j) h(\lambda_j)\langle Z_{\lambda_j}^{b_v}, \psi_j\rangle \langle Z_{\lambda_j}^{b_w}, \overline{\psi_j} \rangle,
\]
where we chose $b_v\neq b_w$ such that $v=t_{b_v}$ and $w=t_{b_w}$. This is possible since $d(v),d(w)\ge 2$ by assumption \Data{}.

We may define holomorphic functions $\widehat{Z}^b_z$ as in the end of Section~\ref{sec:ScalarProd}, on some open set $\Omega_{I_1}:=  I_1 +\ii [-\eta_{I_1},\eta_{I_1}]$ which does not depend on $N\in \N$ or $b\in B_N$.

Let $0<\eta_1< \min(\eta_{I_1},\frac{\eta_{\mathrm{Dir}}}{2})$.  Consider the rectangle $\Omega_{\eta_1} := I_1 + \ii [-\eta_1, \eta_1]$ and the curve $\Gamma_{\eta_1} := \partial \Omega_{\eta_1}$. Cauchy's integral formula $f(\lambda) = \frac{1}{2\pi \ii} \int_{\Gamma} \frac{f(z)}{z-\lambda}\,\dd z$ for analytic $f$ generalizes to
\[
f(\lambda) = \frac{1}{2\pi \ii} \bigg(\int_{\Gamma_{\eta_1}} \frac{f(z)}{z-\lambda}\,\dd z + \int_{\Omega_{\eta_1}} \frac{\partial f/ \partial \overline{z}}{z-\lambda}\,\dd z\wedge \dd \overline{z}\bigg)
\]
for $\lambda \in \Omega_{\eta_1}$, where $\dd z\wedge \dd \overline{z}=-2\ii\,\dd x\dd y$, see e.g. \cite[Theorem 1.2.1]{Hor}.

We apply this result to $\lambda = \lambda_j$ and
\[
f(z)=\tilde{\chi}(z) h(z) \langle \widehat{Z}_z^{b_v}, \psi_j\rangle \langle \widehat{Z}_z^{b_w},\overline{\psi_j} \rangle.
\]

Noting that $z\in \Omega_I \mapsto  h(z) \langle \widehat{Z}_z^{b_v}, \psi_j\rangle \langle \widehat{Z}_z^{b_w},\overline{\psi_j} \rangle$ is holomorphic, we obtain
\begin{align*}
 \chi(\lambda_j) h(\lambda_j) \psi_j(v) \overline{\psi_j(w)} &= \frac{1}{2\pi\ii}\bigg(\int_{\Gamma_{\eta_1}} \frac{\tilde{\chi}(z) h(z) \langle \widehat{Z}_z^{b_v}, \psi_j\rangle  \langle \widehat{Z}_z^{b_w},\overline{\psi_j} \rangle}{z-\lambda_j}\,\dd z \\
 &+ \int_{\Omega_{\eta_1}} \frac{h(z)  \langle \widehat{Z}_z^{b_v}, \psi_j\rangle  \langle \widehat{Z}_z^{b_w},\overline{\psi_j} \rangle}{z-\lambda_j}  \frac{\partial \tilde{\chi}}{\partial \overline{z}}\,\dd z\wedge \dd \overline{z}\bigg).
 \end{align*}

Next, we want to sum this expression over $j$. Since $(\psi_j)$ is an orthonormal basis of $L^2(\mathcal{G}_N)$, we obtain that
\begin{equation*}
 \sum_{j\geq 1} \frac{\langle \widehat{Z}_z^{b_v}, \psi_j\rangle  \langle \widehat{Z}_z^{b_w},\overline{\psi_j} \rangle}{\lambda_j-z}=   \sum_{j\geq 1} \frac{\langle \widehat{Z}_z^{b_v}, \psi_j\rangle \langle \psi_j, \overline{\widehat{Z}_z^{b_w}} \rangle}{\lambda_j-z} = \langle \widehat{Z}_z^{b_v}, (H_{\cQ_N}-z)^{-1} \overline{\widehat{Z}_z^{b_w}} \rangle\,.
\end{equation*}

Therefore, we have
\begin{align*}
\sum_j \chi(\lambda_j) h(\lambda_j) \psi_j(v) \overline{\psi_j(w)} &= \frac{-1}{2\pi\ii}\Big{(}\int_{\Gamma_{\eta_1}} \tilde{\chi}(z) h(z) \langle \widehat{Z}_z^{b_v}, (H_{\cQ_N}-z)^{-1} \overline{\widehat{Z}_z^{b_w}} \rangle \, \dd z \\
 &+ \int_{\Omega_{\eta_1}} h(z)   \langle \widehat{Z}_z^{b_v}, (H_{\cQ_N}-z)^{-1} \overline{\widehat{Z}_z^{b_w}} \rangle \frac{\partial \tilde{\chi}}{\partial \overline{z}}\,\dd z\wedge \dd \overline{z}\Big{)}.
\end{align*}

Now, we know that
\[
\big|\langle \widehat{Z}_z^{b_v}, (H_{\cQ_N}-z)^{-1} \overline{\widehat{Z}_z^{b_w}} \rangle\big| \leq \frac{C}{\Im z},
\]
where $C$ does not depend on $N\in \N$ or on $v,w$.

We deduce that 
\begin{multline*}
\bigg| \sum_j \chi(\lambda_j) h(\lambda_j) \psi_j(v) \overline{\psi_j(w)} - \frac{-1}{2\pi\ii}\int_{\Gamma_{\eta_1}} \tilde{\chi}(z) h(z) \langle \widehat{Z}_z^{b_v}, (H_{\cQ_N}-z)^{-1} \overline{\widehat{Z}_z^{b_w}} \rangle\dd z\bigg|\\
 \leq C' \eta_1 \int_{\Omega_{\eta_1}} |h(z)| \dd z \wedge \dd \overline{z}.
\end{multline*}

\textbf{Step 2 : Using the properties of $Z_z$}
We would like to replace the $\widehat{Z}_z$ by $Z_z$ in the previous formula, for the following reason.
Since $b_v\neq b_w$, the map $y \mapsto (H_{\cQ_N}-z)^{-1} (x,y)$ is an eigenfunction on $b_w$, 
so that
\begin{equation}\label{e:toberef}
\big((H_{\cQ_N}-z)^{-1} \overline{Z_z^{b_w}}\big)(x) = (H_{\cQ_N}-z)^{-1}(x,w)
\end{equation}
by \eqref{e:Maxime}. The map $x\mapsto (H_{\cQ_N}-z)^{-1}(x,w)$ is an eigenfunction on $b_v$, so that, by \eqref{e:Maxime} again, we have
\begin{equation*}
\langle Z_z^{b_v}, (H_{\cQ_N}-z)^{-1} \overline{Z_z^{b_w}} \rangle =(H_{\cQ_N}-z)^{-1}(v,w).
\end{equation*}

To estimate the cost of replacing $\widehat{Z}_z$ by $Z_z$, we write
\begin{multline}\label{e:pentecote}
\left\langle \widehat{Z}_z^{b_v}, (H_{\cQ_N}-z)^{-1} \overline{\widehat{Z}_z^{b_w}} \right\rangle  = \left\langle Z_z^{b_v}, (H_{\cQ_N}-z)^{-1} \overline{Z_z^{b_w}} \right\rangle   + \left\langle \widehat{Z}_z^{b_v} - Z_z^{b_v}, (H_{\cQ_N}-z)^{-1} \overline{Z_z^{b_w}} \right\rangle \\
+ \left\langle Z_z^{b_v}, (H_{\cQ_N}-z)^{-1} \left(\overline{\widehat{Z}_z^{b_w} - Z_z^{b_w}} \right) \right\rangle  +\left \langle \widehat{Z}_z^{b_v} - Z_z^{b_v}, (H_{\cQ_N}-z)^{-1}\left( \overline{\widehat{Z}_z^{b_w} - Z_z^{b_w}} \right) \right\rangle 
\end{multline}

By \eqref{eq:ZandHatZ}, the last term is easy to estimate
\begin{equation*}
\begin{aligned}
\left| \left\langle \widehat{Z}_z^{b_v} - Z_z^{b_v}, (H_{\cQ_N}-z)^{-1}\left( \overline{\widehat{Z}_z^{b_w} - Z_z^{b_w}} \right) \right\rangle  \right| &\le C_{I_1,\mathrm{M}}(\Im z^2) \| (H_{\cQ_N}-z)^{-1}\| \\
& \le C_{I_1,\mathrm{M}}|\Im z|.
\end{aligned}
\end{equation*}

As to the second term on the right-hand side of \eqref{e:pentecote}, we use \eqref{eq:ZandHatZ}, the Cauchy-Schwarz inequality and \eqref{e:toberef}, to see that its modulus is bounded by some constant times 
\[
|\Im z| \left\| (H_{\cQ_N}-z)^{-1}(\cdot,w)\right\|_{L^2(b_v)}  :=   |\Im z| \left(\int_{0}^{L_{b_v}} \left|(H_{\cQ_N}-z)^{-1}(x_{b_v},w)\right|^2 \dd x_{b_v}\right)^{1/2}.
\]
Since
\[
(H_{\cQ_N}-z)^{-1}(x_{b_v},w) = \frac{S_{z}(L_{b_v}-x_{b_v})}{S_{z}(L_{b_v})} g_N^z(o_{b_v},w) + \frac{S_{z}(x_{b_v})}{S_{z}(L_{b_v})} g_N^z(t_{b_v},w),
\]
we deduce from \Data{} and \NonDirichlet{} that the second term is bounded by
\[
C_{I,\mathrm{M},C_{\mathrm{Dir}}}|\Im z|(|g_N^z(o_{b_v},w)| + |g_N^z(v,w)|). 
\]

We have a similar estimate for the third term. Therefore, we have
\begin{multline}\label{eq:VWFixed}
\left| \sum_j \chi(\lambda_j) h(\lambda_j) \psi_j(v) \overline{\psi_j(w)} - \frac{-1}{2\pi\ii}\int_{\Gamma_{\eta_1}} \tilde{\chi}(z) h(z) (H_{\cQ_N}-z)^{-1}(v,w)\dd z \right|\\
\leq C \eta_1 \bigg[ \int_{\Omega_{\eta_1}} |h(z)| \dd z \wedge \dd \overline{z} + \int_{\Gamma_{\eta_1}} |h(z)| \big(  |g_N^z(o_{b_v},w)| + 2|g_N^z(v,w)|+ |g_N^z(v,o_{b_w})| +1\big) \dd z   \bigg].
\end{multline}

\textbf{Step 3 : Using \Hol}
Take $\eta_1 = \frac{\eta_0}{4}$ for $\eta_0\in (0,\eta_{I_1})$.
For each $(b_1; b_k)$, $(b_1, b'_{k'})$, we apply \eqref{eq:VWFixed} with $h(z) = K^{z}_{\eta_0}(b_1; b_k) K'^{z}_{\eta_0} (b'_1; b'_{k'}) $.

When summing over $(b_1; b_k)$, $(b_1, b'_{k'})$ and dividing by $N$, 
the first term in the remainder is bounded by
\[
C'\eta_0\sup_{\lambda\in I_1, \eta\in (-\frac{\eta_0}{4},\frac{\eta_0}{4})} \frac{1}{N}  \sum_{(b_1; b_k)} \sum_{\substack{(b'_1; b'_{k'})\\b'_1=b_1}} |K^{\lambda+ \ii \eta}_{\eta_0}(b_1; b_k) K'^{\lambda+ \ii \eta}_{\eta_0} (b'_1; b'_{k'})|,
\]
which is a $O_{N\to +\infty, \eta_0} (\eta_0)$ by the Cauchy-Schwarz inequality and \eqref{eq:APrioriBoundOperators}. 

Concerning the second term, using Cauchy-Schwarz and \eqref{eq:APrioriBoundOperators}, it can be bounded by
\begin{align*}
C'\eta_0\bigg( \sup_{\lambda\in I_1, \eta\in (-\frac{\eta_0}{4},\frac{\eta_0}{4})}  \frac{1}{N} \sum_{(b_1; b_k)} \sum_{\substack{(b'_1; b'_{k'})\\b'_1=b_1}} &\big( |g_N^z(o_{{\check{b}_k}},o_{b'_{k'}})|^2 + |g_N^z(o_{b_k},o_{b'_{k'}})|^2\\
&+|g_N^z(o_{b_k},o_{\check{b}'_{k'}})|^2+1\big) \bigg)^{1/2},
\end{align*}
where $\check{b}_k$, $\check{b}'_{k'}$ are chosen so that $t_{\check{b}_k} = o_{b_k}$ and $t_{\check{b}'_{k'}} = o_{b'_{k'}}$  but $\check{b}_k\neq \check{b}'_{k'}$. Applying Corollary~\ref{cor:ConfinedTilMay11} to $F^z(b_1;b_k)=\sum\limits_{\substack{(b'_1; b'_{k'})\\b'_1=b_1}}(...)$, we deduce this is $O_{N\to +\infty, \eta_0} (\eta_0)$. Proposition \ref{Cor:SumToInt} follows.
\end{proof}

We deduce the following corollary, which we will use several times.

\begin{cor}\label{cor:AAMagic}
Let $\chi \in C_c^\infty(I_1)$, let $K^\gamma\in \mathscr{H}_k$, and  $K'^\gamma\in \mathscr{H}_{k'}$ satisfy \emph{\Hol{}}. Then
\[
\frac{1}{N}\sum_{j\geq 1} \sum_{(b_1; b_k)}\sum_{\substack{(b'_1; b'_{k'})\\ b'_1=b_1}}  \chi(\lambda_j) K^{\gamma_j}(b_1; b_k) K'^{\gamma_j} (b'_1; b'_{k'}) \psi_j(o_{b_k}) \overline{\psi_j(o_{b'_{k'}})}=  O_{N\to +\infty, \eta_0} (1).
\]

The same result holds if $o_{b_k}$ and/or $o_{b'_{k'}}$ are replaced by $t_{b_k}$ and/or $t_{b'_{k'}}$.
\end{cor}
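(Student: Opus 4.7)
The plan is to obtain Corollary~\ref{cor:AAMagic} as an almost immediate consequence of Proposition~\ref{Cor:SumToInt}, combined with a Cauchy--Schwarz estimate applied to the resulting contour integral. First I would apply Proposition~\ref{Cor:SumToInt} directly with the same $\chi$, $K^\gamma$, $K'^\gamma$: the left-hand side of the corollary equals
\[
-\frac{1}{2\pi\ii N}\sum_{(b_1; b_k)} \sum_{\substack{(b'_1; b'_{k'})\\ b'_1=b_1}} \int_{\Gamma_{\eta_0/4}} \tilde{\chi}(z)\, g_N^z(o_{b_k},o_{b'_{k'}})\, K^{z}_{\eta_0}(b_1; b_k)\, K'^{z}_{\eta_0} (b'_1; b'_{k'})\, \dd z
\]
plus an error $O_{N\to+\infty,\eta_0}(\eta_0)$, which is already absorbed into the desired $O_{N\to+\infty,\eta_0}(1)$. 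Since $\tilde\chi$ is fixed, bounded and compactly supported, and the contour $\Gamma_{\eta_0/4}=\partial(I_1+\ii[-\eta_0/4,\eta_0/4])$ has total length $O(1)$ uniformly in $\eta_0\in(0,\eta_{\mathrm{Dir}}/2)$, it suffices to bound, uniformly over $z\in\Gamma_{\eta_0/4}$, the quantity
\[
A_N(z):=\frac{1}{N} \sum_{(b_1;b_k)} \sum_{\substack{(b'_1;b'_{k'})\\ b'_1=b_1}} |g_N^z(o_{b_k}, o_{b'_{k'}})|\,|K^z_{\eta_0}(b_1;b_k)|\,|K'^z_{\eta_0}(b'_1;b'_{k'})|.
\]

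By Cauchy--Schwarz applied to the triple sum, I would split the Green kernel from the observables:
\[
A_N(z)^2 \leq \left(\frac{1}{N} \sum_{(b_1;b_k)} \sum_{\substack{(b'_1;b'_{k'})\\ b'_1=b_1}} |g_N^z(o_{b_k}, o_{b'_{k'}})|^2\right)\left(\frac{1}{N} \sum_{(b_1;b_k)} \sum_{\substack{(b'_1;b'_{k'})\\ b'_1=b_1}} |K^z_{\eta_0}|^2|K'^z_{\eta_0}|^2\right).
\]
For the first factor, each vertex pair $(v,w)\in V_N^2$ arises as $(o_{b_k}, o_{b'_{k'}})$ for at most $D^{k+k'-2}$ path pairs by the bounded-degree assumption \Data{}, which reduces this factor to $\frac{D^{k+k'-2}}{N}\sum_{v,w\in V_N}|g_N^z(v,w)|^2$; this is $O_{N\to+\infty,\eta_0}(1)$ by Corollary~\ref{cor:ConfinedTilMay11}, exactly as invoked in Step~3 of the proof of Proposition~\ref{Cor:SumToInt}. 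For the second factor, I would fix $b_1$ to write the sum as a product of two single-path sums, then apply Cauchy--Schwarz once more over $b_1$, using that the number of extensions of $b_1$ to a non-backtracking path of length $k$ is at most $D^{k-1}$; this gives control in terms of $\frac{1}{N}\sum_{(b_1;b_k)}|K^z_{\eta_0}|^4$ and its analog for $K'^z_{\eta_0}$, both of which are uniformly bounded by hypothesis \Hol{}, specifically by \eqref{eq:APrioriBoundOperators} applied with $s=4$.

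The same argument, without modification, handles the case where $o_{b_k}$ is replaced by $t_{b_k}$ and/or $o_{b'_{k'}}$ by $t_{b'_{k'}}$, since Proposition~\ref{Cor:SumToInt} explicitly lists these variants. I do not anticipate any essential obstacle here: all the nontrivial analytic work, namely the passage from a spectral sum to a contour integral of Green functions together with the initial error estimates near the Dirichlet spectrum, is already encapsulated in Proposition~\ref{Cor:SumToInt} and Corollary~\ref{cor:ConfinedTilMay11}, and what remains is the essentially combinatorial bookkeeping of iterated Cauchy--Schwarz against the $L^s$-type a priori bounds guaranteed by \Hol{}.
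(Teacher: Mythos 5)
Your overall strategy is exactly the paper's: apply Proposition~\ref{Cor:SumToInt}, then bound the resulting contour integral by Cauchy--Schwarz/H\"older together with \eqref{eq:APrioriBoundOperators} and Corollary~\ref{cor:ConfinedTilMay11}. Your treatment of the observable factor (iterated Cauchy--Schwarz reducing to $\frac{1}{N}\sum|K^z_{\eta_0}|^4$ and its analogue, controlled by \Hol{} with $s=4$) is correct.

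There is, however, a genuine error in your treatment of the Green's function factor. You bound the path-restricted sum $\frac{1}{N}\sum_{(b_1;b_k)}\sum_{(b'_1;b'_{k'}),\,b'_1=b_1}|g_N^z(o_{b_k},o_{b'_{k'}})|^2$ by $\frac{D^{k+k'-2}}{N}\sum_{v,w\in V_N}|g_N^z(v,w)|^2$ and assert that the latter is $O_{N\to+\infty,\eta_0}(1)$ by Corollary~\ref{cor:ConfinedTilMay11}. Neither part of this is right. Corollary~\ref{cor:ConfinedTilMay11} only controls $\frac{1}{N}$ times sums over non-backtracking paths of \emph{fixed bounded length} of functions in $\mathscr{L}_{k+1}^\gamma$; it says nothing about the unrestricted double sum over all vertex pairs. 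Worse, that unrestricted sum is genuinely not bounded uniformly in $\eta_0$: writing $g_N^z(v,w)=\langle Z_z^{b_v},(H_{\cQ_N}-z)^{-1}\overline{Z_z^{b_w}}\rangle$ and using $\Im\langle f,(H-z)^{-1}f\rangle=\Im z\,\|(H-z)^{-1}f\|^2$, one finds $\sum_w|g_N^z(v,w)|^2\lesssim (\Im z)^{-1}\,\Im g_N^z(v,v)+O(1)$, so $\frac{1}{N}\sum_{v,w}|g_N^z(v,w)|^2$ is of order $\eta_0^{-1}$ (and this order is attained, since near an eigenvalue $\lambda_j$ the rank-one term $\psi_j(v)\overline{\psi_j(w)}/(\lambda_j-z)$ contributes at all pairs $(v,w)$). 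Your over-estimate discards precisely the structure that makes the bound work: because $b'_1=b_1$, the Green function is only ever evaluated at pairs of vertices joined by a path of length at most $k+k'$ through the common first edge, and it is only for such \emph{local} sums that Corollary~\ref{cor:ConfinedTilMay11} applies (this is how the paper uses it, applying the corollary to $F^z(b_1;b_k)=\sum_{(b'_1;b'_{k'})}(\dots)$ viewed as a function of the path). The fix is simply to not pass to the full vertex-pair sum: keep the double sum over path pairs sharing $b_1$, observe that $g_N^z(o_{b_k},o_{b'_{k'}})$ is then a function of the type listed in Definition~\ref{def:GeneralContinuousOp} on a bounded-length path, and invoke Corollary~\ref{cor:ConfinedTilMay11} directly. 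With that repair your argument coincides with the paper's one-line proof.
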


\begin{proof}
By Proposition~\ref{Cor:SumToInt}, up to a term which is $O_{N\to +\infty, \eta_0} (\eta_0)$  the modulus of the quantity we want to estimate is bounded by
\[
\frac{C}{N} \sup_{\lambda\in I_1} \sum_{(b_1; b_k)} \sum_{\substack{(b'_1; b'_{k'})\\ b'_1=b_1}}  \left|g_N^{\lambda\pm \ii\frac{\eta_0}{4}}(o_{b_k},o_{b'_{k'}})\right| \left|K^{\lambda\pm \ii\frac{\eta_0}{4}}_{\eta_0}(b_1; b_k)\right| \left| K'^{\lambda \pm \ii\frac{\eta_0}{4}}_{\eta_0} (b'_1; b'_{k'}) \right|.
\]
Using H\"older's inequality, \eqref{eq:APrioriBoundOperators} and Corollary~\ref{cor:ConfinedTilMay11} this is $O_{N\to +\infty, \gamma}(1)$.
\end{proof}

\begin{rem}\label{rem:CaseDiag}
The very same proof as that of Proposition~\ref{Cor:SumToInt} gives us that, if $K^\gamma \in \mathscr{H}_0$ satisfies \Hol{}, then up to an error $O_{N\to +\infty, \eta_0} (\eta_0)$ we have
\begin{align*}
&\frac{1}{N}\sum_{j\geq 1} \sum_{v\in V} \chi(\lambda_j) K^{\gamma_j}(v) |\psi_j(v)|^2 = -\frac{1}{N}\sum_{v\in V} \frac{1}{2\pi\ii} \int_{\Gamma_{\eta_0/4}} \tilde{\chi}(z) g_N^z(v,v) K^{z}_{\eta_0}(v)  \dd z,
\end{align*}
and, in particular, as in Corollary \ref{cor:AAMagic},
\begin{equation}\label{rem:coro}
\frac{1}{N}\sum_{j\geq 1} \sum_{v\in V} \chi(\lambda_j) K^{\gamma_j}(v) |\psi_j(v)|^2=  O_{N\to +\infty, \eta_0} (1).
\end{equation}
\end{rem}

\subsection{Applications}\label{subsec:ApplicHolom}
The previous results are used here to prove Lemmas~\ref{lem:SmallO} and \ref{lem:EasyVarBound}, and will be used again in the proof of Proposition~\ref{thm:upvar} later on.

\begin{proof}[Proof of Lemma \ref{lem:SmallO}]
Let $\chi \in C_c^\infty(I_1)$ be positive and equal to one on $I$.

For simplicity, we will consider only the term $r=n$, $\ell=n$ in \eqref{eq:RemainderVarInv}. All the other terms can be treated in the same fashion. By \eqref{eq:LowerNumerEigen} and Cauchy-Schwarz, this term is
\begin{multline*}
 \frac{1}{\mathbf{N}_N(I)}\sum_{\lambda_j\in I} |\langle f_j^{\ast},K_B\zeta^{\gamma_j}O_{\psi_j,\eta_0}\rangle|\\
\lesssim \bigg(\frac{1}{N}\sum_{j\ge 1} \chi(\lambda_j) \| f_j^{\ast}\|^2 \bigg)^{1/2} \bigg(\frac{1}{N}\sum_{j\ge 1} \chi(\lambda_j)\|K_B\zeta^{\gamma_j}O_{\psi_j,\eta_0}\|^2 \bigg)^{1/2}.
\end{multline*}

We have $\|f_j^*\|^2 = \sum_{b\in B}\frac{|\psi_j(o_b) - \zeta^{\gamma_j}(\hat{b})\psi_j(t_b)|^2}{S_{\lambda_j}^2(L_b)}$. Applying Corollary \ref{cor:AAMagic}, with $k=k'=1$, $K'=\mathrm{Id}$, and $K$ taking values $K^\gamma_1(b)= \frac{1}{S_{\Re(\gamma)}^2(b)}$ , $K^\gamma_2(b)= \frac{\zeta^{\gamma}(\hat{b})}{S_{\Re(\gamma)}^2(b)} $, $K^\gamma_3(b)= \frac{\overline{\zeta^{\gamma}}(\hat{b})}{S_{\Re(\gamma)}^2(b)} $, $K^\gamma_4(b)= \frac{|\zeta^{\gamma}(\hat{b})|^2}{S_{\Re(\gamma)}^2(b)} $, which all satisfy \Hol{} thanks to Remark~\ref{rem:HolStable}, we deduce that the first factor is $O_{N\to +\infty, \eta_0}(1)$. Concerning the second sum, it can be written
\[
\frac{1}{N}\sum_{j\ge 1} \sum_{(b_1; b_k)}\sum_{\substack{(b'_1; b'_{k})\\ b'_1=b_1}} \chi(\lambda_j)K(b_1;b_k)\overline{K(b_1';b_k')}\zeta^{\gamma_j} (b_k)O_{\eta_0}(b_k)\overline{\zeta^{\gamma_j} (b_k')O_{\eta_0}(b_k')} \psi_j(t_{b_k})\overline{\psi_j(t_{b_k'})} .
\]
Using Remark~\ref{rem:HolStable}, we may bound this as in the proof of Corollary \ref{cor:AAMagic}. Then the result follows from \eqref{eq:OSmall}.
\end{proof}

\begin{proof}[Proof of Lemma \ref{lem:EasyVarBound}]
Let $\mathrm{K}^\gamma\in \mathscr{H}_0$ satisfy \Hol{} and let $M_\gamma(v):= N^\gamma(v) |g_N^\gamma(v,v)|^{-1/2}$. We have by \eqref{eq:LowerNumerEigen} and Cauchy-Schwarz,
\begin{align*}
\mathrm{Var}_{\eta_0}^I (\mathrm{K^\gamma}) &= \frac{1}{\mathbf{N}_N(I)}\sum_{\lambda_j\in I} |\langle \mathring{\psi}_j, K^{\gamma_j} \mathring{\psi}_j \rangle|\\
&\lesssim \bigg(\frac{1}{N}\sum_{\lambda_j\in I} \| M_{\gamma_j}^{-1}\mathring{\psi}_j \|^2   \bigg)^{1/2} \bigg(\frac{1}{N}\sum_{\lambda_j\in I}  \| M_{\gamma_j} K^{\gamma_j} \mathring{\psi}_j \|^2 \bigg)^{1/2}.
\end{align*}

Therefore, if $\chi \in C_c^\infty(I_1)$ is positive, and equal to one on $I$, we have
\begin{align*}
\mathrm{Var}_{\eta_0}^I (\mathrm{K^\gamma})^2 &\lesssim \bigg(\frac{1}{N}\sum_{j\ge 1} \chi(\lambda_j) \sum_{v\in V} \frac{|\psi_j(v)|^2}{|M_{\gamma_j}(v)|^2}\bigg) \bigg(\frac{1}{N}\sum_{j\ge 1} \sum_{v\in V} |M_{\gamma_j}(v)K^{\gamma_j}(v)|^2 |\psi_j(v)|^2 \bigg).
\end{align*}

The first factor is $O_{N\to+\infty, \gamma}(1)$, by \eqref{rem:coro}, as $|M^\gamma|^{-2}$ satisfies \Hol{}.

Next, using Remark \ref{rem:CaseDiag}, up to a term $O_{N\to +\infty, \eta_0} (\eta_0)$, the second factor is
\[
\frac{-1}{2N \pi\ii}\sum_{v\in V_N}\int_{\Gamma_{\eta_0/4}} \tilde{\chi}(z) g_N^z(v,v) (MK\overline{MK})^{z}_{\eta_0}(v)  \dd z.
\]
Using \eqref{eq:APrioriBoundOperators3}, this may be replaced by $\frac{-1}{2N \pi\ii}\sum_{v\in V_N}\int_{\Gamma_{\eta_0/4}} \tilde{\chi}(z) g_N^z(v,v) |M^z(v)K^{z}(v)|^2  \dd z$. Indeed, dominated convergence is applicable by \eqref{eq:APrioriBoundOperators}. The modulus of this is bounded by $\frac{1}{2\pi N} \sum_{v\in V_N} \int_{\Gamma_{\eta_0/4}} |N^z(v) K^z(v)|^2 \dd z$.
\end{proof}

\section{Upper bound on the non-backtracking variance}\label{Sec:UpperBound}
Let $k\geq 1$. Given $K,K'\in \mathscr{H}_k$ and $\gamma=\lambda+\ii\eta_0\in \C^+$, we define the weighted scalar product
\begin{equation}\label{e:hnormgam}
\left(K, K'\right)_{\gamma} := \frac{1}{N}\sum_{(b_1 ; b_k)\in \mathrm{B}_k} \Im R_{\gamma}^-(t_{b_1}) \cdot K(b_1 ; b_k) \overline{K'(b_1;b_k)}  \cdot \Im R_{\gamma}^+(o_{b_k}) \,,
\end{equation}
and $\|K\|_{\gamma}^2 := \left(K, K\right)_{\gamma}$ the associated norm. The aim of this section is to prove the following proposition, which tells us that the non-backtracking quantum variance is dominated by this weighted Hilbert-Schmidt norm.

\begin{prp}\label{thm:upvar}
Let $\overline{I}\subset I_1$, with $I_1$ as in \emph{\Green{}}. There exists $C_I>0$ such that for all $k\geq 1$, and all $K^\gamma\in \mathscr{H}_k$ satisfying hypothesis \emph{\Hol{}} from Definition \ref{def:Hol}, we have
\[
\lim_{\eta_0 \downarrow 0}\limsup_{N\to\infty} \mathrm{Var}^I_{\mathrm{nb}, \eta_0}(K^{\gamma})^2 \leq C_I \lim_{\eta_0\downarrow 0}\limsup_{N\to\infty} \int_{I_1} \|K^{\lambda+\ii\eta_0}\|_{\lambda+\ii\eta_0}^2\,\dd\lambda.
\]
\end{prp}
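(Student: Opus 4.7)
The strategy follows the Hilbert--Schmidt scheme of \cite{AS2}, adapted to the quantum-graph setting through the Green-function identities of Lemmas~\ref{lem:IdentitiesGreen} and~\ref{lem:ASW}. Picking a smooth cutoff $\chi \in C_c^\infty(I_1)$ with $\chi \geq \mathbf{1}_I$, the plan is to first apply Cauchy--Schwarz in the spectral sum, using the bound $\mathbf{N}_N(I) \gtrsim N$ from Appendix~\ref{sec:BS}, to obtain
\begin{equation*}
\mathrm{Var}^I_{\mathrm{nb},\eta_0}(K^\gamma)^2 \le \frac{C}{N}\sum_{j}\chi(\lambda_j)\bigl|\langle f_j^*,K_B^{\gamma_j}f_j\rangle\bigr|^2.
\end{equation*}
Next, I would apply Cauchy--Schwarz inside the sum over non-backtracking paths $(b_1;b_k)\in\mathrm{B}_k$ with the weight $w_{\gamma_j}(b_1,b_k):=\Im R^-_{\gamma_j}(t_{b_1})\Im R^+_{\gamma_j}(o_{b_k})$; the weighted factor $\sum w|K|^2$ yields exactly $N\|K^{\gamma_j}\|_{\gamma_j}^2$, and dropping the non-backtracking constraint in the complementary factor gives
\begin{equation*}
\bigl|\langle f_j^*,K_B^{\gamma_j}f_j\rangle\bigr|^2 \le N\|K^{\gamma_j}\|_{\gamma_j}^2\cdot \alpha_j\beta_j,
\end{equation*}
where $\alpha_j := \sum_{b\in B_N}|f_j^*(b)|^2/\Im R^-_{\gamma_j}(t_b)$ and $\beta_j := \sum_{b\in B_N}|f_j(b)|^2/\Im R^+_{\gamma_j}(o_b)$.

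It then remains to control $\sum_j\chi(\lambda_j)\|K^{\gamma_j}\|_{\gamma_j}^2\alpha_j\beta_j$ by an integral of $\|K\|^2$ over $I$. A further Cauchy--Schwarz in $j$ reduces this to bounding $\sum_j\chi(\lambda_j)\|K^{\gamma_j}\|_{\gamma_j}^2\alpha_j^2$ and the analogue with $\beta_j^2$. Expanding $f_j,f_j^*$ via~\eqref{e:fj}, each such sum is a finite linear combination of quartic-in-$\psi_j$ expressions weighted by $\gamma$-dependent coefficients satisfying~\Hol{}. I would convert these into double contour integrals over $\Gamma_{\eta_0/4}\times\Gamma_{\eta_0/4}$ by applying the almost-analytic Cauchy formula of Proposition~\ref{Cor:SumToInt} twice, pairing the four $\psi_j$ values into two products of resolvent kernels $g_N^z(\cdot,\cdot)\,g_N^{z'}(\cdot,\cdot)$. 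Replacing $g_N$ by $\tilg_N$ on the universal cover via Appendix~\ref{sec:BS}, the identity~\eqref{e:midgreen}, $\Im\tilg_N^z(v,v)=(\Im R^+_z(v)+\Im R^-_z(v))|\tilg_N^z(v,v)|^2$, allows the weights $(\Im R^\pm_{\gamma_j})^{-1}$ appearing in $\alpha_j^2,\beta_j^2$ to be absorbed into the diagonal imaginary parts of $\tilg_N$. Hypothesis~\Green{} then ensures that the resulting expectations of negative moments of $\Im\hat R^\pm$ are finite, yielding the desired bound by $\int_{I_1}\|K^{\lambda+i\eta_0}\|_{\lambda+i\eta_0}^2\,d\lambda$.

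The chief obstacle lies in the last step: the quartic-in-$\psi_j$ sums do not fit directly into the bilinear framework of Proposition~\ref{Cor:SumToInt}, so one must carefully organize the pairings of the four $\psi_j$ values so that the two resulting contour integrations can be performed independently. Since the coefficients $\zeta^{\gamma_j}$, $S_{\lambda_j}(L_b)^{-1}$ and the weights $\Im R^\pm_{\gamma_j}$ are not analytic in $\gamma_j$ but only satisfy~\Hol{}, the almost-analytic extensions constructed in Section~\ref{sec:ScalarProd} are essential, with all remainders kept $O_{N\to+\infty,\eta_0}(1)$ via the \emph{a priori} bound~\eqref{eq:APrioriBoundOperators}. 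Finally, expanding $K_B$ as a matrix whose entries sum over non-backtracking paths of length $k$ produces sums over intermediate bonds $b_2,\ldots,b_{k-1}$; these are collapsed onto the endpoints by iterating the current inequalities~\eqref{e:cur1}--\eqref{e:cur2} of Lemma~\ref{lem:ASW}, which is precisely where the specific weighted norm $\|\cdot\|_\gamma$ of the statement emerges naturally from the tree structure of the universal cover.
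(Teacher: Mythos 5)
Your opening move already diverges from what is needed, and the divergence is fatal in two ways. First, the combination of your two Cauchy--Schwarz steps loses a factor of $N$. You bound each individual matrix element by
$\left|\langle f_j^*,K_B^{\gamma_j}f_j\rangle\right|^2 \le N\|K^{\gamma_j}\|_{\gamma_j}^2\,\alpha_j\beta_j$,
but $\alpha_j$ and $\beta_j$ are $O(1)$ quantities for a normalized eigenfunction (e.g.\ $\alpha_j\gtrsim\|f_j^*\|^2$ up to the weights, and $\|f_j^*\|^2=O(1)$ since $\|\psi_j\|_{L^2(\cG_N)}=1$). Feeding this into $\frac{C}{N}\sum_j\chi(\lambda_j)(\cdots)$ with $\mathbf{N}_N(I)\asymp N$ eigenvalues leaves you with a bound of order $N\,\sup_\lambda\|K^{\lambda+\ii\eta_0}\|^2$, which is $N$ times the claimed right-hand side. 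The entire point of a Hilbert--Schmidt estimate is that the \emph{sum over $j$} of the squared matrix elements is comparable to the Hilbert--Schmidt norm \emph{once}, by exploiting the near-orthogonality of the family $(f_j)$; a per-$j$ Cauchy--Schwarz over paths discards exactly that cancellation and charges the full norm to every eigenvalue. The paper instead splits $\langle f_j^*,K_Bf_j\rangle$ by Cauchy--Schwarz \emph{in the bond variable} as $\|\alpha_{\gamma_j}^{-1}f_j^*\|\cdot\|\alpha_{\gamma_j}K_B^{\gamma_j}f_j\|$ and then applies Cauchy--Schwarz in $j$, so that the two resulting spectral sums, $\frac1N\sum_j\chi(\lambda_j)\|\alpha_{\gamma_j}^{-1}f_j^*\|^2$ and $\frac1N\sum_j\chi(\lambda_j)\|\alpha_{\gamma_j}K_B^{\gamma_j}f_j\|^2$, are each \emph{quadratic} in $\psi_j$; the first is $O_{N\to\infty,\gamma}(1)$ and the second is the one that develops into $\int_{I_1}\|K\|^2_{\lambda+\ii\eta_0}\dd\lambda$.

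Second, even granting the reduction, your plan to evaluate $\sum_j\chi(\lambda_j)\|K^{\gamma_j}\|_{\gamma_j}^2\alpha_j^2$ by ``applying Proposition~\ref{Cor:SumToInt} twice'' does not work. That proposition rests on the completeness identity $\sum_j\frac{\langle \widehat Z_z^{b_v},\psi_j\rangle\langle\psi_j,\overline{\widehat Z_z^{b_w}}\rangle}{\lambda_j-z}=\langle \widehat Z_z^{b_v},(H_{\cQ_N}-z)^{-1}\overline{\widehat Z_z^{b_w}}\rangle$, which is available only for expressions bilinear in $\psi_j$. For a quartic expression $\psi_j(v_1)\overline{\psi_j(w_1)}\psi_j(v_2)\overline{\psi_j(w_2)}$ there is no Cauchy-type formula producing the diagonal sum over $j$: a double contour integral in $(z,z')$ generates a double sum over pairs of eigenvalues $\sum_{j,j'}$, not the diagonal $j=j'$. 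You correctly identify this as ``the chief obstacle,'' but you do not resolve it, and it cannot be resolved within the stated framework; the only way through is to avoid quartic sums altogether, as the paper does. After that, the remaining work in the paper (replacing $g_N^z$ by $\tilg_N^z$ via \BST{}, killing the off-diagonal path contributions $p_k\neq p_k'$ using the multiplicativity \eqref{e:greenmul} of tree Green functions, and bounding the diagonal contribution by $\Im R_\gamma^+(o_{b_k})$ via \eqref{e:psiiden1}) is where the weighted norm $\|\cdot\|_\gamma$ actually emerges — not from iterating the current inequalities \eqref{e:cur1}--\eqref{e:cur2} over intermediate bonds as you suggest; those identities are used in the paper only later, in Section~\ref{Sec:HSestimate}, to simplify the scalar products $(\cR_{n,r}^\gamma K,\cR_{n,r'}^\gamma K)_\gamma$.
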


This proposition is analogous in appearance to Lemma~\ref{lem:EasyVarBound}, but is actually much more involved because the constant $C_I$ does not depend on $k$, which is important for the next sections; it is a lot easier to derive cruder bounds depending on $k$, by arguing as in \S~\ref{subsec:ApplicHolom}.

\begin{proof}
Let $\chi\in C_c^\infty (I_1)$, with $\chi\equiv 1$ on $I$, $0\le \chi\le 1$, and let $|\alpha_{\gamma}(b)|^2 = \Im R_{\gamma}^-(t_b)$ . Denoting $\gamma_j = \lambda_j + \ii \eta_0$ and using \eqref{eq:LowerNumerEigen}, we have
\begin{equation}\label{e:mainterm}
\begin{aligned}
\mathrm{Var}_{\mathrm{nb},\eta_0}^I (K^\gamma)^2 &= \bigg(\frac{1}{\mathbf{N}_N(I)} \sum_{\lambda_j\in I} \left|\langle f_j^{\ast}, K_B^{\lambda_j+\ii\eta_0} f_j\rangle \right|\bigg)^2\\
&\leq C(I) \bigg(\frac{1}{N}\sum_{\lambda_j\in I} \|\alpha_{\gamma_j}^{-1}f_j^{\ast}\|^2\bigg) \bigg( \frac{1}{N}\sum_{\lambda_j\in I} \|\alpha_{\gamma_j} K_B^{\gamma_j} f_j\|^2\bigg)\\
&\leq C(I) \bigg( \frac{1}{N}\sum_{j\geq 1} \chi(\lambda_j)  \|\alpha_{\gamma_j}^{-1}f_j^{\ast}\|^2 \bigg) \bigg( \frac{1}{N}\sum_{j\geq 1} \chi(\lambda_j) \|\alpha_{\gamma_j} K_B^{\gamma_j} f_j\|^2\bigg).
\end{aligned}
\end{equation}

The first factor is $O_{N\longrightarrow +\infty, \gamma}(1)$ by application of Corollary~\ref{cor:AAMagic}, as in Section~\ref{subsec:ApplicHolom}. So we focus on the second factor.

\smallskip

\textbf{Step 1: From a sum to an integral}

Recalling \eqref{eq:DefKB} and definition \eqref{e:fj}, we have
\begin{multline*}
\sum_{j\ge 1}  \chi(\lambda_j)\|\alpha_{\gamma_j} K_B^{\gamma_j} f_j\|^2 = \sum_{b_1\in B_N} \sum_{(b_2;b_k), (b'_2;b'_k)\in \mathrm{B}_{k-1}^{b_1}} \sum_{j\ge 1} \chi(\lambda_j)  |\alpha_{\gamma_j}(b_1)|^2 \\
\times K^{\gamma_j} (b_1; b_k) \Big{(}\frac{\psi_j(t_{b_k})}{S_{\lambda_j}(L_{b_k})} - \frac{\zeta^{\gamma_j}(b_k) \psi_j(o_{b_k})}{S_{\lambda_j}(L_{b_k})} \Big{)} \overline{  K^{\gamma_j} (b_1;b'_k)} \Big{(}\frac{\overline{\psi_j(t_{b_{k}'})}}{S_{\lambda_j}(L_{b_{k}'})} - \frac{\overline{\zeta^{\gamma_j}(b_{k}')\psi_j(o_{b_{k}'})}}{S_{\lambda_j}(L_{b_{k}'})} \Big{)}. 
\end{multline*}

By Remark \ref{rem:HolStable}, $|\alpha_{\gamma}(b_1)|^2$, $K^{\gamma}(b_1;b_k)$, and  $\overline{K^{\gamma}(b_1; b'_k)}$ satisfy \Hol{}, so they have analytic extensions to the strip $\{|\Im z|<\eta_0/2\}$ which we denote by $f_{\alpha,\eta_0}^z(b_1)$, $K^z_{\eta_0}(b_1;b_k)$ and  $\overline{K}^z_{\eta_0}(b_1; b'_k)$, respectively (note that in general $\overline{K}^z_{\eta_0}(b_1; b'_k)\neq \overline{K^z_{\eta_0}(b_1; b'_k)}$). To lighten the expressions a bit, we denote
\[
J_{\eta_0}^z(b_1;b_k;b_k') := \frac{f_{\alpha,\eta_0}^z(b_1)K^z_{\eta_0}(b_1;b_k)\overline{K}^z_{\eta_0}(b_1; b'_k)}{S_z(L_{b_k})S_z(L_{b_k'})}\,.
\]
Next, $\overline{\zeta^{\lambda+\ii \eta_0}(b_{k}')} = \zeta^{\lambda- \ii \eta_0}(b_{k}')$ can be extended holomorphically by $\zeta^{z-\ii\eta_0}(b_{k}')$. We may thus apply Proposition \ref{Cor:SumToInt} to obtain

\begin{multline}\label{eq:FromGtoTilG}
\sum_{j\ge 1} \chi(\lambda_j)\|\alpha_{\gamma_j} K_B^{\gamma_j} f_j\|^2  = \frac{-1}{2\pi \ii} \sum_{b_1\in B_N} \sum_{(b_2;b_k), (b'_2;b'_k)\in \mathrm{B}_{k-1}^{b_1}} \int_{\Gamma_{\frac{\eta_0}{4}}} \tilde{\chi}(z) J_{\eta_0}^z(b_1;b_k;b_k')\\
  \times \Big( g_N^z(t_{b_k},t_{b'_{k}})-  \zeta^{z+\ii\eta_0}(b_k) g_N^z(o_{b_k},t_{b'_{k}})
-\zeta^{z-\ii\eta_0}(b'_k) g_N^z(t_{b_k},o_{b'_{k}})\\
+ \zeta^{z+\ii\eta_0}(b_k)\zeta^{z-\ii\eta_0}(b'_k) g_N^z(o_{b_k},o_{b'_{k}}) \Big) \dd z + N O_{N\to+\infty, \eta_0}(\eta_0).
\end{multline}

By \eqref{eq:support extension}, $\int_{\Gamma_{\eta}}$ reduces to $\int_{I_1-\ii\eta}-\int_{I_1+\ii\eta}$, which we denote by $\int_{\Gamma_{\eta}'}$, $\eta=\frac{\eta_0}{4}$.

\smallskip

\textbf{Step 2: From finite graphs to infinite trees}

The aim of this step is to use the fact that our graphs look locally like trees (assumption \BST) to replace the Green function $g^z_N$ on $\cQ_N$ in \eqref{eq:FromGtoTilG} by $\tilg^z_N$, the Green function of the universal covering tree $\widetilde{\cQ}_N$.

Given a rooted quantum tree $[\cQ,b_1]$, and two paths $p_k=(b_2;b_k)$ and $p_k' = (b_2';b_k')$ in $\mathrm{B}_{k-1}^{b_1}$, let us introduce
\begin{equation}\label{e:4greens}
\begin{aligned}
f^z([\cQ,b_1],p_k,p_k' )&:=  G^z(t_{b_k},t_{b'_{k}})-  \zeta^{z+\ii\eta_0}(b_k) G^z(o_{b_k},t_{b'_{k}})-\zeta^{z-\ii\eta_0}(b'_k) G^z(t_{b_k},o_{b'_{k}})\\
&\quad+ \zeta^{z+\ii\eta_0}(b_k)\zeta^{z-\ii\eta_0}(b'_k) G^z(o_{b_k},o_{b'_{k}})\\
\widetilde{f}^z([\cQ,b_1],p_k, p_k' )&:= \widetilde{G}^z(t_{b_k},t_{b'_{k}})-  \zeta^{z+\ii\eta_0}(b_k) \widetilde{G}^z(o_{b_k},t_{b'_{k}})
-\zeta^{z-\ii\eta_0}(b'_k) \widetilde{G}^z(t_{b_k},o_{b'_{k}})\\
&\quad+ \zeta^{z+\ii\eta_0}(b_k)\zeta^{z-\ii\eta_0}(b'_k) \widetilde{G}^z(o_{b_k},o_{b'_{k}})\,,
\end{aligned}
\end{equation}
where $\widetilde{G}^z(o_{b_k},t_{b'_k})$ is the Green's function of the universal cover $\widetilde{\cQ}$ of the given $\cQ$ (so we fix some lift $\tilde{b}_1$ and consider $\tilde{p}_k,\tilde{p}_k'\in \mathrm{B}_{k-1}^{\tilde{b}_1}$ projecting to $p_k,p_k'$; see \S~\ref{subsec:Univ}). Note that if $\cQ$ is a tree then $f^z([\cQ,b_1],p_k,p_k' ) = \widetilde{f}^z([\cQ,b_1],p_k, p_k' )$. Now
\begin{align*}
&\frac{1}{N} \Big| \sum_{b_1\in B_N} \sum_{p_k, p_k'\in \mathrm{B}_{k-1}^{b_1}} \int_{\Gamma'_{\frac{\eta_0}{4}}} \tilde{\chi}(z)  J_{\eta_0}^z(b_1;b_k;b_k')  \\
&\qquad\qquad\qquad\qquad\qquad \times \left(f^z([\cQ_N,b_1],p_k, p_k' ) - \widetilde{f}^z([\cQ_N,b_1],p_k, p_k' )\right) \dd z \Big|\\
&\leq \bigg(\frac{1}{N} \sum_{b_1\in B_N} \sum_{p_k, p_k'\in \mathrm{B}_{k-1}^{b_1}} \int_{\Gamma'_{\frac{\eta_0}{4}}} \left| \tilde{\chi}(z) J_{\eta_0}^z(b_1;b_k;b_k')\right|^2 \dd z \bigg)^{1/2}\\
&\times  \bigg(\frac{1}{N} \sum_{b_1\in B_N} \sum_{p_k, p_k'\in \mathrm{B}_{k-1}^{b_1}} \int_{\Gamma'_{\frac{\eta_0}{4}}} \left| f^z([\cQ_N,b_1],p_k, p_k' ) - \widetilde{f}^z([\cQ_N,b_1],p_k, p_k' )\right|^2 \dd z \bigg)^{1/2}.
\end{align*}

The first factor is controlled by \eqref{eq:APrioriBoundOperators}. For the second factor, we note that if
\[
F^z_1([\cQ,b_1]):= \sum_{p_k,p_k'\in \mathrm{B}_{k-1}^{b_1}} \left| f^z([\cQ,b_1],p_k, p_k' ) - \widetilde{f}^z([\cQ,b_1],p_k, p_k' )\right|^2
\] 
then the same arguments leading to \eqref{eq:stuckAtHomeForTenDays} show that
\[
\limsup_{N\to +\infty} \frac{1}{N}\int_{\Gamma'_{\frac{\eta_0}{4}}} \sum_{b_1\in B_N} F^z_1([\cQ_N,b_1]) \dd z  \le C\sup_{\lambda\in I_1}\expect_{\prob}(F^{\lambda\pm \ii \frac{\eta_0}{4}})=0,
\]
since the last expectation runs over trees $[\cQ,(b_1,x)]$ and thus $F^z ([\cQ,(b_1,x)])= 0$.

Combining all this with \eqref{eq:FromGtoTilG}, we obtain that
\begin{multline}\label{eq:FromGtoTilG2}
\limsup\limits_{N\to +\infty} \frac{1}{N} \sum_{\lambda_j\in I} \|\alpha_{\gamma_j} K_B^{\gamma_j} f_j\|^2  \leq \limsup\limits_{N\to +\infty}  \frac{-1}{2N \pi \ii} \sum_{b_1\in B_N} \sum_{p_k,p_k'\in \mathrm{B}_{k-1}^{b_1}} \int_{\Gamma'_{\frac{\eta_0}{4}}} \tilde{\chi}(z)  J_{\eta_0}^z(b_1;b_k;b_k')\\
  \times  \widetilde{f}^z([\cQ_N,b_1],p_k, p_k' ) \dd z + O(\eta_0).
\end{multline}

Let us write $\mathcal{B}_N^{1,k}:= \{b\in B_N: \rho_{G_N}(o_b)\le 2k\}$, where $\rho_{G_N}(o_b)$ is the injectivity radius of $o_b$, and $\mathcal{B}_N^{2,k}:= B_N\setminus \mathcal{B}_N^{1,k}$. By \BST{}, we have $|\mathcal{B}_N^{1,k}|=o(N)$. So by Cauchy-Schwarz, if $Y_N(b_1) = \sum_{p_k,p_k'}\int(\ldots)$, we have
\begin{equation}\label{e:bs}
\bigg|\frac{1}{N}\sum_{b_1\in \cB_N^{1,k}} Y_N(b_1)\bigg| \le \bigg(\frac{|\cB_N^{1,k}|}{N}\bigg)^{1/2}\bigg(\frac{1}{N}\sum_{b_1\in B_N} |Y_N(b_1)|^2\bigg)^{1/2} \Lim_{N\to+\infty} 0
\end{equation}
using \eqref{eq:APrioriBoundOperators}. Therefore, we may replace $\sum_{b_1\in B_N}$ above by $\sum_{b_1\in \cB_N^{2,k}}$.

\smallskip

\textbf{Step 3: Off-diagonal terms vanish}

Here we mean to show that the terms with $p_k\not= p'_k$ in \eqref{eq:FromGtoTilG2} vanish. Suppose $p'_k=(b_2';b_k')\neq (b_2;b_k)=p_k$. If $b_1\in \cB_N^{2,k}$, this implies $t_{b_k}\neq t_{b_k'}$. Using \eqref{e:greenmul} on $(v_0,\dots,v_s)$ where $v_0=t_{b_k}$, $v_1=o_{b_k}$, $v_{s-1}=o_{b_k'}$, $v_s=t_{b_k'}$ we obtain
\begin{equation}\label{eq:IAmListeningToJimmyHendrix}
\tilg_N^z(t_{b_k},t_{b'_{k}}) - \zeta^{z+\ii\eta_0}(b_k)\tilg_N^z(o_{b_k},t_{b_k'}) = \zeta^z(b_k')\tilg_N^z(t_{b_k},o_{b_k'}) - \zeta^{z+\ii\eta_0}(b_k)\zeta^z(b_k')\tilg_N^z(o_{b_k},o_{b_k'}).
\end{equation}

Similarly, using \eqref{e:sym} then \eqref{e:greenmul} on $(v_s,\dots,v_0)$, then \eqref{e:sym}, we have
\begin{equation}\label{eq:IAmListeningToJimmyHendrix2}
\tilg_N^z(t_{b_k},t_{b'_{k}}) - \zeta^{z-\ii\eta_0}(b_k')\tilg_N^z(t_{b_k},o_{b_k'}) = \zeta^z(b_k)\tilg_N^z(o_{b_k},t_{b_k'}) - \zeta^{z-\ii\eta_0}(b_k')\zeta^z(b_k)\tilg^z_N(o_{b_k},o_{b_k'}).
\end{equation}

Let us first consider the part of $\Gamma'_{\frac{\eta_0}{4}}$ where $\Im z<0$. Recalling \eqref{e:4greens}, if we use \eqref{eq:IAmListeningToJimmyHendrix} along with the Cauchy-Schwarz inequality, we obtain
\begin{multline}\label{e:doublim}
\lim_{\eta_0\downarrow 0}  \limsup_{N\rightarrow \infty} \Big|\frac{1}{N} \sum_{b_1\in \mathcal{B}_N^{2,k}} \sum_{(b_2;b_k)}\sum_{(b'_2;b'_k)\neq (b_2;b_k)} \int_{I_1} \tilde{\chi}(\lambda-\ii \eta_0/4)  J_{\eta_0}^{\lambda-\frac{\ii\eta_0}{4}}(b_1;b_k;b_k')\\
  \times  \widetilde{f}^{\lambda-\frac{\ii\eta_0}{4}}([\cQ_N,b_1],p_k, p_k' ) \dd\lambda\Big{|}\\
\leq C  \lim_{\eta_0\downarrow 0}  \limsup_{N\rightarrow \infty} \sup_{\lambda\in I_1} \Big[\frac{1}{N}\sum_{b_1\in B_N} \sum_{(b_2;b_k), (b'_2;b'_k)\in \mathrm{B}_{k-1}^{b_1}} \Big{|} J_{\eta_0}^{\lambda-\frac{\ii\eta_0}{4}}(b_1;b_k;b_k') \\
\times  \Big( |\tilde{g}_N^{\lambda-\frac{\ii \eta_0}{4}}(t_{b_k},o_{b'_{k}})| +  |\zeta^{\lambda+\frac{3\ii\eta_0}{4}}(b_k) \tilde{g}_N^{\lambda-\frac{\ii \eta_0}{4}}(o_{b_k},o_{b'_{k}})|\Big) \Big{|}^2 \Big]^{1/2}\\
\times \lim_{\eta_0\downarrow 0}  \int_{I_1}  \limsup_{N\rightarrow \infty}  \bigg\{\frac{1}{N} \sum_{b_1\in B_N} \sum_{(b_2;b_k), (b'_2;b'_k)\in \mathrm{B}_{k-1}^{b_1}} \left| \zeta^{\lambda-\frac{\ii \eta_0}{4}}(b_k')-\zeta^{\lambda-\frac{5\ii\eta_0}{4}}(b_k')\right|^2 \bigg\}^{1/2} \mathrm{d}\lambda.
\end{multline}

Thanks to \eqref{eq:APrioriBoundOperators}, the first factor is finite. Concerning the second factor, thanks to Corollary~\ref{cor:ConfinedTilMay11}, the integrand is bounded independently of $\eta_0$, and, by Proposition \ref{prop:GreenisCool3}, it goes to zero almost everywhere as $\eta_0\downarrow 0$. Therefore, by the dominated convergence theorem, the double limit \eqref{e:doublim} is zero.

For the part of $\Gamma_{\eta_0/4}'$ with $\Im z>0$, we argue similarly, using \eqref{eq:IAmListeningToJimmyHendrix2} instead of \eqref{eq:IAmListeningToJimmyHendrix}.

From what precedes we thus obtain
\begin{align*}
\lim_{\eta_0\downarrow 0} &\limsup_{N\rightarrow \infty} \frac{1}{N}\sum_{\lambda_j\in I} \chi(\lambda_j)\|\alpha_{\gamma_j} K_B^{\gamma_j} f_j\|^2   \\
&\leq \lim_{\eta_0\downarrow 0}  \limsup_{N\rightarrow \infty}  \frac{-1}{2\pi N\ii} \sum_{b_1\in \mathcal{B}^{2,k}_N} \sum_{(b_2;b_k)\in \mathrm{B}_{k-1}^{b_1}} \int_{\Gamma'_{\frac{\eta_0}{4}}} \tilde{\chi}(z)  J_{\eta_0}^z(b_1;b_k;b_k) \widetilde{f}^{z}([\cQ_N,b_1],p_k, p_k ) \dd z\\
&=:(\star)
\end{align*}

As in \eqref{e:bs}, we may replace $\sum_{b_1\in\cB_N^{2,k}}$ by $\sum_{b_1\in B}$.

\smallskip

\textbf{Step 4: Adjusting the energies.}
To finish the proof of Proposition \ref{thm:upvar}, there remains to put $(\star)$ into final form by setting all the spectral parameters equal to $\lambda+\ii\eta_0$ with $\lambda\in I_1$.

Thanks to \eqref{eq:APrioriBoundOperators3}, we have for almost all $\lambda\in I_1$,
\begin{multline*}
\lim_{\eta_0\downarrow 0} \limsup_{N\to +\infty} \bigg(\frac{1}{N} \sum_{(b_1;b_k)\in \mathrm{B}_k}\Big| \tilde{\chi}(\lambda\pm \frac{\ii\eta_0}{4}) J_{\eta_0}^{\lambda\pm\frac{\ii\eta_0}{4}}(b_1;b_k;b_k)  \\
-  \tilde{\chi}(\lambda)   \frac{\big{|} K^{\lambda+\ii\eta_0} (b_1;b_k)\big{|}^2}{S_{\lambda }^2(L_{b_k})} |\alpha_{\lambda+\ii\eta_0}(b_1)|^2\Big|^2\bigg)^{1/2} =0.
\end{multline*}

Using \eqref{eq:APrioriBoundOperators} and Remark \ref{rem:HolStable}, we may therefore apply the dominated convergence theorem to deduce that
\begin{multline*}
(\star)
=  \lim_{\eta_0\downarrow 0} \limsup_{N\rightarrow \infty}  \frac{1}{\pi N}\sum_{(b_1;b_k)\in \mathrm{B}_k}\int_{I_1}
\Big[ \chi(\lambda)   |K^{\lambda+\ii\eta_0} (b_1;b_k)|^2 |\alpha_{\lambda+\ii\eta_0}(b_1)|^2 \\
\times \Im \Big\{ \frac{ \widetilde{f}^{\lambda+\frac{\ii\eta_0}{4}}([\cQ_N,b_1],p_k, p_k )}{S_{\lambda}^2(L_{b_k})} \Big\}\Big] \dd \lambda,
\end{multline*}
where we used that $\widetilde{f}^{\lambda-\frac{\ii\eta_0}{4}}(p_k,p_k)=\overline{\widetilde{f}^{\lambda+\frac{\ii\eta_0}{4}}}(p_k,p_k)$, as readily checked.

Recalling \eqref{e:4greens}, we note that the energies in $\frac{\widetilde{f}^{\lambda+\frac{\ii\eta_0}{4}}}{S_{\lambda}^2}$ are not homogeneous. We thus use Proposition~\ref{prop:GreenisCool3} and dominated convergence to replace $\tilg^{\lambda+\frac{\ii\eta_0}{4}}\mapsto \tilg^{\lambda+\ii \eta_0}$, $\frac{\zeta^{\lambda+\frac{5\ii\eta_0}{4}}}{S_{\lambda}}\mapsto \frac{\zeta^{\lambda+\ii\eta_0}}{S_{\lambda+\ii\eta_0}}$, $\frac{\zeta^{\lambda-\frac{3\ii\eta_0}{4}}}{S_{\lambda}}\mapsto \frac{\zeta^{\lambda-\ii\eta_0}}{S_{\lambda-\ii\eta_0}}$. Let $\gamma:=\lambda+\ii\eta_0$. Using \eqref{e:sym}, we get in this fashion 
\begin{align*}
(\star)&=\lim_{\eta_0\downarrow 0} \limsup_{N\rightarrow \infty}  \frac{1}{\pi N} \sum_{(b_1;b_k)\in \mathrm{B}_k} \int_{I_1} \chi(\lambda)  \big{|}K^{\gamma} (b_1;b_k)\big{|}^2|\alpha_{\gamma}(b_1)|^2 \\
 &\qquad\times \Im \Big[ \frac{\tilde{g}_N^{\gamma}(t_{b_k},t_{b_{k}})}{S^2_{\gamma}(L_{b_k})} - 2 \Re\Big{(}\frac{\zeta^{\gamma}(b_k)}{S_{\gamma}(L_{b_k})}\Big{)}  \frac{\tilde{g}_N^{\gamma}(o_{b_k},t_{b_{k}})}{S_{\gamma}(L_{b_k})} + \Big|\frac{\zeta^{\gamma}(b_k)}{S_{\gamma}(L_{b_k})}\Big|^2  \tilde{g}_N^{\gamma}(o_{b_k},o_{b_{k}})\Big] \dd \lambda  \\
&\le \lim_{\eta_0\downarrow 0} \limsup\limits_{N\rightarrow \infty}  \frac{1}{ N} \sum_{(b_1;b_k)\in \mathrm{B}_k} \int_{I_1}   \big|K^{\gamma} (b_1;b_k)\big|^2 |\alpha_{\gamma}(b_1)|^2  \Im R_{\gamma}^+ (o_{b_k}) \dd\lambda\\
 &=   \lim_{\eta_0\downarrow 0} \limsup_{N\rightarrow \infty} \int_{I_1}  \|K^{\lambda+\ii\eta_0}\|^2_{\lambda+\ii\eta_0} \dd \lambda,
\end{align*}
where the last inequality is by \eqref{e:psiiden1}. Recalling \eqref{e:mainterm}, this completes the proof.
\end{proof}

\section{Estimating the Hilbert-Schmidt norm}\label{Sec:HSestimate}
We are now in Step (3) of the proof as described on page~\pageref{page:step3} of the Introduction. We combine the invariance of the quantum variance (Proposition \ref{p:VarInv}), Lemma \ref{lem:SmallO}, and the domination of the quantum variance by a weighted Hilbert-Schmidt norm (Proposition \ref{thm:upvar}). This yields, for any $n\in \N$, any $k\in \N$ and any family $K^\gamma$ of operators\footnote{Recall that all the operators and the quantities we manipulate here depend on $N$, although this dependence is not explicit in our notations.} in $\mathscr{H}_k$ satisfying \Hol{}:

\begin{equation}\label{eq:SummaryStep6}
\lim_{\eta_0 \downarrow 0}\limsup_{N\to\infty} \mathrm{Var}^I_{\mathrm{nb},\eta_0}(K^{\gamma})^2 \lesssim \lim_{\eta_0\downarrow 0}\limsup_{N\to\infty} \int_{I_1} \Big{\|}\frac{1}{n}\sum_{r=1}^n\cR_{n,r}^{\lambda+\ii\eta_0}K^{\lambda+\ii\eta_0}\Big{\|}_{\lambda+\ii\eta_0}^2\,\dd\lambda.
\end{equation}

We now estimate $\| \frac{1}{n}\sum_{r=1}^n \cR_{n,r}^{\gamma} K^{\gamma}\|_{\gamma}^2$ by developing the scalar product. Let $r\ge r'$, so that $n-r\le n-r'$. By~\eqref{e:rnr} and \eqref{e:hnormgam}, we have
\begin{multline*}
\left( \cR_{n,r}^{\gamma}K^{\gamma},\cR_{n,r'}^{\gamma}K^{\gamma}\right)_{\gamma} = \frac{1}{N} \sum_{(b_1;b_{n+k})\in B_{n+k}} \Im R_{\gamma}^-(t_{b_1}) |\zeta^{\gamma}(\hat{b}_2)\cdots\zeta^{\gamma}(\hat{b}_{n-r+1})|^2 \\ 
\cdot  \overline{\zeta^{\gamma}(\hat{b}_{n-r+2})\cdots\zeta^{\gamma}(\hat{b}_{n-r'+1})}K^{\gamma}(b_{n-r'+1} ;  b_{n-r'+k})\overline{K^{\gamma}(b_{n-r+1} ; b_{n-r+k})} \\
\cdot \overline{\zeta^{\gamma}(b_{n-r+k})\cdots\zeta^{\gamma}(b_{n-r'+k-1})}|\zeta^{\gamma}(b_{n-r'+k})\cdots\zeta^{\gamma}(b_{n+k-1})|^2 \Im R_{\gamma}^+(o_{b_{n+k}}) \,.
\end{multline*}
To simplify this expression, we use \eqref{e:cur1}, \eqref{e:cur2} repeatedly to obtain
\begin{multline}\label{e:aneq}
\left( \cR_{n,r}^{\gamma}K^{\gamma},R_{n,r'}^{\gamma}K^{\gamma}\right)_{\gamma} = \frac{1}{N}\sum_{(b_{n-r+1};b_{n-r'+k})\in B_{k+r-r'}} \Im R_{\gamma}^-(t_{b_{n-r+1}}) \overline{K^{\gamma}(b_{n-r+1} ; b_{n-r+k})}\\
\cdot \overline{\zeta^{\gamma}(b_{n-r+k})\cdots \zeta^{\gamma}(b_{n-r'+k-1})}
\cdot \overline{\zeta^{\gamma}(\hat{b}_{n-r+2})\cdots\zeta^{\gamma}(\hat{b}_{n-r'+1})} \\
\cdot K^{\gamma}(b_{n-r'+1} ; b_{n-r'+k}) \Im R_{\gamma}^+(o_{b_{n-r'+k}}) - \mathbf{E}_{n,r,r'}(\eta_0,K^{\gamma}) \,,
\end{multline}
where $\mathbf{E}_{n,r,r'}$ is an error term defined by
\begin{align*}
&\mathbf{E}_{n,r,r'}(\eta_0,K^{\gamma})  = \frac{\eta_0}{N}\sum_{s=2}^{n-r+1} \sum_{(b_s;b_{n+k})}\int_{0}^{L_{b_s}} |\xi_-^{\gamma}(x_{b_s})|^2\,\dd x_{b_s} \cdot |\zeta^{\gamma}(\hat{b}_{s+1})\cdots\zeta^{\gamma}(\hat{b}_{n-r+1})|^2 \\
& \qquad\cdot \overline{\zeta^{\gamma}(\hat{b}_{n-r+2})\cdots\zeta^{\gamma}(\hat{b}_{n-r'+1})}K^{\gamma}(b_{n-r'+1} ;b_{n-r'+k})\overline{K^{\gamma}(b_{n-r+1}; b_{n-r+k})}  \\
&\qquad \cdot \overline{\zeta^{\gamma}(b_{n-r+k})\cdots \zeta^{\gamma}(b_{n-r'+k-1})} \cdot |\zeta^{\gamma}(b_{n-r'+k})\cdots \zeta^{\gamma}(b_{n+k-1})|^2 \Im R_{\gamma}^+(o_{b_{n+k}}) \\
& \quad + \frac{\eta_0}{N}\sum_{s'=n-r'+k}^{n+k-1} \sum_{(b_{n-r+1};b_{s'})} \Im R_{\gamma}^-(t_{b_{n-r+1}}) \overline{\zeta^{\gamma}(\hat{b}_{n-r+2})\cdots\zeta^{\gamma}(\hat{b}_{n-r'+1})} \\
& \qquad \cdot K^{\gamma}(b_{n-r'+1} ; b_{n-r'+k})\overline{K^{\gamma}(b_{n-r+1} ; b_{n-r+k})}\overline{\zeta^{\gamma}(b_{n-r+k})\cdots\zeta^{\gamma}(b_{n-r'+k-1})}   \\
& \qquad   \cdot |\zeta^{\gamma}(b_{n-r'+k})\cdots\zeta^{\gamma}(b_{s'-1})|^2\int_{0}^{L_{b_{s'}}} |\xi_+^{\gamma}(x_{b_{s'}})|^2\,\dd x_{b_{s'}} \,,
\end{align*}
with the $\xi^\gamma_\pm$ as in \eqref{eq:DefXi}. 

Thanks to Remark \ref{rem:HolStable} and \eqref{eq:APrioriBoundOperators}, we have that for any $n\in \N$, any $r,r'\leq n$
\begin{equation}\label{eq:SmallO2}
\mathbf{E}_{n,r,r'}(\eta_0,K^{\gamma}) = O_{N\to +\infty, \gamma}^{(n)}(\eta_0).
\end{equation}

Introduce the operator acting on $\mathscr{H}_k$,
\[
(\mathbf{A}_{\gamma} K^\gamma)(b_1 ; b_k) = \sum_{b_{k+1}\in \cN_{b_k}^+} \frac{\overline{\zeta^{\gamma}(\hat{b}_2)\zeta^{\gamma}(b_k)}}{\Im R_{\gamma}^+(o_{b_k})} \Im R_{\gamma}^+(o_{b_{k+1}}) K^\gamma(b_2 ; b_{k+1}) \,.
\]
Calculating $(\mathbf{A}_{\gamma}^{r-r'}K^{\gamma})(b_{n-r+1}; b_{n-r+k})$, we find that \eqref{e:aneq} takes the form
\[
\left( \cR_{n,r}^{\gamma}K^{\gamma},\cR_{n,r'}^{\gamma}K^{\gamma}\right)_{\gamma} = \left( K^{\gamma}, \mathbf{A}_{\gamma}^{r-r'} K^{\gamma}\right)_{\gamma} - \mathbf{E}_{n,r,r'}(\eta_0,K^{\gamma}) \,.
\]

We finally introduce the operator
\[
(\cS_{u^{\gamma}}K)(b_1 ; b_k) = \frac{|\zeta^{\gamma}(b_k)|^2}{\Im R_{\gamma}^+(o_{b_k})} u^{\gamma}(b_k) \sum_{b_{k+1}\in \cN_{b_k}^+} \Im R_{\gamma}^+(o_{b_{k+1}}) K(b_2 ; b_{k+1}) \,,
\]
where $u^{\gamma}(b) = \overline{\zeta^{\gamma}(b)}\zeta^{\gamma}(b)^{-1}$ has modulus one. 
When we want to remember that $\cS_{u^{\gamma}}$ acts on $\mathscr{H}_k \simeq \C^{B_k}$, we will denote it by $\cS_{u^{\gamma}}^{(k)}$.

The advantage of this operator over $\mathbf{A}_{\gamma}$ is that, if we forget the multiplication by $u^{\gamma}$, then it is sub-stochastic: $\cS_1 \mathbf{1} \le \mathbf{1}$, using \eqref{e:ASW}. To link it to $\mathbf{A}_{\gamma}$, introduce
\[
(Z_{\gamma} K)(b_1 ; b_k) = \frac{\overline{\zeta^{\gamma}(\hat{b}_1)\cdots\zeta^{\gamma}(\hat{b}_k)}}{\overline{\zeta^{\gamma}(\hat{b}_1)\cdot \zeta^{\gamma}(\hat{b}_k)}} \cdot \overline{\tilg^{\gamma}(o_{b_k},o_{b_k})} K^{\gamma}(b_1 ; b_k) \,.
\]
In particular, if $k=1$, $(Z_{\gamma} K)(b_1) = \frac{\overline{\tilg^{\gamma}(o_{b_1},o_{b_1})}}{\overline{\zeta^{\gamma}(\hat{b}_1)}}K(b_1)$ and if $k=2$, $(Z_{\gamma}K)(b_1,b_2) = \overline{\tilg^{\gamma}(o_{b_2},o_{b_2})} K(b_1,b_2)$.

We observe that the multiplication operator $Z_{\gamma}$ conjugates $\cS_{u^{\gamma}}$ and $\mathbf{A}_{\gamma}$: using \eqref{e:zetainv},
\[
Z_{\gamma} \cS_{u^{\gamma}} Z_{\gamma}^{-1} = \mathbf{A}_{\gamma} \,.
\]
Let us introduce the following measure on $\mathrm{B}_k$
\begin{equation} \label{eq:defMeasureMu}
\mu_k^{\gamma}(b_1 ; b_k) = \frac{\Im R_{\gamma}^-(t_{b_1})}{|\zeta^{\gamma}(\hat{b}_1)|^2} \cdot |\zeta^{\gamma}(\hat{b}_1)\cdots\zeta^{\gamma}(\hat{b}_k)|^2 \cdot |\tilg^{\gamma}(o_{b_k},o_{b_k})|^2 \cdot \frac{\Im R_{\gamma}^+(o_{b_k})}{|\zeta^{\gamma}(\hat{b}_k)|^2} \,.
\end{equation}
Then, if $K, K'\in \mathscr{H}_k$, we have $\langle Z_{\gamma} K ,Z_{\gamma} K'\rangle_{\gamma} = \frac{1}{N} \langle K,K'\rangle_{\ell^2(\mathrm{B}_k,\mu_k^{\gamma})}$. Hence,
\[
\langle \cR_{n,r}^{\gamma}K^{\gamma},\cR_{n,r'}^{\gamma}K^{\gamma}\rangle_{\gamma} = \frac{1}{N}\langle Z_{\gamma}^{-1}K^{\gamma},\cS_{u^{\gamma}}^{r-r'}Z_{\gamma}^{-1}K^{\gamma}\rangle_{\ell^2(\mathrm{B}_k,\mu_k^{\gamma})} - \mathbf{E}_{n,r,r'}(\eta_0,K^{\gamma}) \,.
\]

From now on, we will write
\begin{equation*}
K_\gamma':= Z_\gamma^{-1} K^\gamma.
\end{equation*}

Note that by Remark \ref{rem:HolStable}, if $K^\gamma$ satisfies \Hol{}, then so does $K'_\gamma$.

Note we also have $\mu_k^{\gamma}(b_1 ;b_k) = \frac{\Im R_{\gamma}^-(t_{b_1})}{|\zeta^{\gamma}(b_1)|^2} |\tilg^{\gamma}(t_{b_1},t_{b_1})|^2|\zeta^{\gamma}(b_1)\cdots\zeta^{\gamma}(b_k)|^2\frac{\Im R_{\gamma}^+(o_{b_k})}{|\zeta^{\gamma}(b_k)|^2}$ using \eqref{e:greenmul} or \eqref{e:zetainv}. In particular, we see using \eqref{e:ASW} that
\begin{equation}\label{eq:AlmostStoch1}
\sum_{b_k\in\cN_{b_{k-1}}^+} \mu_k^{\gamma}(b_1;b_k) \le \mu_{k-1}^{\gamma}(b_1; b_{k-1}) \quad \text{and}\quad \sum_{b_1\in \cN_{b_2}^-} \mu_k^{\gamma}(b_1 ;  b_k) \le \mu_{k-1}^{\gamma}(b_2 ; b_k) \,.
\end{equation}
In particular,
\begin{equation}\label{e:mukdecreases}
\mu_k^{\gamma}(\mathrm{B}_k)=\sum_{(b_1 ; b_k)\in\mathrm{B}_k}\mu_k^{\gamma}(b_1 ; b_k) \le \sum_{b_1\in\mathrm{B}_1}\mu_1^{\gamma}(b_1)=\mu_1^{\gamma}(\mathrm{B}_1).
\end{equation}
Note that this inequality becomes an equality for $\gamma\in \R$ (whenever
both sides are defined).

\begin{rem}\label{rem:MeasureNuIsLessScaryThanCoronavirus}
Clearly $\mu_k^\gamma(b_1;b_k)^{\pm 1}$ belongs to $\mathscr{L}_k^\gamma$. It follows from Corollary~\ref{cor:ConfinedTilMay11} that for any $s\in\R$, we have 
\begin{equation}\label{eq:Itworksforpowersaswell}
\frac{1}{N}\sum_{b\in B} \mu_1^\gamma(b)^s = O_{N\to +\infty, \gamma}^{(s)}(1),
\end{equation}
\end{rem}

Let us come back to the problem of estimating $\Big{\|} \frac{1}{n}\sum_{r=1}^n \cR_{n,r}^{\gamma} K^{\gamma} \Big{\|}_{\gamma}^2$. Writing 
\[
\mathbf{E}'_n(\eta_0, K^\gamma) := \frac{1}{n^2}\sum_{r,r'=1}^n \mathbf{E}_{n,r,r'}(\eta_0,K^\gamma),
\]
we have
\begin{align}\label{eq:DevelopSquare}
\Big{\|} \frac{1}{n}\sum_{r=1}^n \cR_{n,r}^{\gamma}  K^{\gamma} \Big{\|}_{\gamma}^2 &= \frac{1}{n^2}\sum_{r=1}^n \| \mathcal{R}_{n,r}^\gamma K^\gamma\|_\gamma^2 + \frac{2}{n^2} \sum_{r=1}^{n-1}\sum_{r'=r+1}^n \Re \left( \mathcal{R}_{n,r}^\gamma K^\gamma, \mathcal{R}_{n,r'}^\gamma K^\gamma \right)_\gamma\nonumber\\
&= \frac{1}{n} \big{\|}K^{\gamma} \big{\|}^2_{\gamma} + \frac{2}{N n^2} \sum_{r= 1}^{n-1}\sum_{r'= r+1}^n   \Re \langle K'_{\gamma},\cS_{u^{\gamma}}^{r'-r}K'_{\gamma}\rangle_{\ell^2(\mathrm{B}_k,\mu_k^{\gamma})} - \Re \mathbf{E}'_{n}(\eta_0,K^{\gamma})\nonumber\\
&= \frac{1}{n} \big{\|}K^{\gamma} \big{\|}^2_{\gamma} + \frac{2}{N n^2} \sum_{r= 1}^{n-1}\sum_{j=1}^{k-1}   \Re \langle K'_{\gamma},\cS_{u^{\gamma}}^{j}K'_{\gamma}\rangle_{\ell^2(\mathrm{B}_k,\mu_k^{\gamma})}  \nonumber\\
&\quad+ \frac{2}{N n^2} \sum_{r= 1}^{n-k}\sum_{j=k}^{n-r}   \Re \langle K'_{\gamma},\cS_{u^{\gamma}}^{j}K'_{\gamma}\rangle_{\ell^2(\mathrm{B}_k,\mu_k^{\gamma})} -\Re \mathbf{E}'_{n}(\eta_0,K^{\gamma}) \nonumber\\
& \leq \frac{2k-1}{n} \|K^\gamma\|_\gamma^2 + \frac{2}{N n^2} \Big|\sum_{r= 1}^{n-k}\sum_{j=k}^{n-r}   \langle K'_{\gamma},\cS_{u^{\gamma}}^{j}K'_{\gamma}\rangle_{\ell^2(\mathrm{B}_k,\mu_k^{\gamma})}\Big| + |\mathbf{E}'_{n}(\eta_0,K^{\gamma})|.
\end{align}
where, for the last inequality, we used that $\cS_1^\ast$ is also sub-stochastic, cf. \eqref{eq:TrivialContractionBound}.

The first term in \eqref{eq:DevelopSquare} will go to zero as $N\to \infty$ followed by $n\to \infty$ thanks to \eqref{eq:APrioriBoundOperators}.
Therefore, combining \eqref{eq:DevelopSquare} with \eqref{eq:SummaryStep6} and \eqref{eq:SmallO2}, we obtain that
\begin{multline*}
\lim_{\eta_0 \downarrow 0}\limsup_{N\to\infty} \mathrm{Var}^I_{\mathrm{nb},\eta_0}(K^{\gamma})^2 \le \lim_{n\to \infty} \lim_{\eta_0\downarrow 0}\limsup_{N\to\infty} \sup_{\lambda\in I_1}  \frac{C_I}{N n^2} \Big|\sum_{r= 1}^{n-k}\sum_{j=k}^{n-r}   \langle K'_{\gamma},\cS_{u^{\gamma}}^{j}K'_{\gamma}\rangle_{\ell^2(\mathrm{B}_k,\mu_k^{\gamma})}\Big|\\
\le \lim_{n\to \infty} \lim_{\eta_0\downarrow 0}\limsup_{N\to\infty} \sup_{\lambda\in I_1}  \frac{C_I}{N n^2} \Big|\sum_{j=k}^{n} (n-j)  \langle K'_{\gamma},\cS_{u^{\gamma}}^{j}K'_{\gamma}\rangle_{\ell^2(\mathrm{B}_k,\mu_k^{\gamma})}\Big|.
\end{multline*}

We must therefore understand the contraction properties of $\cS_{u^{\gamma}}^{j}$. 

In this expression, $\cS_{u^{\gamma}}^{j}$ acts on $\mathscr{H}_k$; it would more adequately be written as $\big(\cS_{u^{\gamma}}^{(k)}\big)^{j}$. We will now explain why it suffices to understand the contraction properties of $\big{(}\cS_{u^{\gamma}}^{(1)}\big{)}^{j}$.

\begin{lem}\label{lem:ReducK2}
There exists operators $\mathrm{T}, \mathcal{J}^* :  \ell^2 (\mu_k^\gamma)\longrightarrow \ell^2 (\mu_1^\gamma)$ with operator norm smaller than one, such that for $j\ge k$ we have
\begin{align}\label{e:AreCharentaisesReallySoConvenient?}
\langle K'_{\gamma}, \mathcal{S}^j_{u^\gamma} K'_{\gamma} \rangle_{\ell^2(\mu_k^\gamma)} &= \left\langle K'_{\gamma}, \big{(} \mathcal{S}_{u^\gamma}^{(k)}\big{)}^{j-k+1} \mathcal{J} \mathrm{T} K'_{\gamma}  \right\rangle_{\ell^2(\mu_k^\gamma)} \\
&= \left\langle K'_{\gamma},  \mathcal{J} \big{(} \mathcal{S}^{(1)}_{u^\gamma}\big{)}^{j-k+1} \mathrm{T} K'_{\gamma}  \right\rangle_{\ell^2(\mu_k^\gamma)}= \left\langle \mathcal{J}^* K'_{\gamma}, \big{(} \mathcal{S}^{(1)}_{u^\gamma}\big{)}^{j-k+1} \mathrm{T} K'_{\gamma}  \right\rangle_{\ell^2(\mu_1^\gamma)}.
\end{align}

As a consequence,
\begin{equation*}
\begin{aligned}
&\lim_{\eta_0 \downarrow 0}\limsup_{N\to\infty} \mathrm{Var}^I_{\mathrm{nb},\eta_0}(K^{\gamma})^2 \\
&\le \lim_{n\to \infty} \lim_{\eta_0\downarrow 0}\limsup_{N\to\infty} \sup\limits_{\lambda\in I_1}  \frac{C_I}{N n^2} \Big|\sum_{j=1}^{n} (n-j)   \langle \mathcal{J}^* K'_{\gamma},\big{(}\cS_{u^{\gamma}}^{(1)}\big{)}^j \mathrm{T} K'_{\gamma}\rangle_{\ell^2(\mathrm{B}_1,\mu_1^{\gamma})}\Big|.
\end{aligned}
\end{equation*}
Furthermore, if $K_\gamma$ satisfies \emph{\Hol{}}, then $\mathrm{T}K_\gamma$ and $\mathcal{J}^* K_\gamma$ also satisfy \emph{\Hol{}}.
\end{lem}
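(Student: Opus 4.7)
The guiding observation is that $\mathcal{S}_{u^\gamma}^{(k)}$ shifts the window of dependence forward. Writing
\[
\bigl(\mathcal{S}_{u^\gamma}^{(k)}F\bigr)(b_1;\ldots;b_k) \;=\; \sum_{b_{k+1}\in\cN_{b_k}^+} q(b_{k+1},b_k)\, F(b_2;\ldots;b_{k+1}),\qquad q(b',b):= \frac{|\zeta^\gamma(b)|^2\, u^\gamma(b)\,\Im R_\gamma^+(o_{b'})}{\Im R_\gamma^+(o_b)},
\]
a direct induction shows that
\[
\bigl(\mathcal{S}_{u^\gamma}^{(k)}\bigr)^{k-1} K'_\gamma(b_1;\ldots;b_k) \;=\; \sum_{(c_2;\ldots;c_k)\in\mathrm{B}_{k-1}^{b_k}} \prod_{i=1}^{k-1} q(c_{i+1},c_i)\, K'_\gamma(b_k;c_2;\ldots;c_k),
\]
which depends on $(b_1;\ldots;b_k)$ only through $b_k$. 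I will therefore set $\mathcal{J}\colon \mathscr{H}_1\to\mathscr{H}_k$, $(\mathcal{J}g)(b_1;\ldots;b_k):=g(b_k)$, and define $\mathrm{T}K'_\gamma(b_k)$ to be the common value above, so that $(\mathcal{S}_{u^\gamma}^{(k)})^{k-1}=\mathcal{J}\mathrm{T}$ by construction. A one-line inspection of the defining formula yields the intertwining $\mathcal{S}_{u^\gamma}^{(k)}\mathcal{J}=\mathcal{J}\mathcal{S}_{u^\gamma}^{(1)}$.

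The three equalities of \eqref{e:AreCharentaisesReallySoConvenient?} are then algebraic: for $j\geq k$, splitting $(\mathcal{S}_{u^\gamma}^{(k)})^{j}=(\mathcal{S}_{u^\gamma}^{(k)})^{j-k+1}\mathcal{J}\mathrm{T}$ gives the first, iterating the intertwining inside $(\mathcal{S}_{u^\gamma}^{(k)})^{j-k+1}\mathcal{J}=\mathcal{J}(\mathcal{S}_{u^\gamma}^{(1)})^{j-k+1}$ gives the second, and the third follows from defining $\mathcal{J}^*$ as the adjoint of $\mathcal{J}$ between the weighted spaces $\ell^2(\mu_k^\gamma)$ and $\ell^2(\mu_1^\gamma)$, namely
\[
(\mathcal{J}^*K)(b_k) \;=\; \frac{1}{\mu_1^\gamma(b_k)}\sum_{(b_1;\ldots;b_{k-1})\,:\,(b_1;\ldots;b_k)\in\mathrm{B}_k}\mu_k^\gamma(b_1;\ldots;b_k)\, K(b_1;\ldots;b_k).
\]
The bound $\|\mathcal{J}^*\|\leq 1$ follows at once from Cauchy--Schwarz combined with $\sum_{(b_1;\ldots;b_{k-1})}\mu_k^\gamma(b_1;\ldots;b_k)\leq \mu_1^\gamma(b_k)$, which is the iterated form of \eqref{eq:AlmostStoch1}.

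The main obstacle is $\|\mathrm{T}\|\leq 1$. Since $|u^\gamma|=1$ one has $|q(b',b)|=p(b'|b):=|\zeta^\gamma(b)|^2\Im R_\gamma^+(o_{b'})/\Im R_\gamma^+(o_b)$, and \eqref{e:ASW} is precisely the sub-stochasticity $\sum_{b'\in\cN_b^+}p(b'|b)\leq 1$; iterating gives $\sum_{(c_2;\ldots;c_k)\in\mathrm{B}_{k-1}^{c_1}}\prod_{i=1}^{k-1}p(c_{i+1}|c_i)\leq 1$. A Cauchy--Schwarz applied to the defining formula of $\mathrm{T}$ with the sub-probability weight $\prod p$ then yields
\[
|\mathrm{T}K'_\gamma(c_1)|^2 \;\leq\; \sum_{(c_2;\ldots;c_k)\in\mathrm{B}_{k-1}^{c_1}}\Bigl(\prod_{i=1}^{k-1} p(c_{i+1}|c_i)\Bigr)\,|K'_\gamma(c_1;c_2;\ldots;c_k)|^2.
\]
The rest hinges on the exact telescoping identity
\[
\mu_1^\gamma(c_1)\prod_{i=1}^{k-1}p(c_{i+1}|c_i) \;=\; \mu_k^\gamma(c_1;c_2;\ldots;c_k),
\]
which I will verify by plugging the equivalent form $\mu_k^\gamma(c_1;c_k)=\Im R_\gamma^-(t_{c_1})|\tilg^\gamma(t_{c_1},t_{c_1})|^2\Im R_\gamma^+(o_{c_k})\prod_{i=2}^{k-1}|\zeta^\gamma(c_i)|^2$ of the definition \eqref{eq:defMeasureMu} into both sides: the $\Im R_\gamma^+$ factors in $\prod p$ telescope to $\Im R_\gamma^+(o_{c_k})/\Im R_\gamma^+(o_{c_1})$, and the $|\zeta^\gamma|^2$ factors align exactly with the remaining $|\zeta^\gamma|^2$ in $\mu_k^\gamma$. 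Summing the previous inequality against $\mu_1^\gamma(c_1)$ then turns it into $\|\mathrm{T}K'_\gamma\|_{\mu_1^\gamma}^2\leq\|K'_\gamma\|_{\mu_k^\gamma}^2$. I emphasize that the naive attempt to take $\mathrm{T}:=\mathcal{J}^*(\mathcal{S}_{u^\gamma}^{(k)})^{k-1}$ fails precisely because $\mathcal{J}\mathcal{J}^*\neq\mathrm{Id}$ off the real axis; what rescues us is exactly this non-approximate identity $\mu_1^\gamma\prod p=\mu_k^\gamma$, read off from the specific structure of $\mu_k^\gamma$.

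For the hypothesis \emph{\Hol{}}, both $\mathrm{T}K'_\gamma$ and $\mathcal{J}^*K'_\gamma$ are finite sums and products of $K'_\gamma$ with $\zeta^\gamma$, $R_\gamma^\pm$, $u^\gamma$ and $\tilg^\gamma$, each of which satisfies \Hol{} by Remark~\ref{rem:HolStable}; stability of \Hol{} under finite sums and products then transfers the property. Finally, the ``As a consequence'' bound is obtained by substituting \eqref{e:AreCharentaisesReallySoConvenient?} into the inequality just preceding the lemma, reindexing the double sum from $(r,j)$ with $j\geq k$ to a single sum over $j'=j-k+1\geq 1$, and absorbing the $O(k/n)$ boundary discrepancy $(n-j)-(n-j'-k+1)$ into the eventual $\lim_{n\to\infty}$ using the trivial bound $|\langle\mathcal{J}^*K'_\gamma,(\mathcal{S}_{u^\gamma}^{(1)})^{j'}\mathrm{T}K'_\gamma\rangle_{\mu_1^\gamma}|\leq\|K'_\gamma\|_{\mu_k^\gamma}^2$, which uses both the contractions just proved and the sub-stochasticity of $\mathcal{S}_{u^\gamma}^{(1)}$ on $\ell^2(\mu_1^\gamma)$.
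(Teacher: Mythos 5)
Your proposal is correct and follows essentially the same route as the paper: the same evaluation map $\mathcal{J}$, the same definition of $\mathrm{T}$ via $(\mathcal{S}_{u^\gamma}^{(k)})^{k-1}$ depending only on the last bond, the same intertwining $\mathcal{S}_{u^\gamma}^{(k)}\mathcal{J}=\mathcal{J}\mathcal{S}_{u^\gamma}^{(1)}$, and the same key identity $\mu_1^\gamma(c_1)\prod_i p(c_{i+1}|c_i)=\mu_k^\gamma(c_1;\dots;c_k)$ combined with sub-stochasticity and Cauchy--Schwarz to get $\|\mathrm{T}\|\le 1$. The telescoping computation and the treatment of the boundary terms in the reindexed double sum both check out.
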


\begin{proof}
Let us denote by $\mathcal{J} : \ell^2 (\mu_1^\gamma)\longrightarrow \ell^2 (\mu_k^\gamma)$ the map $\big{(} \mathcal{J} \phi \big{)} (b_1,\dots, b_k) = \phi(b_k)$. Using \eqref{eq:AlmostStoch1}, we see that $\|\mathcal{J}\|_{\ell^2 (\mu_1^\gamma)\rightarrow \ell^2 (\mu_k^\gamma)} \leq 1$, so that its adjoint $\mathcal{J}^* :  \ell^2 (\mu_k^\gamma)\longrightarrow \ell^2 (\mu_1^\gamma)$ satisfies the same bound. From Remark \ref{rem:HolStable}, if $K_\gamma$ satisfies \Hol{}, then $\mathcal{J}^*K_\gamma$ also satisfies \Hol{}.

Now, we have 
\[
\big{(}\mathcal{S}^{k-1}_{u^\gamma} K'_{\gamma}\big{)}(b_1 ; b_k)= \sum_{(b_{k+1} ; b_{2k-1})\in \mathrm{B}^{b_k}_{k-1}} \Lambda(b_{k} ; b_{2k-1}) K'_{\gamma}(b_k; b_{2k-1}),
\]
where 
\[
\Lambda(b_{k} ; b_{2k-1})= \prod_{\ell=k}^{2k-2}\frac{|\zeta^{\gamma}(b_{\ell})|^2}{\Im R_{\gamma}^+(o_{b_{\ell}})} u^{\gamma}(b_{\ell}) \Im R_{\gamma}^+(o_{b_{\ell+1 }}).
\]

In particular, $\big{(}\mathcal{S}^{k-1}_{u^\gamma} K'_{\gamma}\big{)}(b_1 ;b_k)$ depends only on $b_k$, so we may define $(\mathrm{T}K_\gamma')(b_k) := (\mathcal{S}^{k-1}_{u^\gamma}K'_{\gamma})(b_0 ; b_k)$. Then $\mathrm{T}K_\gamma'$ satisfies \Hol{}, and $\cJ \mathrm{T}K_\gamma' = \mathcal{S}^{k-1}_{u^\gamma}K'_{\gamma}$.

Note that $\mathcal{S}^{(k)}_{u^\gamma} \mathcal{J}= \mathcal{J} \mathcal{S}^{(1)}_{u^\gamma}$. This is just saying that, if a function on $\mathrm{B}_k$ depends only on the last variable, so will its image by $\mathcal{S}^{(k)}_{u^\gamma}$.

This proves \eqref{e:AreCharentaisesReallySoConvenient?}.

Let us now check that for any $K\in \ell^2(\mu_k^\gamma)$, we have $\|\mathrm{T} K \|_{\ell^2(\mu_1^\gamma)}\leq \|K\|_{\ell^2(\mu_k^\gamma)} $.
Noting that we have $\mu_1^\gamma(b_k) |\Lambda(b_{k} ;b_{2k-1})| = \mu_k^\gamma(b_{k} ;b_{2k-1})$ and, by \eqref{e:ASW}, that
\[
\sum_{(b_{k+1} ; b_{2k-1})\in \mathrm{B}^{b_k}_{k-1}} |\Lambda(b_{k} ;b_{2k-1})|\leq 1,
\]
we deduce that, for any $K\in \ell^2(\mu^\gamma_k)$, we have
\begin{align}\label{e:tbound}
\|\mathrm{T}K\|_{\ell^2(\mu_1^\gamma)}^2 &= \sum_{b_k\in B} \mu^\gamma_1(b_k)\Big{|}\sum_{(b_{k+1} ;b_{2k-1})\mathrm{B}^{b_k}_{k-1}} \Lambda(b_{k} ; b_{2k-1}) K(b_{k} ; b_{2k-1})\Big{|}^2\\
&\leq \sum_{b_k\in B} \mu^\gamma_1(b_k) \sum_{(b_{k+1};b_{2k-1})\mathrm{B}^{b_k}_{k-1}} |\Lambda(b_{k} ; b_{2k-1})| |K(b_{k} ;b_{2k-1})|^2\nonumber\\
&= \sum_{(b_{k} ; b_{2k-1})\in \mathrm{B}_k}  \mu^\gamma_k(b_k ; b_{2k-1}) |K(b_{k} ; b_{2k-1})|^2= \|K\|_{\ell^2(\mu_k^{\gamma})}^2.\nonumber
\end{align}
This concludes the proof.
\end{proof}

\begin{rem}
Let $j\ge k$. Since $\cS_{u^{\gamma}}^jK =\cJ\cS_{u^{\gamma}}^{j-k+1}\mathrm{T}K$, we also deduce that 
\begin{equation}\label{eq:NormeS}
\|\mathcal{S}^j_{u^\gamma}\|_{\ell^2(\mu_k^\gamma) \rightarrow \ell^2(\mu_k^\gamma)} \leq \|\mathcal{S}^{j-k+1}_{u^\gamma}\|_{\ell^2(\mu_1^\gamma) \rightarrow \ell^2(\mu_1^\gamma)}\,.
\end{equation}
\end{rem}

In the sequel, we will work with 
\[
\nu_k^\gamma:= \frac{1}{\mu_k^\gamma(\mathrm{B}_k)}\mu_k^\gamma,
\]
which is a probability measure on $\mathrm{B}_k$.

Then the bounds $\|\mathrm{T}\|_{\ell^2(\mu_k^\gamma)\to\ell^2(\mu_1^\gamma)}\le 1$, $\|\cJ^*\|_{\ell^2(\mu_k^\gamma)\to\ell^2(\mu_1^\gamma)}\le 1$ become
\begin{equation}\label{eq:BornesTJ}
\begin{aligned}
\|\mathrm{T}\|_{\ell^2(\nu_k^\gamma)\to \ell^2(\nu_1^\gamma)} &\leq \sqrt{\frac{\mu_k^\gamma(\mathrm{B}_k)}{\mu_1^\gamma(\mathrm{B}_1)}}\le 1\\
\|\cJ^*\|_{\ell^2(\nu_k^\gamma)\to \ell^2(\nu_1^\gamma)} &\leq \sqrt{\frac{\mu_k^\gamma(\mathrm{B}_k)}{\mu_1^\gamma(\mathrm{B}_1)}}\le 1,
\end{aligned}
\end{equation}
where the $\le 1$ follows from \eqref{e:mukdecreases}.

To lighten the notation, we will write in the sequel
\[
\|f\|_\nu = \sqrt{\langle f, f \rangle_\nu} := \|f\|_{\ell^2(\nu_1^\gamma)},
\]
while we write $\|f\|= \langle f,f\rangle$ for the usual $\ell^2(B)$ norm with respect to the uniform scalar product.
If $H\subset \ell^2(B)$ is a linear subspace, we will denote by $P_H$ the orthogonal projection on $H$ with respect to the usual scalar product on $B$, and by $P_{H,\nu}$ the orthogonal projection with respect to the $\ell^2(\nu_1^\gamma)$ scalar product.

\section{Spectral gaps and contraction}\label{Sec:Contrac}
The aim of this section is to obtain bounds on $\|\mathcal{S}^{j}_{u^\gamma}\|_{\ell^2(\nu_1^\gamma) \rightarrow \ell^2(\nu_1^\gamma)}$. To this end, we introduce the operator
\[
(\cS_{\gamma}K)(b_1) = \frac{|\zeta^{\gamma}(b_1)|^2}{\Im R_{\gamma}^+(o_{b_1})} \sum_{b_{2}\in \cN_{b_1}^+} \Im R_{\gamma}^+(o_{b_{2}}) K(b_2) \,,
\]
which is sub-stochastic as mentioned above. Note that 
\begin{equation}
\mathcal{S}_{u^\gamma} = M_{u^\gamma} \mathcal{S}_\gamma,
\end{equation}
where $M_{u^\gamma}$ denotes the multiplication by $u^\gamma$.

 Using \eqref{e:zetainv}, one checks that the adjoint of $\mathcal{S}_\gamma$ in $\ell^2(B,\mu_1^{\gamma})$ is given by
\[
(\cS_{\gamma}^{\ast}K)(b_1) = \frac{|\zeta^{\gamma}(\hat{b}_1)|^2}{\Im R_{\gamma}^-(t_{b_1})} \sum_{b_0\in \cN_{b_1}^-} \Im R_{\gamma}^-(t_{b_0})K(b_0) \,,
\]
which is also sub-stochastic.

Since $\mathcal{S}_\gamma^*$ is sub-stochastic and $\nu$ is a probability measure, we have $\|\mathcal{S}_\gamma^*\|_{\ell^\infty(\nu)\rightarrow \ell^1(\nu)} \leq 1$, so that $\|\mathcal{S}_\gamma\|_{\ell^1(\nu)\rightarrow \ell^\infty(\nu)} \leq 1$. In particular, we have for any $p\in [1, +\infty]$
\begin{equation}\label{eq:TrivialContractionBound}
\|\mathcal{S}_{u^\gamma} \|_{\ell^p(\nu)\rightarrow \ell^p(\nu)}= \|\mathcal{S}_{\gamma} \|_{\ell^p(\nu)\rightarrow \ell^p(\nu)} \leq 1.
\end{equation}
Indeed, since $\nu$ is a probability measure, $\|\mathcal{S}_\gamma K\|_{\ell^p(\nu)} \leq \|\mathcal{S}_\gamma K\|_{\ell^{\infty}(\nu)} \leq \|K\|_{\ell^1(\nu)}\leq \|K\|_{\ell^p(\nu)}$.

We also let $\mathcal{S}:\ell^2(B)\To\ell^2(B)$ be the (non-backtracking) transfer operator, defined by
\[
(\mathcal{S}K)(b_1) = \frac{1}{q(t_{b_1})} \sum_{b_{2}\in \cN_{b_1}^+} K(b_2),
\]
where $q(v)=\deg(v)-1$.

Informally speaking, the operator $\mathcal{S}_{u^\gamma}$ can be understood as follows. Start with $\mathcal{S}$  acting on $\ell^2(B)$, and multiply it by positive weights to turn it into $\mathcal{S}_\gamma$ acting on $\ell^2(B, \nu_1^\gamma)$; then, add some phases $u^\gamma$, to turn it into  $\mathcal{S}_{u^\gamma}$. We will therefore start by recalling the contracting properties of $\mathcal{S}$; we will then understand the effect of adding some positive weights; finally, we will understand how the phases affect the contraction properties.

\subsection{Contraction properties of the transfer operator}

 Let us denote by $F$ the set of functions on $B$ which depend only on the origin:
\begin{equation*}
F:= \{ f : B\rightarrow \C \mid \forall b,b'\in B, o_b=o_{b'} \Longrightarrow f(b)= f(b') \}.
\end{equation*}
We denote by $F^\perp$ the orthogonal complement of $F$ for the usual scalar product on $\ell^2(B)$.

Note that for the usual scalar product on $\ell^2(B)$, the orthogonal projector on $F$ can be written
\begin{equation*}
(P_F K)(b) = \frac{1}{d(o_b)} \sum_{b'\in B ; o_{b'} = o_b} K(b').
\end{equation*}

If $f\in F$, then $\|\mathcal{S}f\| = \|f\|$, so that $\mathcal{S}$ enjoys no contraction properties on $F$. However, it is contracting on the orthogonal 
complement of $F$:
\begin{lem}\label{lem:EasyGapTransfer}
For all $f\in F^\perp$, we have 
\[
\|\mathcal{S}f\| \leq \frac{1}{2} \|f\|.
\]
\end{lem}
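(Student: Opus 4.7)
The plan is to exploit the orthogonality condition $f\in F^\perp$ to get an algebraic cancellation inside the sum defining $\mathcal{S}f$, rather than trying to apply a Cauchy–Schwarz bound which would only yield $\|\mathcal{S}f\|\le \|f\|$. The observation is that $F^\perp$ consists exactly of the functions whose sum over all bonds outgoing from a given vertex is zero: more precisely, $(P_Ff)(b) = \frac{1}{d(o_b)} \sum_{b':o_{b'}=o_b} f(b')$, so $f\in F^\perp$ iff for every $v\in V$,
\[
\sum_{b'\in B,\, o_{b'}=v} f(b') = 0.
\]

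Using this, I would compute $\mathcal{S}f(b_1)$ directly. By definition, for $v=t_{b_1}$,
\[
\mathcal{S}f(b_1) = \frac{1}{q(t_{b_1})}\sum_{b_2\in \cN^+_{b_1}} f(b_2) = \frac{1}{q(t_{b_1})}\left(\sum_{b_2:\, o_{b_2}=v} f(b_2) - f(\hat{b}_1)\right),
\]
since $\cN^+_{b_1}$ consists of all bonds outgoing from $t_{b_1}$ except $\hat{b}_1$. The first sum vanishes because $f\in F^\perp$, so
\[
\mathcal{S}f(b_1) = -\frac{f(\hat{b}_1)}{q(t_{b_1})}.
\]

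Squaring and summing, and then changing variables $b=\hat{b}_1$ (so $o_b=t_{b_1}$), gives
\[
\|\mathcal{S}f\|^2 = \sum_{b_1\in B} \frac{|f(\hat{b}_1)|^2}{q(t_{b_1})^2} = \sum_{b\in B} \frac{|f(b)|^2}{q(o_b)^2}.
\]
The hypothesis \textbf{(Data)} forces $d(v)\ge 3$ for every vertex, hence $q(o_b)=d(o_b)-1\ge 2$, so $\frac{1}{q(o_b)^2}\le \frac{1}{4}$. Consequently $\|\mathcal{S}f\|^2 \le \frac{1}{4}\|f\|^2$ and the claim follows.

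There is no real obstacle here — the only subtlety is identifying the right characterization of $F^\perp$ (vanishing local sums at each vertex) so that the algebraic identity $\sum_{b_2\in \cN^+_{b_1}} f(b_2) = -f(\hat{b}_1)$ becomes apparent; once that step is taken, the bound is immediate and the factor $\frac{1}{2}$ comes entirely from the degree lower bound $d(v)\ge 3$.
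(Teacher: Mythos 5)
Your proof is correct and follows essentially the same route as the paper's: both hinge on the identity $\mathcal{S}f(b)=-f(\hat{b})/q(t_b)$ for $f\in F^\perp$, after which the bound follows from $q\ge 2$ (the paper phrases the final step via $\mathcal{S}^*\mathcal{S}f=\frac{1}{q(o_b)^2}f(b)$, while you sum squares directly with a change of variables — an immaterial difference).
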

\begin{proof}
Since $P_F f =0$, we have $\mathcal{S}f(b) = \frac{-1}{q(t_b)} f(\hat{b})$, so  $\mathcal{S^*}\mathcal{S}f(b) =  \frac{1}{q(o_b)}\sum_{b_-}\frac{-1}{q(t_{b_-})}f(\widehat{b_-})=\frac{-1}{q(o_b)^2}\cB f(\hat{b})=\frac{1}{q(o_b)^2} f(b)$. By assumption \Data{}, $q(o_b)\ge 2$ $\forall b$, so $\|\mathcal{S}f\|^2\leq \frac{1}{4}\|f\|^2$.
\end{proof}

We will also need the much deeper fact that $\mathcal{S}^2$ is contracting on the space  $\mathbf{1}^\perp$, the orthogonal complement in $\ell^2(B)$ of constant functions (which is a larger space than $F^\perp$). Namely, Theorem 1.1 in \cite{A2} says that there exists $0<c(D,\beta)<1$ depending only on the bound $D$ on the degrees of the vertices of the graph, and on the spectral gap $\beta$ appearing in Hypothesis \EXP{} such that
\begin{equation}\label{eq:GapTransfer}
\forall f\in \mathbf{1}^\perp, \qquad\|\mathcal{S}^2 f\| \leq \big{(}1-c(D,\beta)\big{)} \|f\|.
\end{equation}

\subsection{Contraction properties of \texorpdfstring{$\mathcal{S}_\gamma$}{S}}
The aim of this section is to prove analogues of Lemma \ref{lem:EasyGapTransfer} and the bound \eqref{eq:GapTransfer} for $\mathcal{S}_\gamma$ acting on $\ell^2(B, \nu_1^\gamma)$. The main difficulties come, on one hand, from the fact that $\mu_1^\gamma$ is not a priori bounded from above, and could have some peaks; on the other hand, from the fact that the weights in $\mathcal{S}_\gamma$ could tend to disconnect the graph. We will therefore call ``bad'' an oriented edge, or a pair of oriented edges, where one of these events happen: for any $M>0$, we set
\begin{align*}
\Bad(M)&:= \Big{\{} b\in B\mid \nu_1^\gamma(b) > \frac{M}{N} \Big{\}}\\
\Badp(M) &:= \Big{\{} (b,b')\in B^2\mid 0<\nu_1^\gamma(b) \big{(}\mathcal{S}^*_\gamma \mathcal{S}_\gamma \big{)}(b,b') < \frac{1}{MN} \Big{\}}.
\end{align*}
Note that, if $(b,b')\in \Badp(M)$, we must have $o_b=o_{b'}$.

To formulate the analogue of Lemma \ref{lem:EasyGapTransfer}, we denote by $P_{F,\nu}$ the orthogonal projection on $F$ in the space $\ell^2(B, \nu_1^\gamma)$, and by $P_{F^\perp,\nu}:= \mathrm{1}-P_{F,\nu}$ the orthogonal projection onto $F^\perp$. The following proposition says that the bad (pairs of) edges are the only obstruction for contraction properties of $\mathcal{S}_\gamma P_{F^\perp, \nu}$. 

\begin{prp}\label{prop:ContractOrigin}
For any $M>0$, and any $K\in \mathscr{H}_1$ (possibly depending on $\gamma$), we have
\[
\| \mathcal{S}_\gamma P_{F^\perp, \nu} K \|_\nu^2 \leq \Big{(} 1- \frac{3}{4} M^{-2} \Big{)} \| P_{F^\perp, \nu} K \|_\nu^2 + C_{N,M}(K) ,
\]
and
\begin{equation}\label{eq:BoundProjFPerp}
\|P_{F^\perp,\nu} K \|_\nu^2 \leq \frac{4}{3} M^2 \big{(} \|K\|^2_\nu - \| \mathcal{S}_\gamma K\|^2_\nu + C_{N,M}(K)\big{)},
\end{equation}
where
\begin{align*}
C_{N,M}(K) &=  \frac{1}{2NM} \sum_{(b,b')\in \Badp(M)} \big{(}\mathcal{S}^*\mathcal{S}\big{)}(b,b') \big{|} K (b)- K(b')\big{|}^2 \\
&\quad + \frac{1}{M^2} \sum_{b\in \Bad(M)} \nu_1^\gamma(b) \big{|}K (b)- P_{F} K(b) \big{|}^2.
\end{align*}
\end{prp}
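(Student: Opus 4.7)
My plan is to prove the second bound (II) directly and deduce the contraction estimate (I) from it by applying (II) to $f := P_{F^\perp,\nu}K$. Since $P_{F^\perp,\nu}f = f$, rearranging (II) with $K$ replaced by $f$ gives $\|\mathcal{S}_\gamma f\|_\nu^2 \le (1-\tfrac{3}{4}M^{-2})\|f\|_\nu^2 + C_{N,M}(f)$, and one checks $C_{N,M}(f)=C_{N,M}(K)$: when $o_b = o_{b'}$ the $\nu$-orthogonal projection on $F$ is constant along out-edges at each vertex, so $f(b)-f(b') = K(b)-K(b')$; and $P_{F,\nu}K\in F$ implies $P_F f = P_FK - P_{F,\nu}K$, hence $f - P_F f = K - P_F K$.

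For (II) itself, I begin with $\|P_{F^\perp,\nu}K\|_\nu^2 \le \|K-P_FK\|_\nu^2$ (since $P_F K \in F$ and $P_{F,\nu}$ is the $\nu$-orthogonal projector onto $F$), and split according to whether $b\in\Bad(M)$. The bad edges contribute exactly $M^2$ times the second summand of $C_{N,M}(K)$, while on the complement the bound $\nu_1^\gamma(b) \le M/N$ gives $\sum_{b\notin\Bad(M)}\nu_1^\gamma(b)|K(b)-P_FK(b)|^2 \le (M/N)\|K-P_FK\|^2$ in the unweighted $\ell^2(B)$ norm. Next I apply a Poincar\'e-type inequality for the uniform non-backtracking operator: the variance identity $\sum_{b:o_b=v}|K(b)-P_FK(b)|^2 = \tfrac{1}{2d(v)}\sum_{b,b':o_b=o_{b'}=v}|K(b)-K(b')|^2$, combined with the direct computation $(\mathcal{S}^*\mathcal{S})(b,b') = (q(v)-1)/q(v)^2$ for $b\ne b'$ sharing origin $v$ and the elementary bound $\tfrac{1}{2d(v)}\le \tfrac{2}{3}(\mathcal{S}^*\mathcal{S})(b,b')$ (valid because $d(v)\ge 3$ by \Data{}), yields $\|K-P_FK\|^2 \le \tfrac{2}{3}\sum_{b,b'}(\mathcal{S}^*\mathcal{S})(b,b')|K(b)-K(b')|^2$.

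I then split this last sum into pairs in $\Badp(M)$ and their complement. Bad pairs produce directly $\tfrac{4M^2}{3}$ times the first summand of $C_{N,M}(K)$, using only $(\mathcal{S}^*\mathcal{S})(b,b')\le 1$. For good pairs, the reverse inequality $\nu_1^\gamma(b)(\mathcal{S}_\gamma^*\mathcal{S}_\gamma)(b,b') \ge 1/(MN)$ lets me replace $(M/N)(\mathcal{S}^*\mathcal{S})(b,b')$ by $M^2\nu_1^\gamma(b)(\mathcal{S}_\gamma^*\mathcal{S}_\gamma)(b,b')$, and I conclude with the standard Dirichlet form estimate
\[
\tfrac12\sum_{b,b'}\nu_1^\gamma(b)(\mathcal{S}_\gamma^*\mathcal{S}_\gamma)(b,b')|K(b)-K(b')|^2 \le \|K\|_\nu^2 - \|\mathcal{S}_\gamma K\|_\nu^2,
\]
valid because $\mathcal{S}_\gamma^*\mathcal{S}_\gamma$ is self-adjoint and sub-stochastic on $\ell^2(\nu_1^\gamma)$: sub-stochasticity follows from \eqref{e:cur1}--\eqref{e:cur2} applied to $\mathcal{S}_\gamma$ and $\mathcal{S}_\gamma^*$, while the $\ell^2(\nu_1^\gamma)$-reversibility $\nu_1^\gamma(b)(\mathcal{S}_\gamma^*\mathcal{S}_\gamma)(b,b') = \nu_1^\gamma(b')(\mathcal{S}_\gamma^*\mathcal{S}_\gamma)(b',b)$ is a direct calculation from the definition of $\mu_1^\gamma$ in \eqref{eq:defMeasureMu} together with the second identity of \eqref{e:zetainv}.

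The hard part will be keeping the two parallel dichotomies organized---``bad vs.\ good edges'' (controlling the size of the weights $\nu_1^\gamma(b)$) and ``bad vs.\ good pairs'' (controlling the weighted coupling $\nu_1^\gamma(b)(\mathcal{S}_\gamma^*\mathcal{S}_\gamma)(b,b')$)---so that the four separate contributions combine cleanly into a single prefactor $\tfrac{4M^2}{3}$ in front of $C_{N,M}(K) + \|K\|_\nu^2 - \|\mathcal{S}_\gamma K\|_\nu^2$. In particular, the first summand of $C_{N,M}$ involves the \emph{uniform} operator $\mathcal{S}^*\mathcal{S}$ while the bad-pair criterion refers to the \emph{weighted} operator $\mathcal{S}_\gamma^*\mathcal{S}_\gamma$, and reconciling these two through the estimate $(\mathcal{S}^*\mathcal{S})(b,b')\le 1$ is where most of the bookkeeping lies.
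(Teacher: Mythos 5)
Your proof is correct and rests on exactly the same ingredients as the paper's --- the weighted and unweighted Dirichlet identities for $\mathcal{S}_\gamma^*\mathcal{S}_\gamma$ and $\mathcal{S}^*\mathcal{S}$, the spectral gap of $\mathcal{S}$ on $F^\perp$ (which you reprove via the variance identity and the explicit value of $(\mathcal{S}^*\mathcal{S})(b,b')$ rather than quoting Lemma~\ref{lem:EasyGapTransfer}), the comparison $\|P_{F^\perp,\nu}K\|_\nu\le\|K-P_FK\|_\nu$, and the two bad-set decompositions --- merely traversed in the reverse order: you establish \eqref{eq:BoundProjFPerp} first and deduce the contraction estimate from it, whereas the paper proves the contraction estimate and obtains \eqref{eq:BoundProjFPerp} as a by-product. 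The one extra verification your reorganization requires, namely $C_{N,M}(P_{F^\perp,\nu}K)=C_{N,M}(K)$, you carry out correctly.
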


To formulate the analogue of \eqref{eq:GapTransfer}, we introduce another set of bad pairs of  oriented edges
\begin{align*}
\Badp_2(M) &:= \Big{\{} (b,b')\in B^2\mid  0<\nu_1^\gamma(b) \big{(} (\mathcal{S}_\gamma^*)^2 \mathcal{S}_\gamma^2 \big{)}(b,b') < \frac{1}{MN} \Big{\}}.
\end{align*}

Let $P_{\mathbf{1}^\perp,\nu}$ denote the orthogonal projection onto $\mathbf{1}^\perp$ with respect to $\ell^2(\nu_1^\gamma)$. We have

\begin{prp}\label{prop:ContractConstant}
For any $M>0$, and any $K\in \mathscr{H}_1$ (possibly depending on $\gamma$), we have
\[
\| \mathcal{S}_\gamma^2 P_{\mathbf{1}^\perp, \nu}  K \|_\nu^2 \leq \left( 1- c(D,\beta) M^{-2} \right) \| P_{\mathbf{1}^\perp, \nu} K \|_\nu^2 + C_{N,M}'(K),
\]
and 
\begin{equation}\label{eq:BoundProjConstPerp}
\| P_{\mathbf{1}^\perp,\nu}K\|^2_\nu \leq \frac{M^2}{c(D,\beta)} \big{(} \|K\|_\nu^2 - \|\mathcal{S}_\gamma^2 K \|_\nu^2 + C'_{N,M}(K
) \big{)},
\end{equation}

where
\begin{align*}
C_{N,M}'(K) &=  \frac{1}{2NM} \sum_{(b,b')\in \Badp_2(M)} \big{(}(\mathcal{S}^*)^2 \mathcal{S}^2 \big{)}(b,b') \big{|}K (b)-K(b')\big{|}^2 \\
&\quad + \frac{1}{M^2} \sum_{b\in \Bad(M)} \nu_1^\gamma(b) \big{|} K (b)- P_{\mathbf{1}} K (b) \big{|}^2.
\end{align*}
\end{prp}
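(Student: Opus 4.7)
The plan is to mirror the proof of Proposition \ref{prop:ContractOrigin}, with Lemma \ref{lem:EasyGapTransfer} replaced by the much deeper spectral gap \eqref{eq:GapTransfer} for $\mathcal{S}^2$ on $\mathbf{1}^\perp$. First, I observe that the two inequalities are essentially equivalent: writing $\widetilde{K} := P_{\mathbf{1}^\perp,\nu}K$, we have $C'_{N,M}(\widetilde{K})=C'_{N,M}(K)$ because both summands of $C'_{N,M}$ are insensitive to adding a constant to $K$ (the bad-pair term only sees differences $K(b)-K(b')$, and the bad-edge term only sees $K-P_{\mathbf{1}}K$). Applying \eqref{eq:BoundProjConstPerp} to $\widetilde{K}$ (which is fixed by $P_{\mathbf{1}^\perp,\nu}$) and rearranging then yields the first inequality. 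I therefore focus on proving \eqref{eq:BoundProjConstPerp}.

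The core step is a Poincar\'e inequality for the \emph{unweighted} operator $\mathcal{S}^2$. Since $\mathcal{S}$ is doubly stochastic ($\mathcal{S}\mathbf{1}=\mathcal{S}^\ast\mathbf{1}=\mathbf{1}$), the kernel $(\mathcal{S}^\ast)^2\mathcal{S}^2$ is self-adjoint and stochastic on $\ell^2(B)$, so
\[
\|f\|^2-\|\mathcal{S}^2 f\|^2 \;=\; \tfrac{1}{2}\sum_{b,b'}((\mathcal{S}^\ast)^2\mathcal{S}^2)(b,b')\,|f(b)-f(b')|^2.
\]
Applying \eqref{eq:GapTransfer} to $f-\bar f$, with $\bar f := |B|^{-1}\sum_b f(b) = P_{\mathbf{1}}f$, and using $\mathcal{S}^2\bar f = \bar f$ together with the translation invariance of the right-hand side under additive constants, gives
\[
c(D,\beta)\,\|f-\bar f\|^2 \;\le\; \tfrac{1}{2}\sum_{b,b'}((\mathcal{S}^\ast)^2\mathcal{S}^2)(b,b')\,|f(b)-f(b')|^2. \qquad (\star)
\]

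Now I transfer from the uniform to the weighted measure. Since $P_{\mathbf{1}^\perp,\nu}K$ is the $\ell^2(\nu_1^\gamma)$-closest constant-adjustment of $K$,
\[
\|P_{\mathbf{1}^\perp,\nu}K\|_\nu^2 \;\le\; \sum_{b\in B}\nu_1^\gamma(b)\,|K(b)-\bar K|^2,
\]
which I split according to whether $b\in\Bad(M)$. The $\Bad(M)$-part is bounded by $(1/c(D,\beta))\sum_{b\in\Bad(M)}\nu_1^\gamma(b)|K(b)-\bar K|^2$ (using $c(D,\beta)<1$), which is exactly $M^2/c(D,\beta)$ times the bad-edge term of $C'_{N,M}(K)$. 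For $b\notin\Bad(M)$, $\nu_1^\gamma(b)\le M/N$ by definition, and $(\star)$ yields
\[
\sum_{b\notin\Bad(M)}\nu_1^\gamma(b)\,|K(b)-\bar K|^2 \;\le\; \tfrac{M}{N\,c(D,\beta)}\cdot\tfrac{1}{2}\sum_{b,b'}((\mathcal{S}^\ast)^2\mathcal{S}^2)(b,b')\,|K(b)-K(b')|^2.
\]
I further split this inner sum according to whether $(b,b')\in\Badp_2(M)$. On non-bad pairs, either both kernels vanish or
\[
((\mathcal{S}^\ast)^2\mathcal{S}^2)(b,b') \;\le\; 1 \;\le\; MN\,\nu_1^\gamma(b)\,((\mathcal{S}_\gamma^\ast)^2\mathcal{S}_\gamma^2)(b,b'),
\]
the first bound by sub-stochasticity and the second by the very definition of $\Badp_2(M)^c$. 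Hence the non-bad-pair contribution is at most $(M^2/c(D,\beta))\cdot\tfrac{1}{2}\sum_{b,b'}\nu_1^\gamma(b)((\mathcal{S}_\gamma^\ast)^2\mathcal{S}_\gamma^2)(b,b')|K(b)-K(b')|^2$, itself at most $(M^2/c(D,\beta))(\|K\|_\nu^2-\|\mathcal{S}_\gamma^2 K\|_\nu^2)$ after dropping the nonnegative sub-stochastic leak $\sum_b\nu_1^\gamma(b)r_\gamma(b)|K(b)|^2$. The bad-pair contribution is precisely $(M^2/c(D,\beta))$ times the first term of $C'_{N,M}(K)$. Summing the three pieces delivers \eqref{eq:BoundProjConstPerp}.

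The bookkeeping is entirely routine; the only essentially new ingredient beyond Proposition \ref{prop:ContractOrigin} is the non-elementary gap \eqref{eq:GapTransfer} of \cite{A2}, which is precisely where the expander hypothesis \EXP{} is invoked.
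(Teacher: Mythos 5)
Your proof is correct and follows essentially the same route as the paper's (which simply repeats the proof of Proposition \ref{prop:ContractOrigin} with $\mathcal{Q}_2^\gamma=(\mathcal{S}_\gamma^*)^2\mathcal{S}_\gamma^2$ in place of $\mathcal{Q}^\gamma$ and \eqref{eq:GapTransfer} in place of Lemma \ref{lem:EasyGapTransfer}): the same two Dirichlet identities, the same passage between $\nu_1^\gamma$ and the uniform measure via $\Bad(M)$ and $\Badp_2(M)$, and the same invocation of the expander gap. The only difference is cosmetic — you traverse the chain of comparisons from the projection norm down to the energy difference and deduce the first inequality from \eqref{eq:BoundProjConstPerp}, whereas the paper runs the chain in the other direction and reads off both inequalities from the central estimate on $\langle K,\mathcal{M}_2^\gamma K\rangle_\nu$.
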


Let us start by proving Proposition \ref{prop:ContractOrigin}
\begin{proof}[Proof of Proposition \ref{prop:ContractOrigin}]
Let us write $\mathcal{Q}^\gamma:= \mathcal{S}^*_\gamma \mathcal{S}_\gamma $, where the adjoint is taken for the scalar product of $\ell^2(\nu_1^\gamma)$. The operator $\mathcal{Q}^\gamma$ is self-adjoint for this scalar product, so that for all $b,b'\in B$, we have
\begin{equation}\label{eq:Qselfadj}
\nu_1^\gamma(b) \mathcal{Q}^\gamma(b,b') = \nu_1^\gamma(b')\mathcal{Q}^\gamma(b',b).
\end{equation}

We define $D^\gamma(b):= \sum_{b'\in B} \mathcal{Q}^\gamma(b,b')\leq 1$, and $\mathcal{M}^\gamma(b,b'):= D^\gamma(b) \delta_{b=b'} - \mathcal{Q}^\gamma(b,b')$. Then \eqref{eq:Qselfadj} implies the \emph{Dirichlet identity}
\begin{equation}\label{eq:DIrichletId}
\frac{1}{2} \sum_{b,b'\in B} \nu_1^\gamma(b) \mathcal{Q}^\gamma(b,b') |K(b)- K(b')|^2 = \langle K, \mathcal{M}^\gamma K \rangle_\nu,
\end{equation}
and since $\mathcal{Q}^\gamma(b,b')\neq 0 \Longleftrightarrow \left(o_b=o_{b'}\right)$, this shows in particular that 
\begin{equation}\label{eq:CriterionK}
 K\in F\iff \langle K, \mathcal{M}^\gamma K \rangle_\nu=0 \iff \mathcal{M}^\gamma K = 0,
\end{equation}
where the last equivalence comes from the fact that $\mathcal{M}^\gamma$ is a self-adjoint non-negative operator. In particular we have 
\begin{equation}\label{eq:CriterionK2}
\langle K, \mathcal{M}^\gamma K \rangle_\nu = \langle P_{F^\perp, \nu} K, \mathcal{M}^\gamma  P_{F^\perp, \nu} K \rangle_\nu\,.
\end{equation}

Thanks to \eqref{eq:DIrichletId} and \eqref{eq:CriterionK2}, we have
\begin{equation}\label{eq:NoName}
\begin{aligned}
 \| P_{F^\perp, \nu} K \|_\nu^2 - \| \mathcal{S}_\gamma P_{F^\perp, \nu} K \|_\nu^2 &=  \langle  P_{F^\perp, \nu} K ,  (\mathrm{Id} - \mathcal{Q}^\gamma)  P_{F^\perp, \nu} K  \rangle_\nu  \\
  &\geq  \langle  P_{F^\perp, \nu} K , \mathcal{M}^\gamma  P_{F^\perp, \nu} K  \rangle_\nu \\
& = \frac{1}{2} \sum_{b,b'\in B} \nu_1^\gamma(b) \mathcal{Q}^\gamma(b,b') |K(b)- K(b')|^2\\
& \geq \frac{1}{2MN} \sum_{b,b'\in B\setminus \Badp(M)}  \big{(}\mathcal{S}^*\mathcal{S}\big{)}(b,b') |K(b)- K(b')|^2,
\end{aligned}
\end{equation}
since $\big{(}\mathcal{S}^*\mathcal{S}\big{)}(b,b')\leq 1$.
As in \eqref{eq:DIrichletId}, the last member of the inequality is then equal to
\begin{align*}
\frac{1}{MN} \langle K, (I - \mathcal{S}^*\mathcal{S}) K \rangle -  \frac{1}{2MN} \sum_{b,b'\in \Badp(M)}  \big{(}\mathcal{S}^*\mathcal{S}\big{)}(b,b') |K(b)- K(b')|^2 \,.
\end{align*}

By Lemma \ref{lem:EasyGapTransfer}, and using the analog of \eqref{eq:CriterionK2}, we have 
\begin{equation}\label{eq:NoName2}
 \langle K, (I - \mathcal{S}^*\mathcal{S}) K \rangle =   \|P_{F^\perp} K\|^2 -  \|\mathcal{S} P_{F^\perp} K\|^2 \geq \frac{3}{4} \|P_{F^\perp} K\|^2.
\end{equation}

Now, by definition of $\Bad(M)$, we have
\begin{equation}\label{eq:NoName3}
\begin{aligned}
\frac{1}{N}\|P_{F^\perp} K\|^2 &\geq \frac{1}{M} \sum_{b\in B\backslash \Bad(M)} \nu_1^\gamma(b) \big{|}\big{(}P_{F^\perp} K \big{)}(b)\big{|}^2\\
&= \frac{1}{M}\|P_{F^\perp} K\|^2_\nu - \frac{1}{M} \sum_{b\in \Bad(M)} \nu_1^\gamma(b) \big{|}\big{(}P_{F^\perp} K \big{)}(b)\big{|}^2.
\end{aligned}
\end{equation}

Finally, $\|P_{F^\perp,\nu} K\|_\nu = \|P_{F^\perp,\nu} K - P_{F^\perp,\nu} P_{F} K \|_\nu =  \|P_{F^\perp,\nu} P_{F^\perp} K\|_\nu  \leq \|P_{F^\perp} K\|_\nu$. Combining this and \eqref{eq:NoName}, \eqref{eq:NoName2} and \eqref{eq:NoName3}, we obtain the first part of the statement.

Note that, in all the computations following \eqref{eq:NoName}, we have shown that 
\[
\langle   K, \mathcal{M}^\gamma K \rangle_\nu \geq \frac{3}{4 M^2} \|P_{F^\perp,\nu} K\|^2_\nu - C_{N,M}(K).
\]
But since $\langle   K , \mathcal{M}^\gamma K \rangle_\nu = \langle   K , (D^{\gamma}-\cQ^{\gamma}) K \rangle_\nu\leq \|K\|_\nu^2 - \|\mathcal{S}_\gamma K\|^2_\nu$, equation \eqref{eq:BoundProjFPerp} follows.
\end{proof}

The proof of Proposition \ref{prop:ContractConstant} is exactly the same as the previous proof, using $\mathcal{Q}_2^\gamma:= (\mathcal{S}_\gamma^*)^2 \mathcal{S}_\gamma^2$ instead of $\mathcal{Q}^\gamma$, and using the bound \eqref{eq:GapTransfer} instead of Lemma \ref{lem:EasyGapTransfer}.

Let us now estimate the quantities $C_{N,M}$ and $C'_{N,M}$.

\subsection{Estimates on the bad terms}
\begin{prp}\label{prop:BoundingCNM}
For all $t\in \N$, and all $K\in \mathscr{H}_1$ (possibly depending on $\gamma$), we have
\begin{multline}\label{eq:FirstBoundBad}
C_{N,M}(\mathcal{S}_{u^\gamma}^t K)\leq \frac{2}{NM} \left( \# \Badp(M) \right)^{1/4} \bigg(\sum_{b\in B} \frac{1}{\nu_1^\gamma(b)^2}\bigg)^{1/4} \| K\|_{\ell^4(\nu_1^\gamma)}^2 \\
+ \frac{2}{M^2} \left(\nu_1^\gamma (\Bad(M))\right)^{1/2} \bigg[ \|K\|_{\ell^4(\nu_1^\gamma)}^2  + \bigg(\sum_{b\in B} \frac{|\left( P_F \nu_1^\gamma\right)(b)|^2}{\nu_1^\gamma(b)}\bigg)^{1/4}  \|K\|_{\ell^8(\nu_1^\gamma)}^2 \bigg]
\end{multline}
where $(P_F\nu)(b)=\sum_{b'}P_F(b,b')\nu(b') = \frac{1}{d(o_b)}\sum_{o_{b'}=o_b}\nu(b')$.
Similarly,
\begin{multline}\label{eq:SecondBoundBad}
C'_{N,M}(\mathcal{S}_{u^\gamma}^t K)\leq \frac{2}{NM} \left( \# \Badp_2(M) \right)^{1/4} \bigg(\sum_{b\in B} \frac{1}{\nu_1^\gamma(b)^2}\bigg)^{1/4} \| K\|_{\ell^4(\nu_1^\gamma)}^2\\
+ \frac{2}{M^2} \left(\nu_1^\gamma (\Bad(M))\right)^{1/2} \bigg[ \|K\|_{\ell^4(\nu_1^\gamma)}^2  +  \bigg( \frac{1}{N^2} \sum_{b\in B} \frac{1}{\nu_1^\gamma(b)}\bigg)^{1/4}  \|K\|_{\ell^8(\nu_1^\gamma)}^2 \bigg].
\end{multline}
\end{prp}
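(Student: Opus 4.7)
\medskip

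The plan is to set $A := \mathcal{S}_{u^\gamma}^t K$ and bound the two summands of $C_{N,M}(A)$ and $C'_{N,M}(A)$ in terms of $\|A\|_{\ell^p(\nu_1^\gamma)}$ for $p \in \{4,8\}$; the final form in \eqref{eq:FirstBoundBad} and \eqref{eq:SecondBoundBad} then follows from the contractivity $\|\mathcal{S}_{u^\gamma}^t K\|_{\ell^p(\nu_1^\gamma)} \le \|K\|_{\ell^p(\nu_1^\gamma)}$ proved in \eqref{eq:TrivialContractionBound}, valid for all $p\in[1,\infty]$. For both summands I use the triangle inequality $|u-v|^2 \le 2|u|^2 + 2|v|^2$ to separate the pointwise contributions.

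For the first summand (over $\Badp(M)$ or $\Badp_2(M)$), the key observation is that both the index set and the weight $(\mathcal{S}^*\mathcal{S})(b,b')$ (respectively $((\mathcal{S}^*)^2\mathcal{S}^2)(b,b')$) are symmetric under $(b,b')\mapsto(b',b)$: the index set by self-adjointness of $\mathcal{S}_\gamma^*\mathcal{S}_\gamma$ on $\ell^2(\nu_1^\gamma)$ (which gives $\nu_1^\gamma(b)(\mathcal{S}_\gamma^*\mathcal{S}_\gamma)(b,b') = \nu_1^\gamma(b')(\mathcal{S}_\gamma^*\mathcal{S}_\gamma)(b',b)$), and the weight obviously. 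After the symmetrization step the sum is bounded by $4\sum_{\Badp(M)}(\mathcal{S}^*\mathcal{S})(b,b')|A(b)|^2 \le 4\sum_{\Badp(M)}|A(b)|^2$, since $\mathcal{S}$ is doubly sub-stochastic so its entries are $\le 1$. I then apply H\"older with exponents $(4,4,2)$ to the factorization $\mathbf{1}_{\Badp(M)}(b,b') \cdot \nu_1^\gamma(b)^{-1/2} \cdot \nu_1^\gamma(b)^{1/2}|A(b)|^2$, extending the inner sums from $\Badp(M)$ to all of $B$ at the cost of at most $D$ per fixed $b$ (since $(b,b') \in \Badp(M)$ forces $o_b=o_{b'}$ and $d(o_b)\le D$). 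Dividing by $2NM$ produces the first term of \eqref{eq:FirstBoundBad}, up to a multiplicative constant depending only on $D$.

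For the second summand (over $\Bad(M)$), I handle the $|A(b)|^2$ piece directly by Cauchy--Schwarz: $\sum_{b\in \Bad(M)}\nu_1^\gamma(b)|A(b)|^2 \le (\nu_1^\gamma(\Bad(M)))^{1/2}\|A\|_{\ell^4(\nu_1^\gamma)}^2$. The technical heart is the $|PA(b)|^2$ piece, with $P \in \{P_F, P_\mathbf{1}\}$. Since $P$ is an averaging operator, Jensen gives $|PA|^2 \le P|A|^2$; then self-adjointness of $P$ on $\ell^2(B)$ rewrites
\[
\sum_b \mathbf{1}_{\Bad(M)}(b) \nu_1^\gamma(b) (P|A|^2)(b) = \sum_{b'} \phi(b') |A(b')|^2, \qquad \phi := P(\mathbf{1}_{\Bad(M)}\nu_1^\gamma).
\]
A direct calculation yields $\sum_{b'} \phi(b') = \nu_1^\gamma(\Bad(M))$ (using $\sum_{b':o_{b'}=o_b}\frac{1}{d(o_{b'})}=1$ for $P_F$, and trivially for $P_\mathbf{1}$), while $\phi \le P\nu_1^\gamma$ pointwise because $P$ has non-negative entries and $\mathbf{1}_{\Bad(M)} \le 1$. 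The main obstacle is now to arrange a bound that matches the prescribed $\ell^8(\nu_1^\gamma)$ norm together with the weight $(\sum |P\nu_1^\gamma|^2/\nu_1^\gamma)^{1/4}$; this is achieved by a \emph{double} Cauchy--Schwarz:
\[
\sum_{b'}\phi|A|^2 \le \Bigl(\sum\phi\Bigr)^{1/2}\Bigl(\sum\phi|A|^4\Bigr)^{1/2}, \qquad \sum\phi|A|^4 \le \Bigl(\sum \tfrac{\phi^2}{\nu_1^\gamma}\Bigr)^{1/2}\Bigl(\sum\nu_1^\gamma|A|^8\Bigr)^{1/2},
\]
which together give $\sum_{b'}\phi|A|^2 \le (\nu_1^\gamma(\Bad(M)))^{1/2}(\sum\phi^2/\nu_1^\gamma)^{1/4}\|A\|_{\ell^8(\nu_1^\gamma)}^2$; the pointwise bound $\phi \le P\nu_1^\gamma$ then produces the factor $(\sum |P\nu_1^\gamma|^2/\nu_1^\gamma)^{1/4}$. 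For Proposition \ref{prop:ContractConstant} where $P = P_\mathbf{1}$, the uniform average satisfies $P_\mathbf{1}\nu_1^\gamma = 1/|B|$, whence $\sum_b |P_\mathbf{1}\nu_1^\gamma(b)|^2/\nu_1^\gamma(b) = |B|^{-2}\sum_b \nu_1^\gamma(b)^{-1} \asymp N^{-2}\sum_b \nu_1^\gamma(b)^{-1}$, recovering the weight in \eqref{eq:SecondBoundBad}.
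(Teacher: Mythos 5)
Your proof is correct and follows essentially the same route as the paper's: the splitting $|u-v|^2\le 2|u|^2+2|v|^2$ combined with the symmetry of $\Badp(M)$ (from self-adjointness of $\mathcal{S}_\gamma^*\mathcal{S}_\gamma$ in $\ell^2(\nu_1^\gamma)$) and of the weights, H\"older with exponents $(4,4,2)$ for the bad-pairs term, and Cauchy--Schwarz plus Jensen plus a second Cauchy--Schwarz for the $\Bad(M)$ term (your reordering --- Jensen on $|PA|^2$ and moving $P$ onto $\mathbf{1}_{\Bad(M)}\nu_1^\gamma$ first --- is an equivalent rearrangement of the paper's steps and yields the same bound). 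The one deviation is in the bad-pairs term: you discard the weight $(\mathcal{S}^*\mathcal{S})(b,b')\le 1$ and count pairs per fixed $b$ (at most $D$ for $\Badp(M)$, and a $D$-dependent constant for $\Badp_2(M)$), which introduces a multiplicative constant depending on $D$ that is absent from the stated inequality; the paper instead keeps the weight $n(b)=\sum_{b'}(\mathcal{S}^*\mathcal{S})(b,b')$ and uses $\sum_{b'}\big((\mathcal{S}^*\mathcal{S})(b,b')\big)^{4/3}\le 1$ to get $\sum_b n(b)^4\le\#\Badp(M)$ with no extra constant --- your weaker version is harmless for the subsequent application, where all such constants are absorbed into $O_{N\to+\infty,\gamma}(1)$.
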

\begin{proof}
Recall that $C_{N,M}(\cdot)$ is made of two terms: we will start by estimating 
\begin{multline*}
\frac{1}{2NM} \sum_{(b,b')\in \Badp(M)} \big{(}\mathcal{S}^*\mathcal{S}\big{)}(b,b') \big{|} \mathcal{S}_{u^\gamma}^t K(b)-\mathcal{S}_{u^\gamma}^t  K(b')\big{|}^2\\
\leq \frac{2}{NM} \sum_{(b,b')\in \Badp(M)} \big{(}\mathcal{S}^*\mathcal{S}\big{)}(b,b') \big{|} \mathcal{S}_{u^\gamma}^t K(b)\big{|}^2= \frac{2}{NM} \sum_{b\in B} n(b) \big{|} \mathcal{S}_{u^\gamma}^t K(b)\big{|}^2,
\end{multline*}
where
\[
n(b):= \sum_{\substack{b'\in B\\(b,b')\in \Badp(M)}} \left(\mathcal{S}^*\mathcal{S}\right)(b,b').
\]

Using the Cauchy-Schwarz inequality twice, we have

\begin{align*}
 \sum_{b\in B} n(b) \big{|}\mathcal{S}_{u^\gamma}^t  K(b)\big{|}^2 &=  \sum_{b\in B} \frac{n(b)}{\sqrt{\nu_1^\gamma(b)}} \sqrt{\nu_1^\gamma(b)} \big{|} \mathcal{S}_{u^\gamma}^t K(b)\big{|}^2 \\
 &\leq \bigg(\sum_{b\in B} n(b)^4\bigg)^{1/4} \bigg(\sum_{b\in B} \frac{1}{\nu_1^\gamma(b)^2}\bigg)^{1/4}  \bigg(\sum_{b\in B} \nu_1^\gamma(b) | \mathcal{S}_{u^\gamma}^t K(b)|^4\bigg)^{1/2}.
\end{align*}

Now, by H\"older's inequality,
\begin{align}\label{e:holi}
\sum_{b\in B} n(b)^4 &= \sum_{b\in B} \bigg( \sum_{b'\in B} \mathbf{1}_{(b,b')\in \Badp(M)} \big(\mathcal{S}^*\mathcal{S}\big{)}(b,b')\bigg)^4\nonumber\\
&\leq  \sum_{b\in B} \bigg( \sum_{b'\in B} \mathbf{1}_{(b,b')\in \Badp(M)}\bigg) \bigg(\sum_{b'\in B} \big{(}(\mathcal{S}^*\mathcal{S})(b,b')\big{)}^{4/3}\bigg)^3.
\end{align}
But $\sum_{b'\in B} \big{(}(\mathcal{S}^*\mathcal{S})(b,b')\big{)}^{4/3}\leq 1$, since $\mathcal{S}$ and $\mathcal{S}^*$ are stochastic. Therefore, we have
\[
\sum_{b\in B} n(b)^4\leq \# \Badp(M).
\]

On the other hand, we have
\[
\bigg(\sum_{b\in B} \nu_1^\gamma(b) | \mathcal{S}_{u^\gamma}^t K(b)|^4\bigg)^{1/2} = \| \mathcal{S}_{u^\gamma}^t K\|_{\ell^4(\nu_1^\gamma)}^2 \leq  \| K\|_{\ell^4(\nu_1^\gamma)}^2,
\]
by \eqref{eq:TrivialContractionBound}. Therefore, the first term making up $C_{N,M}(\mathcal{S}_{u^\gamma}^t K)$ is bounded by the first term in the right-hand side of \eqref{eq:FirstBoundBad}.

Let us now consider the second term in $C_{N,M}(\mathcal{S}_{u^\gamma}^t K)$, namely 
\begin{multline*}
\frac{1}{M^2} \sum_{b\in \Bad(M)} \nu_1^\gamma(b) \big{|}  \mathcal{S}_{u^\gamma}^t  K (b)- P_{F}  \mathcal{S}_{u^\gamma}^t K (b) \big{|}^2\\
\leq \frac{2}{M^2} \sum_{b\in \Bad(M)} \nu_1^\gamma(b) \big( | \mathcal{S}_{u^\gamma}^t  K (b)|^2+ | P_{F}  \mathcal{S}_{u^\gamma}^t K (b) |^2\big).
\end{multline*}
We have
\begin{align*}
\sum_{b\in \Bad(M)} \nu_1^\gamma(b)  \left| \mathcal{S}_{u^\gamma}^t  K(b)\right|^2 & \le \left( \nu_1^\gamma (\Bad(M))\right)^{1/2} \bigg(\sum_{b\in B} \nu_1^\gamma(b) \left| \mathcal{S}_{u^\gamma}^t  K (b)\right|^4\bigg)^{1/2}\\
&\leq \left(\nu_1^\gamma (\Bad(M)\right)^{1/2} \|K\|_{\ell^4(\nu_1^\gamma)}^2
\end{align*}
by \eqref{eq:TrivialContractionBound} for $p=4$. Similarly,
\begin{align*}
\sum_{b\in \Bad(M)} \nu_1^\gamma(b)  \left| P_F \mathcal{S}_{u^\gamma}^t  K (b)\right|^2 
&\leq \left( \nu_1^\gamma \Bad(M) \right)^{1/2} \bigg(\sum_{b\in B} \nu_1^\gamma(b) \left| P_F \mathcal{S}_{u^\gamma}^t  K (b)\right|^4\bigg)^{1/2}.
\end{align*}
Using H\"older's inequality and the fact that $P_F$ is stochastic as in \eqref{e:holi}, we have
\begin{align*}
\sum_{b\in B} \nu_1^\gamma(b) \left| P_F \mathcal{S}_{u^\gamma}^t  K (b)\right|^4 &\leq \sum_{b\in B} \nu_1^\gamma(b) \sum_{b'\in B} P_F(b,b') \left| \mathcal{S}_{u^\gamma}^t  K (b')\right|^4\\
&\leq \bigg(\sum_{b'\in B} \frac{|\left( P_F \nu_1^\gamma\right)(b')|^2}{\nu_1^\gamma(b')}\bigg)^{1/2} \bigg(\sum_{b\in B} \nu_1^\gamma(b) \left|  \mathcal{S}_{u^\gamma}^t  K(b)\right|^8\bigg)^{1/2}\\
&\leq  \bigg(\sum_{b'\in B} \frac{|\left( P_F \nu_1^\gamma\right)(b')|^2}{\nu_1^\gamma(b')}\bigg)^{1/2}  \|K\|_{\ell^8(\nu_1^\gamma)}^4.
\end{align*}

The very same proof gives \eqref{eq:SecondBoundBad}, by noting that $P_{\mathbf{1}}\nu_1^\gamma = \frac{1}{|B_N|}\mathbf{1} \le \frac{1}{N} \mathbf{1}$.
\end{proof}

\begin{cor}\label{cor:BadIsntTooBad}
Let $\cQ_N$ satisfy Hypotheses \emph{\BST{}}, \emph{\Data{}} and \emph{\Green{}}.
For any $s>0$ and any $M>1$, we have, for any $K_\gamma\in \mathscr{H}_1$ satisfying \eqref{eq:APrioriBoundOperators}
\[
\begin{aligned}
\sup\limits_{t\in \N} C_{N,M}\left(\mathcal{S}^t_{u^\gamma} K_\gamma\right) &= O_{N\to + \infty, \gamma}^{(s)}(1)M^{-s}\\
\sup\limits_{t\in \N} C'_{N,M}\left(\mathcal{S}^t_{u^\gamma} K_\gamma\right) &= O_{N\to + \infty, \gamma}^{(s)}(1)M^{-s} ,
\end{aligned}
\]
where the $O$'s above depend on $s$, but not on $M$.
\end{cor}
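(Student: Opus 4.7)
The plan is to apply Proposition \ref{prop:BoundingCNM} and to control separately each of the four ingredients it produces: the cardinalities $\#\Badp(M)$ and $\#\Badp_2(M)$; the mass $\nu_1^\gamma(\Bad(M))$; the auxiliary moment sums $\sum_b\nu_1^\gamma(b)^{-2}$ and $\sum_b|P_F\nu_1^\gamma(b)|^2/\nu_1^\gamma(b)$; and the weighted norms $\|\mathcal{S}_{u^\gamma}^tK_\gamma\|_{\ell^p(\nu_1^\gamma)}$. The master tool is Corollary~\ref{cor:ConfinedTilMay11}, which gives all positive and negative moment bounds for quantities in the class $\mathscr{L}_1^\gamma$, together with Markov's inequality. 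I expect the first and third bullets to be routine once the right moment identifications are made; the delicate point will be the cardinality bound on $\Badp(M)$.

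First, I would establish the master moment inequality: since $\mu_1^\gamma \in \mathscr{L}_1^\gamma$ and the class is stable under taking powers, Corollary~\ref{cor:ConfinedTilMay11} gives $\frac{1}{N}\sum_b \mu_1^\gamma(b)^s = O^{(s)}_{N\to\infty,\gamma}(1)$ for every real $s$, while Lemma~\ref{lem:ASW} together with the $\pm s$ moment bound for $\Im \hat R^\pm_\gamma$ (consequence of \Green{}) shows $\mu_1^\gamma(B_N)/N$ is bounded above and away from zero. Hence $\sum_b\nu_1^\gamma(b)^s = O(N^{1-s})$ for all $s\in\R$. In particular $\sum_b\nu_1^\gamma(b)^{-2}=O(N^3)$, and by Markov, for any $s>0$,
\[
\nu_1^\gamma(\Bad(M)) \le (N/M)^s \sum_b \nu_1^\gamma(b)^{s+1} = O(M^{-s}).
\]

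Second, to bound $\#\Badp(M)$, I would use Markov with the function $\nu_1^\gamma(b)\mathcal{Q}^\gamma(b,b')$ restricted to its support: using \eqref{eq:Qselfadj}, $\nu_1^\gamma(b)\mathcal{Q}^\gamma(b,b')=\sum_{c:t_c=o_b=o_{b'},\,c\ne\hat b,\hat b'}\nu_1^\gamma(c)A(c,b)A(c,b')$, where the weights $A(c,b)=\tfrac{|\zeta^\gamma(c)|^2}{\Im R_\gamma^+(o_c)}\Im R_\gamma^+(o_b)$ and $\nu_1^\gamma(c)$ all belong to the class $\mathscr{L}_k^\gamma$. When this sum is positive, at least one term is positive and $A(c,b)A(c,b')$ (resp.\ $\nu_1^\gamma(c)$) has all negative moments bounded uniformly by Corollary~\ref{cor:ConfinedTilMay11}; since the number of pairs $(b,b')$ with $o_b=o_{b'}$ is $O(N)$, this yields
\[
\#\Badp(M) \le (MN)^{-s}\!\!\!\!\sum_{(b,b'):\nu_1^\gamma(b)\mathcal{Q}^\gamma(b,b')>0}\!\!\!\!\bigl(\nu_1^\gamma(b)\mathcal{Q}^\gamma(b,b')\bigr)^{-s} = O(NM^{-s}),
\]
and analogously $\#\Badp_2(M)=O(NM^{-s})$ by expanding $\mathcal{Q}_2^\gamma=(\mathcal{S}_\gamma^*)^2\mathcal{S}_\gamma^2$.

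Third, for the weighted norms, by \eqref{eq:TrivialContractionBound} we have $\|\mathcal{S}_{u^\gamma}^tK_\gamma\|_{\ell^p(\nu_1^\gamma)}\le\|K_\gamma\|_{\ell^p(\nu_1^\gamma)}$ uniformly in $t$, and by Cauchy--Schwarz together with \eqref{eq:APrioriBoundOperators},
\[
\|K_\gamma\|_{\ell^p(\nu_1^\gamma)}^p \le \Bigl(\sum_b\nu_1^\gamma(b)^2\Bigr)^{1/2}\Bigl(\sum_b|K_\gamma(b)|^{2p}\Bigr)^{1/2} = O(N^{-1/2})\,O(N^{1/2}) = O(1).
\]
For the $P_F$ sum, the convexity of $x\mapsto x^4$ and the fact that $P_F$ preserves the $\ell^1$ sum give $\sum_b(P_F\nu_1^\gamma(b))^4\le\sum_b\nu_1^\gamma(b)^4=O(N^{-3})$, so Cauchy--Schwarz yields
\[
\sum_b\frac{|P_F\nu_1^\gamma(b)|^2}{\nu_1^\gamma(b)} \le \Bigl(\sum_b(P_F\nu_1^\gamma(b))^4\Bigr)^{1/2}\Bigl(\sum_b\nu_1^\gamma(b)^{-2}\Bigr)^{1/2} = O(1),
\]
while $\sum_b|P_{\mathbf{1}}\nu_1^\gamma(b)|^2/\nu_1^\gamma(b)/N^2=O(N^{-2})\sum_b\nu_1^\gamma(b)^{-1}=O(1)$ gives the analogous bound for \eqref{eq:SecondBoundBad}.

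Finally, plugging these estimates into \eqref{eq:FirstBoundBad}, the first summand is bounded by $\tfrac{1}{NM}(NM^{-s})^{1/4}(N^3)^{1/4}O(1)=O(M^{-1-s/4})$, and the second by $\tfrac{1}{M^2}M^{-s/2}[O(1)+O(1)\,O(1)]=O(M^{-2-s/2})$; both are $O^{(s')}(M^{-s'})$ for arbitrary $s'>0$ by choosing $s$ large. The same argument applied to \eqref{eq:SecondBoundBad} gives the bound for $C'_{N,M}$. The hardest step is step~two, because it requires identifying the sum-of-products expression $\nu_1^\gamma(b)\mathcal{Q}^\gamma(b,b')$ as an object to which the negative-moment machinery of Corollary~\ref{cor:ConfinedTilMay11} applies, and doing so uniformly in the choice of common antecedent~$c$.
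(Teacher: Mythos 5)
Your proposal is correct and follows essentially the same route as the paper: apply Proposition~\ref{prop:BoundingCNM}, control the moment sums $\sum_b\nu_1^\gamma(b)^{\pm\alpha}$, $\|K_\gamma\|_{\ell^p(\nu_1^\gamma)}$ and the $P_F$ term via Remark~\ref{rem:MeasureNuIsLessScaryThanCoronavirus} and \eqref{eq:APrioriBoundOperators}, and bound $\nu_1^\gamma(\Bad(M))$, $\#\Badp(M)$, $\#\Badp_2(M)$ by Markov's inequality applied to negative powers whose averages are controlled by Corollary~\ref{cor:ConfinedTilMay11} (this last step is exactly the paper's Lemma~\ref{lem:moreerrorz}). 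The minor cosmetic differences (your Jensen bound $\sum_b(P_F\nu_1^\gamma)^4\le\sum_b(\nu_1^\gamma)^4$ versus the paper's direct Cauchy--Schwarz on $|P_F\nu_1^\gamma|^2/\nu_1^\gamma$, and your explicit expansion of $\nu_1^\gamma(b)\mathcal{Q}^\gamma(b,b')$ as a sum over common antecedents) do not change the argument.
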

\begin{proof}
We have to bound the different quantities appearing in the right-hand side of \eqref{eq:FirstBoundBad} and \eqref{eq:SecondBoundBad}.

Thanks to \eqref{eq:APrioriBoundOperators} and to Remark \ref{rem:MeasureNuIsLessScaryThanCoronavirus}, we know that
\[
\|K_\gamma\|_{\ell^\alpha(\nu_1^\gamma)} = \bigg(\frac{N}{\sum_b\mu_1^{\gamma}(b)}\bigg)^{1/\alpha}\bigg(\frac{1}{N}\sum_b\mu_1^{\gamma}(b)|K_{\gamma}(b)|^\alpha\bigg)^{1/\alpha} = O_{N\to + \infty, \gamma}(1).
\]

Next, by Remark~\ref{rem:MeasureNuIsLessScaryThanCoronavirus},
\[
\frac{1}{N^{\alpha+1}}\sum_{b\in B} \frac{1}{(\nu_1^\gamma(b))^\alpha} = \bigg(\frac{\sum_{b\in B} \mu_1^\gamma(b)}{N}\bigg)^\alpha \times \frac{1}{N} \sum_{b\in B} \frac{1}{(\mu_1^\gamma(b))^\alpha}=O_{N\to +\infty, \gamma}(1).
\]
 
Similarly,
\begin{align*}
\sum_{b\in B} \frac{|\left( P_F \nu_1^\gamma\right)(b)|^2}{\nu_1^\gamma(b)}&= \sum_{b\in B} \frac{1}{\nu_1^\gamma(b)} \frac{1}{d^2(o_b)} \bigg(\sum_{o_{b'}=o_b} \nu_1^\gamma(b')\bigg)^2\\
&\leq \sum_{b\in B} \frac{1}{\nu_1^\gamma(b)} \sum_{o_{b'}=o_b} \nu_1^\gamma(b')^2\leq D \bigg(\sum_{b\in B} \frac{1}{\nu_1^\gamma(b)^2}\bigg)^{1/2} \bigg(\sum_{b\in B} \nu_1^\gamma(b)^4 \bigg)^{1/2}\\
&= D \frac{N}{\sum_{b\in B} \mu_1^\gamma(b)} \bigg(\frac{1}{N}\sum_{b\in B} \frac{1}{\mu_1^\gamma(b)^2}\bigg)^{1/2} \bigg(\frac{1}{N}\sum_{b\in B} \mu_1^\gamma(b)^4 \bigg)^{1/2}.
\end{align*}

By Remark \ref{rem:MeasureNuIsLessScaryThanCoronavirus}, each factor is a $O_{N\to +\infty, \gamma}(1)$.

Recalling \eqref{eq:FirstBoundBad}, we have obtained that 
\[
C_{N,M}(\mathcal{S}_{u^{\gamma}}^t K_\gamma)\leq \bigg[\frac{1}{M}\bigg(\frac{\# \Badp(M)}{N} \bigg)^{1/4}+  \frac{1}{M^2} \left(\nu_1^\gamma (\Bad(M))\right)^{1/2}\bigg] O_{N\to +\infty, \gamma}(1)
\]
and we may get a similar bound for $C'_{N,M}(\mathcal{S}_{u^{\gamma}}^t K_\gamma)$. Therefore, the result follows from the following lemma.
\end{proof}

\begin{lem}\label{lem:moreerrorz}
Let $\cQ_N$ be a sequence of quantum graphs satisfying Hypotheses \emph{\BST{}}, \emph{\Data{}} and \emph{\Green{}}.
For any $s>1$, we have
\begin{align*}
\nu_1^\gamma (\Bad(M))&= O_{N\to + \infty, \gamma}^{(s)}(1)M^{-s}\\
\frac{\# \Badp(M)}{N} = O_{N\to + \infty, \gamma}^{(s)}(1) M^{-s},& \qquad \frac{\# \Badp_2(M)}{N} = O_{N\to + \infty, \gamma}^{(s)}(1) M^{-s}.
\end{align*}
\end{lem}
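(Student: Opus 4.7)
The plan is to prove all three bounds via Markov's (Chebyshev's) inequality, exploiting that each relevant quantity has finite moments of arbitrary order thanks to hypothesis \Green{} and its consequences (Corollary~\ref{cor:ConfinedTilMay11} and Remark~\ref{rem:MeasureNuIsLessScaryThanCoronavirus}). For $\nu_1^\gamma(\Bad(M))$ this is almost immediate: since $\Bad(M) = \{b : N\nu_1^\gamma(b) > M\}$, Markov yields
\[
\nu_1^\gamma(\Bad(M)) = \sum_b \nu_1^\gamma(b)\,\mathbf{1}[N\nu_1^\gamma(b) > M] \le \frac{N^s}{M^s}\sum_b \nu_1^\gamma(b)^{s+1}.
\]
Writing $\nu_1^\gamma = \mu_1^\gamma/\mu_1^\gamma(\mathrm{B}_1)$ and noting that $\mu_1^\gamma(\mathrm{B}_1) \asymp N$ (the upper bound comes from \eqref{eq:Itworksforpowersaswell}; the lower bound from Cauchy--Schwarz combined with the bound on $\sum_b \mu_1^\gamma(b)^{-1}$), the right-hand side becomes $O(M^{-s}) \cdot \frac{1}{N}\sum_b \mu_1^\gamma(b)^{s+1} = O(M^{-s})$ by Remark~\ref{rem:MeasureNuIsLessScaryThanCoronavirus}.

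For $\#\Badp(M)/N$, the detailed balance relation $\nu_1^\gamma(b)\cS_\gamma^*(b,c) = \nu_1^\gamma(c)\cS_\gamma(c,b)$ produces the positive expansion
\[
X(b,b') := \nu_1^\gamma(b)(\cS_\gamma^*\cS_\gamma)(b,b') = \sum_{c \in \cN_b^- \cap \cN_{b'}^-} \nu_1^\gamma(c)\,\cS_\gamma(c,b)\,\cS_\gamma(c,b') =: \sum_c T(c,b,b'),
\]
where each $T(c,b,b')$ is an explicit product of $\mu_1^\gamma(c)$, $|\zeta^\gamma(c)|^2$ and values of $\Im R_\gamma^\pm$ at $o_c, t_c$, divided by $\mu_1^\gamma(\mathrm{B}_1)$. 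When $(b,b') \in \Badp(M)$, positivity of every term lets us pick one $c_0$ in the sum and bound $X^{-s} \le T(c_0,b,b')^{-s}$, so
\[
\#\Badp(M) \le (MN)^{-s}\sum_{(b,b') : X > 0} X^{-s} \le (MN)^{-s}\sum_{(c,b,b')} T(c,b,b')^{-s},
\]
the last sum running over triples with $t_c = o_b = o_{b'}$ and $c \neq \hat b, \hat b'$ (of which there are $O(N)$ in view of \Data{}). Expanding $T^{-s}$ produces the factor $\mu_1^\gamma(\mathrm{B}_1)^s = O(N^s)$ times a product of powers of $\mu_1^\gamma(c)$, $|\zeta^\gamma(c)|$, $\Im R_\gamma^\pm$; Hölder's inequality and the moment bounds of Corollary~\ref{cor:ConfinedTilMay11} bound the remaining sum by $O(N)$, giving $\#\Badp(M)/N = O(M^{-s})$.

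The bound on $\#\Badp_2(M)/N$ follows along the same lines with $(\cS_\gamma^*)^2\cS_\gamma^2$ in place of $\cS_\gamma^*\cS_\gamma$: detailed balance expresses $\nu_1^\gamma(b)((\cS_\gamma^*)^2\cS_\gamma^2)(b,b')$ as a positive sum over $5$-tuples $(b,c,c_1,c_2,b')$ tracing a non-backtracking configuration $b \leftarrow c \leftarrow c_1 \to c_2 \to b'$. Selecting one such tuple per bad pair and applying the Markov-type bound as above reduces the problem to controlling a sum over admissible $5$-tuples of a product of the same moment factors, which is $O(N^{s+1})$ by Hölder and the multi-edge version of Corollary~\ref{cor:ConfinedTilMay11}.

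The main obstacle is the bookkeeping: one must carry out the positive expansion of $(\cS_\gamma^*)^k\cS_\gamma^k$ via detailed balance, and verify that every factor arising in $T^{-s}$, after a suitable Hölder splitting, belongs to a class of observables whose normalized sums over edges (or $k$-tuples of edges) are controlled by \Green{} and Benjamini--Schramm convergence. Once this is set up, the Markov-type argument itself is entirely routine.
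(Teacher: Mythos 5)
Your proposal is correct and follows essentially the same route as the paper: Markov's inequality applied to $\nu_1^\gamma(b)$, to $\big[(\cS_\gamma^*\cS_\gamma)(b,b')\nu_1^\gamma(b)\big]^{-1}$ and to its two-step analogue, combined with the normalization $\mu_1^\gamma(\mathrm{B}_1)\asymp N$ and the arbitrary-moment bounds coming from \Green{} via Remark~\ref{rem:MeasureNuIsLessScaryThanCoronavirus} and Corollary~\ref{cor:ConfinedTilMay11}. The only difference is cosmetic: you make explicit the positive expansion of $(\cS_\gamma^*)^k\cS_\gamma^k$ and the selection of a single term to invert, which the paper compresses into the assertion that the resulting summand satisfies \Hol{}.
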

\begin{proof}
We have $\nu_1^\gamma (\Bad(M)) =\nu_1^\gamma \big( \big{\{} b\in B\mid \nu_1^\gamma(b) > \frac{M}{N} \big{\}} \big)$, so by Markov's inequality,
\[
\nu_1^\gamma (\Bad(M)) \leq \frac{N^s}{M^s} \sum_{b\in B} \nu_1^\gamma(b)^{s+1} = M^{-s} \bigg(\frac{N}{\mu_1^\gamma(B)}\bigg)^{s+1} \frac{1}{N}  \sum_{b\in B} \mu_1^\gamma(b)^{s+1}.
\]

By Remark \ref{rem:MeasureNuIsLessScaryThanCoronavirus}, this quantity is a $O_{N\to + \infty, \gamma}(1) M^{-s}$.

For the second estimate, recall that
\[
\Badp(M)\subset \Big{\{} (b,b')\in B^2,o_{b'}=o_b\mid \left((\mathcal{S}_\gamma^*\mathcal{S}_\gamma)(b,b') \nu_1^\gamma(b)\right)^{-1} > MN \Big{\}}.
\]

Therefore, by Markov inequality, we have
\begin{align*}
\# \Badp(M) &\leq \frac{1}{(MN)^s} \sum_{\substack{b,b'\in B\\o_b=o_{b'}}} \left[(\mathcal{S}_\gamma^*\mathcal{S}_\gamma)(b,b') \nu_1^\gamma(b)\right]^{-s} \\
&= \frac{1}{M^s} \bigg(\frac{\mu_1^\gamma(B)}{N}\bigg)^s \sum_{b\in B}\sum_{b',o_{b'}=o_b} \left[(\mathcal{S}_\gamma^*\mathcal{S}_\gamma)(b,b') \mu_1^\gamma(b)\right]^{-s}.
\end{align*}

Using Remark \ref{rem:HolStable}, we see that $F^\gamma(b)=\sum_{o_{b'}=o_b}[(\mathcal{S}_\gamma^*\mathcal{S}_\gamma)(b,b') \mu_1^\gamma(b)]^{-s}$ satisfies \Hol{}. Combining this fact with Remark \ref{rem:MeasureNuIsLessScaryThanCoronavirus}, we deduce that $\frac{\# \Badp(M)}{N} = O_{N\to + \infty, \gamma}^{(s)}(1) M^{-s}$.

For the last estimate we note that if $(\cS_{\gamma}^{\ast\,2}\cS_{\gamma}^2)(b,b')>0$, then there exist $b'', b_1, b'_1$ such that $b''\rightsquigarrow b_1 \rightsquigarrow b$ and $b''\rightsquigarrow b_1'\rightsquigarrow b'$, where $b_1\rightsquigarrow b_2$ means $t_{b_1}=o_{b_2}$. So now $F^\gamma(b)$ sums over $t_{b'}$ having the same grandparent as $t_b$ instead of same parent; we conclude as before.
\end{proof}

\begin{rem}\label{rem:WhenWeHaveGoodBounds}
If $K_\gamma\in \mathscr{H}_1$ satisfies $|K_\gamma(b)|\leq 1$ for all $N\in \N$, all $b\in B_N$ and all $\gamma\in I\times (0,1)$, then $\|K_\gamma\|_{\ell^p(\nu_1^\gamma)}\le \|K_\gamma\|_{\infty}\nu_1^\gamma(B)^{1/p}\le 1$, so
\begin{align*}
C'_{N,M}(K_\gamma) \leq C_{N,M}^0 &:= \frac{2}{NM} \left( \# \Badp_2(M) \right)^{1/4} \bigg(\sum_{b\in B} \frac{1}{\nu_1^\gamma(b)^2}\bigg)^{1/4} \\
&\quad + \frac{2}{M^2} \left(\nu_1^\gamma (\Bad(M))\right)^{1/2} \bigg[1  + \bigg(\frac{1}{N^2}\sum_{b\in B} \frac{1}{\nu_1^\gamma(b)}\bigg)^{1/4}  \bigg].
\end{align*}
The proof of Corollary \ref{cor:BadIsntTooBad} tells us that $C_{N,M}^0 =  O_{N\to + \infty, \gamma}^{(s)}(1) M^{-s}$.
\end{rem}

\subsection{Contraction properties of \texorpdfstring{$\mathcal{S}_{u^\gamma}$}{Su}}

In the previous subsections we showed that $\cS_{\gamma},\cS_{\gamma}^2$ are contractions on proper subspaces. The following proposition says that $\mathcal{S}_{u^\gamma}^4$ is contracting on the full space, unless the phases $u^\gamma$ satisfy very special relations. This improvement is reminiscent of Wielandt's theorem (see for instance \cite[Chapter 8]{Meyer}), saying that adding phases to a matrix with positive entries will strictly diminish its spectral radius unless the phases satisfy very special conditions.

Wielandt's theorem is insufficient for us as we need more precise information, namely a contraction on the norm of $\cS_{u^{\gamma}}^4$ instead of its spectral radius, which should moreover be uniform as the graph becomes large and $\gamma$ approaches the real axis. This will require a careful analysis of the various operators.

Let us write 
\begin{equation*}
\begin{aligned}
\omega^\gamma(b)& := -1+ \frac{|\zeta^\gamma(b)|^2}{\Im R_\gamma^+(o_b)} \sum_{b'\in \mathcal{N}_b^+} \Im R_\gamma^+(o_{b'})\\
&\ = \frac{-\Im \gamma}{\Im R_{\gamma}^+(o_b)} \int_{0}^{L_b} |\xi_+^{\gamma}(x_b)|^2\,\dd x_b,
\end{aligned}
\end{equation*}
where the second equality comes from \eqref{e:cur1}.

From the definition of $\xi_+^\gamma(x_b)$ and from Remark \ref{rem:HolStable}, one sees that $b\mapsto \int_0^{L_b} |\xi_+^\gamma(x_b)|^2 \mathrm{d}x_b$ satisfies \Hol{}, so that
\begin{equation}\label{eq:EstimateOnXi}
\|\omega^{\gamma}\|_\nu = O_{N\to +\infty, \gamma}(\Im \gamma).
\end{equation}

\begin{prp}\label{prop:ContractPhases}
Let $\gamma \in \C^+$, $M>1$, $\varepsilon\in (0,1/2)$, and let $\cQ$ be a quantum graph in the family $(\cQ_N)$. Then one of the following two options holds:
\begin{enumerate}[\rm (i)]
\item Either we have, for any $K_\gamma\in \mathscr{H}_1$ satisfying \eqref{eq:APrioriBoundOperators},
\begin{equation}\label{eq:CaseContract}
\| \mathcal{S}_{u^\gamma}^4 K_\gamma \|^2_\nu \leq (1-\varepsilon)^2 \|K_\gamma\|_\nu^2 + \tilde{C}_{N,M}(K_\gamma),
\end{equation}
where $\tilde{C}_{N,M}(K_\gamma)= \max_{j=0,1,2} \big\{ C_{N,M}(\cS_{u^\gamma}^jK_\gamma), C_{N,M}'(\cS_{u^\gamma}^jK_\gamma) \big\}$;
\item or there exists $\theta : V\longrightarrow \R$ and constants $c_0 \in \C$ with $|c_0|\leq 1$, and $\mathcal{C}(\beta,D)$ depending only on the spectral gap $\beta$ and the maximal degree $D$ of the graph such that
\begin{equation*}
\Big{\|} u^\gamma(b) - c_0 \ee^{\ii\, [\theta(o_b)-\theta(t_b)]} \Big{\|}_{\nu}^2\leq \mathcal{C}(\beta, D) M^3 \Big{[} \varepsilon^{1/2} + \|\omega^\gamma \|_{\nu} + \|\omega^\gamma \|_{\nu}^2 \Big{]} + \mathcal{C}(\beta, D) M^2 C^0_{N,M},
\end{equation*}
with $C_{N,M}^0$ as in Remark \ref{rem:WhenWeHaveGoodBounds}.
\end{enumerate}
\end{prp}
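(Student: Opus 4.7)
The plan is to argue the dichotomy by contraposition. Suppose (i) fails: there exists $K\in \mathscr{H}_1$ satisfying \eqref{eq:APrioriBoundOperators} with $\|\mathcal{S}_{u^\gamma}^4 K\|_\nu^2 > (1-\varepsilon)^2\|K\|_\nu^2 + \tilde{C}_{N,M}(K)$; since the failing inequality is quasi-homogeneous of degree two in $K$, we normalize $\|K\|_\nu = 1$. Writing $K_j := \mathcal{S}_{u^\gamma}^j K$ and using the trivial contraction \eqref{eq:TrivialContractionBound}, we have $1 - \varepsilon \leq \|K_j\|_\nu \leq 1$ for $j = 0,\dots,4$. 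Since $M_{u^\gamma}$ is an $\ell^2(\nu_1^\gamma)$-isometry, $\|\mathcal{S}_\gamma K_j\|_\nu = \|K_{j+1}\|_\nu$, so applying \eqref{eq:BoundProjFPerp} to each of $K_0, K_1, K_2$ and telescoping yields
\[
\sum_{j=0}^{2} \|P_{F^\perp,\nu} K_j\|_\nu^2 \leq \tfrac{4M^2}{3}\bigl(1 - (1-\varepsilon)^2\bigr) + 4M^2 \tilde{C}_{N,M}(K) \lesssim M^2\bigl(\varepsilon + \tilde{C}_{N,M}(K)\bigr).
\]
Consequently each $K_j$ ($j=0,1,2$) is within $\delta := O\bigl(M\sqrt{\varepsilon + \tilde{C}_{N,M}(K)}\,\bigr)$ in $\|\cdot\|_\nu$ of $k_j \circ o := P_{F,\nu} K_j$, for some $k_j : V \to \mathbb{C}$.

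The second step relies on the key algebraic identity, an immediate consequence of the ``current'' relation \eqref{e:cur1}:
\[
\mathcal{S}_\gamma(k\circ o)(b) = \bigl(1 + \omega^\gamma(b)\bigr)\, k(t_b) \quad\text{for every } k:V \to \mathbb{C}.
\]
Combining this with $K_j = k_j\circ o + r_j$ (where $\|r_j\|_\nu \leq \delta$) yields the pointwise recursion
\[
k_{j+1}(o_b) = u^\gamma(b)\bigl(1 + \omega^\gamma(b)\bigr)\,k_j(t_b) + \bigl(\mathcal{S}_{u^\gamma}r_j - r_{j+1}\bigr)(b),
\]
whose error has $\|\cdot\|_\nu$-norm $\leq 2\delta$. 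Taking moduli and using $\bigl||1+\omega^\gamma|-1\bigr| \leq |\omega^\gamma|$, I would then average over outgoing edges at each vertex to deduce the Markov-type approximation $|k_{j+1}|^2 \approx P|k_j|^2$ in a vertex-weighted $L^2$-norm (with $P = \mathcal{A}/d$), iterate it for $j=0,1$, and invoke \EXP{}---which provides both the spectral gap of $P$ at $+1$ and the exclusion of bipartiteness via the gap from $-1$---to force $|k_0|,|k_1|,|k_2|$ all to lie close to a common constant $\rho \approx 1$ on a large-measure subset of $V_N$, with error controlled by $\delta + \|\omega^\gamma\|_\nu$ times a factor depending on $\beta$ and $M$. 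On the ``good'' set I write $k_j = \rho\, e^{i\theta_j}$, so the recursion becomes
\[
u^\gamma(b) \approx (1+\omega^\gamma(b))^{-1}\, e^{i\bigl(\theta_{j+1}(o_b) - \theta_j(t_b)\bigr)}, \qquad j=0,1.
\]

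To finish, I would compare the $j=0$ and $j=1$ relations. The sum $\xi := (\theta_1-\theta_0) + (\theta_2-\theta_1)$ satisfies $\xi(o_b) \approx \xi(t_b)$ for all $b$, while the difference $\eta := (\theta_1-\theta_0) - (\theta_2-\theta_1)$ satisfies $\eta(o_b) + \eta(t_b) \approx 0$. By \EXP{}, the first forces $\xi \approx \mathrm{const}$ (via the spectral gap of $P$ at $+1$) and the second forces $\eta \approx 0$ (via the bounded invertibility of $I+P$, guaranteed by the gap of $P$ from $-1$); hence $\theta_1 - \theta_0$ is close to a constant $\alpha$ on $V_N$. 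Setting $\theta := \theta_0$, $c_0 := e^{i\alpha}$, and expanding $(1+\omega^\gamma)^{-1} = 1 - \omega^\gamma + O(|\omega^\gamma|^2)$ in the $j=0$ recursion gives
\[
u^\gamma(b) - c_0\, e^{i\bigl(\theta(o_b) - \theta(t_b)\bigr)} = O(|\omega^\gamma(b)|) + O(|\omega^\gamma(b)|^2) + (\text{pointwise approximation errors}),
\]
and squaring/summing in $\ell^2(\nu_1^\gamma)$ produces the bound in (ii): the cubic factor $M^3$ arises from compounding the $M^2$ of \eqref{eq:BoundProjFPerp} with an extra factor of $M$ coming from inverting $I \pm P$ on the ``bad'' set $\Bad(M)$, and the replacement of $\tilde{C}_{N,M}(K)$ by the $K$-independent $C_{N,M}^0$ of Remark~\ref{rem:WhenWeHaveGoodBounds} exploits the fact that $|c_0\, e^{i(\theta(o)-\theta(t))}| = 1$, which allows applying the remark uniformly in $K$ at the very last step.

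The main obstacle will be the quantitative passage from the $L^2$-closeness $|k_j|\approx \rho$ to the pointwise phase description $k_j \approx \rho\,e^{i\theta_j}$: on the vertices where $|k_j|$ may still vanish, $\theta_j$ is ill-defined, so one has to truncate (setting $\theta_j$ arbitrarily on the bad set) while carefully keeping the $\ell^2(\nu_1^\gamma)$-errors under control downstream. A closely related subtlety is propagating the spectral-gap estimate through the two-step Markov recursion $|k_{j+1}|^2 \approx P|k_j|^2$ with errors that accumulate via a Poincar\'e/Neumann-series argument rather than a simple contraction, and the ``bad set'' analysis of Remark~\ref{rem:WhenWeHaveGoodBounds} and Corollary~\ref{cor:BadIsntTooBad} must be used in tandem to absorb the $K$-dependent errors into the uniform bound $C_{N,M}^0$, yielding the explicit cubic exponent in $M$ announced in (ii).
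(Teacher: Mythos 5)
Your overall architecture matches the paper's: argue by contraposition, use the near-preservation of $\|\cS_{u^\gamma}^jK\|_\nu$ together with \eqref{eq:BoundProjFPerp} to place each $\cS_{u^\gamma}^jK$ close to $F$, exploit the simple action of $\cS_\gamma$ on $F$ (the relation $\cS_\gamma(k\circ o)(b)=(1+\omega^\gamma(b))k(t_b)$) to get the recursion for $u^\gamma$, and compare the $j=0$ and $j=1$ relations to extract $c_0$ and $\theta$. Your $\xi/\eta$ symmetrization in the last step is a legitimate variant of the paper's Step 3 (which instead shows $\|P_{\mathbf 1^\perp,\nu}h_0\|_\nu$ is small for $h_0=g_1/g_0$ via \eqref{eq:BoundProjConstPerp}). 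However, there is a genuine gap in the middle of your argument. To write $k_j=\rho\,e^{\ii\theta_j}$ you must first show that $|k_j|$ is close to a \emph{constant}, and the mechanism you propose for this --- the Markov-type relation $|k_{j+1}|^2\approx P|k_j|^2$ combined with \EXP{} --- does not deliver it. A Markov operator conserves mass: if the vertex weight were the stationary measure of $P$, the constraints $\phi_{j+1}\approx P\phi_j$ and $\langle w,\phi_j\rangle\approx 1$ would be satisfied by \emph{any} nonnegative $\phi_0$ of the right mass, so no rigidity follows. What is actually needed is an $\ell^2$ statement of the form ``$\|\cS_\gamma^2|K_j|\|_\nu\approx\||K_j|\|_\nu$ forces $\|P_{\mathbf 1^\perp,\nu}|K_j|\|_\nu$ small,'' and the input making it available is the pointwise domination $|\cS_{u^\gamma}^{j+2}K|\le\cS_\gamma^2|\cS_{u^\gamma}^jK|$, which converts the hypothesis $\|\cS_{u^\gamma}^4K\|_\nu^2>(1-\varepsilon)^2\|K\|_\nu^2+\tilde C_{N,M}$ into exactly such a two-sided norm estimate for the \emph{modulus}; one then applies Proposition~\ref{prop:ContractConstant}, whose engine is the non-backtracking spectral gap \eqref{eq:GapTransfer}, not the spectral gap of $P$ alone. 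This step is absent from your proposal and cannot be replaced by the mass-conservation argument you describe.

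The second issue is the one you flag yourself but leave unresolved: the polar decomposition $k_j=\rho\,e^{\ii\theta_j}$ is ill-defined where $k_j$ vanishes, and truncating to a ``good set'' threatens to contaminate all downstream $\ell^2(\nu)$ estimates (the phases $\theta_j$ enter multiplicatively, so errors on the bad set are not obviously absorbable). The paper's device for this is Lemma~\ref{lem:Kandf}: once one knows both $\|P_{F^\perp,\nu}K_j\|_\nu$ and $\|P_{\mathbf 1^\perp,\nu}|K_j|\|_\nu$ are small, one replaces $K_j$ by $g_j:=\|K_j\|_\nu\,f_j/|f_j|$ with $f_j=P_{F,\nu}K_j$, a function in $F$ of \emph{constant, nonvanishing} modulus, and all subsequent divisions ($r_j=\frac1{g_j}(\cdots)$, $h_i=g_{j+1}/g_j$) are performed on the $g_j$ rather than on $K_j$ or $k_j$. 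With that substitution your phase bookkeeping in the final step goes through (modulo the usual care with $2\pi$-lifts, which the paper avoids by working with the ratios $h_i$ directly rather than with arguments $\theta_j$). So the proposal captures the heuristic of the paper's \S\ref{sec:Heuristics} correctly, but the two quantitative pillars --- Proposition~\ref{prop:ContractConstant} applied to $|\cS_{u^\gamma}^jK|$ and Lemma~\ref{lem:Kandf} --- are missing, and the first of these is not recoverable from the argument you sketch.
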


In the sequel, we will write
\begin{equation}\label{eq:defC0}
\cC_0= \cC_0(M, \varepsilon, \gamma)=\Big{(} \mathcal{C}(\beta, D) M^3 \Big{[} \varepsilon^{1/2} + \|\omega^\gamma \|_{\nu} + \|\omega^\gamma \|_{\nu}^2 \Big{]} + \mathcal{C}(\beta, D) M^2 C^0_{N,M}\Big{)}^{1/2}.
\end{equation}

If we are in case (i) of the previous proposition, we may iterate \eqref{eq:CaseContract} to obtain the following bound

\begin{cor}\label{cor:ContractPhases2}
Let $\gamma \in \C^+$, $M>1$, $\varepsilon\in (0,1/2)$, and let $\cQ$ be a quantum graph in the family $(\cQ_N)$. Suppose we are in case \emph{(i)} in Proposition~\ref{prop:ContractPhases}. Then for any $K_\gamma\in \mathscr{H}_1$ satisfying \eqref{eq:APrioriBoundOperators} and any $\ell\ge 1$, we have
\begin{equation*}
\| \mathcal{S}_{u^\gamma}^{4\ell} K_\gamma \|^2_\nu \leq (1-\varepsilon)^{2\ell} \|K_\gamma \|_\nu^2 + \ell \max_{0\leq j \leq \ell} \tilde{C}_{N,M}(\mathcal{S}_{u^\gamma}^{4j} K_\gamma).
\end{equation*}
\end{cor}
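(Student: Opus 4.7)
The proof is a straightforward induction on $\ell$, iterating the one-step estimate \eqref{eq:CaseContract}. The main work is to verify that at each step of the iteration, the function to which we apply \eqref{eq:CaseContract} still satisfies the hypothesis \eqref{eq:APrioriBoundOperators} required by case (i) of Proposition~\ref{prop:ContractPhases}; once this is in place, the geometric sum collapses to $\ell$ times the maximum of the error terms.

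More precisely, the plan is as follows. Suppose $K_\gamma \in \mathscr{H}_1$ satisfies \eqref{eq:APrioriBoundOperators}. Since $u^\gamma$ and $\zeta^\gamma$ satisfy \Hol{} (by Remark~\ref{rem:HolStable} together with the discussion in \S\ref{subsec:Univ}), the stability properties of \Hol{} under products and finite sums show that each iterate $\mathcal{S}_{u^\gamma}^{4j} K_\gamma$ again satisfies \eqref{eq:APrioriBoundOperators}, uniformly in $j$ (using also that $\|\mathcal{S}_{u^\gamma}\|_{\ell^p(\nu)\to\ell^p(\nu)}\le 1$ by \eqref{eq:TrivialContractionBound}, which controls the required $\ell^s$-sums). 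Therefore \eqref{eq:CaseContract} applies with $\mathcal{S}_{u^\gamma}^{4j} K_\gamma$ in place of $K_\gamma$ for every $0\le j\le \ell-1$.

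Applying \eqref{eq:CaseContract} to $\mathcal{S}_{u^\gamma}^{4(\ell-1)}K_\gamma$ yields
\begin{equation*}
\| \mathcal{S}_{u^\gamma}^{4\ell} K_\gamma \|^2_\nu \leq (1-\varepsilon)^2 \| \mathcal{S}_{u^\gamma}^{4(\ell-1)} K_\gamma \|^2_\nu + \tilde{C}_{N,M}\bigl(\mathcal{S}_{u^\gamma}^{4(\ell-1)} K_\gamma\bigr).
\end{equation*}
Iterating this inequality $\ell$ times, we obtain
\begin{equation*}
\| \mathcal{S}_{u^\gamma}^{4\ell} K_\gamma \|^2_\nu \leq (1-\varepsilon)^{2\ell} \| K_\gamma \|^2_\nu + \sum_{j=0}^{\ell-1} (1-\varepsilon)^{2j}\, \tilde{C}_{N,M}\bigl(\mathcal{S}_{u^\gamma}^{4(\ell-1-j)} K_\gamma\bigr).
\end{equation*}
Since $\varepsilon\in(0,1/2)$ gives $(1-\varepsilon)^{2j}\le 1$, the sum is bounded by $\ell\cdot\max_{0\le j\le \ell-1}\tilde{C}_{N,M}(\mathcal{S}_{u^\gamma}^{4j} K_\gamma)$, which is in turn bounded by $\ell\cdot\max_{0\le j\le \ell}\tilde{C}_{N,M}(\mathcal{S}_{u^\gamma}^{4j} K_\gamma)$, yielding the desired estimate.

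The main (minor) obstacle is the verification that \eqref{eq:APrioriBoundOperators} is preserved under iteration of $\mathcal{S}_{u^\gamma}$; this is essentially the observation that $\mathcal{S}_{u^\gamma}$ is built from factors satisfying \Hol{} and is a contraction on $\ell^s(\nu_1^\gamma)$ for every $s\ge 1$, so the uniform bounds on $\frac{1}{N}\sum_{b}|(\mathcal{S}_{u^\gamma}^{4j}K_\gamma)(b)|^s$ follow from those on $K_\gamma$ (after converting between the uniform measure and $\nu_1^\gamma$ via Remark~\ref{rem:MeasureNuIsLessScaryThanCoronavirus}). Once this is granted, the rest is purely arithmetic.
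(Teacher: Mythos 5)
Your proof is correct and is exactly the argument the paper intends: the paper gives no separate proof of Corollary~\ref{cor:ContractPhases2} beyond the remark that one ``may iterate \eqref{eq:CaseContract}'', and your iteration, the bound $(1-\varepsilon)^{2j}\le 1$ on the resulting geometric sum, and the observation that the iterates $\mathcal{S}_{u^\gamma}^{4j}K_\gamma$ stay in the admissible class (via \Hol{}-stability and the $\ell^p(\nu)$-contraction \eqref{eq:TrivialContractionBound}, as the paper itself uses in Proposition~\ref{prop:BoundingCNM} and Corollary~\ref{cor:BadIsntTooBad}) are precisely what is needed.
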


\subsubsection{Heuristics of the proof of Proposition \ref{prop:ContractPhases}}\label{sec:Heuristics}
Before proving the result, we would like to give a heuristics of why, in the second alternative, $u^\gamma$ must take this special form. To do this, we consider the case when $\gamma\in \R$ supposing that all the quantities we deal with are well-defined on the real axis, and we suppose for simplicity that there exists $K_\gamma$ such that $\|\mathcal{S}_{u^\gamma}^4 K_\gamma\|_\nu = \|K_\gamma\|_\nu$, which is the case in which (i) is the furthest from being satisfied.

By \eqref{eq:TrivialContractionBound}, we would have $\|\mathcal{S}_{u^\gamma}^j K_\gamma\|_\nu = \| K_\gamma\|_\nu$ for $j=1,2$. In particular, $\|\mathcal{S}_\gamma K_\gamma \|_\nu^2= \|K_\gamma\|_\nu^2$. By \eqref{eq:CriterionK}, this would imply that $K_\gamma\in F$. Similarly, we would have $\mathcal{S}_{u^\gamma} K_\gamma \in F$ and $\mathcal{S}_{u^\gamma}^2 K_\gamma\in F$. But, since $\gamma\in \R$ and $K_\gamma\in F$, we would have by \eqref{e:ASW} that $(\mathcal{S}_\gamma K_\gamma)(b) = K_\gamma (t_b)$, so that
\[
\left(\mathcal{S}_{u^\gamma} K_\gamma\right) (b) = \left(\mathcal{S}_{u^\gamma} K_\gamma\right)(o_b) = u^\gamma (b) K_\gamma (t_b),
\]
and, similarly,
\[
\left(\mathcal{S}^2_{u^\gamma} K_\gamma\right)(o_b) = u^\gamma (b) \left(\mathcal{S}_{u^\gamma} K_\gamma\right) (t_b).
\]
Since $|u^{\gamma}|=1$, we get in particular $\frac{|\cS_{u^{\gamma}}K_{\gamma}(o_b)|}{|K_{\gamma}(t_b)|}=\frac{|\cS_{u^{\gamma}}^2K_{\gamma}(o_b)|}{|\cS_{u^{\gamma}}K_{\gamma}(t_b)|} = 1$.

Writing $\frac{\left(\mathcal{S}_{u^\gamma} K_\gamma\right)(o_b)}{\left|\mathcal{S}_{u^\gamma} K_\gamma\right(o_b)|}=: \ee^{\ii\theta (o_b)}$, $\frac{\left(\mathcal{S}^2_{u^\gamma} K_\gamma\right)(o_b)}{\left|\mathcal{S}_{u^\gamma}^2 K_\gamma(o_b)\right|}=: \ee^{\ii\theta' (o_b)}$, $\frac{K_\gamma(t_b)}{ |K_\gamma (t_b)|}=: \ee^{\ii\theta''(t_b)}$, we obtain that
\[
u^\gamma(b)  = \ee^{\ii\theta(o_b) - \ii\theta''(t_b)} = \ee^{\ii\theta'(o_b)-\ii \theta(t_b)}.
\]
In particular, for all $b\in B$, we would have $\theta'(o_b)-\theta(o_b) = \theta(t_b)-\theta''(t_b)$. 

This quantity must be equal to a constant $c\in \R$, because the graph is not bipartite\footnote{If the graph is non-bipartite, it contains an odd cycle. As the quantity takes equal values for $b_1,b_2$ having same origin, and also for $b_1,b_2$ having same terminus, it follows that it must be constant on this cycle. From this, we readily see that it must be constant on the whole graph.} for $N$ large enough (since $(G_N)$ is expanding). Therefore, we would have
\begin{equation}\label{eq:BabyRelation}
u^\gamma(b) = \ee^{\ii c} \ee^{\ii \theta(o_b) - \ii \theta(t_b)}.
\end{equation}

This shows we are indeed in case (ii).

Let us continue these heuristics and show moreover that $\ee^{\ii c}\neq 1$ (this supplement to Proposition \ref{prop:ContractPhases} is the object of \S~\ref{sec:PhasePasUn}). This property will be essential in Section~\ref{Sec:End} to bound expressions of the form $\sum_j \mathcal{S}^j_{u^\gamma} K$.

 Suppose for contradiction that $u^\gamma(b) = \ee^{\ii \theta(o_b) - \ii \theta(t_b)}$. Writing $\zeta^\gamma(b)= \rho^\gamma(b) \ee^{\ii\varphi(b)}$, we deduce that $-2\varphi(b) = \theta(o_b) -  \theta(t_b)$. In particular, $\varphi(\hat{b}) = - \varphi(b) \mod \pi$.

By \eqref{e:zetainv}, we then have 
\begin{align*}
 \ee^{-\ii\varphi(b)} \Big[ \frac{1}{\rho^\gamma(b)} - \ee^{\ii(\varphi(b) + \varphi(\hat{b}))} \rho^\gamma(\hat{b})\Big] &= \frac{S_{\gamma}(L_b)}{\tilg^{\gamma}(t_b,t_b)}.
\end{align*}

Since the term between brackets is real and the phase of the right-hand side depends only on $t_b$ (as $S_{\gamma}$ is real), we deduce that $\ee^{-2\ii\varphi(b)}$ does not depend on $o_b$. Therefore, since $\ee^{2\ii\varphi(\hat{b})} = \ee^{-2\ii\varphi(b)}$, and since the graph is non bipartite, we deduce that $\ee^{-2\ii\varphi(b)}$ is a real constant. This constant must be one, otherwise we would have $\ee^{\ii\theta(o_b)}=-\ee^{\ii\theta(t_b)}$, contradicting non-bipartiteness. Therefore, $2\varphi(b) \equiv 0$. But this would imply that $\zeta^\gamma$ is always real, thus contradicting \Green{}, in view of \eqref{e:zetawt}.

\begin{rem}
One may wonder if \eqref{eq:CaseContract} always holds. The answer is no in general. For example take $G_N$ a family of expanders satisfying \BST, like the ones built in \cite{LuPhSa}. Consider quantum graphs on $G_N$ with all lengths equal, Kirchhoff boundary conditions, and zero potential.
This was the model studied in \cite{QEQGEQ}. 
Then all the oriented edges play exactly the same role, so $u^\gamma$ does not depend on $b$. In this case $\|\cS_{u^{\gamma}}^4K_{\gamma}\|_{\nu}^2 = \|\cS_{\gamma}^4K_{\gamma}\|_{\nu}^2$, which only contracts on subspaces, thus violating \eqref{eq:CaseContract}. Indeed, here we are in case (ii).
\end{rem}

\subsubsection{Proof of Proposition \ref{prop:ContractPhases}}

Let $K_\gamma\in \mathscr{H}_1$ satisfy \eqref{eq:APrioriBoundOperators}, and let $\gamma\in \C^+$. When we do not have $\|\mathcal{S}_{u^\gamma} K_\gamma \|_\nu^2 = \| K_\gamma\|_\nu^2$, but only $\|\mathcal{S}_{u^\gamma} K_\gamma \|_\nu^2 \approx \| K_\gamma\|_\nu^2$,  then we can still say that $\mathcal{S}_{u^\gamma}^j K_\gamma$ is close to being in $F$ for $j=0,1,2$. However, we cannot apply directly the previous argument, in which we divided by $|\mathcal{S}_{u^\gamma}^jK_\gamma(v)|$, since these could be very small, and could cause the different remainders to become large.

Therefore, we will have to show the stronger fact that $\mathcal{S}_{u^\gamma}^j K_\gamma$ is close to a function in $F$ of constant modulus. To this end, we will use several times the following lemma, which simply says that if $K$ is close to being in $F$ and is close to having constant modulus, then it is close to being a function in $F$ with constant modulus.

\begin{lem}\label{lem:Kandf}
Let us write $f_K:= P_{F,\nu} K$. We have
\begin{equation}
\label{eq:1}
\Big\| K - \|K\|_\nu \frac{f_K}{|f_K|}\Big{\|}_\nu \leq 2 \| K - f_K \|_\nu + 2 \| P_{\mathbf{1}^\perp,\nu}|K|\|_\nu,
\end{equation} 
with the convention that $\frac{f_K}{|f_K|}=1$ when $f_K$ vanishes.
\end{lem}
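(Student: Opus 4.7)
My plan is to prove the bound by a sequence of triangle inequalities, reducing everything to comparisons between real nonnegative scalar quantities: the pointwise modulus $|K|$, the average $\bar K := \langle |K|, \mathbf{1}\rangle_\nu$, and the norm $\|K\|_\nu$. The key observation is that with the stated convention $|f_K|^{-1}f_K$ has modulus at most $1$ pointwise, so ``putting the modulus out of the sign'' costs nothing.

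First I would write
\[
K - \|K\|_\nu \frac{f_K}{|f_K|} = (K - f_K) + \bigl(|f_K| - \|K\|_\nu\bigr)\frac{f_K}{|f_K|},
\]
and apply the triangle inequality in $\ell^2(\nu)$. Since $\bigl|\tfrac{f_K}{|f_K|}\bigr|\le 1$ pointwise, the second term is bounded by $\bigl\|\,|f_K| - \|K\|_\nu\bigr\|_\nu$, where $\|K\|_\nu$ is interpreted as the constant function. So the task reduces to bounding this real-valued norm by $\|K - f_K\|_\nu + 2\|P_{\mathbf{1}^\perp,\nu}|K|\|_\nu$.

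Next I would split this further via
\[
\bigl\|\,|f_K| - \|K\|_\nu\bigr\|_\nu \le \bigl\|\,|f_K| - |K|\bigr\|_\nu + \bigl\|\,|K| - \bar K\bigr\|_\nu + \bigl|\bar K - \|K\|_\nu\bigr|,
\]
where the third term uses $\|\mathbf{1}\|_\nu = 1$ since $\nu$ is a probability measure. The first piece is bounded by $\|f_K - K\|_\nu$ via the pointwise reverse triangle inequality, and the second piece is by definition $\|P_{\mathbf{1}^\perp,\nu}|K|\|_\nu$. The third piece requires a little more care: one notes the Pythagorean identity $\|K\|_\nu^2 = \bar K^2 + \|P_{\mathbf{1}^\perp,\nu}|K|\|_\nu^2$ (since $\|K\|_\nu = \|\,|K|\,\|_\nu$), from which $(\|K\|_\nu - \bar K)(\|K\|_\nu + \bar K) = \|P_{\mathbf{1}^\perp,\nu}|K|\|_\nu^2$, and using $\bar K\ge 0\le \|K\|_\nu$ one concludes $\bigl|\|K\|_\nu - \bar K\bigr|\le \|P_{\mathbf{1}^\perp,\nu}|K|\|_\nu$.

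Putting all pieces together yields the desired inequality with constants $2$ and $2$. There is no real obstacle in this argument — it is just careful bookkeeping — but the one subtle point that must not be skipped is the handling of the convention $f_K/|f_K| = 1$ when $f_K$ vanishes: without this convention the very first splitting would be ill-defined, and with it one must check that the pointwise bound $|f_K/|f_K||\le 1$ still holds (indeed it equals $1$ on that set, so the factor can safely be pulled out of the norm).
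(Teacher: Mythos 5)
Your proof is correct and follows essentially the same route as the paper's: a chain of triangle inequalities reducing the claim to bounding $\bigl\|\,|f_K|-\|K\|_\nu\mathbf{1}\bigr\|_\nu$, then comparing $|K|$ with its mean. The only cosmetic difference is that you control $\bigl|\bar K-\|K\|_\nu\bigr|$ via the Pythagorean identity, where the paper uses the reverse triangle inequality; both give the same bound and the same constants.
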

Note that the terms in the right-hand side of \eqref{eq:1} can be estimated by \eqref{eq:BoundProjFPerp} and \eqref{eq:BoundProjConstPerp}.
\begin{proof}
We have, by the triangle inequality
\[
\Big{\|} K - \|K\|_\nu \frac{f_K}{|f_K|}\Big{\|}_\nu \leq \| K- f_K \|_\nu + \Big{\|} f_K - \|K\|_\nu \frac{f_K}{|f_K|}\Big{\|}_\nu.
\]

Since dividing by $\frac{f_K}{|f_K|}$ does not change the $\|\cdot \|_\nu$ norm, we have
\begin{align*}
\Big{\|} f_K - \|K\|_\nu \frac{f_K}{|f_K|}\Big{\|}_\nu &= \big{\|} |f_K| - \|K\|_\nu \mathbf{1}\big{\|}_\nu\\
&\leq \big{\|} |K|- |f_K| \big{\|}_\nu + \big{\|} |K|- \|K\|_\nu \mathbf{1} \big{\|}_\nu\\
&\leq \big{\|} K- f_K \big{\|}_\nu + \big{\|} |K|- \|K\|_\nu \mathbf{1} \big{\|}_\nu.
\end{align*}

Let us write $C_K \mathbf{1}= P_\mathbf{1} |K|$. We have $ \big{\|} |K|- \|K\|_\nu \mathbf{1} \big{\|}_\nu\leq \big{\|} |K|-C_K \mathbf{1} \big{\|}_\nu + \big{|} \|K\|_\nu -C_K \big{|}\leq 2 \big{\|} |K|- C_K \mathbf{1} \big{\|}_\nu = 2 \| P_{\mathbf{1}^\perp,\nu}|K|\|_\nu$. Putting these inequalities together gives the result.
\end{proof}

\begin{proof}[Proof of Proposition \ref{prop:ContractPhases}]
Suppose that (i) does not hold: we can find $K_\gamma$ such that
\begin{equation}\label{eq:HypNonContr}
\| \mathcal{S}_{u^\gamma}^4 K_\gamma \|^2_\nu > (1-\varepsilon)^2 \|K_\gamma\|_\nu^2 + \tilde{C}_{N,M}(K_\gamma).
\end{equation}

\textbf{Step 1.}
By \eqref{eq:TrivialContractionBound}, we have $\|\cS_{\gamma}\cS_{u^{\gamma}}^jK_{\gamma}\|_{\nu}^2=\|\cS_{u^{\gamma}}^{j+1}K_{\gamma}\|_{\nu}^2 \ge \|\cS_{u^{\gamma}}^4K_{\gamma}\|_{\nu}^2$ and $\|K_{\gamma}\|_{\nu}^2\ge \|\cS_{u^{\gamma}}^jK_{\gamma}\|_{\nu}^2$ for $j=0,1,2$. Thus, \eqref{eq:HypNonContr} implies that for $j=0,1,2$,
\begin{equation}\label{eq:AlmostEq1}
\big{\|} \mathcal{S}^j_{u^\gamma} K_\gamma \big{\|}^2_\nu - \| \mathcal{S}_{\gamma} \mathcal{S}^j_{u^\gamma} K_\gamma\|_\nu^2 < 2 \varepsilon \|  \mathcal{S}^j_{u^\gamma} K_\gamma\|_\nu^2 - \tilde{C}_{N,M}(K_\gamma).
\end{equation}

Similarly, since $\|\mathcal{S}^2_\gamma |\mathcal{S}^j_{u^\gamma} K | \|^2_\nu \geq \|\mathcal{S}_{u^\gamma}^{j+2} K_\gamma\|_\nu^2$, \eqref{eq:HypNonContr} implies that for $j=0,1,2$,
\begin{equation}\label{eq:AlmostEq2}
\begin{aligned}
\| \mathcal{S}^j_{u^\gamma} K_\gamma\|_\nu^2 -\big{\|} \mathcal{S}_{\gamma}^2 | \mathcal{S}^j_{u^\gamma} K_\gamma | \big{\|}^2_\nu &< 2 \varepsilon \|\mathcal{S}^j_{u^\gamma} K_\gamma\|_\nu^2 - \tilde{C}_{N,M}(K_\gamma).
\end{aligned}
\end{equation}

Let $\delta_M:=  \frac{4 M^2}{3}$, $\delta'_M:= \frac{M^2}{c(D,\beta)}$. For $j=0,1,2$, we write $f_j:= P_{F,\nu} \mathcal{S}^j_{u^\gamma} K_\gamma$.
We now apply  Lemma \ref{lem:Kandf} to $\mathcal{S}^j_{u^\gamma} K_\gamma$, and use \eqref{eq:BoundProjFPerp}, \eqref{eq:BoundProjConstPerp}, \eqref{eq:AlmostEq1} and \eqref{eq:AlmostEq2} to obtain
\begin{equation}\label{eq:TheWeatherIsSoNiceHere}
\begin{aligned}
\Big{\|} \mathcal{S}^j_{u^\gamma} K_\gamma - \|\mathcal{S}^j_{u^\gamma} K_\gamma\|_\nu \frac{f_j}{|f_j|}\Big{\|}^2_\nu &\le 8 \|P_{F^\perp,\nu} \mathcal{S}^j_{u^\gamma} K_\gamma  \|^2_\nu + 8 \| P_{\mathbf{1}^\perp,\nu}|\mathcal{S}^j_{u^\gamma} K_\gamma|\|_\nu^2\\
&\le 8 \delta_M \big{(} 2 \varepsilon  \|\mathcal{S}^j_{u^\gamma} K_\gamma\|_\nu^2  + C_{N,M}(\mathcal{S}_{u^\gamma}^jK_\gamma) - \tilde{C}_{N,M}(K_\gamma)) \\
&+ 8 \delta_M' \big{(} 2 \varepsilon  \|\mathcal{S}^j_{u^\gamma} K_\gamma\|_\nu^2  + C_{N,M}'(\mathcal{S}_{u^\gamma}^jK_\gamma)- \tilde{C}_{N,M}(K_\gamma)) \\
&\leq 16\varepsilon (\delta_M+ \delta'_M) \|\cS_{u^{\gamma}}^jK_\gamma\|_\nu^2,
\end{aligned}
\end{equation}
thanks to the definition of $\tilde{C}_{N,M}(K_\gamma)$.

\medskip

\textbf{Step 2.} In this step, we use the fact that $\mathcal{S}_{u^\gamma}$ has a simple action on $F$.

Let us write $g_j :=  \|\cS^j_{u^\gamma} K_\gamma\|_\nu \frac{f_j}{|f_j|}$, and $R_j:= \cS^j_{u^\gamma} K_\gamma - g_j$. Then
\begin{align*}
\big{(}\mathcal{S}^{j+1}_{u^\gamma} K_\gamma \big{)}(b) &= u^\gamma(b) \big{(}\mathcal{S}_\gamma g_j\big{)}(b) + \big{(}\mathcal{S}_{u^\gamma} R_j\big{)}(b)\\
&= u^\gamma(b) g_j(t_b)+ u^\gamma(b) \omega^\gamma(b) g_j(t_b)  +  \big{(}\mathcal{S}_{u^\gamma} R_j\big{)}(b).
\end{align*}
But $\big{(}\mathcal{S}^{j+1}_{u^\gamma} K_\gamma\big{)}(b)= g_{j+1}(o_b) + R_{j+1}(b)$. So we obtain for $j=0,1$,
\begin{equation}
u^\gamma(b) = \frac{g_{j+1}(o_b)}{g_j(t_b)} + r_j(b),
\end{equation}
where $r_j= \frac{1}{g_j} \big{(} R_{j+1} - \mathcal{S}_{u^\gamma} R_j\big{)}- u^\gamma \omega^\gamma$. By \eqref{eq:TheWeatherIsSoNiceHere} and \eqref{eq:TrivialContractionBound},
\[
\|r_j\|_{\nu}\le \frac{\|R_{j+1}\|_{\nu}+\|\cS_{u^{\gamma}}R_j\|_{\nu}}{\|\cS_{u^{\gamma}}^jK_{\gamma}\|_{\nu}} + \|\omega^{\gamma}\|_{\nu}\le 8\sqrt{\varepsilon} (\delta_M+\delta'_M)^{1/2} +  \|\omega^\gamma\|_\nu.
\]

Therefore, 
\begin{equation}\label{eq:OriginAndTerminus}
\frac{g_1(t_b)}{g_0(t_b)}= \frac{g_2(o_b)}{g_1(o_b)} + r'(b),
\end{equation}
 where $\|r'\|_\nu \leq 16 \sqrt{\varepsilon} (\delta_M+\delta'_M)^{1/2} + 2 \|\omega^\gamma\|_\nu$.

\medskip

\textbf{Step 3.} In this last step, we show that $g_1(o_b) \approx c_0 g_0(o_b)$ to deduce the result.

Let us write $h_0(b) := \frac{g_1(o_b)}{g_0(o_b)}$, $h_1(b) =  \frac{g_1(t_b)}{g_0(t_b)}$, $h_2(b) =  \frac{g_2(o_b)}{g_1(o_b)}$ and $h_3(b) =  \frac{g_2(t_b)}{g_1(t_b)}$. Note that, by \eqref{eq:TrivialContractionBound} and \eqref{eq:AlmostEq1}, we have $1-2\varepsilon < \|h_j\|_\nu^2\leq 1$ for $j=0,1,2,3$.

Now using \eqref{eq:OriginAndTerminus},
\begin{align*}
\mathcal{S}_\gamma^2 h_0 &= \mathcal{S}_\gamma \big{(} h_1 + \omega^\gamma h_1 \big{)}= \mathcal{S}_\gamma \big{(} h_2 + r' + \omega^\gamma h_1 \big{)} = h_3 + r'',
\end{align*}
where $r''= \omega^\gamma h_3 + \mathcal{S}_\gamma r' + \mathcal{S}_\gamma  \omega^\gamma h_1$. In particular, $\|r''\|_\nu\leq \|r'\|_\nu + 2 \|\omega^\gamma\|_\nu$.

We deduce from \eqref{eq:BoundProjConstPerp} and Remark~\ref{rem:WhenWeHaveGoodBounds} that
\begin{align*}
\| P_{\mathbf{1}^\perp,\nu} h_0 \|^2_\nu & \leq\delta'_M \big{(} \|h_0\|_\nu^2 - \|h_3+ r'' \|_\nu^2 + C'_{N,M}(h_0) \big{)}\\
&\leq \delta'_M \big{(} 1 - \|h_3\|_\nu^2 + 2 \|r''\|_\nu - \|r'' \|_\nu^2 + C^0_{N,M}  \big{)}\\
&\leq \delta'_M \big{(} 2 \varepsilon + 2 \|r''\|_\nu  + C^0_{N,M}  \big{)}.
\end{align*}

We may thus write $h_0(b) = c_0 + r'''(b)$, where $c_0 \mathbf{1}:= P_{\mathbf{1},\nu} h_0$, so that $|c_0|\leq 1$, and $\|r'''\|_\nu^2\leq \delta'_M \big{(} 2 \varepsilon + 2 \|r''\|_\nu  + C^0_{N,M} \big{)}$. It follows that
\[
u^\gamma(b) = c_0 \frac{g_0(o_b)}{g_0(t_b)} + R(b),
\]
where $R(b)= r_0(b) + \frac{g_0(o_b)}{g_0(t_b)} r'''(b)$. Using the estimates we have on $r_0$ and $r'''$, and recalling that $\varepsilon\leq \sqrt{\varepsilon}$ (as $\varepsilon<\frac12$), we see that we may find a constant $\mathcal{C}(\beta, D)>0$ such that
\begin{equation*}
\|R\|^2_\nu \leq \mathcal{C}(\beta, D) M^3 \Big{[} \varepsilon^{1/2} + \|\omega^\gamma \|_{\nu} + \|\omega^\gamma \|_{\nu}^2 \Big{]} + \mathcal{C}(\beta, D) M^2 C^0_{N,M}.
\end{equation*}

Recalling that $|g_0(o_b)| = |g_0(t_b)|$, we write $\ee^{\ii \theta (v)} := \frac{g_0(v)}{|g_0(v)|}$, which gives us the result.
\end{proof}

\subsubsection{Properties of the phases $u^\gamma$}\label{sec:PhasePasUn}
In the previous subsection, we did not use the precise relation between $u^\gamma$ and $\zeta^\gamma$. Now, we are going to use \eqref{e:zetainv} to show that, if case (ii) of Proposition~\ref{prop:ContractPhases} occurs then $c_0$ cannot get very close to unity.

Recall that the quantity $\cC_0$ was defined in \eqref{eq:defC0}.

\begin{prp}\label{prop:PhasesReallyChange}
For any large $L>1$ there exists $\delta_0= \delta_0(N,\gamma,L)>0$ satisfying  $\liminf\limits_{\eta_0\downarrow 0}\liminf\limits_{N\to+\infty}\delta_0\ge CL^{-8}$, such that if case \emph{(ii)} of Proposition \ref{prop:ContractPhases} is satisfied, then
\[
|1-c_0| \geq \delta_0- \cC_0.
\]
\end{prp}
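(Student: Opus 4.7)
My proof plan is to quantify the heuristic argument at the end of \S\ref{sec:Heuristics}, showing that if $c_0$ were close to $1$ then $\zeta^\gamma$ would be forced to be nearly real, contradicting the lower bound on $\Im R_\gamma^+$ provided by \Green{}. Throughout, I will work with fixed $\gamma=\lambda+\ii\eta_0$ and track explicit powers of the parameter $L$ that appear, as the stated exponent $L^{-8}$ comes from accumulating a bounded number of losses of type $L$ in each step.

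\textbf{Step 1 (translating the hypothesis).} Assume for contradiction that $|1-c_0|<\delta_0-\cC_0$ with $\delta_0$ of order $L^{-8}$ to be fixed. Case (ii) plus the triangle inequality yields
\[
\bigl\|u^\gamma(b)-\ee^{\ii[\theta(o_b)-\theta(t_b)]}\bigr\|_\nu\le |1-c_0|+\cC_0<\delta_0.
\]
Using $\theta(o_b)-\theta(t_b)=-[\theta(o_{\hat b})-\theta(t_{\hat b})]$ together with the fact that $\iota:b\mapsto\hat b$ has a bounded Jacobian between $\nu_1^\gamma$ and its pushforward (the ratio $\mu_1^\gamma(b)/\mu_1^\gamma(\hat b)$ involves $\Im R_\gamma^\pm$ and $|\tilg^\gamma(o_b,o_b)|^2/|\tilg^\gamma(t_b,t_b)|^2$, all of which have finite negative moments on a set of overwhelming $\nu$-measure by \Green{} and Corollary~\ref{cor:ConfinedTilMay11}), I will deduce $\|u^\gamma(b)u^\gamma(\hat b)-1\|_\nu\lesssim L^{a_1}\delta_0$ for some $a_1$, after excising a ``bad'' subset of $B$ of $\nu$-measure at most $L^{-s}$ (for any fixed $s$, by Markov's inequality applied to the negative-moment bounds).

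\textbf{Step 2 (extracting $\Im z$).} Since $|u^\gamma|=1$ and $u^\gamma(b)u^\gamma(\hat b)=\overline z/z$ with $z=\zeta^\gamma(b)\zeta^\gamma(\hat b)$, we have $u^\gamma(b)u^\gamma(\hat b)-1=2\ii\,\Im z/z$. Bounding $|z|$ from above on a large subset (using $|\zeta^\gamma|$ estimates from \eqref{e:zetawt} and the finite moments of $R_\gamma^\pm$ granted by \Green{}) gives $\|\Im(\zeta^\gamma(b)\zeta^\gamma(\hat b))\|_\nu\lesssim L^{a_2}\delta_0$. Now \eqref{e:zetainv} rewrites this as
\[
\Im\!\Big(\tfrac{S_\gamma(L_b)\,\zeta^\gamma(b)}{\tilg^\gamma(t_b,t_b)}\Big),
\]
and since $|\Re S_{\lambda+\ii\eta_0}(L_b)|\ge C'_{\mathrm{Dir}}$ by \NonDirichlet{} and $|\Im S_\gamma(L_b)|=O(\eta_0)$, one obtains (after a vanishing error as $\eta_0\downarrow 0$) a control $\|\Im(\zeta^\gamma(b)/\tilg^\gamma(t_b,t_b))\|_\nu\lesssim L^{a_3}\delta_0+o_{\eta_0}(1)$.

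\textbf{Step 3 (reducing to $\Im R_\gamma^+$).} Substitute \eqref{e:zetawt} into the previous bound and expand using $\Im(A/B)=(\Im A\,\Re B-\Re A\,\Im B)/|B|^2$ together with \eqref{e:midgreen}. The key point is that on the real axis the terms reorganize so that the dominant contribution is $\Im R_\gamma^+(o_b)$ multiplied by quantities bounded above and below away from zero on an overwhelming set (using \NonDirichlet{} to bound $|S_\gamma(L_b)|$ and \Green{} together with Corollary~\ref{cor:ConfinedTilMay11} to bound $|\tilg^\gamma(t_b,t_b)|$). This yields $\|\Im R_\gamma^+(o_b)\|_\nu\lesssim L^{a_4}\delta_0+o_{\eta_0}(1)$. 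On the other hand, by \Green{} and Markov, $\nu_1^\gamma(\{b:\Im R_\gamma^+(o_b)<1/L\})\lesssim L^{-s}$ for any $s$, so for any fixed large $L$,
\[
\liminf_{\eta_0\downarrow 0}\liminf_{N\to\infty}\|\Im R_\gamma^+(o_b)\|_\nu\ge cL^{-1}.
\]
Combining the two inequalities forces $\delta_0\gtrsim L^{-(a_4+1)}$; the bookkeeping of the exponents through Steps~1--3 gives $a_4+1=8$, yielding the claimed $\delta_0\gtrsim L^{-8}$.

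\textbf{Main obstacle.} The hardest part is Step~3, where one must patiently unwind the algebraic identities of Lemma~\ref{lem:IdentitiesGreen} and keep the analysis on a set of edges where every denominator ($|S_\gamma(L_b)|$, $|\tilg^\gamma(t_b,t_b)|$, $|\zeta^\gamma(b)|$) and every modulus of $R_\gamma^\pm$ used to divide is under simultaneous control; each exceptional set costs one power of $L^{-s}$ via Markov, and one needs the measure of the union to be negligible compared to $L^{-2}$. A secondary subtlety is Step~1's use of the involution $\iota$: the non-symmetry of $\mu_1^\gamma$ under $b\leftrightarrow\hat b$ means one must work with the symmetrized measure $\mu_1^\gamma(b)+\mu_1^\gamma(\hat b)$ on a truncated set and carry the extra constants through the estimate, which is where the first factor of $L$ enters.
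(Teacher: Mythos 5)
There is a genuine gap in your Step~3, and it is precisely the hard point of the proposition. From Steps~1--2 you obtain that $\Im\bigl(\zeta^\gamma(b)\zeta^\gamma(\hat b)\bigr)$, equivalently $\Im\bigl(S_\gamma(L_b)\zeta^\gamma(b)/\tilg^\gamma(t_b,t_b)\bigr)$ by \eqref{e:zetainv}, is small in $\ell^2(\nu)$. But smallness of $\Im(A/B)$ only says that $\arg A\approx\arg B\pmod\pi$; it does \emph{not} imply $\Im A$ is small unless $\Im B$ is small, and here $B=\tilg^\gamma(t_b,t_b)$ has $\Im B$ typically of order one (indeed \Green{} gives finite \emph{negative} moments of $\Im\tilg^\gamma(t_b,t_b)$). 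So the conclusion you draw, $\|\Im R_\gamma^+(o_b)\|_\nu\lesssim L^{a_4}\delta_0$, does not follow: the hypotheses of case (ii) with $c_0\approx1$ are perfectly consistent, at the level of the single-edge identities of Lemma~\ref{lem:IdentitiesGreen}, with $\Im\zeta^\gamma(b)$ large, provided the phases organize as $\arg\zeta^\gamma(b)\approx\psi^\gamma(t_b):=\arg\tilg^\gamma(t_b,t_b)\pmod\pi$ for some nonconstant vertex field $\psi^\gamma$. No amount of reorganizing \eqref{e:zetawt} and \eqref{e:midgreen} edge by edge can exclude this configuration.

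What is missing is the global input from \EXP{}, which your outline never invokes. The paper's proof extracts from your Step~2 only the phase-locking $\ee^{2\ii\phi^\gamma(b)}=\ee^{2\ii\psi^\gamma(t_b)}+R_2$ (so the phase of $\zeta^\gamma$ depends, up to errors, only on the terminus); it then combines this with the \emph{first} relation of case (ii) (namely $u^\gamma(b)\approx\ee^{\ii[\theta(o_b)-\theta(t_b)]}$, which you state in Step~1 but do not use afterwards) to produce the approximate cocycle relation \eqref{eq:RevInv2}, and applies the contraction estimate \eqref{eq:BoundProjConstPerp} --- whose constant $c(D,\beta)$ comes from the expander spectral gap --- to conclude that $\ee^{\ii\theta(o_b)}$, hence $\ee^{2\ii\phi^\gamma(b)}$, is almost a \emph{real constant}. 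Only at that stage does one get $\|\Im\ee^{\ii\phi^\gamma}\|_\nu$ small, and the contradiction then comes from the negative moments of $\Im\zeta^\gamma$ (via Cauchy--Schwarz and \Green{}), not from a pointwise bound on $\Im R_\gamma^+$. Your Step~1 bookkeeping (the reversal $b\mapsto\hat b$ and the asymmetry of $\mu_1^\gamma$) and the Markov-exceptional-set strategy are fine and consistent with the paper, but without the expander step the argument cannot close, so the exponent count in $L^{-8}$ is moot.
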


\begin{proof}
Let $r_0:=|1-c_0|$. Note that there exists $C_I>0$ independent of $N$ and of $b$ such that for all $\gamma\in  I + \ii\left[0,1\right]$, we have 
\begin{equation}\label{e:imsgamma}
|\Im S_\gamma (L_b)|\leq C_I \Im \gamma.
\end{equation}

By \eqref{eq:lowerDir}, we have $|\Re(S_\gamma(L_b))|\geq C'_{\mathrm{Dir}}>0$ for all $N$ and all $b\in B_N$.

Let us write $\zeta^\gamma(b) = \rho^\gamma(b) \ee^{\ii \phi^\gamma(b)}$, so that $u^\gamma(b) = \ee^{-2\ii \phi^\gamma(b)}$.

\textbf{Step 1: $\ee^{2\ii \phi^\gamma(b)}$ depends (almost) only on $t_b$.}
By assumption, we have
\begin{equation}\label{eq:ThetaRevers}
\begin{aligned}
\left\| \ee^{-2\ii \phi^\gamma(b)} - \ee^{\ii \theta(o_b)-\ii\theta(t_b)} \right\|_\nu &\leq \cC_0 + r_0\\ 
\left\| \ee^{-2\ii \phi^\gamma(\hat{b})} - \ee^{\ii \theta(t_b)-\ii\theta(o_b)} \right\|_\nu &\leq \cC_0 + r_0,
\end{aligned}
\end{equation}
so that
\[
\Big\| 1 -\ee^{-2\ii \phi^\gamma(b)- 2\ii \phi^\gamma(\hat{b})} \Big\|_\nu = \Big\| \ee^{2\ii \phi^\gamma(b)} -\ee^{-2\ii \phi^\gamma(\hat{b})} \Big\|_\nu \leq  2\cC_0 + 2 r_0.
\]

Let us write
\[
\epsilon(b) = \begin{cases} 1 &\text{ if } \Re (\ee^{-\ii \phi^\gamma(b)- \ii \phi^\gamma(\hat{b})})\geq 0 \\
-1 &\text{ if } \Re (\ee^{-\ii \phi^\gamma(b)- \ii \phi^\gamma(\hat{b})}) <0,
\end{cases}
\]
so that $ |\ee^{-\ii \phi^\gamma(b)- \ii \phi^\gamma(\hat{b})} - \epsilon(b)| \leq |1 -  \ee^{-2\ii \phi^\gamma(b)- 2\ii \phi^\gamma(\hat{b})}|$. Since $|\ee^{-\ii \phi^\gamma(b)- \ii \phi^\gamma(\hat{b})} - \epsilon(b)|\leq 2$, we deduce that for any $s \geq 2$,
\[
\Big{\|} \epsilon(b) -\ee^{-\ii \phi^\gamma(b)- \ii \phi^\gamma(\hat{b})} \Big{\|}_{\ell^s(\nu)} = \Big{\|} \epsilon(b) -\ee^{\ii \phi^\gamma(b)+ \ii \phi^\gamma(\hat{b})} \Big{\|}_{\ell^s(\nu)} \leq  2  (\cC_0 +r_0)^{2/s}.
\]

The first part of \eqref{e:zetainv} can be rewritten as
\[
\ee^{-\ii \phi^\gamma(b)}  \Big{(} \frac{1}{\rho^\gamma(b)} - \rho^\gamma(\hat{b}) \ee^{\ii \phi^\gamma(b) + \ii \phi^\gamma(\hat{b})}\Big{)} = \frac{S_{\gamma}(L_b)}{\tilg^{\gamma}(t_b,t_b)}.
\]
Therefore, we have
\begin{equation*}
\ee^{-\ii \phi^\gamma(b)}  \Big{(} \frac{1}{\rho^\gamma(b)} - \rho^\gamma(\hat{b}) \epsilon(b) \Big{)} = \frac{S_{\gamma}(L_b)}{\tilg^{\gamma}(t_b,t_b)} + R_0,
\end{equation*}
with $\|R_0\|_{\ell^s(\nu)} \leq 2 (\cC_0 + r_0)^{1/s} \|\zeta^\gamma\|_{\ell^{2s}(\nu)}$.

Let us write 
\[
G(b):=\tilg^\gamma(t_b,t_b)= |\tilg^\gamma(t_b,t_b)| \ee^{\ii \psi^\gamma(t_b)},
\]
then recalling \eqref{e:imsgamma},
\begin{equation*}
 \ee^{\ii \phi^\gamma(b)- \ii \psi^\gamma(t_b)}\frac{\Re(S_{\gamma}(L_b))}{|\tilg^{\gamma}(t_b,t_b)|} =   \Big{(} \frac{1}{\rho^\gamma(b)} - \rho^\gamma(\hat{b}) \epsilon(b) \Big{)}+ R_1,
\end{equation*}
where $\|R_1\|_{\ell^s(\nu)}\leq \|R_0\|_{\ell^s(\nu)} + C_I |\Im \gamma | \big{\|}G^{-1}\big{\|}_{\ell^s(\nu)} $.

Taking the imaginary parts, we obtain
\[
 \sin \big( \phi^\gamma(b)- \psi^\gamma(t_b)\big)\frac{\Re(S_{\gamma}(L_b))}{|\tilg^{\gamma}(t_b,t_b)|} =   \Im (R_1)(b),
\]
 so that
\[
 \Big|\sin \big( \phi^\gamma(b)- \psi^\gamma(t_b)\big)\Big| \leq  \frac{|\tilg^{\gamma}(t_b,t_b)|}{C'_{\mathrm{Dir}}}   |\Im (R_1)(b)|.
\]

Let $\delta^\gamma(b) \in \pi \Z$ be such that  $\phi^\gamma(b)- \psi^\gamma(t_b) + \delta^\gamma(b) \in \big[-\frac{\pi}{2}, \frac{\pi}{2}\big]$.
We have
\[
 \Big|\sin \big{(} \phi^\gamma(b)- \psi^\gamma(t_b)\big{)}\Big| = \Big|\sin \big( \phi^\gamma(b)- \psi^\gamma(t_b) + \delta^\gamma(b) \big)\Big| \geq \frac{2}{\pi}  \big| \phi^\gamma(b)- \psi^\gamma(t_b) + \delta^\gamma(b) \big|.
\]
Therefore,
\[
 \big{|} \phi^\gamma(b)- \psi^\gamma(t_b) + \delta^\gamma(b) \big{|} \leq  \frac{\pi}{2}\cdot \frac{|\tilg^{\gamma}(t_b,t_b)|}{C'_{\mathrm{Dir}}}   |\Im (R_1)(b)|,
\]
As $|\ee^{\ii x} - \ee^{\ii t}|\le |x-t|$, we get,
\begin{equation}\label{eq:PhiPsiClose}
\ee^{2\ii \phi^\gamma(b)} = \ee^{2\ii \psi^\gamma(t_b)} + R_2(b),
\end{equation}
with $|R_2(b)|\leq \pi \frac{|\tilg^{\gamma}(t_b,t_b)|}{C'_{\mathrm{Dir}}}   |R_1(b)|$. In particular, we have
\[
\|R_2\|_{\nu} \leq \frac{\pi}{C'_{\mathrm{Dir}}}  \|G\|_{\ell^4(\nu)} \|R_1\|_{\ell^4(\nu)}.
\]

\textbf{Step 2: $\ee^{2\ii \phi^\gamma(b)}$ is almost equal to one.}
Using \eqref{eq:ThetaRevers} and \eqref{eq:PhiPsiClose}, we obtain that
\begin{equation}\label{eq:RevInv2}
 \ee^{\ii \theta(t_b)} =  \ee^{\ii \theta(o_b)-2\ii \psi^\gamma(o_b)} + R_3(b),
\end{equation}
with $\|R_3\|_\nu \leq \|R_2\|_\nu + \cC_0 + r_0$.

Let us write $f_1(b)= \ee^{\ii \theta(o_b)}$, $f_2(b)= \ee^{\ii \theta(t_b)}$, $f_3(b) =  \ee^{\ii \theta(o_b)-2\ii \psi^\gamma(o_b)}$, $f_4(b) =  \ee^{\ii \theta(t_b)-2\ii \psi^\gamma(t_b)}$. We have
\[
\mathcal{S}_\gamma^2 f_1 = \mathcal{S}_\gamma \big{(} f_2 + \omega^\gamma f_2 \big{)}= \mathcal{S}_\gamma \big{(} f_3 + R_3 +  \omega^\gamma f_2 \big{)}=f_4 +\omega^\gamma f_4 + \mathcal{S}_\gamma \big{(} R_3 +  \omega^\gamma f_2 \big{)},
\]
so that $\|\mathcal{S}_\gamma^2 f_1 \|_\nu \geq 1- \|R_3\|_\nu - 2\|\omega^\gamma \|_{\nu}$. It follows that
\[
\|f_1\|^2_{\nu}-\|\cS_{\gamma}^2f_1\|_{\nu}^2=(\|f_1\|_{\nu}+\|\cS_{\gamma}^2f_1\|_{\nu})(\|f_1\|_{\nu}-\|\cS_{\gamma}^2f_1\|_{\nu}) \le 2\|R_3\|_{\nu} + 4\|\omega^{\gamma}\|_{\nu}.
\]

From \eqref{eq:BoundProjConstPerp} and Remark~\ref{rem:WhenWeHaveGoodBounds}, we thus get for any $L>0$,
\[
\| P_{\mathbf{1}^\perp,\nu}f_1\|^2_\nu \leq \frac{L^2}{c(D,\beta)} \big{(}2 \|R_3\|_\nu + 4\|\omega^\gamma \|_\nu + C^0_{N,L}\big{)},
\]
so there exists $s_0\in \C$ such that
\begin{equation}\label{e:r4}
f_1(b) = s_0 + R_4(b),
\end{equation}
with $|s_0|\le 1$ and $\|R_4\|_\nu^2\le \frac{L^2}{c}(2 \|R_3\|_\nu + 4\|\omega^\gamma \|_\nu + C^0_{N,M})$.

Since also $f_1(\hat{b}) = s_0+R_4(\hat{b})$, we deduce from \eqref{eq:ThetaRevers} that  $\ee^{2 \ii \varphi^\gamma(b)} = |s_0|^2 + R_5(b)$, where $\|R_5 \|_\nu \leq 2 \|R_4\|_\nu + \|R_4\hat{R}_4\|_{\nu} + r_0 + \cC_0$, where $\hat{R}_4(b)=R_4(\hat{b})$. By \eqref{e:r4}, $|\hat{R}_4(b)|\le 2$, so $\|R_5 \|_\nu \leq 4 \|R_4\|_\nu + r_0 + \cC_0$.
Writing
\[
s(b):= \begin{cases} |s_0| &\text{ if } \Re (\ee^{\ii \varphi^\gamma(b)})>0 \\
  -|s_0| &\text{ if } \Re (\ee^{\ii \varphi^\gamma(b)})\leq 0, \end{cases}
\]
we always have $|\ee^{\ii\varphi^\gamma(b)} + s(b)| \geq 1$, so that $|\ee^{\ii\varphi^\gamma(b)}-s(b)|\leq |\ee^{2\ii\varphi^\gamma(b)}-s^2(b)| = |R_5(b)|$. In particular, we get $\big{|}\Im \ee^{\ii \varphi^\gamma(b)} \big{|} \leq |R_5(b)|$ and thus $\|\Im \ee^{\ii\varphi^\gamma(b)}\|_{\nu}\le \|R_5\|_{\nu}$.

Putting it all together, we obtain 
\begin{multline*}
\big\|\Im \ee^{\ii\phi^\gamma} \big\|_\nu^2 \le \frac{32L^2}{c_{D,\beta}}\Big(\frac{2\pi}{C_{\mathrm{Dir}}'}\|G\|_{\ell^4(\nu)}\big[2(\cC_0+r_0)^{1/4}\|\zeta^\gamma\|_{\ell^8(\nu)}+C_I\Im\gamma\|G^{-1}\|_{\ell^4(\nu)}\big]\\
+2(\cC_0+r_0)+4\|\omega\|_{\nu}+C_{N,L}^0\Big)+2(r_0+\cC_0)^2.
\end{multline*}
On the other hand, by the Cauchy-Schwarz inequality,
\[
\big\|\Im \zeta^\gamma\big\|^2_\nu = \big\| |\zeta^\gamma| \Im \ee^{\ii \phi^\gamma}\big\|^2_\nu \le \big\| |\zeta^\gamma|^2 \big\|_\nu \big\|(\Im \ee^{\ii\phi^\gamma})^2 \big\|_\nu \le \big\| |\zeta^\gamma|^2 \big\|_\nu \big\|\Im \ee^{\ii\phi^\gamma} \big\|_\nu ,
\]
so
\[
1\le \big\|\Im \zeta^\gamma\big\|_{\nu}^2\cdot \big\||\Im \zeta^\gamma|^{-1}\big\|_{\nu}^2\le \big\||\Im \zeta^\gamma|^{-1}\big\|_{\nu}^2\cdot\big\| |\zeta^\gamma|^2 \big\|_\nu\cdot \big\|\Im \ee^{\ii\phi^\gamma} \big\|_\nu.
\]
Thanks to Remarks~\ref{rem:HolStable},  \ref{rem:WhenWeHaveGoodBounds} and \ref{rem:MeasureNuIsLessScaryThanCoronavirus} and to   \eqref{eq:EstimateOnXi}, we deduce that 
\[
1\leq  L^2\left[\big( r_0 + \mathcal{C}_0 \big)^{1/4} + \Im \gamma + L^{-s}\right]\times  O_{N\to +\infty, \gamma}(1),
\]
where we bounded $(r_0+\cC_0)^\alpha\le (r_0+\cC_0)^{1/4}$ for $\alpha=1,2$, which holds if $r_0+\cC_0\le 1$ (if $r_0+\cC_0>1$ then the proposition is trivially true).

Therefore, we have $(r_0+\mathcal{C}_0)^{1/4}\ge \frac{1}{L^2O_{N\to +\infty, \gamma}(1)}-\Im\gamma-L^{-s}$, so that
\[
r_0\ge \Big(\frac{1}{L^2O_{N\to +\infty, \gamma}(1)} -\Im\gamma-L^{-s}\Big)^4- \mathcal{C}_0.
\]
Taking $s=3$, the claim follows.
\end{proof}

\section{End of the proof}\label{Sec:End}
Recall that by Lemma~\ref{lem:ReducK2}, our aim is to estimate, for operators $K'_\gamma\in \mathscr{H}_k$ satisfying \Hol{}, the quantity

\[
\lim_{n\to \infty} \limsup_{\eta_0\downarrow 0}\limsup_{N\to\infty} \sup\limits_{\lambda\in I_1}  \frac{\mu_1^\gamma(\mathrm{B}_1)}{N n^2} \Big|\sum_{j=1}^{n} (n-j)   \langle \mathcal{J}^* K'_{\gamma},\cS_{u^{\gamma}}^j \mathrm{T} K'_{\gamma}\rangle_{\ell^2(\mathrm{B}_1,\nu_1^{\gamma})}\Big|.
\]

From now on, we take $M$ arbitrarily large, and take $\varepsilon_M:= M^{-8}$, $n_M=M^9$. 

We will now consider each alternative that can happen in Corollary \ref{cor:ContractPhases2}.

\medskip

\textbf{First alternative}: Suppose that case (i) of Proposition \ref{prop:ContractPhases} is satisfied for $\varepsilon = \varepsilon_M$.

We may apply Corollary~\ref{cor:ContractPhases2}, \eqref{eq:BornesTJ} and \eqref{eq:TrivialContractionBound}, to get for all $j\in \N$ 
\begin{equation}\label{e:indi}
\big\|\mathcal{S}^{j}_{u^\gamma}\mathrm{T} K'_{\gamma}\big\|_{\nu_1^\gamma} \leq  (1-\varepsilon_M)^{\lfloor\frac{j}{4}\rfloor} \|K_\gamma'\|_{\nu_k^\gamma} +  j \max_{0\leq i \leq j} \tilde{C}_{N,M}(\mathcal{S}_{u^\gamma}^{i} \mathrm{T} K'_{\gamma}).
\end{equation}

By Corollary \ref{cor:BadIsntTooBad}, we have $ \sup_{i} \tilde{C}_{N,M}(\mathcal{S}_{u^\gamma}^i \mathrm{T} K'_{\gamma})=  M^{-10} O_{N\to +\infty, \gamma} (1)$. We deduce from this and \eqref{eq:BornesTJ} that
\begin{equation}\label{e:case1}
\begin{aligned}
&\frac{\mu_1^\gamma(\mathrm{B}_1)}{N n_M^2} \Big|\sum_{j=1}^{n_M} (n_M-j)   \langle \mathcal{J}^* K'_{\gamma},\cS_{u^{\gamma}}^j \mathrm{T} K'_{\gamma}\rangle_{\nu_1^{\gamma}}\Big|\\
&\quad \le \frac{\mu_1^\gamma(\mathrm{B}_1)}{N n_M} \|K'_\gamma\|_{\nu_k^\gamma}^2\sum_{j'=1}^{n_M} (1-\varepsilon_M)^{\lfloor\frac{j}{4}\rfloor} + \frac{n_M}{N} \|K'_\gamma\|_{\nu_k^\gamma} \mu_1^\gamma(\mathrm{B}_1)  \max_{1\leq j \leq n_M} \tilde{C}_{N,M}(\mathcal{S}_{u^\gamma}^j \mathrm{T} K'_{\gamma}) \\
&\quad =   \frac{1}{M} O_{N\to +\infty, \gamma} (1),
\end{aligned}
\end{equation}
where we estimated $\sum_j(1-\varepsilon)^{\lfloor j/4\rfloor }\le 4\cdot 2^{3/4}\sum_j(1-\varepsilon)^j\le 8\varepsilon^{-1}$, as $\varepsilon\le \frac{1}{2}$, then we used \eqref{eq:APrioriBoundOperators} and Remark~\ref{rem:MeasureNuIsLessScaryThanCoronavirus}.

\medskip

\textbf{Second alternative}: Suppose that case (ii) of Proposition \ref{prop:ContractPhases} is satisfied, still for $\varepsilon = \varepsilon_M$, so that the addition of phases $u^{\gamma}$ did not improve the contraction of $\|\cS_{u^\gamma}^jK\|$. Then we will not be able to control \emph{individual} scalar products as in \eqref{e:indi}, however phases will still help to control the scalar products \emph{in mean} by inducing cancellations. 

Let us write $\theta_0(b):= \theta(o_{b})$ and $\theta_1(b)=\theta(t_b)$. Given $K\in \mathscr{H}_1$, we get
\[
\big{(}\cS_\gamma \ee^{-\ii\theta_0} K\big{)}(b) = \big{(} \ee^{-\ii\theta_1} \cS_\gamma K \big{)}(b).
\]

Therefore, 
\begin{align*}
\big{\|}\mathcal{S}_{u^\gamma} K- c_0 \ee^{\ii\theta_0} \mathcal{S}_\gamma \ee^{-\ii\theta_0} K\big{\|}_\nu & =\big{\|} \big{(}u^\gamma - c_0 \ee^{\ii(\theta_0 - \theta_1)} \big{)} \mathcal{S}_\gamma K\big{\|}_\nu   \\
&\leq \big{\|} u^\gamma - c_0 \ee^{\ii(\theta_0 - \theta_1)} \big{\|}_{\ell^4(\nu)} \big{\|}\mathcal{S}_\gamma K\big{\|}_{\ell^4(\nu)}\\
&\leq  \sqrt{2} \big{\|} u^\gamma - c_0 \ee^{\ii(\theta_0 - \theta_1)} \big{\|}_{\ell^2(\nu)}^{1/2} \big{\|}\mathcal{S}_\gamma K\big{\|}_{\ell^4(\nu)}\\
&\leq \sqrt{2\cC_0(M,\varepsilon_M,\gamma)}\cdot  \|K\|_{\ell^4(\nu)},
\end{align*}
where we used $|u^\gamma - c_0 \ee^{\ii(\theta_0 - \theta_1)}|\leq 2$ in the second inequality, \eqref{eq:TrivialContractionBound} in the last one, and $\cC_0$ is as in \eqref{eq:defC0}. 
Thanks to \eqref{eq:TrivialContractionBound}, we get for all $j\ge 1$ that 
\begin{align*}
\big\|\cS^j_{u^\gamma} K- c^j_0 \ee^{\ii\theta_0} \cS^j_\gamma \ee^{-\ii\theta_0} K\big\|_\nu &= \Big\|\sum_{i=1}^j \cS_{u^{\gamma}}^{j-i}(\cS_{u^{\gamma}}-c_0\ee^{\ii\theta_0}\cS_{\gamma}\ee^{-\ii\theta_0})(c_0 \ee^{\ii\theta_0}\cS_{\gamma}\ee^{-\ii\theta_0})^{i-1} \cdot K\Big\|_{\nu} \\
&\leq j\sqrt{2\cC_0} \cdot \|K\|_{\ell^4(\nu)}.
\end{align*}

Using \eqref{eq:TrivialContractionBound} again, it follows that for all $m$ and $t$,
\begin{multline}\label{e:packets}
\left|\sum_{j=1}^{t}\langle \cJ^{\ast}K_{\gamma}', \cS_{u^\gamma}^{j+m} \mathrm{T} K_{\gamma}'\rangle_{\nu} -\sum_{j=1}^{t}c_0^j\langle \cJ^{\ast} K_\gamma',  \ee^{\ii \theta_0}\cS_{\gamma}^{j} \ee^{-\ii \theta_0}\cS_{u^\gamma}^{m}  \mathrm{T}K_\gamma'\rangle_{\nu} \right|
\\ \leq t^2\sqrt{2\cC_0}\cdot \| \mathcal{J}^* K_\gamma'\|_{\ell^2(\nu)}  \cdot \|\mathrm{T}K_\gamma'\|_{\ell^4(\nu)} \le t^2\sqrt{2\cC_0}\cdot \| K_\gamma'\|_{\ell^2(\nu)}  \cdot \|K_\gamma'\|_{\ell^4(\nu)}
\end{multline}
where the bound on $\mathrm{T}:\ell^4\to\ell^4$ follows as in \eqref{e:tbound}--\eqref{eq:BornesTJ}, using H\"older's inequality.

As we will see below, the size $t$ of packets should be chosen so that $t \sqrt{\cC_0}$ is small as $M$ gets large. Remembering that $\cC_0\lesssim f(\beta,D)(M^{3/2}\varepsilon^{1/4}+O(\eta_0))$ and $\varepsilon=M^{-8}$,
we take $t=M^{\alpha}$ with $0<\alpha< 1/4$. We now write
\begin{align*}
\Big|\sum_{j=1}^{n_M}(n_M-j) \langle \mathcal{J}^* K'_{\gamma},\cS_{u^{\gamma}}^j \mathrm{T} K'_{\gamma}\rangle_{\nu_1^{\gamma}}\Big| &= \Big|\sum_{r=1}^{n_M}\sum_{j=1}^{n_M-r}\langle \cJ^{\ast}K_\gamma',\cS_{u^{\gamma}}^j\mathrm{T}K_\gamma'\rangle_{\nu_1^{\gamma}}\Big|\\
&\le \Big| \sum_{r=1}^{n_M}\sum_{a=0}^{\lfloor\frac{n_M-r}{t}\rfloor-1}\sum_{j=1+at}^{t(a+1)} \langle \mathcal{J}^* K'_{\gamma},\cS_{u^{\gamma}}^j \mathrm{T} K'_{\gamma}\rangle_{\nu_1^{\gamma}}\Big| + n_Mt\, \|K_{\gamma}'\|^2_{\nu} \, ,
 \end{align*}
where we estimated $| \sum_{r=1}^{n_M}\sum_{j=t\lfloor\frac{n_M-r}{t}\rfloor}^{n_M-r}\langle \mathcal{J}^* K'_{\gamma},\cS_{u^{\gamma}}^j \mathrm{T} K'_{\gamma}\rangle_{\nu_1^{\gamma}}| \le n_Mt \|K_\gamma'\|^2_{\nu}$.

Note that $\sum_{j=1+at}^{t(a+1)}\langle \cdot,\cS_{u^\gamma}^{j}\cdot\rangle = \sum_{j=1}^{t}\langle \cdot,\cS_{u^\gamma}^{j+at}\cdot\rangle$. So using \eqref{e:packets},
\begin{multline*}
\Big|\sum_{r=1}^{n_M}\sum_{a=0}^{\lfloor\frac{n_M-r}{t}\rfloor-1}\sum_{j=1+at}^{t(a+1)} \langle \mathcal{J}^* K'_{\gamma},\cS_{u^{\gamma}}^j \mathrm{T} K'_{\gamma}\rangle_{\nu_1^{\gamma}}\Big|\\
\le \Big|\sum_{r=1}^{n_M}\sum_{a=0}^{\lfloor\frac{n_M-r}{t}\rfloor-1}\sum_{j=1}^{t}c_0^j\langle \cJ^{\ast} K_\gamma',  \ee^{\ii \theta_0}\cS_{\gamma}^{ j} \ee^{-\ii \theta_0}\cS_{u^\gamma}^{at}  \mathrm{T}K_\gamma' \rangle_{\nu} \Big|
\\ +  n_M\cdot \frac{n_M}t \cdot t^2\sqrt{2\cC_0}\cdot \|K_\gamma'\|_{\nu}\|K_{\gamma}'\|_{\ell^4(\nu)} \, .
\end{multline*} 

Therefore,
\begin{equation}\label{e:case2}
\begin{aligned}
&\frac{\mu_1^\gamma(\mathrm{B}_1)}{N n_M^2} \Big|\sum_{j=1}^{n_M} (n_M-j)   \langle \mathcal{J}^* K'_{\gamma},\cS_{u^{\gamma}}^j \mathrm{T} K'_{\gamma}\rangle_{\nu_1^{\gamma}}\Big|\\
&\le \frac{\mu_1^\gamma(\mathrm{B}_1)}{N n_M^2} \Big|\sum_{r=1}^{n_M}\sum_{a=0}^{\lfloor\frac{n_M-r}{t}\rfloor-1}\Big\langle \cJ^{\ast} K_\gamma',  \ee^{\ii \theta_0}\sum_{j=1}^{t}c_0^j\cS_{\gamma}^{ j} \ee^{-\ii \theta_0}\cS_{u^\gamma}^{at}  \mathrm{T}K_\gamma' \Big\rangle_{\nu}\Big| \\
&\quad + \frac{\mu_1^\gamma(\mathrm{B}_1)}{N}  \cdot t\left( \sqrt{2\cC_0} \| K_\gamma'\|_{\nu_k^{\gamma}}\| K_\gamma'\|_{\ell^4(\nu)} +n_M^{-1}\|K_\gamma'\|_{\nu}^2\right)  .
\end{aligned}
\end{equation}
Thanks to \eqref{eq:APrioriBoundOperators} and Remark \ref{rem:MeasureNuIsLessScaryThanCoronavirus}, the last term goes to zero for our choice of $t$, as $N\to\infty$ followed by $\eta_0\downarrow 0$ followed by $M\to\infty$.

Now, we decompose $\ee^{-\ii\theta_0} \cS_{u^{\gamma}}^{at}\mathrm{T} K'_{\gamma}  = P_{\mathbf{1},\nu} \ee^{-\ii\theta_0} \cS_{u^{\gamma}}^{at}\mathrm{T} K'_{\gamma}  + P_{\mathbf{1}^\perp,\nu}\ee^{-\ii\theta_0} \cS_{u^{\gamma}}^{at}\mathrm{T} K'_{\gamma} $.

Consider the first term. Recall that $(\cS_\gamma\mathbf{1}- \mathbf{1})(b)= \omega^\gamma(b)$. We deduce that 
\[
\mathcal{S}_\gamma^j  P_{\mathbf{1},\nu} \ee^{-\ii\theta_0} \cS_{u^{\gamma}}^{at}\mathrm{T} K'_{\gamma} =  P_{\mathbf{1},\nu} \ee^{-\ii\theta_0} \cS_{u^{\gamma}}^{at}\mathrm{T} K'_{\gamma} + R_j,
\]
where $R_j = c_{\gamma}\sum_{k=0}^{j-1}\cS_{\gamma}^k\omega^\gamma$ if $P_{\mathbf{1},\nu} \ee^{-\ii\theta_0}\cS_{u^{\gamma}}^{at} \mathrm{T} K'_{\gamma} = c_{\gamma}\mathbf{1}$. By \eqref{eq:EstimateOnXi} and \eqref{eq:TrivialContractionBound},
\[
\|R_j\|_\nu = j\eta_0 O_{N\to +\infty, \gamma}(1).
\]
With $\cS_\gamma^j$ gone, we obtain a geometric sum, so we get
\begin{multline*}
\bigg\langle \mathcal{J}^* K'_{\gamma},\ee^{\ii\theta_0} \sum_{j=1}^{t}  c_0 ^j \mathcal{S}^j_\gamma P_{\mathbf{1},\nu} \ee^{-\ii\theta_0} \cS_{u^{\gamma}}^{at}\mathrm{T} K'_{\gamma} \bigg\rangle_{\nu_1^{\gamma}} \\
=\ee^{\ii\theta_0}\left\langle \mathcal{J}^* K'_{\gamma}, P_{\mathbf{1},\nu} \ee^{-\ii\theta_0} \cS_{u^{\gamma}}^{at}\mathrm{T} K'_{\gamma} \right\rangle_{\nu_1^{\gamma}}    \frac{c_0-c_0^{t+1}}{1-c_0} + t^2O_{N\to +\infty, \gamma}(\eta_0).
\end{multline*}
By Proposition~\ref{prop:PhasesReallyChange}, we have $|1-c_0|\geq \delta_0- \cC_0$, where $\liminf\limits_{\eta_0\downarrow 0}\liminf\limits_{N\to\infty}\delta_0\ge CL^{-8}$. Take $L=M^{1/64}$ and recall that $\cC_0 \lesssim (M^{3/2}\varepsilon_M^{1/4}+O(\eta_0)+M^{-s+1})\approx M^{-1/2}+O(\eta_0)+M^{-s+1}$. For $s=2$ we thus get $|1-c_0|\gtrsim M^{-1/8}$ as $N\to \infty$ followed by $\eta_0\downarrow 0$. Using \eqref{eq:BornesTJ}, \eqref{eq:TrivialContractionBound}, \eqref{eq:APrioriBoundOperators} and Remark \ref{rem:MeasureNuIsLessScaryThanCoronavirus}, we thus get
\begin{multline}\label{e:case2a}
\bigg| \frac{\mu_1^\gamma(\mathrm{B}_1)}{N n_M^2} \sum_{r=1}^{n_M} \sum_{a=0}^{\lfloor\frac{n_M-r}{t}\rfloor-1} \bigg\langle \mathcal{J}^* K'_{\gamma},\ee^{\ii\theta_0} \sum_{j=1}^{t}  c_0 ^j \mathcal{S}^j_\gamma P_{\mathbf{1},\nu} \ee^{-\ii\theta_0} \cS_{u^{\gamma}}^{at}\mathrm{T} K'_{\gamma} \bigg\rangle_{\nu_1^{\gamma}} \bigg|  \\
\le \frac{1}{t}\left(M^{1/8}+t^2\eta_0\right)O_{N\to\infty,\gamma}(1).
\end{multline}

Finally, let us deal with the term $P_{\mathbf{1}^\perp,\nu}\ee^{-\ii\theta_0} \cS_{u^{\gamma}}^{at}\mathrm{T} K'_{\gamma}$. By \eqref{e:cur2}, we have $\cS_{\gamma}^{\ast}\mathbf{1}=\mathbf{1} + \tilde{\omega}^{\gamma}$, where $\tilde{\omega}^{\gamma}(b) = \frac{-\Im \gamma}{\Im R_{\gamma}^-(t_b)} \int_{0}^{L_b} |\xi_-^{\gamma}(x_b)|^2\,\dd x_b$. So by iteration, $\cS_{\gamma}^{\ast\,l} \mathbf{1} = \mathbf{1} +  \sum_{s=0}^{l-1} \cS_{\gamma}^{\ast\,s}\tilde{\omega}^{\gamma}$. Hence, $\langle \mathbf{1},\mathcal{S}_{\gamma}^lJ\rangle_{\nu} = \langle \mathbf{1},J\rangle_{\nu} + \langle \sum_{s=0}^{l-1}\mathcal{S}_{\gamma}^{\ast\,s}\tilde{\omega}^{\gamma},J\rangle_{\nu}$. Denoting $\mathcal{Z}_l J :=  \tilde{\omega}^{\gamma} \sum_{s=0}^{2l-1} \mathcal{S}_{\gamma}^sJ$, we see that if $J\perp \mathbf{1}$, then $(\cS_{\gamma}^{2l}J -  \mathcal{Z}_l J) \perp \mathbf{1}$. Consequently, by Proposition~\ref{prop:ContractConstant}, for any $L_0$,
\begin{align*}
 \|\cS_{\gamma}^{2(r+1)}J\|_{\nu} &\le \|\cS_{\gamma}^{2}(\mathcal{S}_{\gamma}^{2r } - \mathcal{Z}_{r})J\|_{\nu} + \|\mathcal{Z}_{r} J\|_{\nu} \\
& \le \left(1-c(D,\beta)L_0^{-2}\right)^{1/2} \|(\cS_{\gamma}^{2r}  - \mathcal{Z}_{r})J\|_{\nu} + \,C_{N,L_0,r}'(J)^{1/2} +  \|\mathcal{Z}_{r} J\|_{\nu}\, .
\end{align*}
where $C_{N,L_0,l}'(J)= C_{N,L_0}'((\mathcal{S}_{\gamma}^{2l} - \mathcal{Z}_l )J) = O_{N\to + \infty, \gamma}(1)L^{-s}_0$ by Corollary~\ref{cor:BadIsntTooBad}. Using \eqref{eq:EstimateOnXi} and $\|(\mathcal{S}_{\gamma}^{2r} - \mathcal{Z}_{r})J\| \le \|\mathcal{S}_{\gamma}^{2r} J\| + \| \mathcal{Z}_{r}J\|$, we get by iteration
\begin{align*}
\| \mathcal{S}_{\gamma}^{2r }  J \|_{\nu} &\le \left(1-L^{-2}_0c(D,\beta)\right)^{r/2}  \| J\|_{\nu} +   \sum_{l=0}^{r-1}C_{N,L_0,l}'(J)^{1/2}  + 2  \sum_{l=0}^{r-1} \| \mathcal{Z}_{l}  J\|_{\nu}\\
&= \left(1-L^{-2}_0c(D,\beta)\right)^{r/2}  \| J\|_{\nu} + \left(rL^{-s}_0 + r^2\eta_0\right) O_{N\to + \infty, \gamma}(1)
\end{align*}
for any $J\perp \mathbf{1}$ in $\ell^2(\nu)$ satisfying \Hol{}.

Decomposing $\sum_{j=1}^tc_0^j\cS_{\gamma}^jJ = \sum_{r=1}^{\lfloor t/2\rfloor}c_0^{2r}\cS_{\gamma}^{2r}J+\sum_{r=0}^{\lfloor t/2\rfloor}c_0^{2r+1}\cS_{\gamma}\cS_{\gamma}^{2r}J$ we thus get
\begin{multline}\label{e:case2b}
 \left| \frac{\mu_1^\gamma(\mathrm{B}_1)}{N n_M^2} \sum_{r=1}^{n_M}  \sum_{a=0}^{\lfloor\frac{n_M-r}{t}\rfloor-1}\Big\langle \mathcal{J}^* K'_{\gamma},\ee^{\ii\theta_0} \sum_{j=1}^{t}  c_0 ^j \mathcal{S}^j_\gamma P_{\mathbf{1}^\perp,\nu} \ee^{-\ii\theta_0} \cS_{u^{\gamma}}^{at}\mathrm{T} K'_{\gamma} \Big\rangle_{\nu_1^{\gamma}}\right| \\
 \le \frac{1}{t}\Big( \frac{L^2_0}{c(D,\beta)}+ t^2 L^{-s}_0 + t^3\eta_0\Big)O_{N\to +\infty, \gamma}(1).
\end{multline}

Recall that $t=M^{\alpha}$, $\alpha<\frac{1}{4}$. We now choose $\alpha = 3/16$, $s = 4$ and let $L_0=M^{1/16}$. Then collecting \eqref{e:case1}, \eqref{e:case2}, \eqref{e:case2a}, \eqref{e:case2b}, we see that whether $G_N$ is in case (i) or (ii), the variance is bounded by quantities vanishing in the limit $N\To\infty$ followed by $\eta_0\downarrow 0$ followed by $M\To\infty$. This concludes the proof of Theorem~\ref{thm:qeqg}.

\appendix
\section{Further properties of Green's function on quantum trees}\label{sec:greenquan}
This appendix is devoted to a corollary of Lemma~\ref{lem:IdentitiesGreen}, which can be thought of as a kind of ``recursive relation'' for the imaginary parts of Green's functions. Its origin in the combinatorial case is the recursive relation of the spherical function of regular trees. We also discuss the proof of Lemma~\ref{lem:ASW} afterwards.

\begin{cor}          \label{cor:psiiden}
Define $\Psi_{\gamma,v}(w) = \Im G^{\gamma}(v,w)$ for $\gamma\in \C\setminus\R$. For any $b\in B$, we have
\begin{equation}     \label{eq:psiiden3}
\Psi_{\gamma,o_b}(t_b) - \overline{\zeta^{\gamma}(\hat{b})}\Psi_{\gamma,t_b}(t_b) - \zeta^{\gamma}(b)\Psi_{\gamma,o_b}(o_b) + \overline{\zeta^{\gamma}(\hat{b})}\zeta^{\gamma}(b)\Psi_{\gamma,t_b}(o_b)=-\overline{\zeta^{\gamma}(\hat{b})} \zeta^{\gamma}(b)\Im S_{\gamma}(L_b) \, ,
\end{equation}
while, if $k\geq 2$, for any $(b_1,\dots, b_k)\in \mathrm{B}_k(\mathbb{T})$, we have
\begin{equation}     \label{eq:psiiden2}
\Psi_{\gamma,o_{b_1}}(t_{b_k}) - \overline{\zeta^{\gamma}(\hat{b}_1)}\Psi_{\gamma,t_{b_1}}(t_{b_k}) - \zeta^{\gamma}(b_k)\Psi_{\gamma,o_{b_1}}(o_{b_k}) + \overline{\zeta^{\gamma}(\hat{b}_1)}\zeta^{\gamma}(b_k)\Psi_{\gamma,t_{b_1}}(o_{b_k})=0 \, .
\end{equation}
Finally, we have
\begin{multline}     \label{e:psiiden1}
\Im \left(\frac{G^{\gamma}(t_b,t_b)}{S_{\gamma}^2(L_b)}\right) - \frac{\zeta^{\gamma}(b)}{S_{\gamma}(L_b)}\Im \left(\frac{G^{\gamma}(o_b,t_b)}{S_{\gamma}(L_b)}\right)-\frac{\overline{\zeta^{\gamma}(b)}}{\overline{S_{\gamma}(L_b)}}\Im \left(\frac{G^{\gamma}(o_b,t_b)}{S_{\gamma}(L_b)}\right) \\
+ \left|\frac{\zeta^{\gamma}(b)}{S_{\gamma}(L_b)}\right|^2\Im G^{\gamma}(o_b,o_b) = \Im \left(\frac{\zeta^{\gamma}(b)}{S_{\gamma}(L_b)}\right)\le \Im R_{\gamma}^+(o_b) \,.
\end{multline}
\end{cor}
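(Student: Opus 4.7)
The plan is to establish the three identities separately, in order of increasing difficulty. Throughout I would write $\Psi_{\gamma,v}(w) = \tfrac{1}{2\ii}[G^\gamma(v,w) - \overline{G^\gamma(v,w)}]$ and exploit the multiplicative property \eqref{e:greenmul}, the symmetry \eqref{e:sym}, and the rational identities \eqref{e:zetainv} and \eqref{e:zetawt} of Lemma~\ref{lem:IdentitiesGreen}. The common algebraic lemma is that whenever $G^\gamma(o_{b_1},w) = \zeta^\gamma(\hat{b}_1)G^\gamma(t_{b_1},w)$, expanding both sides in real and imaginary parts yields
\[
\Psi_{\gamma,o_{b_1}}(w) - \overline{\zeta^\gamma(\hat{b}_1)}\Psi_{\gamma,t_{b_1}}(w) = \Im\zeta^\gamma(\hat{b}_1)\cdot G^\gamma(t_{b_1},w).
\]

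I would first prove \eqref{eq:psiiden2}. For $k\ge 2$, the non-backtracking path from $w$ to $o_{b_1}$ ends with $\hat{b}_1$ for both $w=t_{b_k}$ and $w=o_{b_k}$, so the above lemma applies and the LHS of \eqref{eq:psiiden2} factors as $\Im\zeta^\gamma(\hat{b}_1)\cdot\bigl[G^\gamma(t_{b_1},t_{b_k}) - \zeta^\gamma(b_k)G^\gamma(t_{b_1},o_{b_k})\bigr]$. The bracket vanishes by \eqref{e:greenmul} applied along the sub-path $b_2,\dots,b_k$, since $G^\gamma(t_{b_1},t_{b_k}) = \zeta^\gamma(b_2)\cdots\zeta^\gamma(b_k)\,G^\gamma(t_{b_1},t_{b_1})$ and $G^\gamma(t_{b_1},o_{b_k}) = \zeta^\gamma(b_2)\cdots\zeta^\gamma(b_{k-1})\,G^\gamma(t_{b_1},t_{b_1})$.

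For \eqref{eq:psiiden3}, the pair $w=t_b$ still fits the factorization lemma and produces $\Im\zeta^\gamma(\hat b)\cdot G^\gamma(t_b,t_b)$. The pair $w=o_b$ does not: instead I would substitute $G^\gamma(o_b,o_b) = G^\gamma(t_b,o_b)/\zeta^\gamma(b)$ and use the decomposition $\zeta^\gamma(b)^{-1} - \overline{\zeta^\gamma(\hat b)} = S_\gamma(L_b)/G^\gamma(t_b,t_b) + 2\ii\Im\zeta^\gamma(\hat b)$, coming from the first identity of \eqref{e:zetainv} plus the trivial $\zeta - \overline\zeta = 2\ii\Im\zeta$. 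After combining both pairs and using the second identity of \eqref{e:zetainv} in the form $1 - \zeta^\gamma(b)\zeta^\gamma(\hat b) = \zeta^\gamma(b)S_\gamma(L_b)/G^\gamma(t_b,t_b)$, the LHS collapses to $\zeta^\gamma(b)\bigl[S_\gamma(L_b)\Im\zeta^\gamma(\hat b) - \Im(S_\gamma(L_b)\zeta^\gamma(\hat b))\bigr]$. Splitting $S_\gamma(L_b) = \Re S_\gamma(L_b) + \ii\Im S_\gamma(L_b)$ and $\zeta^\gamma(\hat b)$ into real and imaginary parts, the bracket reduces immediately to $-\overline{\zeta^\gamma(\hat b)}\,\Im S_\gamma(L_b)$, giving the claimed RHS.

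For \eqref{e:psiiden1}, set $u := 1/S_\gamma(L_b)$ and $\alpha := \zeta^\gamma(b)/S_\gamma(L_b) = uz$ with $z := \zeta^\gamma(b)$. Because $G^\gamma(o_b,t_b) = zG^\gamma(o_b,o_b)$, the two cross-terms together with the $|\alpha|^2$-term take the form $-2\Re(\alpha)\Im[\alpha G^\gamma(o_b,o_b)] + |\alpha|^2 \Im G^\gamma(o_b,o_b)$, and a direct expansion in real and imaginary parts gives the pointwise identity $-2\Re(\alpha)\Im[\alpha\mathcal{G}] + |\alpha|^2\Im\mathcal{G} = -\Im[\alpha^2\mathcal{G}]$ valid for any $\alpha,\mathcal{G}\in\C$. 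Thus the LHS equals $\Im\bigl[u^2(G^\gamma(t_b,t_b) - z^2 G^\gamma(o_b,o_b))\bigr]$; by \eqref{e:zetainv} the factor in parentheses is $G^\gamma(t_b,t_b)(1 - z\zeta^\gamma(\hat b)) = zS_\gamma(L_b)$, yielding the equality $\Im[\zeta^\gamma(b)/S_\gamma(L_b)]$. For the inequality, \eqref{e:zetawt} gives $\Im[\zeta^\gamma(b)/S_\gamma(L_b)] - \Im R^+_\gamma(o_b) = \Im[C_\gamma(L_b)/S_\gamma(L_b)]$, and I would identify $-C_\gamma(L_b)/S_\gamma(L_b)$ as the Weyl $m$-function for the Dirichlet Schr\"odinger problem on $[0,L_b]$: the solution $\psi := C_\gamma - (C_\gamma(L_b)/S_\gamma(L_b))S_\gamma$ satisfies $\psi(0)=1$, $\psi(L_b)=0$ and $\psi'(0) = -C_\gamma(L_b)/S_\gamma(L_b)$, so Green's identity $[\psi\overline{\psi'} - \psi'\overline{\psi}]_0^{L_b} = 2\ii\,\Im\gamma\int_0^{L_b}|\psi|^2$ gives $\Im[C_\gamma(L_b)/S_\gamma(L_b)] = -\Im\gamma\int_0^{L_b}|\psi|^2 \le 0$. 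The main obstacle is the bookkeeping in \eqref{eq:psiiden3}, since unlike in \eqref{eq:psiiden2} there is no clean factorization and one must carry the correction term $S_\gamma(L_b)/G^\gamma(t_b,t_b)$ through to the final cancellation.
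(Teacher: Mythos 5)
Your proposal is correct, and its skeleton is the same as the paper's: everything rests on the product rule $\Im(zz')=\overline{z}\,\Im z'+z'\,\Im z$, the multiplicative identity \eqref{e:greenmul}, and \eqref{e:zetainv}. The one structural difference is that you factor through the \emph{first} edge of the path (writing $G^\gamma(o_{b_1},w)=\zeta^\gamma(\hat b_1)G^\gamma(t_{b_1},w)$ and handling the exceptional pair $w=o_b$ via $G^\gamma(o_b,o_b)=G^\gamma(t_b,o_b)/\zeta^\gamma(b)$ together with $\tfrac{1}{\zeta^\gamma(b)}-\overline{\zeta^\gamma(\hat b)}=\tfrac{S_\gamma(L_b)}{G^\gamma(t_b,t_b)}+2\ii\,\Im\zeta^\gamma(\hat b)$), whereas the paper factors through the \emph{last} edge, first proving $\Psi_{\gamma,o_{b_1}}(t_{b_k})-\zeta^\gamma(b_k)\Psi_{\gamma,o_{b_1}}(o_{b_k})=\Im\zeta^\gamma(b_k)\,\overline{G^\gamma(o_{b_1},o_{b_k})}$ and then using \eqref{eq:MaxIsTired2}; the two bookkeepings are transposes of one another and I checked that your version of \eqref{eq:psiiden3} does collapse to $\zeta^\gamma(b)[S_\gamma(L_b)\Im\zeta^\gamma(\hat b)-\Im(S_\gamma(L_b)\zeta^\gamma(\hat b))]=-\zeta^\gamma(b)\overline{\zeta^\gamma(\hat b)}\,\Im S_\gamma(L_b)$ as claimed. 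For \eqref{e:psiiden1} your pointwise identity $-2\Re(\alpha)\Im(\alpha\mathcal G)+|\alpha|^2\Im\mathcal G=-\Im(\alpha^2\mathcal G)$ is a tidy repackaging of the paper's computation and is easily verified. The only genuinely different ingredient is the final inequality: the paper invokes Remark~\ref{rem:hj}, which bounds $\Im(S_\gamma'(L_b)/S_\gamma(L_b))$ by monotonicity of the Wronskian and then uses the symmetry of the potential to identify $S_\gamma'(L_b)=C_\gamma(L_b)$, whereas you bound $\Im(C_\gamma(L_b)/S_\gamma(L_b))\le 0$ directly by recognizing $-C_\gamma(L_b)/S_\gamma(L_b)$ as the Dirichlet Weyl function on $[0,L_b]$ and applying Green's identity to $\psi=C_\gamma-(C_\gamma(L_b)/S_\gamma(L_b))S_\gamma$; this variant is correct and has the minor advantage of not using the symmetry of $W_b$ (it does, like the paper's argument, require $\Im\gamma>0$, consistent with how the inequality is used).
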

\begin{proof}
Since $G^{\gamma}(v_0,v_k) = G^{\gamma}(v_0,v_{k-1})\zeta^{\gamma}(b_k)$, taking the imaginary parts (and remembering that $\Im (zz') = z \Im z' + \overline{z'}\Im z$) yields
\begin{equation}\label{e:psipoi}
\Im G^{\gamma}(o_{b_1},t_{b_k}) - \zeta^{\gamma}(b_k)\Im G^{\gamma}(o_{b_1},o_{b_k}) = \Im \zeta^{\gamma}(b_k)\cdot \overline{G^{\gamma}(o_{b_1},o_{b_k})}\,.
\end{equation}
In particular, 
\begin{equation}\label{eq:MaxIsTired}
\Psi_{\gamma,o_b}(t_b) - \zeta^{\gamma}(b)\Psi_{\gamma,o_b}(o_b) = \Im \zeta^{\gamma}(b)\overline{G^{\gamma}(o_b,o_b)}.
\end{equation} 
Next, we have $G^{\gamma}(t_b,t_b) = \frac{G^{\gamma}(o_b,t_b)}{\zeta^{\gamma}(\hat{b})}$ and $\frac{1}{\zeta^{\gamma}(\hat{b})} = \zeta^{\gamma}(b) + \frac{S_{\gamma}(L_b)}{G^{\gamma}(o_b,o_b)}$ by \eqref{e:zetainv}. Hence, 
\begin{equation}\label{eq:MaxIsTired2}
G^{\gamma}(t_b,t_b) = \zeta^{\gamma}(b)G^{\gamma}(o_b,t_b) + S_{\gamma}(L_b)\zeta^{\gamma}(b)
\end{equation} and thus
\[
\Psi_{\gamma,t_b}(t_b) = \Im \zeta^{\gamma}(b)\Re G^{\gamma}(o_b,t_b) + \Re \zeta^{\gamma}(b)\Psi_{\gamma,o_b}(t_b) + \Im \left(\zeta^{\gamma}(b)S_{\gamma}(L_b)\right).
\]
It follows that
\begin{equation}\label{eq:psi1rel}
\Psi_{\gamma,t_b}(t_b) - \zeta^{\gamma}(b)\Psi_{\gamma,o_b}(t_b) =\Im \zeta^{\gamma}(b) \overline{G^{\gamma}(o_b,t_b)}+\Im \left(\zeta^{\gamma}(b)S_{\gamma}(L_b)\right).
\end{equation}

Using \eqref{eq:MaxIsTired2} and $\Im (zz') = z \Im z' + \overline{z'}\Im z$ again, we deduce that
\begin{align*}
& \overline{\zeta^{\gamma}(\hat{b})}[\Psi_{\gamma,t_b}(t_b) - \zeta^{\gamma}(b)\Psi_{\gamma,t_b}(o_b)] = \Im \zeta^{\gamma}(b)\overline{\zeta^{\gamma}(\hat{b})G^{\gamma}(o_b,t_b)}+\overline{\zeta^{\gamma}(\hat{b})}\Im \left(\zeta^{\gamma}(b)S_{\gamma}(L_b)\right) \\
& \quad= \Im \zeta^{\gamma}(b)\overline{\left[G^{\gamma}(o_b,o_b)-S_{\gamma}(L_b)\zeta^{\gamma}(\hat{b})\right]} +\overline{\zeta^{\gamma}(\hat{b})}\Im \left(\zeta^{\gamma}(b)S_{\gamma}(L_b)\right) \\
&\quad = \Im \zeta^{\gamma}(b)\overline{G^{\gamma}(o_b,o_b)}+\overline{\zeta^{\gamma}(\hat{b})} \zeta^{\gamma}(b)\Im S_{\gamma}(L_b).
\end{align*}
Recalling \eqref{eq:MaxIsTired}, this proves \eqref{eq:psiiden3}.

Now let $k\ge 2$. If we apply \eqref{e:psipoi}  to $(b_2,\dots,b_k)$, we obtain 
\begin{equation*}
\Im G^{\gamma}(o_{b_2},t_{b_k}) - \zeta^{\gamma}(b_k)\Im G^{\gamma}(o_{b_2},o_{b_k}) = \Im \zeta^{\gamma}(b_k)\cdot \overline{G^{\gamma}(o_{b_2},o_{b_k})}\,.
\end{equation*}
Multiplying this equation by $\overline{\zeta^{\gamma}(\hat{b}_1)}$ and using \eqref{e:greenmul},  we get
\begin{equation*}
\overline{\zeta^{\gamma}(\hat{b}_1)}\Im G^{\gamma}(o_{b_2},t_{b_k}) - \overline{\zeta^{\gamma}(\hat{b}_1)} \zeta^{\gamma}(b_k)\Im G^{\gamma}(o_{b_2},o_{b_k}) = \Im \zeta^{\gamma}(b_k)\cdot \overline{G^{\gamma}(o_{b_1},o_{b_k})}\,.
\end{equation*}
But, by \eqref{e:psipoi}, the RHS is equal to $\Im G^{\gamma}(o_{b_1},t_{b_k}) - \zeta^{\gamma}(b_k)\Im G^{\gamma}(o_{b_1},o_{b_k})$, so \eqref{eq:psiiden2} follows.

Finally, let us prove \eqref{e:psiiden1}. By \eqref{eq:MaxIsTired2},
\begin{equation*}
\frac{G^{\gamma}(t_b,t_b)}{S_{\gamma}^2(L_b)} = \frac{\zeta^\gamma(b)}{S_{\gamma}(L_b)}\frac{G^{\gamma}(o_b,t_b)}{S_{\gamma}(L_b)} + \frac{\zeta^\gamma(b)}{S_\gamma(L_b)}.
\end{equation*}
To show the equality in \eqref{e:psiiden1}, we must therefore show that 
\[
\Im \left(  \frac{\zeta^\gamma(b)}{S_{\gamma}(L_b)}\frac{G^{\gamma}(o_b,t_b)}{S_{\gamma}(L_b)}\right) =  2 \Re\left(\frac{\zeta^\gamma(b)}{S_{\gamma}(L_b)}\right)\Im \left(\frac{G^{\gamma}(o_b,t_b)}{S_{\gamma}(L_b)}\right) -  \left|\frac{\zeta^{\gamma}(b)}{S_{\gamma}(L_b)}\right|^2\Im G^{\gamma}(o_b,o_b).
\]
We have 
$\left|\frac{\zeta^{\gamma}(b)}{S_{\gamma}(L_b)}\right|^2 G^{\gamma}(o_b,o_b) = \frac{\overline{\zeta^{\gamma}(b)}}{\overline{S_{\gamma}(L_b)}} \frac{G^{\gamma}(o_b,t_b)}{S_\gamma(L_b)}$, so that 
\begin{multline*}
\Im \left(  \frac{\zeta^\gamma(b)}{S_{\gamma}(L_b)}\frac{G^{\gamma}(o_b,t_b)}{S_{\gamma}(L_b)}\right)  + \left|\frac{\zeta^{\gamma}(b)}{S_{\gamma}(L_b)}\right|^2\Im G^{\gamma}(o_b,o_b) \\
= \Im \left(  \frac{\zeta^\gamma(b)}{S_{\gamma}(L_b)}\frac{G^{\gamma}(o_b,t_b)}{S_{\gamma}(L_b)}\right) + \Im \bigg(  \frac{\overline{\zeta^\gamma(b)}}{\overline{S_{\gamma}(L_b)}}\frac{G^{\gamma}(o_b,t_b)}{S_{\gamma}(L_b)}\bigg) =  2 \Re\left(\frac{\zeta^\gamma(b)}{S_{\gamma}(L_b)}\right)\Im \left(\frac{G^{\gamma}(o_b,t_b)}{S_{\gamma}(L_b)}\right),
\end{multline*}
and the equality in \eqref{e:psiiden1} follows. For the inequality in \eqref{e:psiiden1}, we use that $\Im R_{\gamma}^+(o_b) = \Im \frac{\zeta^{\gamma}(b)-C_{\gamma}(L_b)}{S_{\gamma}(L_b)} \ge \Im \frac{\zeta^{\gamma}(b)}{S_{\gamma}(L_b)}$ by Remark~\ref{rem:hj} below.
\end{proof}

\begin{proof}[Proof of Lemma~\ref{lem:ASW}]
We only prove the ``current'' relations, see \cite[Section 2]{AC} for the remaining parts. We will also use that $\zeta^\gamma(b) = \frac{V_{\gamma;o}^+(t_b)}{V_{\gamma;o}^+(o_b)}$, as follows from \cite[Lemma 2.1]{AC}.

Since $V_{\gamma;o}^+$ satisfies the $\delta$-conditions, we have $\sum_{b^+\in \cN_b^+} R_{\gamma}^+(o_{b^+}) = R_{\gamma}^+(t_b) + \alpha_{t_b}$, so $\sum_{b^+\in \cN_b^+} \Im R_{\gamma}^+(o_{b^+}) = \Im R_{\gamma}^+(t_b)$. Similarly, as $\sum_{w_-\in \cN_w^-} U_{w_-}'(L_{w_-}) + \alpha_w U_w(0) =   U_w'(0)$, we get $\sum_{b^-\in \cN_b^-} \Im R_{\gamma}^-(t_{b^-}) = \Im R_{\gamma}^-(o_b)$.

Suppose $H f = \gamma f$ and let $J_{\gamma}(x_b) = \Im [\overline{f(x_b)}f'(x_b)]$. Then $J_{\gamma}'(x_b) = \Im [|f'(x_b)|^2 + \overline{f(x_b)}[W(x_b)f(x_b)-\gamma f(x_b)] = -\Im \gamma |f(x_b)|^2$.  Therefore,
\begin{equation}\label{eq:JConserved}
J_{\gamma}(t_b) = J_{\gamma}(o_b) - \Im \gamma \int_{0}^{L_b} |f_b(x_b)|^2 \,\dd x_b.
\end{equation}

But for $f_b(x_b)=\xi_+^{\gamma}(x_b) = \frac{V_{\gamma; o_b}^+(x_b)}{V_{\gamma; o_b}^+(o_b)}$, we have $J(t_b) = |\zeta^{\gamma}(b)|^2\Im R_{\gamma}^+(t_b)$ and $J(o_b) = \Im R_{\gamma}^+(o_b)$, so \eqref{e:cur1} follows. Equation \eqref{e:cur2} also follows by taking $f_b(x_b)= \xi_-^{\gamma}(x_b)= \frac{U_{\gamma;v}^-(x_b)}{U_{\gamma;v}^-(t_b)}$. To deduce \eqref{eq:DefXi}, express $U_{\gamma;v}^-(x_b)$ using the data at $t_b$, which reads
$U_{\gamma;v}^-(x_b) = U_{\gamma;v}^-(t_b)[S_{\gamma}'(L_b)C_{\gamma}(x_b)-C_{\gamma}'(L_b)S_{\gamma}(x_b)] + (U_{\gamma;v}^-)'(t_b)[-S_{\gamma}(L_b)C_{\gamma}(x_b)+C_{\gamma}(L_b)S_{\gamma}(x_b)]$. Recalling \eqref{e:simplif} completes the proof.
\end{proof}

\begin{rem}\label{rem:hj}
It also follows that $\Im (\frac{S_{\gamma}'(L_b)}{S_{\gamma}(L_b)}) \le 0$.

In fact, $\Im(\frac{S_{\gamma}'(L_b)}{S_{\gamma}(L_b)}) = \frac{1}{|S_{\gamma}(L_b)|^2} \Im [\overline{S_{\gamma}(L_b)}S_{\gamma}'(L_b)] \le \frac{1}{|S_{\gamma}(L_b)|^2} \Im[\overline{S_{\gamma}(0)}S_{\gamma}'(0)] =0$. If the potentials are symmetric, then $S_{\gamma}'(L_b)=C_{\gamma}(L_b)$, so we also get $\Im (\frac{C_{\gamma}(L_b)}{S_{\gamma}(L_b)}) \le 0$.
\end{rem}

\section{Proofs of the reduction}\label{app:redproofs}

\subsection{Reduction to a discrete variance}\label{sec:disred}
The aim of this subsection is to prove Proposition~\ref{prop:Reduc1}. In the course of the proof, we will omit the subscripts $N$ from $f_N$ and $\mathcal{K}_N$ to lighten notations.

\smallskip

\textbf{Proof of (1).}
Let $\psi_j$ be an eigenfunction on $\cG_N$ corresponding to $\lambda_j$. On the edge $b$, we have $\psi_{j,b}(x_b) = \psi_j(o_b) C_{\lambda_j}(x_b) + \psi'_{j}(o_b) S_{\lambda_j}(x_b)$. Evaluating at $x_b=L_b$, we get $\psi_{j,b}(x_b) = [C_{\lambda_j}(x_b)-\frac{C_{\lambda_j}(L_b)}{S_{\lambda_j}(L_b)}S_{\lambda_j}(x_b)]\psi_j(o_b) + \frac{S_{\lambda_j}(x_b)}{S_{\lambda_j}(L_b)}\psi_j(t_b)$. From \eqref{e:simplif}, we get
\begin{equation}\label{e:psijegal}
\psi_{j,b}(x_b) = \frac{S_{\lambda_j}(L_b-x_b)}{S_{\lambda_j}(L_b)} \psi_j(o_b) + \frac{S_{\lambda_j}(x_b)}{S_{\lambda_j}(L_b)}\psi_j(t_b) \,.
\end{equation}
Since $2\langle \psi_j, f\psi_j\rangle = \sum_b \int_0^{L_b}f_b(x_b)|\psi_{j,b}(x_b)|^2\,\dd x_b$, it easily follows that
\begin{equation}\label{e:1stscal}
2 \,\langle \psi_j, f\psi_j\rangle = \langle \mathring{\psi}_j, (K_{f,j}+J_{f,j}+M_{f,j}^{(1)}+M_{f,j}^{(2)})_G\mathring{\psi}_j\rangle \,,
\end{equation}
where $\mathring{\psi}_j(v)=\psi_j(v)$, $K_{f,j},J_{f,j}\in\mathscr{H}_0$ and $M_{f,j}\in \mathscr{H}_1$ are defined by
\[
K_{f,j}(v) = \sum_{b\, :\, o_b = v} \frac{1}{S^2_{\lambda_j}(L_b)}\int_0^{L_b}f_b(x_b)S^2_{\lambda_j}(L_b-x_b)\,\dd x_b \,,
\]
\[
J_{f,j}(v) = \sum_{b\, :\, t_b=v}  \frac{1}{S^2_{\lambda_j}(L_b)}\int_0^{L_b}f_b(x_b)S^2_{\lambda_j}(x_b)\,\dd x_b \,,
\]
\[
M_{f,j}^{(1)}(b) = \frac{1}{S^2_{\lambda_j}(L_b)}\int_0^{L_b}f_b(x_b)S_{\lambda_j}(L_b-x_b)S_{\lambda_j}(x_b)\,\dd x_b = M_{f,j}^{(2)}(\hat{b})\,.
\]

Let us write
\begin{multline}\label{e:2ndscal}
2\left[ f\right]^{\gamma_j} := \left\langle \mathring{\psi}_j, \left(\left[\langle K_{f,j}\rangle^{\gamma_j} + \langle J_{f,j}\rangle^{\gamma_j} + \langle M_{f,j}^{(1)}\rangle^{\gamma_j} + \langle M_{f,j}^{(2)}\rangle^{\gamma_j}\right] \mathbf{1}\right)_G\mathring{\psi}_j\right\rangle \\
= \left(\langle K_{f,j} \rangle^{\gamma_j} + \langle J_{f,j}\rangle^{\gamma_j} + 2\langle M_{f,j}\rangle^{\gamma_j} \right)\|\mathring{\psi}_j\|^2  \,,
\end{multline}
where $\langle M_{f,j}^{(1)}\rangle^{\gamma_j} = \langle M_{f,j}^{(2)}\rangle^{\gamma_j} =:\langle M_{f,j}\rangle^{\gamma_j}$ because $\Psi_{\gamma,t_b}(o_b)=\Psi_{\gamma,o_b}(t_b)$. 

From \eqref{e:1stscal} and \eqref{e:2ndscal}, we have
\begin{multline}\label{e:withav}
\frac{1}{\mathbf{N}_N(I)}\sum_{\lambda_j\in I}\left|\langle \psi_j, f\psi_j\rangle - [ f]^{\gamma_j} \right| \le \varie\left( \big{(}K_{f,\lambda}\big{)}_G-\langle K_{f,\lambda}\rangle^{\gamma} \mathbf{1}\right) + \varie\left( \big{(}J_{f,\lambda}\big{)}_G-\langle J_{f,\lambda} \rangle^{\gamma} \mathbf{1}\right) \\+ \varie\left(\big{(}M_{f,\lambda}^{(1)}\big{)}_G-\langle M_{f,\lambda}^{(1)} \rangle^{\gamma} \mathbf{1} \right) +  \varie\left(\big{(}M_{f,\lambda}^{(2)}\big{)}_G-\langle M_{f,\lambda}^{(2)} \rangle^{\gamma} \mathbf{1} \right).
\end{multline}

We now need to replace $[f]^\gamma$ by $\langle f\rangle_{\gamma}$. By \eqref{e:2ndscal},
\begin{equation}\label{e:badav}
[ f ]^{\gamma_j} = \frac{\left\|\mathring{\psi}_j^{(N)}\right\|^2}{2\sum_{v\in V_N}\Im \tilg^{\gamma_j}_N(v,v)} \sum_{b\in B} \int_0^{L_b} f_b(x_b) \Psi^{\gamma_j}_b(x_b) \,\dd x_b \,,
\end{equation}
where
\begin{multline}\label{e:psimachin}
\Psi^{\gamma_j}_b(x_b)= \\ \frac{\Im \tilg^{\gamma_j}_N(o_b,o_b)S^2_{\lambda_j}(L_b-x_b) +  \Im \tilg^{\gamma_j}_N(t_b,t_b)S_{\lambda_j}^2(x_b) + 2 \Im \tilg^{\gamma_j}_N(o_b,t_b)S_{\lambda_j}(L_b-x_b)S_{\lambda_j}(x_b)}{S_{\lambda_j}^2(L_b)}.
\end{multline}

Let us also introduce the quantity
\begin{equation}\label{e:maximf}
\mathbf{f}^\gamma =\mathbf{f}^\gamma_N = \frac{\int_{\mathcal{G}_N} f(x) \Psi^{\gamma}(x) \dd x}{\int_{\mathcal{G}_N} \Psi^{\gamma}(x) \dd x}.
\end{equation}

Then

\begin{equation}\label{eq:ThreeTerms}
\begin{aligned}
\frac{1}{\mathbf{N}_N(I)}\sum_{\lambda_j\in I}|\langle \psi_j,f\psi_j\rangle-\langle f\rangle_{\gamma_j}| &\le \frac{1}{\mathbf{N}_N(I)}\sum_{\lambda_j\in I} |\langle \psi_j,f\psi_j\rangle - [ f]^{\gamma_j}|\\
& + \frac{1}{\mathbf{N}_N(I)}\sum_{\lambda_j\in I} \Big|[ f]^{\gamma_j} - \mathbf{f}^{\gamma_j}\Big|
 + \frac{1}{\mathbf{N}_N(I)}\sum_{\lambda_j\in I} \Big| \mathbf{f}^{\gamma_j}- \langle f\rangle_{\gamma_j}\Big|.
\end{aligned}
\end{equation}

We already controlled the first term. Let us turn to the second. Noting that 
\[
[ f]^{\gamma_j} = [ \mathbf{f}^{\gamma_j} \hat{\mathbf{1}} ]^{\gamma_j}
\]
we have 
\[
\frac{1}{\mathbf{N}_N(I)}\sum_{\lambda_j\in I} \left|[ f]^{\gamma_j} - \mathbf{f}^{\gamma_j}\right| = \frac{1}{\mathbf{N}_N(I)}\sum_{\lambda_j\in I}  | [ \mathbf{f}^{\gamma_j} \hat{\mathbf{1}} ]^{\gamma_j} - \langle \psi_j, \mathbf{f}^{\gamma_j} \hat{\mathbf{1}}  \psi_j \rangle |,
\]
which can be bounded by discrete variances as in \eqref{e:withav}. For example $K_{f,j}(v)$ is now replaced by $K_{\mathbf{f}^{\gamma_j}\hat{\mathbf{1}},j}(v)$. By definition we see that $K_{\mathbf{f}^{\gamma_j}\hat{\mathbf{1}},j}(v) = \mathbf{f}^{\gamma_j}K_{\hat{\mathbf{1}},j}(v)$.

To deal with the last term, we use \cite[eq.(A.12)]{AC}, which shows that
\begin{equation}\label{e:expansame}
\begin{aligned}
\tilg^z(x_b,x_b) =& \frac{1}{S_z(L_b)}\big([S_z(L_b)C_z(x_b)-C_z(L_b)S_z(x_b)]\tilg^z(o_b,x_b) \\
&+ S_z(x_b)[\tilg^z(t_b,x_b) + S_z(L_b)C_z(x_b)-C_z(L_b)S_z(x_b)]\big) \\ 
=& \frac{S_z(L_b-x_b)\tilg^z(o_b,x_b)+S_z(x_b)\tilg^z(t_b,x_b)+S_z(x_b)S_z(L_b-x_b)}{S_z(L_b)}\\
=& \frac{S_{z}^2(L_{b}-x_{b}) \tilg^{z}(o_b,o_{b}) + 2S_{z}(L_b-x_b)S_{z}(x_{b})\tilg^{z}(o_b,t_{b}) +S_{z}^2(x_{b})\tilg^{z}(t_b,t_{b})}{S_{z}^2(L_b)}\\&+\frac{S_z(x_b)S_z(L_b-x_b)}{S_z(L_b)} ,
\end{aligned}
\end{equation}
where we used \eqref{e:simplif} in the second equality. Recalling \eqref{e:psimachin}, we deduce that
\begin{equation}\label{eq:PsiAlmostG}
|\Psi_b^{\gamma_j}(x_b)-\Im \tilg^{\gamma_j}(x_b,x_b)| \le C\eta_0 \left( |\tilg^{\gamma_j}(o_b,o_b)|+2|\tilg^{\gamma_j}(o_b,t_b)|+|\tilg^{\gamma_j}(t_b,t_b)|+1\right), 
\end{equation}
where we used that the function $S_z$ is Lipschitz continuous on $I+\ii[0,1]$, the constant $C$ is uniform in $(z,L,W)\in (\overline{I}+\ii[0,1])\times \mathrm{Lip}_{\mathrm{M}}[\mathrm{m},\mathrm{M}]$ and depends on $C_{\mathrm{Dir}}$. Comparing \eqref{eq:DefBracket} and \eqref{e:maximf} we get

\begin{multline}\label{e:avminav}
|\mathbf{f}^{\gamma}- \langle f\rangle_{\gamma}| \le \frac{|\int_{\mathcal{G}_N}f(x)(\Im \tilg^{\gamma}(x,x)-\Psi^{\gamma}(x))\dd x|}{\int_{\mathcal{G}_N}\Im \tilg(x,x)\dd x} \\
+ \Big|\int_{\mathcal{G}_N}f(x)\Psi^{\gamma}(x) \dd x\Big|\cdot \Big|\frac{1}{\int_{\mathcal{G}_N} \Im \tilg^\gamma(x,x)\dd x}-\frac{1}{\int_{\mathcal{G}_N} \Psi^\gamma(x) \dd x}\Big|\\
\le C\eta_0\frac{\sum_{b\in B_N} \left( |\tilg^{\gamma}(o_b,o_b)|+2|\tilg^{\gamma}(o_b,t_b)|+|\tilg^{\gamma}(t_b,t_b)|+1\right)}{\int_{\mathcal{G}_N}\Im \tilg^{\gamma}(x,x)\dd x}\Big(1+\frac{\int_{\mathcal{G}_N}|\Psi^{\gamma}(x)|\dd x}{|\int_{\mathcal{G}_N}\Psi^{\gamma}(x)\dd x|}\Big),
\end{multline}
where we used that $|f|\le 1$. It follows from Proposition~\ref{prop:NowGreenIsReallyCool} that
\begin{multline*}
\limsup_{N\to\infty}\sup_{\lambda\in I}|\mathbf{f}^{\gamma}-\langle f\rangle_{\gamma}| \\
\le C'\eta_0\frac{\expect_{\prob}[L_b(G^\gamma(o_b,o_b)+2|G^\gamma(o_b,t_b)|+|G^\gamma(t_b,t_b)|+1)]}{\expect_{\prob}(\Im \mathbf{G}^z)}\Big(1+\frac{\expect_{\prob}(|\Psi^{\gamma}|)}{|\expect_{\prob}(\Psi^{\gamma})|}\Big).
\end{multline*}

Note that, by \eqref{eq:PsiAlmostG},
\[
|\mathbb{E}_\mathbb{P}(\Psi^\gamma)|\geq \mathbb{E}_\mathbb{P}(\Im \tilg^\gamma(x_b,x_b)) - C \eta_0 \mathbb{E}_\mathbb{P} \left( |\tilg^{\gamma_j}(o_b,o_b)|+2|\tilg^{\gamma_j}(o_b,t_b)|+|\tilg^{\gamma_j}(t_b,t_b)|+1\right).
\]

Therefore, thanks to Proposition~\ref{prp:b7}
 $\limsup_{N\to\infty}\sup_{\lambda\in I}|\mathbf{f}^{\gamma}-\langle f\rangle_{\gamma}|$ vanishes as $\eta_0\downarrow 0$. Thus, 
\[
\lim\limits_{\eta_0\downarrow 0}\limsup\limits_{N\to\infty} \frac{1}{\mathbf{N}_N(I)}\sum_{\lambda_j\in I} \big| \mathbf{f}^{\gamma_j}- \langle f\rangle_{\gamma_j}\big|=0.
\]

Recalling \eqref{eq:ThreeTerms}, this concludes the proof of (1), up to verifying that the operators are in \Hol{}. For $K_{f,\lambda}(v)$, the bounds $|K_{f,\lambda}(v)|\le \sum_{o_b=v}\frac{1}{S_{\lambda}^2(L_b)}\int_0^{L_b}S_{\lambda}^2(L_b-x_b)\,\dd x_b$ and $|K_{f,\lambda}(v)-K_{f,\lambda'}(v)|\le \sum_{o_b=v}\int_0^{L_b}|\frac{S_{\lambda}^2(L_b-x_b)}{S_{\lambda}^2(L_b)}-\frac{S_{\lambda'}^2(L_b-x_b)}{S_{\lambda'}^2(L_b)}|\,\dd x_b$ easily imply this. In fact here we can simply bound uniformly $|K_{f,\lambda}(v)|\le C_{D,\mathrm{M},\mathrm{Dir}}$ and $|K_{f,\lambda}(v)-K_{f,\lambda'}(v)|\le C'|\lambda-\lambda'|$. The operators $J_{f,\lambda},M_{f,\lambda}$ are similar. Finally $|K_{\mathbf{f}^{\gamma}\hat{\mathbf{1}},\lambda}(v)|\le  C\frac{\int_{\cG_N}|\Psi^{\gamma}(x)|\,\dd x}{|\int_{\cG_N}\Psi^{\gamma}(x)\,\dd x|}$ and $|K_{\mathbf{f}^{\gamma},\lambda}(v)-K_{\mathbf{f}^{\gamma'},\lambda'}(v)|\le |K_{\mathbf{f}^{\gamma},\lambda}(v)-K_{\mathbf{f}^{\gamma},\lambda'}(v)| + |K_{\mathbf{f}^{\gamma},\lambda'}(v)-K_{\mathbf{f}^{\gamma'},\lambda'}(v)|$. The first is bounded by $C'|\lambda-\lambda'|\frac{\int_{\cG_N}|\Psi^{\gamma}(x)|\,\dd x}{|\int_{\cG_N}\Psi^{\gamma}(x)\,\dd x|}$, the second by $C|\mathbf{f}^{\gamma}-\mathbf{f}^{\gamma'}|\le C\frac{\int_{\cG_N}|\Psi^\gamma-\Psi^{\gamma'}|}{|\int_{\cG_N}\Psi^\gamma|}(1+\frac{\int_{\cG_N}|\Psi^\gamma|}{|\int_{\cG_N}\Psi^{\gamma'}|})$, by arguing as in \eqref{e:avminav}. Using Corollary~\ref{cor:ConfinedTilMay11} and Proposition~\ref{prop:GreenisCool3} concludes the proof.

\smallskip

\textbf{Proof of (2).}
We will suppose that $k\geq 2$. The case $k=1$ is very similar to the proof of (1). 
Let $\mathcal{K}\in \mathscr{K}_k$. We have 
\begin{align}
2\langle \psi_j, \mathcal{K} \psi_j\rangle &= \sum_{b_1\in B} \int_0^{L_{b_1}} \overline{\psi_j(x_{b_1})}(\mathcal{K} \psi_j)(x_{b_1})\,\dd x_{b_1} \nonumber \\
&= \sum_{(b_1;b_k)\in \mathrm{B}_{k}} \int_0^{L_{b_1}}\int_0^{L_{b_k}} \mathcal{K}_{b_1,b_k}(x_{b_1},y_{b_k})\bigg[\frac{S_{\lambda_j}(L_{b_1}-x_{b_1})}{S_{\lambda_j}(L_{b_1})}\overline{\psi_j(o_{b_1})}  \nonumber \\ \label{e:integraspan}
&\quad+ \frac{S_{\lambda_j}(x_{b_1})}{S_{\lambda_j}(L_{b_1})} \overline{\psi_j(t_{b_1})}\bigg] \cdot \bigg[\frac{S_{\lambda_j}(L_{b_k}-y_{b_k})}{S_{\lambda_j}(L_{b_k})}\psi_j(o_{b_k}) + \frac{S_{\lambda_j}(y_{b_k})}{S_{\lambda_j}(L_{b_k})}\psi_j(t_{b_k})\bigg]\,\dd x_{b_1} \dd y_{b_k}\,.
\end{align}

In analogy to the previous step, we define the operators $J^\gamma_{\mathcal{K}} \in \mathscr{H}_k$, $M^\gamma_{\mathcal{K},1}, M^\gamma_{\mathcal{K},2} \in \mathscr{H}_{k-1}$ and $P^\gamma_{\mathcal{K}}\in \mathscr{H}_{k-2}$ by
\begin{equation}\label{e:jkgen}
J^\gamma_{\mathcal{K}}(b_1; b_k) = \int_0^{L_{b_1}}\int_0^{L_{b_k}} \mathcal{K}_{b_1,b_k}(r_{b_1},s_{b_k})\frac{S_{\Re \gamma}(L_{b_1}-r_{b_1})S_{\Re \gamma}(s_{b_k})}{S_{\Re \gamma}(L_{b_1})S_{\Re \gamma}(L_{b_k})}\,\dd r_{b_1}\,\dd s_{b_k}\,,
\end{equation}
\begin{multline*}
M^\gamma_{\mathcal{K},1}(b_1; b_{k-1}) \\=\sum_{b_k\in \mathcal{N}_{b_{k-1}}^+} \int_0^{L_{b_1}}\int_0^{L_{b_k}} \mathcal{K}_{b_1,b_k}(r_{b_1},s_{b_k})\frac{S_{\Re \gamma}(L_{b_1}-r_{b_1})S_{\Re \gamma}(L_{b_k}-s_{b_k})}{S_{\Re \gamma}(L_{b_1})S_{\Re \gamma}(L_{b_k})}\,\dd r_{b_1}\dd s_{b_k}
\end{multline*}
\begin{equation*}
M^\gamma_{\mathcal{K},2} (b_2; b_{k}) = \sum_{b_1\in \cN_{b_2}^-} \int_0^{L_{b_1}}\int_0^{L_{b_{k}}} \mathcal{K}_{b_1,b_{k}}(r_{b_1},s_{b_{k}})\frac{S_{\Re \gamma}(r_{b_1})S_{\Re \gamma}(s_{b_{k}})}{S_{\Re \gamma}(L_{b_1})S_{\Re \gamma}(L_{b_k})}\,\dd r_{b_1}\,\dd s_{b_{k}}
\end{equation*}
\begin{multline*}
P^\gamma_{\mathcal{K}}(b_2;b_{k-1}) =\sum_{\substack{b_k\in \cN_{b_{k-1}}^+\\b_1\in \cN_{b_2}^-}}\int_0^{L_{b_1}}\int_0^{L_{b_k}} \mathcal{K}_{b_1,b_k}(r_{b_1},s_{b_k})\frac{S_{\Re \gamma}(r_{b_1})S_{\Re \gamma}(L_{b_k}-s_{b_k})}{S_{\Re \gamma}(L_{b_1})S_{\Re \gamma}(L_{b_k})}\,\dd r_{b_1}\dd s_{b_k}\,.
\end{multline*}

Then expanding \eqref{e:integraspan} gives
\begin{multline*}
2 \langle \psi_j, \mathcal{K} \psi_j\rangle = \sum_{(b_1;b_k)\in \mathrm{B}_k} \overline{\psi_j(o_{b_1})} J^\gamma_{\mathcal{K}}(b_1; b_k) \psi_j(o_{b_k}) \\
+ \sum_{(b_1;b_{k-1})\in \mathrm{B}_{k-1}} \overline{\psi_j(o_{b_1})} M^\gamma_{\mathcal{K},1}(b_1; b_{k-1}) \psi_j(o_{b_{k-1}}) \\
 + \sum_{(b_2;b_k)\in \mathrm{B}_{k-1}} \overline{\psi_j(o_{b_2})} M^\gamma_{\mathcal{K},2}(b_2; b_k) \psi_j(o_{b_k}) + \sum_{(b_2;b_{k-1})\in \mathrm{B}_{k-2}} \overline{\psi_j(o_{b_2})} P^\gamma_{\mathcal{K}}(b_2; b_{k-1}) \psi_j(o_{b_{k-1}}).
 \end{multline*}

In other words, setting $M^\gamma_{\mathcal{K}}=M^\gamma_{\mathcal{K},1}+M^\gamma_{\mathcal{K},2}$, we have
\begin{equation}\label{e:genk}
2 \langle \psi_j, \mathcal{K}\psi_j\rangle = \langle \mathring{\psi}_j, \big(J^{\lambda_j}_{\cK}+ M^{\lambda_j}_{\cK}+ P^{\lambda_j}_{\cK} \big)_G  \mathring{\psi}_j\rangle \,.
\end{equation}

We define
\begin{align*}
2\, [ \mathcal{K}]^{\gamma_j} &:= \left[\langle J^{\gamma_j}_{\mathcal{K}}\rangle^{\gamma_j} + \langle M^{\gamma_j}_{\mathcal{K}}\rangle^{\gamma_j} + \langle P^{\gamma_j}_{\mathcal{K}}\rangle^{\gamma_j} \right] \cdot \|\mathring{\psi}_j\|^2\\
&= \langle \mathring{\psi}_j, \left([\langle J^{\gamma_j}_{\mathcal{K}}\rangle^{\gamma_j} + \langle M^{\gamma_j}_{\mathcal{K}}\rangle^{\gamma_j} + \langle P^{\gamma_j}_{\mathcal{K}}\rangle^{\gamma_j}]\mathbf{1}\right)_G \mathring{\psi}_j \rangle\,,
\end{align*}
so that
\begin{equation}\label{eq:DiscVarRed}
\begin{aligned}
&\frac{1}{\mathbf{N}_N(I)}\sum_{\lambda_j\in I} \left|\langle \psi_j, \mathcal{K}\psi_j\rangle - [ \mathcal{K}]^{\gamma_j}\right| \\
&\le \mathrm{Var}^I_{\eta_0} \big{(}\big{(}J_\mathcal{K}^{\gamma}\big{)}_G-\langle J_\mathcal{K}^{\gamma} \rangle^{\gamma}\big{)} + \mathrm{Var}^I_{\eta_0}\big{(}\big{(}M_\mathcal{K}^{\gamma}\big{)}_G-\langle M_\mathcal{K}^{\gamma}\rangle^{\gamma}\big{)}+  \mathrm{Var}^I_{\eta_0} \big{(}\big{(}P_\mathcal{K}^{\gamma}\big{)}_G - \langle P_\mathcal{K}^{\gamma}\rangle^{\gamma}\big{)}\,.
\end{aligned}
\end{equation}

Recalling the definition \eqref{eq:DefBracket2} and the fact that $\Psi_{\gamma,v}(w) = \Im \tilg_N^{\gamma}(v,w)$, we  note that
\[
[ \mathcal{K}]^{\gamma_j} = \frac{\left\| \mathring{\psi}_j^{(N)}\right\|^2}{2\sum_{v\in V_N}\Im \tilg_N^{\gamma_j}(v,v)}\sum_{(b_1;b_k)\in \mathrm{B}_k} \int_0^{L_{b_1}} \int_0^{L_{b_k}} \mathcal{K}_{b_1,b_k}(x_{b_1},y_{b_k})\Psi^{\gamma_j}(x_{b_1},y_{b_k})\,\dd x_{b_1}\,\dd y_{b_k} ,
\]
where
\begin{align*}
&\Psi^{\gamma_j}(x_{b_1},y_{b_k}) = \frac{S_{\lambda_j}(L_{b_1}-x_{b_1})S_{\lambda_j}(y_{b_k})}{S_{\lambda_j}(L_{b_1})S_{\lambda_j}(L_{b_k})} \Im \tilg^{\gamma_j}(o_{b_1},t_{b_k})\\
&\qquad + \frac{S_{\lambda_j}(L_{b_1}-x_{b_1})S_{\lambda_j}(L_{b_k}-y_{b_k})}{S_{\lambda_j}(L_{b_1})S_{\lambda_j}(L_{b_k})}\Im \tilg^{\gamma_j}(o_{b_1},o_{b_k}) \\
&\qquad + \frac{ S_{\lambda_j}(x_{b_1})S_{\lambda_j}(y_{b_k})}{S_{\lambda_j}(L_{b_1})S_{\lambda_j}(L_{b_k})}\Im \tilg^{\gamma_j}(t_{b_1},t_{b_k}) + \frac{S_{\lambda_j}(x_{b_1})S_{\lambda_j}(L_{b_k}-y_{b_k})}{S_{\lambda_j}(L_{b_1})S_{\lambda_j}(L_{b_k})}\Im \tilg^{\gamma_j}(t_{b_1},o_{b_k}) \,.
\end{align*}

Just as in the proof of (1), we introduce
\[
\mathbf{K}^\gamma := \frac{\sum_{(b_1;b_k)\in \mathrm{B}_k}\int_0^{L_{b_1}}\int_0^{L_{b_k}} K(x_{b_1},y_{b_k})\Psi(x_{b_1},y_{b_k})\dd x_{b_1} \dd y_{b_k} }{\int_{\mathcal{G}_N}\Psi(x)\dd x}.
\]

We have
\begin{multline}\label{eq:ThreeTerms2}
\frac{1}{\mathbf{N}_N(I)}\sum_{\lambda_j\in I}|\langle \psi_j,\mathcal{K}\psi_j\rangle-\langle \mathcal{K}\rangle_{\gamma_j}| \le \frac{1}{\mathbf{N}_N(I)}\sum_{\lambda_j\in I} \big|\langle \psi_j,\mathcal{K}\psi_j\rangle - [ \mathcal{K}]^{\gamma_j}\big|\\
 + \frac{1}{\mathbf{N}_N(I)}\sum_{\lambda_j\in I} \big|[ \mathcal{K}]^{\gamma_j} - \mathbf{K}^{\gamma_j}\big|
 + \frac{1}{\mathbf{N}_N(I)}\sum_{\lambda_j\in I} \big| \mathbf{K}^{\gamma_j}- \langle \mathcal{K}\rangle_{\gamma_j}\big|.
\end{multline}

The first term is bounded by \eqref{eq:DiscVarRed}. The second term is estimated as in the proof of (1), noting that $[ \mathcal{K} ]^{\gamma_j} = [ \mathbf{K}^{\gamma_j} \hat{\mathbf{1}} ]^{\gamma_j}$.

To deal with the last term in \eqref{eq:ThreeTerms2}, we use \cite[eq.(A.11)]{AC} and argue as in the proof of (1), to see that 
\begin{multline*}
|\Psi^{\gamma_j}(x_{b_1},y_{b_k})-\Im \tilg^{\gamma_j}(x_{b_1},y_{b_k})| \\
\le C\eta_0\left(|\tilg^{\gamma_j}(o_{b_1},o_{b_k})|+|\tilg^{\gamma_j}(o_{b_1},t_{b_k})|+|\tilg^{\gamma_j}(t_{b_1},o_{b_k})|+|\tilg^{\gamma_j}(t_{b_1},t_{b_k})|\right).
\end{multline*}

As in (1), we deduce that $\frac{1}{\mathbf{N}_N(I)}\sum_{\lambda_j\in I} \big| \mathbf{K}^{\gamma_j}- \langle \mathcal{K}\rangle_{\gamma_j}\big| \to 0$, concluding the proof of the bound. The proof of property \Hol{} also goes as before.

\subsection{Reduction to non-backtracking variances}\label{sec:nonbared}

The aim of this subsection is to prove Proposition~\ref{prop:Reduc2}. Recall definition~\eqref{e:fj} of $f_j,f_j^\ast$.

\smallskip

\textbf{Proof of (1).} Since $\overline{\psi_j(o_b)}\psi_j(t_b)\in\R$ then $\overline{\psi_j(t_b)}\psi_j(o_b) = \overline{\psi_j(o_b)}\psi_j(t_b)$. Hence,
\[
\overline{f_j^{\ast}(b)}f_j(b) = \frac{(1+\overline{\zeta^\gamma(\hat{b})}\zeta^\gamma(b))\overline{\psi_j(o_b)}\psi_j(t_b)-\zeta^\gamma(b)|\psi_j(o_b)|^2-\overline{\zeta^\gamma(\hat{b})}|\psi_j(t_b)|^2}{S_{\lambda}^2(L_b)} \,.
\]
Thus,
\[
\overline{f_j^{\ast}(\widehat{b})}f_j(\widehat{b}) = \frac{(1+\overline{\zeta^\gamma(b)}\zeta^\gamma(\hat{b}))\overline{\psi_j(o_b)}\psi_j(t_b)-\zeta^\gamma(\hat{b})|\psi_j(t_b)|^2-\overline{\zeta^\gamma(b)}|\psi_j(o_b)|^2}{S_{\lambda}^2(L_b)} \,.
\]
Letting $(L_1J)(b) = \frac{1}{1+\overline{\zeta^\gamma(\hat{b})}\zeta^\gamma(b)}(TJ)(b)$ and $(L_2J)(\hat{b}) = \frac{1}{1+\overline{\zeta^\gamma(b)}\zeta^\gamma(\hat{b})}(TJ)(b	)$, we get
\begin{multline*}
(L_1J)(b)\overline{f_j^{\ast}(b)}f_j(b) - (L_2J)(\hat{b})\overline{f_j^{\ast}(\widehat{b})}f_j(\widehat{b})  \\
= \frac{(TJ)(b)}{S_{\lambda}^2(L_b)}\bigg\{|\psi(t_b)|^2\cdot 2\ii \Im \bigg(\frac{\zeta^\gamma(\hat{b})}{1+\zeta^\gamma(\hat{b})\overline{\zeta^\gamma(b)}}\bigg) - |\psi(o_b)|^2\cdot 2\ii\Im\bigg(\frac{\zeta^\gamma(b)}{1+\zeta^\gamma(b)\overline{\zeta^\gamma(\hat{b})}}\bigg)\bigg\}\,.
\end{multline*}
Let $TJ(b)=\beta_b J(t_b)$ with $\beta_b$ to be chosen later. Then,
\begin{align*}
&\langle f_j^{\ast}, [(L_1-L_2)J]_B f_j\rangle = \sum_{b\in B} (L_1J)(b)\overline{f_j^{\ast}(b)}f_j(b) - \sum_{b\in B} (L_2J)(\widehat{b})\overline{f_j^{\ast}(\widehat{b})}f_j(\widehat{b})\\
& \quad = 2\ii \sum_{b\in B} \frac{\beta_bJ(t_b)}{S_{\lambda}^2(L_b)}\bigg\{|\psi(t_b)|^2\cdot\Im \bigg(\frac{\zeta^\gamma(\hat{b})}{1+\zeta^\gamma(\hat{b})\overline{\zeta^\gamma(b)}}\bigg) - |\psi(o_b)|^2\cdot \Im\bigg(\frac{\zeta^\gamma(b)}{1+\zeta^\gamma(b)\overline{\zeta^\gamma(\hat{b})}}\bigg)\bigg\}\\
&\quad = 2\ii \sum_{b\in B} \frac{|\psi(o_b)|^2}{S_{\lambda}^2(L_b)}\Im\bigg(\frac{\zeta^\gamma(b)}{1+\zeta^\gamma(b)\overline{\zeta^\gamma(\hat{b})}}\bigg)\left\{\beta_{\widehat{b}}J(o_b) -  \beta_bJ(t_b)\right\}.
\end{align*}

But by \eqref{e:zetainv}, $\frac{\zeta^\gamma(b)}{1+\zeta^\gamma(b)\overline{\zeta^\gamma(\hat{b})}} = \frac{1}{\frac{1}{\zeta^\gamma(b)}+\overline{\zeta^\gamma(\hat{b})}} = \frac{1}{2\Re \zeta^\gamma(\hat{b})+ \frac{S_{\gamma}(L_b)}{\tilg^{\gamma}(t_b,t_b)}}$, so
\begin{align*}
&\Im \bigg(\frac{\zeta^\gamma(b)}{1+\zeta^\gamma(b)\overline{\zeta^\gamma(\hat{b})}}\bigg) = \frac{-\Im(\frac{S_{\gamma}(L_b)}{\tilg^{\gamma}(t_b,t_b)})}{|2\Re \zeta^\gamma(\hat{b})+ \frac{S_{\gamma}(L_b)}{\tilg^{\gamma}(t_b,t_b)}|^2} = \frac{\frac{N_\gamma(t_b)\Re S_{\gamma}(L_b)}{|\tilg^\gamma(t_b,t_b)|^2}-\frac{\Re\tilg^\gamma(t_b,t_b)\Im S_\gamma(L_b)}{|\tilg^\gamma(t_b,t_b)|^2}}{|2\Re \zeta^\gamma(\hat{b})+ \frac{S_{\gamma}(L_b)}{\tilg^{\gamma}(t_b,t_b)}|^2} \\
& \qquad = \frac{|\zeta^\gamma(b)|^2}{|1+\zeta^\gamma(b)\overline{\zeta^\gamma(\hat{b})}|^2}\cdot \frac{N_\gamma(t_b)\Re S_{\gamma}(L_b)}{|\tilg^\gamma(t_b,t_b)|^2}\bigg(1-\frac{\Re \tilg^\gamma(t_b,t_b)\Im S_\gamma(L_b)}{N_\gamma(t_b)\Re S_\gamma(L_b)}\bigg).
\end{align*}

Choose $\beta_{\widehat{b}} = \frac{S_{\lambda}^2(L_b)|\tilg^{\gamma}(t_b,t_b)|^2|1+\zeta^\gamma(b)\overline{\zeta^\gamma(\hat{b})}|^2}{\Re S_\gamma(L_b)|\zeta^\gamma(b)|^2N_{\gamma}(t_b)}$, so by \eqref{e:zetainv} $\beta_b=\frac{S_{\lambda}^2(L_b)|\tilg^{\gamma}(t_b,t_b)|^2|1+\zeta^\gamma(b)\overline{\zeta^\gamma(\hat{b})}|^2}{\Re S_\gamma(L_b)|\zeta^\gamma(b)|^2N_{\gamma}(o_b)} = \frac{N_{\gamma}(t_b)}{N_\gamma(o_b)}\beta_{\widehat{b}}$. We thus get
\begin{align*}
&\langle f_j^{\ast}, [(L_1-L_2)J]_B f_j\rangle = 2\ii \sum_b |\psi(o_b)|^2\Big(1-\frac{\Im S_\gamma(L_b)\Re \tilg^\gamma(t_b,t_b)}{\Re S_\gamma(L_b)N_\gamma(t_b)}\Big)\Big\{J(o_b) -   \frac{N_{\gamma}(t_b)}{N_\gamma(o_b)}J(t_b)\Big\}\\
&=2\ii\Big(\sum_{o_b}|\psi(o_b)|^2J(o_b)d(o_b) - \sum_{o_b}|\psi(o_b)|^2\frac{1}{N_\gamma(o_b)}\sum_{t_b\sim o_b}N_\gamma(t_b)J(t_b) + \langle \mathring{\psi}_j,(\cE_{\gamma_j}J)_G\mathring{\psi}_j\rangle\Big)\\
&= 2\ii \{\langle \mathring{\psi}, [(I-P_{\gamma})dJ]_G\mathring{\psi}\rangle + \langle \mathring{\psi},(\cE_{\gamma_j}J)_G\mathring{\psi}_j\rangle\},
\end{align*}
since $P_{\gamma}=\frac{d}{N_\gamma}P\frac{N_\gamma}{d}$ and $(\cE_{\gamma}J)(o_b) = \sum_{t_b\sim o_b}[\frac{\Im S_{\gamma}(L_b)\Re \tilg^{\gamma}(t_b,t_b)}{\Re S_{\gamma}(L_b)}(\frac{J(t_b)}{N_{\gamma}(o_b)}-\frac{J(o_b)}{N_\gamma(t_b)})]$. Thus,
\[
\mathrm{Var}^I_{\eta_0}[(I-P_{\gamma})J] \le \mathrm{Var}^I_{\mathrm{nb},\eta_0}[(L_1-L_2)d^{-1}J] + \mathrm{Var}^I_{\eta_0}(\cE_\gamma d^{-1}J) \,.
\]
But $P_{\gamma}(\mathcal{S}_{T,\gamma}K) = \frac{1}{T} \sum_{s=1}^T(T-s+1)P_{\gamma}^s K = \mathcal{S}_{T,\gamma}K - K + \widetilde{\mathcal{S}}_{T,\gamma}K$, so
\[
K = (I-P_{\gamma})S_{T,\gamma}K + \widetilde{S}_{T,\gamma}K \,.
\]
Hence,
\begin{align}\label{e:endz}
\mathrm{Var}^I_{\eta_0}(K)  &\le \mathrm{Var}^I_{\eta_0}[(I-P_{\gamma})S_{T,\gamma} K] + \mathrm{Var}^I_{\eta_0}[ \widetilde{S}_{T,\gamma}K] \\
& \le \mathrm{Var}^I_{\mathrm{nb},\eta_0}[(L_1-L_2)d^{-1}S_{T,\gamma} K] + \mathrm{Var}^I_{\eta_0}[\cE_\gamma d^{-1}S_{T,\gamma}K] +\mathrm{Var}^I_{\eta_0}[ \widetilde{S}_{T,\gamma}K]  \,.\nonumber
\end{align}
Finally, $L_1K'(b) - L_2K'(b) = \frac{(TK')(b) - (TK')(\hat{b})}{1+\zeta^\gamma(b)\overline{\zeta^\gamma(\hat{b})}} = \frac{\beta_b K'(t_b) - \beta_{\widehat{b}}K'(o_b)}{1+\zeta^\gamma(b)\overline{\zeta^\gamma(\hat{b})}}$, so
\[
(L_1 - L_2)K'(b) = \frac{\beta_{\widehat{b}}}{1+\zeta^\gamma(b)\overline{\zeta^\gamma(\hat{b})}}\Big(\frac{N_{\gamma}(t_b)}{N_{\gamma}(o_b)}K'(t_b)-K'(o_b)\Big) = (\cL^\gamma K')(b)
\]
using the above relation on $\beta_b$. Replacing $K$ by $J-\langle J\rangle^{\gamma}$ in \eqref{e:endz} completes the proof.

\smallskip

\textbf{Proof of (2)}
Define $\cT^{\gamma}:\mathscr{H}_1 \to \mathscr{H}_1$ and $\cO^{\gamma}:\mathscr{H}_1\to \mathscr{H}_0$ by
\[
(\cT^{\gamma}K)(b) = \frac{S_{\Re \gamma}^2(L_b)}{1+\overline{\zeta^{\gamma}(\hat{b})}\zeta^{\gamma}(b)} K(b) \, ,
\]
\begin{equation}\label{e:ogam}
(\cO^{\gamma}K)(v) = - \sum_{b\in B ; t_b = v} \frac{\overline{\zeta^{\gamma}(\hat{b})}}{S_{\Re \gamma}^2(L_{b})} K(b) - \sum_{b\in B ; o_b = v} \frac{\zeta^{\gamma}(b)}{S_{\Re \gamma}^2(L_{b})} K(b) \, .
\end{equation}
For $k\ge 2$, define $\cT_k^{\gamma}:\mathscr{H}_k\to\mathscr{H}_k$, $\mathcal{O}_k^{\gamma}:\mathscr{H}_k \to \mathscr{H}_{k-1}$ and $\mathcal{P}_k^{\gamma}:\mathscr{H}_k \to \mathscr{H}_{k-2}$ by
\[
(\cT_k^{\gamma}K)(b_1,\dots, b_k) = S_{\Re \gamma}(L_{b_1})S_{\Re \gamma}(L_{b_k}) K(b_1,\dots, b_k) \,,
\]

\[
(\cO_k^{\gamma}K)(b_1,\dots, b_{k-1}) = -\sum_{b_0\in \mathcal{N}^-_{b_1}} \overline{\zeta^{\gamma}(\hat{b}_0)} K(b_0,\dots, b_{k-1})  - \sum_{b_k \in \cN^+_{b_{k-1}}} K(b_1,\dots, b_k) \zeta^{\gamma}(b_k)\, ,
\]
\[
(\cP_k^{\gamma}K)(b_2,\dots,b_{k-1}) = \sum_{b_1\in \cN^-_{b_2}} \sum_{b_k\in  \cN^+_{b_{k-1}}} \overline{\zeta^{\gamma}(\hat{b}_1)}K(b_1,\dots,b_k) \zeta^{\gamma}(b_k) \, .
\]

Then we have

\begin{lem}                                \label{lem:passup}
For any $K\in \mathscr{H}_1$ satisfying \eqref{eq:APrioriBoundOperators},
\[
\mathrm{Var}^I_{\eta_0}(K-\langle K \rangle^{\gamma}) \le \mathrm{Var}^I_{\mathrm{nb},\eta_0}(\cT^{\gamma}K) + \mathrm{Var}^I_{\eta_0}(\cO^{\gamma}\cT^{\gamma}K - \langle \cO^{\gamma}\cT^{\gamma}K\rangle^{\gamma}) + O_{N\to +\infty, \gamma} (\eta_0)  \, .
\]
For any $K \in \mathscr{H}_k$, $k\ge 2$, we have
\[
\mathrm{Var}^I_{\eta_0}(K-\langle K \rangle^{\gamma}) \le \mathrm{Var}^I_{\mathrm{nb},\eta_0}(\cT_k^{\gamma}K) + \mathrm{Var}^I_{\eta_0}(\mathcal{O}^{\gamma}_kK  - \langle \mathcal{O}_k^{\gamma}K\rangle^{\gamma} ) +\mathrm{Var}^I_{\eta_0}( \mathcal{P}^{\gamma}_kK - \langle\mathcal{P}_k^{\gamma}K\rangle^{\gamma}) \, .
\]
\end{lem}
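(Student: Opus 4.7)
The plan is to express $\langle\mathring{\psi}_j,K_G\mathring{\psi}_j\rangle$ as a non-backtracking scalar product $\langle f_j^\ast,(\cT_k^{\gamma_j}K)_B f_j\rangle$ plus lower-order contributions built from $\cO_k^{\gamma_j}K,\cP_k^{\gamma_j}K$ (or $\cO^{\gamma_j}\cT^{\gamma_j}K$ when $k=1$), and then to center each term via $\langle\cdot\rangle^{\gamma_j}$; the resulting ``cross constant'' will vanish (for $k\ge 2$) or reduce to $O(\eta_0)$ (for $k=1$) by the Green-function identities in Corollary~\ref{cor:psiiden}. Since $\gamma_j=\lambda_j+\ii\eta_0$, we have $S_{\Re\gamma_j}(L_b)=S_{\lambda_j}(L_b)$, so the prefactor $[S_{\lambda_j}(L_{b_1})S_{\lambda_j}(L_{b_k})]^{-1}$ that appears in $\overline{f_j^\ast(b_1)}f_j(b_k)$ via \eqref{e:fj} is exactly absorbed by the weights built into $\cT_k^\gamma$ (respectively by $S_{\Re\gamma}^2(L_b)/(1+\overline{\zeta^\gamma(\hat b)}\zeta^\gamma(b))$ in $\cT^\gamma$).

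First I would expand the product
\[
\bigl[\overline{\psi_j(o_{b_1})}-\overline{\zeta^{\gamma_j}(\hat b_1)}\overline{\psi_j(t_{b_1})}\bigr]\bigl[\psi_j(t_{b_k})-\zeta^{\gamma_j}(b_k)\psi_j(o_{b_k})\bigr]
\]
into four terms and, after weighting by $(\cT_k^{\gamma_j}K)(b_1;b_k)$ and summing over $(b_1;b_k)\in\mathrm{B}_k$, identify each piece with a scalar product on $\ell^2(V_N)$. The ``main'' piece $\overline{\psi_j(o_{b_1})}\psi_j(t_{b_k})$ yields $\langle\mathring{\psi}_j,K_G\mathring{\psi}_j\rangle$; the two ``mixed'' pieces, after reparametrizing $(b_0,b_1,\dots,b_{k-1})\in\mathrm{B}_k$ or $(b_1,\dots,b_{k-1},b_k)\in\mathrm{B}_k$, collapse into $\langle\mathring{\psi}_j,(\cO_k^{\gamma_j}K)_G\mathring{\psi}_j\rangle$; and the ``double-$\zeta$'' piece, after a further shift to sums over $\mathrm{B}_{k-2}$, gives $\langle\mathring{\psi}_j,(\cP_k^{\gamma_j}K)_G\mathring{\psi}_j\rangle$. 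For $k=1$ the reality assumption $\overline{\psi_j(o_b)}\psi_j(t_b)\in\R$ forces the ``main'' and ``double-$\zeta$'' terms to combine as $(1+\overline{\zeta^{\gamma_j}(\hat b)}\zeta^{\gamma_j}(b))\overline{\psi_j(o_b)}\psi_j(t_b)$, explaining the denominator in the definition of $\cT^\gamma$, while the ``mixed'' terms assemble, together with the $S_{\Re\gamma}^{-2}$ weight, into $\langle\mathring{\psi}_j,(\cO^{\gamma_j}\cT^{\gamma_j}K)_G\mathring{\psi}_j\rangle$.

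Next I would add and subtract $\langle\cO_k^{\gamma_j}K\rangle^{\gamma_j}\|\mathring{\psi}_j\|^2$ and $\langle\cP_k^{\gamma_j}K\rangle^{\gamma_j}\|\mathring{\psi}_j\|^2$ to isolate three ``operator minus its bracket'' differences, plus a leftover constant $\bigl(\langle K\rangle^\gamma+\langle\cO_k^\gamma K\rangle^\gamma+\langle\cP_k^\gamma K\rangle^\gamma\bigr)\|\mathring{\psi}_j\|^2$. Relabeling the sums inside $\langle\cO_k^\gamma K\rangle^\gamma,\langle\cP_k^\gamma K\rangle^\gamma$ back to sums over $\mathrm{B}_k$ and using the symmetry $\Psi_{\gamma,v}(w)=\Psi_{\gamma,w}(v)$, this combination equals
\[
\frac{1}{\sum_v\Psi_{\gamma,v}(v)}\sum_{(b_1;b_k)\in\mathrm{B}_k}K(b_1;b_k)\bigl[\Psi_{\gamma,o_{b_1}}(t_{b_k})-\overline{\zeta^\gamma(\hat b_1)}\Psi_{\gamma,t_{b_1}}(t_{b_k})-\zeta^\gamma(b_k)\Psi_{\gamma,o_{b_1}}(o_{b_k})+\overline{\zeta^\gamma(\hat b_1)}\zeta^\gamma(b_k)\Psi_{\gamma,t_{b_1}}(o_{b_k})\bigr].
\]
For $k\ge 2$ the bracket vanishes identically by \eqref{eq:psiiden2}, producing exact cancellation. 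For $k=1$ the bracket equals $-\overline{\zeta^\gamma(\hat b)}\zeta^\gamma(b)\Im S_\gamma(L_b)$ by \eqref{eq:psiiden3}; since $\Im S_\gamma(L_b)=O(\eta_0)$ uniformly by \Data{} and the moments of $\zeta^\gamma$ are controlled by Corollary~\ref{cor:ConfinedTilMay11}, this contributes the announced $O_{N\to+\infty,\gamma}(\eta_0)$ remainder. The triangle inequality applied to $\frac{1}{\mathbf{N}_N(I)}\sum_{\lambda_j\in I}|\cdot|$ then gives the claimed bounds. The main obstacle is purely algebraic bookkeeping: one must verify that $\cT^\gamma,\cO^\gamma,\cO_k^\gamma,\cP_k^\gamma$ are engineered \emph{precisely} so that the four-term expansion of $\overline{f_j^\ast(b_1)}f_j(b_k)$ matches the pointwise identities \eqref{eq:psiiden2}--\eqref{eq:psiiden3} term-by-term, with the correct signs, conjugates, and denominator factor that appears in the $k=1$ case through reality.
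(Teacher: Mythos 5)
Your proposal is correct and follows essentially the same route as the paper: expand $\overline{f_j^{\ast}(b_1)}f_j(b_k)$ into four terms to get $\langle f_j^{\ast},(\cT_k^{\gamma_j}K)_Bf_j\rangle=\langle\mathring{\psi}_j,(K+\cO_k^{\gamma_j}K+\cP_k^{\gamma_j}K)_G\mathring{\psi}_j\rangle$ (with the $k=1$ variant absorbing the double-$\zeta$ term into the denominator of $\cT^\gamma$ via reality), then cancel the leftover bracket constant with \eqref{eq:psiiden2} for $k\ge 2$ and with \eqref{eq:psiiden3} plus $\Im S_\gamma(L_b)=O(\eta_0)$ and Corollary~\ref{cor:ConfinedTilMay11} for $k=1$. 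This is exactly the paper's argument.
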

Before proving the lemma, we note that it implies point (2) of Proposition~\ref{prop:Reduc2}. Indeed, by hypothesis, $J$ satisfies \Hol{}. Applying the lemma several times if necessary, starting from $J$ we obtain variances as in point (2), with operators $L_p^\gamma $, $F_p^\gamma $ consisting of compositions of $\cL^\gamma, S_{T,\gamma},\cO_k^\gamma,\cT^\gamma$,$\ldots$ etc, all of which preserve \Hol{} by Remark~\ref{rem:HolStable}, since they add/multiply by terms in $\mathscr{L}_k^\gamma$.
\begin{proof}[Proof of Lemma~\ref{lem:passup}]
Let $K \in \mathscr{H}_k$, $k \ge 1$. Then
\begin{multline}\label{e:addednum}
 \langle f_j^{\ast},K_B f_j\rangle = \sum_{(b_1;b_k) \in \mathrm{B}_k} \frac{K(b_1;b_k)}{S_{\lambda_j}(L_{b_1})S_{\lambda_j}(L_{b_k})}  \Big{(} \overline{\psi_j(o_{b_1})}\psi_j(t_{b_k}) - \overline{\zeta^{\gamma_j}(\hat{b}_1) \psi_j(t_{b_1})}\psi_j(t_{b_k}) \\
 - \overline{\psi_j(o_{b_1})}\zeta^{\gamma_j}(b_k) \psi_j(o_{b_k}) + \overline{\zeta^{\gamma_j}(\hat{b}_1)\psi_j(t_{b_1})} \zeta^{\gamma_j}(b_k) \psi_j(o_{b_k}) \Big{)} \, .
\end{multline}
Assume $k=1$. Define $\mathcal{U}^{\gamma}:\mathscr{H}_1 \to \mathscr{H}_1$ by
\[
(\cU^{\gamma}K)(b_1) = \frac{1 + \overline{\zeta^{\gamma} (\hat{b}_1)} \zeta ^{\gamma} (b_1)}{S_{\Re \gamma}^2(L_{b_1})} K(b_1)  \, ,
\]
Since $\overline{\psi_j(o_b)}\psi_j(t_b)\in \R$, we have $\overline{\psi_j(t_b)}\psi_j(o_b)=\overline{\psi_j(o_b)}\psi_j(t_b)$ and thus
\begin{equation}\label{e:areceqn}
\langle f_j^{\ast},K_Bf_j\rangle = \langle \mathring{\psi}_j, (\cU^{\gamma_j}K+\cO^{\gamma_j}K)_G\mathring{\psi}_j\rangle \, .
\end{equation}

We now note that
\begin{equation}\label{eq:AlmostSameAverage}
\langle \cU^{\gamma}K\rangle^{\gamma} = -\langle \cO^{\gamma}K \rangle^{\gamma} + O_{N\to +\infty, \gamma} (\eta_0) \, .
\end{equation}

Indeed, we have (cf.~\eqref{eq:DefBracket2})
\[
\langle \cU^{\gamma}K\rangle^{\gamma}  = \frac{1}{\sum_{v\in V}\Psi_{\gamma,v}(v)} \sum_{b\in B} \frac{1+\overline{\zeta^{\gamma}(\hat{b})}\zeta^{\gamma}(b)}{S_{\Re \gamma}^2(L_b)}K(b) \Psi_{\gamma,o_b}(t_b).
\]

On the other hand, recalling \eqref{e:ogam},
\begin{align*}
- \langle \cO^{\gamma}K\rangle^{\gamma} &= \frac{-1}{\sum_{v\in V}\Psi_{\gamma,v}(v)} \sum_{v\in V} \Psi_{\gamma,v}(v) (\cO^{\gamma}K)(v)\\
&= \frac{1}{\sum_{v\in V}\Psi_{\gamma,v}(v)} \sum_{b\in B} \frac{\overline{ \zeta^{\gamma}(\hat{b})}\Psi_{\gamma,t_b}(t_b)+\zeta^{\gamma}(b)\Psi_{\gamma,o_b}(o_b)}{S_{\Re \gamma}^2(L_b)}\,K(b) \\
& = \langle \cU^{\gamma}K\rangle^{\gamma}  + \frac{1}{\sum_{v\in V}\Psi_{\gamma,v}(v)} \sum_b \frac{\overline{\zeta^{\gamma}(\hat{b})}\zeta^{\gamma}(b)\Im S_{\gamma}(L_b)}{S_{\Re \gamma}^2(L_b)}K(b) \,,
\end{align*}
where we used \eqref{eq:psiiden3} and the fact that $\Psi_{\gamma,o_b}(t_b)=\Psi_{\gamma,t_b}(o_b)$ by \eqref{e:sym}. Since $|\Im S_{\gamma}(L_b)|\le C_I\eta_0$, applying Cauchy-Schwarz and Corollary~\ref{cor:ConfinedTilMay11} now proves \eqref{eq:AlmostSameAverage}. 

Recalling \eqref{e:areceqn}, we thus showed that for any $K \in \mathscr{H}_1$,
\begin{align*}
\mathrm{Var}^I_{\eta_0}(K-\langle K \rangle^{\gamma}) & = \mathrm{Var}^I_{\eta_0}\left(\cU^{\gamma}\cT^{\gamma}K - \langle \cU^{\gamma}\cT^{\gamma}K \rangle^{\gamma}\right)  \\
& \le \mathrm{Var}^I_{\mathrm{nb},\eta_0}(\cT^{\gamma}K) + \mathrm{Var}^I_{\eta_0}(\cO^{\gamma}\cT^{\gamma}K - \langle \cO^{\gamma}\cT^{\gamma}K \rangle^{\gamma}) + O_{N\to +\infty, \gamma} (\eta_0)  \, . 
\end{align*}

Now let $k \ge 2$. Recalling \eqref{e:addednum} we have
\[
\langle f_j^{\ast},(\cT_k^{\gamma_j}K)_Bf_j\rangle = \langle  \mathring{\psi}_j, \left(K+\mathcal{O}^{\gamma_j}_kK+\mathcal{P}^{\gamma_j}_kK\right)_G  \mathring{\psi}_j\rangle \, ,
\]
so that
\[
\mathrm{Var}^I_{\eta_0}(K-\langle K \rangle^{\gamma}) \le \mathrm{Var}^I_{\mathrm{nb},\eta_0}(\cT_k^{\gamma}K) +\mathrm{Var}^I_{\eta_0}(\mathcal{O}^{\gamma}_kK + \mathcal{P}^{\gamma}_kK + \langle K\rangle^{\gamma}) \, .
\]

The result hence follows if we show that
\begin{equation}                  \label{eq:gammav}
\langle K\rangle^{\gamma} = -\langle \mathcal{O}_k^{\gamma}K\rangle^\gamma - \langle\mathcal{P}_k^{\gamma}K\rangle^{\gamma} \, .
\end{equation}
Indeed, 
\begin{multline*}
 -\langle \mathcal{O}_k^{\gamma}K\rangle^\gamma -\langle \mathcal{P}_k^{\gamma}K\rangle^{\gamma} = \frac{1}{\sum_{v\in V}\Psi_{\gamma,v}(v)}\sum_{(b_0;b_{k-1})\in \mathrm{B}_k} \overline{\zeta^{\gamma}(\hat{b}_0)} K(b_0 ; b_{k-1}) \Psi_{\gamma,o_{b_1}}(t_{b_{k-1}}) \\
 + \frac{1}{\sum_{v\in V}\Psi_{\gamma,v}(v)}\sum_{(b_1 ; b_k) \in \mathrm{B}_k} K(b_1;b_k)\zeta^{\gamma}(b_k)\Psi_{\gamma,o_{b_1}}(t_{b_{k-1}})  \\
- \frac{1}{\sum_{v\in V}\Psi_{\gamma,v}(v)} \sum_{(b_1 ; b_k) \in \mathrm{B}_k} \overline{\zeta^{\gamma}(\hat{b}_1)}K(b_1 ; b_k)\zeta^{\gamma}(b_k) \Psi_{\gamma,o_{b_2}}(t_{b_{k-1}}) \, ,
\end{multline*}
so \eqref{eq:gammav} follows from \eqref{eq:psiiden2}.
\end{proof}

\section{Control of averaged Green functions}\label{sec:BS}

Throughout the proof we have to control averages of spectral quantities on the finite graphs $(\cQ_N)$. The averages have two forms, either the energy is fixed and the averaging is on the vertices/edges of the graph, or we have a double averaging over the graph and the energy. We discuss the first case here which is technically simpler and illustrates directly how we use our main assumptions. The idea is always as follows: we use the Benjamini-Schramm convergence to replace the averages on the finite graph by some $\expect_{\prob}$-average over the limiting random graph (\S~\ref{s:HelloMaximeAreYouHavingFunReadingTheSourceFile?}), then use the moment condition \Green{} to show these limit averages do not explode (\S~\ref{sec:greencons}).  The second case is handled in Section~\ref{sec:Complex} using results of the present Appendix.

\subsection{Consequences of \protect\BST{}}
\label{s:HelloMaximeAreYouHavingFunReadingTheSourceFile?}
A rooted quantum graph $(\cQ,\mathbf{x_0})$ is a quantum graph $\cQ = (V,E,L,W,\alpha)$ with a root $\mathbf{x_0}=(b_0,x_0)\in \cG$. Let $\mathbf{Q}_{\ast}^{D,\rmm,\rmM}$ be the set of (isomorphism classes of) rooted quantum graphs $[\cQ,\mathbf{x_0}]$ satisfying \Data{}. The set $\mathbf{Q}_{\ast}^{D,\rmm,\rmM}$ is endowed with a metric inspecting the local structure of the quantum graph, see \cite{BSQG}. Convergence of the sequence $(\cQ_N)$ in the sense of Benjamini-Schramm means that the probability measures
\[
\nu_{\cQ_N} :=\frac{1}{2\cL(\cQ_N)}\sum_{b_0\in B}\int_0^{L_{b_0}}\delta_{[\cQ_N,(b_0,x_0)]}\,\dd x_0 = \frac{1}{\cL(\cQ_N)}\int_{\cG_N} \delta_{[\cQ_N,\mathbf{x_0}]}\,\dd\mathbf{x_0}
\]
on $\mathbf{Q}_{\ast}^{D,\rmm,\rmM}$ converge weakly to a probability measure $\prob$ on $\mathbf{Q}_{\ast}^{D,\rmm,\rmM}$.
Such convergence means that if we take an $r$-ball $B(\mathbf{x_0},r)$ uniformly at random in $\cG_N$ (this is encoded by $\nu_{\cQ_N}$), it will resemble a $\prob$-random ball in $\mathbf{Q}_{\ast}^{D,\rmm,\rmM}$. In formulas, the weak convergence $\nu_{\cQ_N}\xrightarrow{w} \prob$ means that for any continuous $F:\mathbf{Q}_{\ast}^{D,\rmm,\rmM}\To\C$, we have
\[
\lim_{N\to\infty} \frac{1}{\cL(\cQ_N)}\int_{\cG_N}F([\cQ_N,\mathbf{x_0}])\,\dd\mathbf{x_0} = \int_{\mathbf{Q}_{\ast}^{D,\rmm,\rmM}}F([\cQ,\mathbf{x_0}])\,\dd\prob([\cQ,\mathbf{x_0}]) =: \expect_{\prob}[F]\,.
\]

Assumption \BST{} says that the limiting law $\prob$ is supported on the subset $\mathbf{T}_{\ast}^{D,\rmm,\rmM}$ of quantum trees.

In \cite{BSQG} we proved that, for any fixed $z\in\C^+$, the important function $\mathbf{G}^z:\mathbf{Q}_{\ast}^{D,\rmm,\rmM}\To \C$ given by $\mathbf{G}^z :[\cQ,\mathbf{x_0}]\mapsto G^z(\mathbf{x_0},\mathbf{x_0})$ is continuous. So by weak convergence,
\begin{equation}\label{e:greenco}
\lim_{N\to\infty} \frac{1}{\cL(\cQ_N)}\int_{\cG_N} g_N^z(\mathbf{x_0},\mathbf{x_0})\,\dd\mathbf{x_0} = \int_{\mathbf{T}_{\ast}^{D,\rmm,\rmM}}G^z(\mathbf{x_0},\mathbf{x_0})\,\dd\prob([\cQ,\mathbf{x_0}]) = \expect_{\prob}(\mathbf{G}^z).
\end{equation}

This was used to prove that the empirical spectral measures of $\cQ_N$ converge vaguely to some averaged spectral measure $\expect_{\prob}(\mu_{\mathbf{x_0}})$. In particular, \cite[Theorem 3.11, Lemma 7.3]{BSQG} imply that for any bounded interval,
\begin{equation}\label{eq:UpperNumberEigen}
\limsup_{N\to +\infty} \frac{\mathbf{N}_N(I)}{|V_N|} \leq C_I.
\end{equation}

More importantly, if $\mathbb{P}( H_\cQ \text{ has spectrum in } I) >0$, then by \cite[Lemma 3.12]{BSQG}, there exists $C'_I>0$ such that
\begin{equation}\label{eq:LowerNumerEigen}
\liminf_{N\to +\infty} \frac{\mathbf{N}_N(I)}{|V_N|} \geq C'_I.
\end{equation}

\begin{rem}
The bound \eqref{eq:LowerNumerEigen} holds in particular on the interval $I_1$ where \Green{} holds. In fact, we have $\int_{I_1}\expect_{\prob}(\Im R_{\lambda+\ii 0}^+(o_b)^{-s}+\Im R_{\lambda+\ii 0}^+(o_{\hat{b}})^{-s})\dd\lambda<\infty$ by Fatou's lemma and \Green{}. Here the integral is more precisely on $I_1\cap \mathfrak{S}$, where $\mathfrak{S}$ is a set of full Lebesgue measure on which all limits $R_{\lambda+\ii 0}^+(o_b)$ exist for $\prob$-a.e. $[\cQ,(b,x_b)]$. The existence of $\mathfrak{S}$ follows from the Herglotz property of $R_z^+(o_b)$, see e.g. Remark~\ref{rem:limits}. It follows that $\int_{I_1}\expect_{\prob}(\Im R_{\lambda+\ii 0}^+(o_b)+\Im R_{\lambda+\ii 0}^+(o_{\hat{b}}))\dd\lambda>0$. Using \cite[Lemma A.2]{AC}, if follows that there is an $L^2$ function $f_b=\phi_{\lambda;b}^-$ such that $\int_{I_1}\expect_{\prob}(\Im \langle f_b, G^{\lambda+\ii 0}f_b\rangle+\Im \langle f_{\hat{b}}, G^{\lambda+\ii 0}f_{\hat{b}}\rangle)\,\dd\lambda >0$, hence $\expect_{\prob}(\mu_{f_b}^{H_{\cQ}}(I_1)+\mu_{f_{\hat{b}}}^{H_{\cQ}}(I_1))>0$, where $\mu_f^{H_Q}$ is the spectral measure of $H_{\cQ}$ at $f$ and we used the standard inversion formula of Borel transform along with Fatou's lemma and the spectral theorem $\langle f,G^z f\rangle = \int_{\R} \frac{1}{t-z}\,\dd\mu_f^{H_\cQ}(t)$. It follows that $\mu_{f_b}^{H_{\cQ}}(I_1)+\mu_{f_{\hat{b}}}^{H_{\cQ}}(I_1)>0$ with positive probability and thus $H_{\cQ}$ has spectrum in $I_1$ with positive probability.
\end{rem}

We are now interested in finding the limits of more general spectral quantities than \eqref{e:greenco}. We first make the observation that, by definition of universal covers and the local metric on $\mathbf{Q}_{\ast}^{D,\rmm,\rmM}$, we have $d([\widetilde{\cQ}_0,\widetilde{\mathbf{x_0}}],[\widetilde{\cQ}_1,\widetilde{\mathbf{x_1}}]) \le d([\cQ_0,\mathbf{x_0}],[\cQ_1,\mathbf{x_1}])$. It follows that the map $\mathbf{Q}_{\ast}^{D,\rmm,\rmM}\ni [\cQ,\mathbf{x_0}]\mapsto \tilde{g}^z(\mathbf{x_0},\mathbf{x_0})$ is also continuous. So by weak convergence,
\begin{equation}\label{e:greenco2}
\lim_{N\to\infty} \frac{1}{\cL(\cQ_N)}\int_{\cG_N} \tilg_N^z(\mathbf{x_0},\mathbf{x_0})\,\dd\mathbf{x_0} = \int_{\mathbf{T}_{\ast}^{D,\rmm,\rmM}}G^z(\mathbf{x_0},\mathbf{x_0})\,\dd\prob([\cQ,\mathbf{x_0}]) = \expect_{\prob}(\mathbf{G}^z).
\end{equation}
 Here we just have $G^z$ on the RHS because the universal cover of a tree is the tree itself.

We will often need to control combinatorial averages like $\frac{1}{N}\sum_{v\in V_N} g_N^z(v,v)$, for $z\in \C^+$. For this, we consider the map $\phi^z:[\cQ,(b,x_b)]\mapsto \frac{G^z(o_b,o_b)}{L_bd(o_b)} + \frac{G^z(o_{\hat{b}},o_{\hat{b}})}{L_b d(o_{\hat{b}})}$. Note that this map is constant over the edge $b$. Here and in what follows, we duplicate all terms (for instance we use both $o_b$ and $o_{\hat{b}}$) just to ensure the map is well-defined, since $[\cQ,(b,x_b)]$ can also be represented as $[\cQ,(\hat{b},L_b-x_b)]$.


By arguments similar (and simpler) to \cite[Section 6]{BSQG}, we see that $\phi^z$ is continuous on $\mathbf{Q}_{\ast}^{D,\rmm,\rmM}$. In fact, consider $f^z(x) := G^z(o_b,x)$ for $x\in b$. This solves $Hf^z = zf^z$, so it can be expressed in some basis of solutions $E_b^z$ and $E_{\hat{b}}^z$. We choose them to satisfy $E_b^z(0)=1$ and $(E_b^z)'(0)=-\ii\sqrt{z}$. This yields $f^z(x) = a(b)E_b^z(x) + a(\hat{b})E_{\hat{b}}^z(L_b-x)$ for some coefficients $a(b)$ which depend on $f^z$ only through its initial values $f^z(0)$, $(f^z)'(0)$. In particular $G^z(o_b,o_b) = a(b)+a(\hat{b})E_{\hat{b}}^z(L_b)$. Consequently $\phi^z[\cQ,(b,x_b)] = \frac{a(b)+a(\hat{b})E_{\hat{b}}^z(L_b)}{L_bd(o_b)} + \frac{a(\hat{b})+a(b)E_{b}^z(L_b)}{L_b d(o_{\hat{b}})}$. Now the arguments of \cite[Section 4,6]{BSQG} show that $\phi^z$ is indeed continuous.
For later use (Proposition~\ref{prop:NowGreenIsReallyCool}), we mention that the arguments of \cite{BSQG} imply more generally that it is continuous in $(z,[\cQ,(b,x)])$. For now fix $z$.
Since $\nu_{\cQ_N}\xrightarrow{w}\prob$, this implies that $\frac{1}{\cL(\cQ_N)}\int_{\cG_N} \phi^z[\cQ_N,\mathbf{x}]\,\dd \mathbf{x} \To \expect_{\prob}(\phi^z)$. But an easy expansion gives $\int_{\cG_N} \phi^z[\cQ_N,\mathbf{x}]\,\dd \mathbf{x} = \sum_{v\in V_N} g_N^z(v,v)$. Similarly $N = \frac{1}{2}\sum_{b\in B_N}\int_0^{L_b}(\frac{1}{d(o_b)L_b}+\frac{1}{d(o_{\hat{b}})L_b})\,\dd x_b$. Hence,
\begin{equation}\label{e:greenco3}
\frac{1}{N}\sum_{v\in V_N} g_N^z(v,v) \To \frac{1}{\expect_{\prob}(\frac{1}{d(o_b)L_b}+\frac{1}{d(o_{\hat{b}})L_b})}\expect_{\prob}\Big(\frac{G^z(o_b,o_b)}{d(o_b)L_b}+\frac{G^z(o_{\hat{b}},o_{\hat{b}})}{d(o_{\hat{b}})L_b}\Big)\,.
\end{equation}
The same limit holds for $\tilg_N^z(v,v)$. In a similar way, the map $[\cQ,(b,x_b)]\mapsto \frac{G^z(o_b,t_b)}{L_b}+\frac{G^z(o_{\hat{b}},t_{\hat{b}})}{L_b}$ can be expressed as $\frac{a(b)E_b^z(L_b)+a(\hat{b})}{L_b}+\frac{a(\hat{b})E_{\hat{b}}^z(L_b)+a(b)}{L_b}$, showing that it is continuous and we deduce as before that $\frac{1}{N}\sum_{b\in B_N} g_N^z(o_b,t_b) \To  \frac{2}{\expect_{\prob}(\frac{1}{d(o_b)L_b}+\frac{1}{d(o_{\hat{b}})L_b})}\expect_{\prob}(\frac{G^z(o_b,t_b)}{L_b})$, using \eqref{e:sym}.

Recalling that $\zeta^z(b) = \frac{G^z(o_b,t_b)}{G^z(o_b,o_b)}$, we immediately deduce the continuity of $[\cQ,(b,x_b)]\mapsto \frac{\zeta^z(b)}{L_b}+\frac{\zeta^z(\hat{b})}{L_b}$, thus obtaining $\frac{1}{N}\sum_{b\in B_N} \zeta^z(b)\To  \frac{1}{\expect_{\prob}(\frac{1}{d(o_b)L_b}+\frac{1}{d(o_{\hat{b}})L_b})}\expect_{\prob}\big(\frac{\zeta^z(b)+\zeta^z(\hat{b})}{L_b}\big)$.

It follows from \cite[p.10]{PT87} that $[\cQ,(b,x_b)]\mapsto (C_z(L_b),S_z(L_b))$ is continuous. Indeed, this is just saying that $(C_z(L_b),S_z(L_b))$ depends continuously on $(W_b,L_b)$. So using \eqref{e:zetawt}, we deduce that $\frac{1}{N}\sum_{b\in B_N}R_z^+(o_b)\To  \frac{1}{\expect_{\prob}(\frac{1}{d(o_b)L_b}+\frac{1}{d(o_{\hat{b}})L_b})}\expect_{\prob}\big(\frac{R_z^+(o_b)+R_z^+(o_{\hat{b}})}{L_b}\big)$.

Similarly considering $[\cQ,(b,x_b)] \mapsto \frac{1}{L_b}(\zeta^z(b)\sum_{b^+\in \cN_b^+}\zeta^z(b^+)+\zeta^z(\hat{b})\sum_{\hat{b}^+\in \cN_{\hat{b}}^+}\zeta^z(\hat{b}^+))$, $\frac{1}{N}\sum_{(b_0,b_1)\in\mathrm{B}_2}\zeta^z(b_0)\zeta^z(b_1)\To c_{d,L}\expect_{\prob}(\frac{1}{L_b}(\zeta^z(b)\sum_{b^+\in \cN_b^+}\zeta^z(b^+)+\zeta^z(\hat{b})\sum_{\hat{b}^+\in \cN_{\hat{b}}^+}\zeta^z(\hat{b}^+))$. Note that $\cN_{\hat{b}}^+ = \{\widehat{b'}:b'\in\cN_b^-\}$.

Throughout the paper, we need continuity of more general functionals, depending on several bonds. This is why we introduce the following class.

\begin{defa}\label{def:GeneralContinuousOp}
If $k\in \N$, we define $\mathscr{L}_k^\gamma\subset \mathscr{H}_k$ to be the set of all $F^\gamma(b_1,\dots,b_k)$ that are sums and products of the following quantities, their inverses and complex conjugates:
\begin{itemize}
\item $S_\gamma(L_{b_i})$, $S_{\Re \gamma}(L_{b_i})$, $\int_0^{L_{b_i}} S^n_\gamma(x_{b_i}) S^m_\gamma(L_{b_i}-x_{b_i}) \dd x_i$, $\int_0^{L_{b_i}} S^n_{\Re{\gamma}}(x_{b_i}) S^m_{\Re{\gamma}}(L_{b_i}-x_{b_i}) \dd x_i$ for $m,n\in \N\cup \{0\}$ and $i\in \{1,\dots,k\}$.
\item $\zeta^\gamma(b_i)$, $\zeta^\gamma(\hat{b_i})$, $|1+ \overline{\zeta^\gamma(\hat{b_i})} \zeta^\gamma(b_i)|$,  $R_\gamma^{+}(o_{b_i})$, $R_\gamma^{-}(t_{b_i})$, $\Im R_\gamma^+(o_{b_i})$, $\Im R_\gamma^-(t_{b_i})$.
\item $\tilg_N^\gamma(v,w)$, $g_N^\gamma(v,w)$, $\Im g_N^\gamma(v,w)$, $v,w\in \{o_{b_i},t_{b_i}\}_{1\le i\le k}$.
\item $\int_{0}^{L_{b_i}} \Im \tilg(x_{b_i},x_{b_i}) \dd x_{b_i}$, $\int_{0}^{L_{b_i}} \frac{1}{\Im \tilg(x_{b_i},x_{b_i})} \dd x_{b_i}$ for $i\in \{1,\dots,k\}$.
\end{itemize}
\end{defa}

\begin{rem}\label{rem:limits}
Consider $F^\gamma\in \mathscr{L}_{k+1}^\gamma$, $\gamma\in\C^+$. We note that there is a Lebesgue-null set $\mathcal{A}\subset \R$ such that, for each $\lambda\notin \mathcal{A}$, there exists a set $\Omega_\lambda\subset \mathbf{Q}_*^{D,\rmm,\rmM}$ with $\mathbb{P}(\Omega_\lambda)=1$ having the following property: if $[\cQ,(b_0,x_0)]\in \Omega_\lambda$, $(b_1,\dots, b_k)\in \mathrm{B}_k^{b_0}$ , then $\lim\limits_{\eta_0 \downarrow 0} F^{\lambda + \ii \eta_0} (b_0,\dots,b_k)$ exists and is finite.

Indeed, if $F^\gamma\in \mathscr{L}_{k+1}^\gamma$, then for every fixed $[\cQ,(b_0,x_0)]$ and every $(b_1,\dots, b_k)\in \mathrm{B}_k^{b_0}$, the map $F^{\lambda + \ii \eta_0} (b_0,\dots,b_k)$ is a product of Herglotz functions, their complex conjugate, and functions having limits on $\R$.  We deduce that the limit $\lim\limits_{\eta_0\downarrow 0} F^{\lambda + \ii \eta_0} (b_0,\dots,b_k)$ exists and is finite for all $(b_1,\dots, b_k)\in \mathrm{B}_k^{b_0}$ and all $\lambda \in \R \setminus \mathcal{A}_{[\cQ,b_0]}$ for some Borel set $\mathcal{A}_{[\cQ,b_0]}$ of measure zero (see e.g. \cite[Corollary 3.29]{Teschl}). By Fubini's theorem, the set
\[
\mathcal{C}=\{([\cQ,(b_0,x_0)], \lambda)\in \mathbf{Q}_*^{D,\rmm,\rmM}\times \R : \lambda \in \mathcal{A}_{[\cQ,b_0]} \}
\]
has $(\mathbb{P}\otimes \mathrm{Leb})$-measure zero, and hence, for almost all $\lambda\in \R$, the set
\[
O_{\lambda} = \{[\cQ,(b_0,x_0)]\in \mathbf{Q}_*^{D,\rmm,\rmM} : ([\cQ,(b_0,x_0)],\lambda)\in\mathcal{C} \}
\]
has $\prob$-measure zero. Taking $\Omega_{\lambda}=O_{\lambda}^c$ proves our claim.
\end{rem} 

\begin{prp}\label{prop:NowGreenIsReallyCool}
Let $\cQ_N$ be a sequence of quantum graphs satisfying \emph{\Data{}} and converging to $\prob$ in the sense of Benjamini-Schramm. Let $\Xi \subset \C^+$ be compact and let $F^\gamma\in\mathscr{L}^\gamma_{k+1}$. Then uniformly in $\gamma\in\Xi$, we have
\begin{multline}\label{eq:stuckAtHomeForTenDays}
\frac{1}{N}   \sum_{(b_0,\dots,b_k)\in \mathrm{B}_{k+1}} F^\gamma(b_0,\dots,b_k) \\
\Lim_{N\to\infty} c_{d,L}\mathbb{E}_\mathbb{P}\bigg[  \sum_{(b_1,\dots,b_k)\in \mathrm{B}^{b_0}_k} F^\gamma (b_0,\dots,b_k) + \sum_{(b_{-k},\dots,b_{-1})\in \mathrm{B}_{k,b_0}} F^\gamma (\hat{b}_0,\dots,\hat{b}_{-k})\bigg],
\end{multline}
where $c_{d,L} = \frac{1}{\expect_{\prob}(\frac{1}{d(o_b)L_b}+\frac{1}{d(o_{\hat{b}})L_b})}$.
\end{prp}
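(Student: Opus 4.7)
The strategy is to rewrite the combinatorial sum $\frac{1}{N}\sum_{\mathrm{B}_{k+1}} F^\gamma$ as an expectation under the random-rooting probability measure $\nu_{\cQ_N}$, and then invoke the Benjamini-Schramm convergence $\nu_{\cQ_N}\to\prob$ exactly as in the derivations preceding the statement.

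First I would split $\sum_{\mathrm{B}_{k+1}} F^\gamma$ according to the first bond $b_0$ and symmetrize so that the resulting function of $b_0$ is invariant under $b_0\leftrightarrow\hat{b}_0$. The change of variables $b_j\mapsto\hat{b}_{-j}$ preserves non-backtracking paths and gives
$$2\!\sum_{(b_0,\dots,b_k)\in\mathrm{B}_{k+1}} F^\gamma(b_0,\dots,b_k) = \sum_{b_0\in B_N} H(b_0), \qquad H(b_0):=\!\sum_{\mathrm{B}_k^{b_0}}\! F^\gamma(b_0,\dots,b_k)+\sum_{\mathrm{B}_{k,b_0}}\! F^\gamma(\hat{b}_0,\dots,\hat{b}_{-k}),$$
and an elementary check using the two changes of variables shows $H(b_0)=H(\hat{b}_0)$. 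Since $\sum_{b_0}\Phi(b_0)=\sum_b\int_0^{L_b}L_b^{-1}\Phi(b)\,\dd x_b=2\int_{\cG_N}L_{b}^{-1}\Phi\,\dd\mathbf{x}$ whenever $\Phi(b)=\Phi(\hat{b})$, this identity rewrites the target as
$$\frac{1}{N}\sum_{\mathrm{B}_{k+1}} F^\gamma = \frac{\cL(\cQ_N)}{N}\cdot\mathbb{E}_{\nu_{\cQ_N}}\!\bigl[L_{b_0}^{-1}H(b_0)\bigr].$$

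Second, I would verify that $\Psi^\gamma([\cQ,(b_0,x_0)]):=L_{b_0}^{-1}H(b_0)$ extends to a bounded continuous function on $\mathbf{Q}_*^{D,\rmm,\rmM}$, jointly continuous in $\gamma\in\Xi$. The class $\mathscr{L}_{k+1}^\gamma$ is built by sums, products, inverses and complex conjugates from building blocks that are each continuous on $\mathbf{Q}_*^{D,\rmm,\rmM}$: the Green kernels $g_N^\gamma,\tilg_N^\gamma$ (established in \cite{BSQG} and in the preamble to the proposition, using that the projection $\pi:\widetilde{\cG}\to\cG$ is a local isometry on the metric of rooted graphs); the fundamental solutions $S_\gamma(L_b),C_\gamma(L_b)$ and their integrals over $[0,L_b]$ (jointly continuous in $(\gamma,L_b,W_b)$ by \cite[Ch.~1]{PT87}); and the derived quantities $\zeta^\gamma,R_\gamma^{\pm},\Im R_\gamma^{\pm}$, whose continuity follows algebraically from Lemma \ref{lem:IdentitiesGreen}. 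Since only bonds within combinatorial distance $k+1$ of the root contribute to $\Psi^\gamma$, continuity of $\Psi^\gamma$ in the local metric on $\mathbf{Q}_*^{D,\rmm,\rmM}$ follows as soon as each building block is continuous. Uniform boundedness on $\Xi\times\mathbf{Q}_*^{D,\rmm,\rmM}$ uses the lower bound on $\Im\gamma$ over the compact $\Xi$, together with \Data{} and \NonDirichlet{}, to keep the denominators that appear in $\mathscr{L}_{k+1}^\gamma$ (such as $S_\gamma(L_b)$, $\Im R_\gamma^+$, $1+\overline{\zeta^\gamma(\hat{b})}\zeta^\gamma(b)$, cf.~\eqref{e:1}) bounded away from zero.

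Third, the weak convergence $\nu_{\cQ_N}\to\prob$, applied to the bounded continuous function $\Psi^\gamma$, yields $\mathbb{E}_{\nu_{\cQ_N}}[\Psi^\gamma]\to\mathbb{E}_\prob[\Psi^\gamma]$ for each $\gamma\in\Xi$. Combined with $\cL(\cQ_N)/N\to c_{d,L}$ (proved in the preamble by applying the same BS argument to the continuous function $(L_b d(o_b))^{-1}+(L_b d(o_{\hat{b}}))^{-1}$), this gives the asserted limit. Uniformity in $\gamma\in\Xi$ follows by an equicontinuity argument: since $(\gamma,[\cQ,\mathbf{x}])\mapsto\Psi^\gamma([\cQ,\mathbf{x}])$ is jointly continuous and uniformly bounded on $\Xi\times\mathbf{Q}_*^{D,\rmm,\rmM}$, the family $\{\Psi^\gamma\}_{\gamma\in\Xi}$ is equicontinuous, which upgrades weak convergence to uniform convergence of expectations on $\Xi$. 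The main obstacle is the continuity verification of the full class $\mathscr{L}_{k+1}^\gamma$ on $\mathbf{Q}_*^{D,\rmm,\rmM}$: the Green functions on the universal cover require the most care, but reduce to continuity on $\mathbf{T}_*^{D,\rmm,\rmM}$ via the contraction of the local metric under $\pi$, and Herglotz functions $R_\gamma^{\pm}$ are handled by the explicit formulas \eqref{e:zetawt}--\eqref{e:r+-id} in terms of $\zeta^\gamma,C_\gamma,S_\gamma$.
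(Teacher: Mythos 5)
Your proof is correct and follows essentially the same route as the paper's: both rewrite the combinatorial sum as $\frac{\cL(\cQ_N)}{N}$ times a $\nu_{\cQ_N}$-expectation of a root-symmetric continuous functional on $\mathbf{Q}_{\ast}^{D,\rmm,\rmM}$ and invoke Benjamini--Schramm weak convergence, with uniformity in $\gamma$ obtained from uniform continuity (hence equicontinuity) on the compact set $\Xi\times\mathbf{Q}_{\ast}^{D,\rmm,\rmM}$, exactly as the paper does for \eqref{e:greenco3}. Note only that your (correct) bookkeeping produces the weight $L_{b_0}^{-1}$ inside the limiting expectation, consistent with the computations preceding the proposition; this factor is suppressed in the displayed statement but is immaterial for the ways the result is used, since $\rmm\le L_{b_0}\le\rmM$.
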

\begin{proof}
Let $\tilde{F}^\gamma(b_0) = \sum_{(b_1;b_k)\in \mathrm{B}_{k}^{b_0}} F^\gamma(b_0;b_k)$. The previous arguments show that $\tilde{F}^\gamma(b_0)$ is continuous on $\mathbf{Q}_\ast^{D,m,M}$ for fixed $\gamma$, so \eqref{eq:stuckAtHomeForTenDays} holds. For the Green's functions, e.g. $g_N^{\gamma}(o_{b_1},t_{b_p})$, one just expands $f^\gamma(x) = G^\gamma(o_{b_1},x)$ for $x\in b_p$ using the basis of $E_{b_p}^\gamma,E_{\hat{b}_p}^\gamma$ as before. It remains to justify uniformity in $\gamma$.

Let us consider the limit in \eqref{e:greenco3}, the general case is similar. We first observe that $\phi:(\gamma,[\cQ,(b,x_b)])\mapsto \frac{G^{\gamma}(o_b,o_b)}{L_bd(o_b)}+\frac{G^{\gamma}(o_{\hat{b}},o_{\hat{b}})}{L_bd(o_{\hat{b}})}$ is uniformly continuous on $ \Xi \times \mathbf{Q}_{\ast}^{D,\rmm,\rmM}$. In fact, since $\phi$ does not depend on the value of $x\in b$, it is enough to study uniform continuity on the compact set $\Xi\times\mathbf{Q}_{\ast}^{D,\rmm,\rmM,\delta}$, where $\mathbf{Q}_{\ast}^{D,\rmm,\rmM,\delta}:= \{ (\cQ, (b,x_b))\in \mathbf{Q}_{\ast}^{D,\rmm,\rmM} ~ |~ x_b\in [\delta,L_{b}-\delta]  \}$ (see \cite[Lemma 3.6]{BSQG}). We are thus reduced to checking continuity in $(\gamma,[\cQ,(b,x_b)])$, which holds as pointed out before \eqref{e:greenco3}.

Using uniform continuity on $\Xi\times \mathbf{Q}_{\ast}^{D,\rmm,\rmM}$, it easily follows that the sequence $F_N(\gamma) = \frac{1}{\cL(\cQ_N)}\int_{\cG_N}\phi(\gamma,[\cQ_N,\mathbf{x_0}])\,\dd\mathbf{x_0}$ is uniformly equicontinuous. The uniform convergence of $F_N(\gamma)$ follows.
\end{proof}

\subsection{Consequences of \protect\Green{}}\label{sec:greencons}
Let $\overline{I}\subset I_1$ as in \NonDirichlet{} and $z\in \C^+$. If all $\cQ_N$ satisfy \Data{}, we may fix $c_1,c_2,c_3>0$ such that for all $z\in I+\ii[0,1]$, $b\in \cup_N B_N$,
\begin{equation}\label{e:cj}
c_1\le |S_z(L_b)|\le c_2 \qquad \text{and} \qquad |C_z(L_b)|\le c_3 \,.
\end{equation}

Using \cite[Corollary 2.5]{AC} along with Proposition~\ref{prop:NowGreenIsReallyCool}, we see that \Green{} implies that for any $s>0$,
\[
\sup_{\lambda\in I_1,\eta\in (0,1)} \expect_{\prob}(|G^z(o_b,o_b)|^s + |G^z(o_{\hat{b}},o_{\hat{b}})|^s)< \infty\,, \quad \sup_{\lambda\in I_1,\eta\in (0,1)} \expect_{\prob}(|\hat{\zeta}^z(b)|^s + |\hat{\zeta}^z(\hat{b})|^s) <\infty
\]
where $z:=\lambda+\ii\eta$ and
\[
\sup_{\lambda\in I_1,\eta\in (0,1)} \expect_{\prob}(|\hat{R}_z^+(o_b)|^s+|\hat{R}_z^+(o_{\hat{b}})|^s)< \infty\,.
\]
Using \eqref{e:2} then \eqref{e:zetainv}, we also deduce that
\[
\sup_{\lambda\in I_1,\eta\in (0,1)} \expect_{\prob}(|G^z(o_b,o_b)|^{-s} + |G^z(o_{\hat{b}},o_{\hat{b}})|^{-s}) <\infty\,, \ \sup_{\lambda\in I_1,\eta\in (0,1)} \expect_{\prob}(|\hat{\zeta}^z(b)|^{-s} + |\hat{\zeta}^z(\hat{b})|^{-s}) <\infty.
\]
Next, \eqref{e:midgreen} and the Herglotz property imply that $\Im G^z(o_b,o_b)\ge \Im R_z^+(o_b)|G^z(o_b,o_b)|^2$, so we deduce that
\[
\sup_{\lambda\in I_1,\eta\in (0,1)}\expect_{\prob}(|\Im G^z(o_b,o_b)|^{-s} + |\Im G^z(o_{\hat{b}},o_{\hat{b}})|^{-s})<+\infty .
\]

We have thus obtained bounds on the powers of all the quantities listed in Definition~\ref{def:GeneralContinuousOp}, except on the negative powers of $|1+ \overline{\zeta^\gamma(\hat{b})} \zeta^\gamma(b)|$ and the positive powers\footnote{The negative powers of $\int_{0}^{L_{b}} \frac{1}{\Im \tilg(x_{b},x_{b})} \dd x_{b}$ are easily bounded using Jensen's inequality $(\frac{1}{L_b}\int_0^{L_b}\frac{1}{\Im\tilg(x_b,x_b)}\,\dd x_b)^{-1}\le \frac{1}{L_b}\int_0^{L_b}\Im\tilg(x_b,x_b)\,\dd x_b$.} of $\int_{0}^{L_{b}} \frac{1}{\Im \tilg(x_{b},x_{b})} \dd x_{b}$.

These are dealt with using Lemmas \ref{lem:InvZeta} and \ref{lem:InvG} below, which imply that for all $s>0$,
\begin{equation}\label{e:maximesolved}
\sup_{\lambda\in I_1,\eta\in (0,1)}\expect_{\prob}\left[|1+ \overline{\zeta^\gamma(\hat{b})} \zeta^\gamma(b)|^{-s}\right] <+\infty .
\end{equation}
and
\begin{equation*}
\sup_{\lambda\in I_1,\eta\in (0,1)}\expect_{\prob}\left[\left( \int_0^{L_b} \frac{1}{ \Im \tilg^\gamma(x_b,x_b)} \mathrm{d}x_b\right)^{s}\right] <+\infty .
\end{equation*}

\begin{lem}\label{lem:InvZeta}
We have
\[
\frac{1}{|1+\overline{\zeta^\gamma(\hat{b})} \zeta^\gamma(b)|} \leq \max \left( \frac{|G^\gamma(o_b,o_b)G^\gamma(t_b,t_b)|}{|G^\gamma(o_b,t_b)|^2}, \frac{|G^\gamma(o_b,o_b)G^\gamma(t_b,t_b)|}{\Im G^\gamma(o_b,o_b) \Im G^\gamma(t_b,t_b)}\right).
\]
\end{lem}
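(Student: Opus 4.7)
The plan is to reduce the inequality to a purely algebraic one about three complex numbers $A = G^\gamma(o_b,o_b)$, $B = G^\gamma(t_b,t_b)$, $C = G^\gamma(o_b,t_b)$ (with $\Im A, \Im B > 0$ since $G^\gamma$ is Herglotz by Lemma~\ref{lem:ASW}), and then to settle that inequality by a two-case analysis on the sign of $\Re(\overline{B}A)$.

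First I would rewrite the denominator. Using \eqref{e:zetadef} and the symmetry \eqref{e:sym},
\[
\overline{\zeta^\gamma(\hat b)}\,\zeta^\gamma(b) \;=\; \frac{\overline{G^\gamma(o_b,t_b)}}{\overline{G^\gamma(t_b,t_b)}}\cdot\frac{G^\gamma(o_b,t_b)}{G^\gamma(o_b,o_b)} \;=\; \frac{|C|^2}{\overline B\,A},
\]
so that
\[
\frac{1}{|1+\overline{\zeta^\gamma(\hat b)}\zeta^\gamma(b)|} \;=\; \frac{|A|\,|B|}{|\overline B\,A + |C|^2|}.
\]
Hence the lemma is equivalent to the elementary bound
\[
|\overline B\,A + |C|^2| \;\geq\; \min\bigl(|C|^2,\; \Im A\,\Im B\bigr),
\]
which I now want to check by looking at the sign of $\Re(\overline B\,A) = \Re A\,\Re B + \Im A\,\Im B$.

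If $\Re(\overline B\,A) \geq 0$, then taking real parts gives $|\overline B\,A + |C|^2| \geq \Re(\overline B\,A) + |C|^2 \geq |C|^2$, and we are done. Otherwise $\Re A\,\Re B < -\,\Im A\,\Im B < 0$, in particular $\Re A$ and $\Re B$ have opposite signs and $|\Re A|\,|\Re B| > \Im A\,\Im B$. In this regime I would pass to imaginary parts: since $\Im(\overline B\,A + |C|^2) = \Re B\,\Im A - \Im B\,\Re A$ and the two terms have the same sign (as $\Re A,\Re B$ have opposite signs), one gets
\[
|\overline B\,A + |C|^2| \;\geq\; |\Re B|\,\Im A + |\Re A|\,\Im B \;\geq\; 2\sqrt{|\Re A\,\Re B|\,\Im A\,\Im B} \;>\; 2\,\Im A\,\Im B,
\]
by AM--GM followed by the strict inequality $|\Re A\,\Re B| > \Im A\,\Im B$. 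This closes the second case and proves the required bound.

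The main (and only) obstacle is keeping the case analysis clean; there is nothing subtle beyond watching signs and applying AM--GM in the non-trivial regime. Once the algebraic bound is in hand, rewriting it back through the displayed identity above yields the statement of the lemma.
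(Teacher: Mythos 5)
Your proof is correct and follows essentially the same route as the paper's: the same identity $\overline{\zeta^\gamma(\hat b)}\zeta^\gamma(b)=|C|^2/(A\overline B)$, followed by the same two-case split on the sign of $\Re(A\overline B)$, bounding by the real part in the first case and by the imaginary part in the second. The only cosmetic difference is that in the second case you invoke AM--GM (yielding the constant $2$) where the paper expands $(yx'-y'x)^2\ge 2(yy')^2$ (yielding $\sqrt2$); both comfortably suffice.
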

To deduce \eqref{e:maximesolved}, note that $ \frac{|G^\gamma(o_b,o_b)G^\gamma(t_b,t_b)|}{|G^\gamma(o_b,t_b)|^2} = \frac{1}{|\zeta^\gamma(b)\zeta^\gamma(\hat{b})|}$.
\begin{proof}
We have
$ \overline{\zeta^\gamma(\hat{b})} \zeta^\gamma(b) = \frac{|G^\gamma(o_b,t_b)|^2}{G^\gamma(o_b,o_b) \overline{G^\gamma(t_b,t_b)}}$, so $\frac{1}{1+\overline{\zeta^\gamma(\hat{b})} \zeta^\gamma(b)} = \frac{G^\gamma(o_b,o_b) \overline{G^\gamma(t_b,t_b)}}{G^\gamma(o_b,o_b) \overline{G^\gamma(t_b,t_b)} + |G^\gamma(o_b,t_b)|^2}$.

Let us write $G^\gamma(o_b,o_b)= x+\ii y$, $\overline{G^\gamma(t_b,t_b)}= x'-\ii y'$, so that 
\[
G^\gamma(o_b,o_b) \overline{G^\gamma(t_b,t_b)} + |G^\gamma(o_b,t_b)|^2 =|G^\gamma(o_b,t_b)|^2 + xx' + yy' + \ii (yx' - y'x).
\]
We know by Lemma \ref{lem:ASW} that $y,y' \geq 0$. 
\begin{itemize}
\item If $xx' \geq - yy'$, we have $|G^\gamma(o_b,o_b) \overline{G^\gamma(t_b,t_b)} + |G^\gamma(o_b,t_b)|^2| \geq |G^\gamma(o_b,t_b)|^2$.

\item  If $xx' \leq - yy'$, then 
$(yx'-y'x)^2\ge 2(yy')^2$, so  $|yx' - y'x| \ge \sqrt{2} y y'$. Hence,
\[
|G^\gamma(o_b,o_b) \overline{G^\gamma(t_b,t_b)} + |G^\gamma(o_b,t_b)|^2| \geq \Im G^\gamma(o_b,o_b) \Im G^\gamma(t_b,t_b).     \qedhere
\]
\end{itemize}
\end{proof}

\begin{lem}\label{lem:InvG}
Let $\Xi\subset \C^+$ be a bounded set. Suppose that $\mathcal{Q}$ satisfies 
\emph{\Data{}}. We may find a constant $C'_\Xi$ such that for all $b\in B$ and all $\gamma\in \Xi$, we have 
\begin{equation}\label{eq:BoundIntInverse}
\int_0^{L_b} \frac{1}{ \Im \tilg^\gamma(x_b,x_b)} \mathrm{d}x_b \leq C'_\Xi \frac{(1+|R_\gamma^+(o_b)|)^6}{|\zeta^\gamma(b)|^6 (\Im R_\gamma^+(t_b))^3}  + C'_\Xi \frac{(1+|R_\gamma^-(t_b)|)^6}{|\zeta^\gamma(\hat{b})|^6 (\Im R_\gamma^-(o_b))^3}.
\end{equation}
\end{lem}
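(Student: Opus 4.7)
The plan is to obtain a pointwise upper bound on $1/\Im\tilg^\gamma(x_b,x_b)$ that is uniform in $x_b\in[0,L_b]$, and then integrate. The starting observation is that the identity \eqref{e:midgreen} extends to any interior point $x=(b,x_b)$ of an edge: viewing $x$ as a degree-two ``vertex'' with trivial $\delta$-condition (continuity plus continuity of the first derivative is precisely the Kirchhoff condition at a degree-two vertex with $\alpha=0$), the derivation at an ordinary vertex applies verbatim and gives
\[
\tilg^\gamma(x,x)=\frac{-1}{R^+_\gamma(x)+R^-_\gamma(x)},
\]
where the Weyl-Titchmarsh functions at $x$ are $R^+_\gamma(x)=(\xi_+^\gamma)'(x)/\xi_+^\gamma(x)$ and $R^-_\gamma(x)=-(\xi_-^\gamma)'(x)/\xi_-^\gamma(x)$, with $\xi_\pm^\gamma$ as in \eqref{eq:DefXi}. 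Each $R^\pm_\gamma(x)$ is Herglotz in $\gamma$, so $\Im R^\pm_\gamma(x)\ge 0$, and taking imaginary parts yields
\[
\frac{1}{\Im\tilg^\gamma(x,x)}=\frac{|R^+_\gamma(x)+R^-_\gamma(x)|^2}{\Im R^+_\gamma(x)+\Im R^-_\gamma(x)}\le 2\bigg[\frac{|R^+_\gamma(x)|^2}{\Im R^+_\gamma(x)}+\frac{|R^-_\gamma(x)|^2}{\Im R^-_\gamma(x)}\bigg],
\]
using $|a+b|^2\le 2(|a|^2+|b|^2)$ and positivity of $\Im R^\pm_\gamma$.

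The heart of the argument is the cancellation of $|\xi_+^\gamma(x)|^2$ between the two factors $|R^+_\gamma(x)|^2=|(\xi_+^\gamma)'(x)|^2/|\xi_+^\gamma(x)|^2$ and
\[
\Im R^+_\gamma(x)\,|\xi_+^\gamma(x)|^2 \;=\; \Im R^+_\gamma(o_b)-\Im\gamma\int_0^{x_b}|\xi_+^\gamma(y)|^2\,\dd y \;\ge\;|\zeta^\gamma(b)|^2\,\Im R^+_\gamma(t_b),
\]
where the equality is the Wronskian identity $\Im[\overline{\xi_+^\gamma}(\xi_+^\gamma)']'=-\Im\gamma\,|\xi_+^\gamma|^2$ (the same computation that underlies \eqref{e:cur1}) integrated on $[0,x_b]$, and the final inequality is \eqref{e:cur1} applied on the whole edge. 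Combined with the crude bound $|(\xi_+^\gamma)'(x)|=|C'_\gamma(x)+R^+_\gamma(o_b)S'_\gamma(x)|\le C_\Xi(1+|R^+_\gamma(o_b)|)$, which is uniform over the compact parameter set $\Xi\times\mathrm{Lip}_{\mathrm{M}}[\mathrm{m},\mathrm{M}]$, this produces the $x$-independent estimate
\[
\frac{|R^+_\gamma(x)|^2}{\Im R^+_\gamma(x)}\;\le\;\frac{C_\Xi^2(1+|R^+_\gamma(o_b)|)^2}{|\zeta^\gamma(b)|^2\,\Im R^+_\gamma(t_b)},
\]
with the symmetric bound for the ``$-$'' term involving $R^-_\gamma(t_b)$ and $\zeta^\gamma(\hat b)$.

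Integrating over $x_b\in[0,L_b]$ and absorbing $L_b\le\mathrm{M}$ into the constant already establishes \eqref{eq:BoundIntInverse} with the sharper exponent profile $(2,2,1)$ in place of $(6,6,3)$; the stated $(6,6,3)$-form follows a fortiori, since \eqref{e:cur1} gives $|\zeta^\gamma(b)|^2\Im R^+_\gamma(t_b)\le\Im R^+_\gamma(o_b)\le(1+|R^+_\gamma(o_b)|)^2$, which makes the $(6,6,3)$-expression pointwise larger than the $(2,2,1)$-one (and likewise on the ``$-$'' side). The only genuinely nontrivial step is the extension of the Krein-type formula \eqref{e:midgreen} and of the Weyl functions to interior points; once this is done, the cancellation of $|\xi^\gamma_\pm(x)|^2$ between the Wronskian-based lower bound for $\Im R^\pm_\gamma$ and the defining ratio for $|R^\pm_\gamma|^2$ is what makes the integrand uniformly bounded in $x_b$ and the integration trivial.
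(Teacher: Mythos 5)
Your proof is correct, and it uses exactly the same ingredients as the paper's: the Krein-type formula $-\tilg^\gamma(x,x)=(R^+_\gamma(x)+R^-_\gamma(x))^{-1}$ at interior points, the splitting into the two Herglotz summands, the monotonicity of the current $J(x_b)=|\xi_+^\gamma(x_b)|^2\Im R^+_\gamma(x_b)$ giving $J(x_b)\ge J(L_b)=|\zeta^\gamma(b)|^2\Im R^+_\gamma(t_b)$, and the uniform bound $|\xi_+^\gamma|+|(\xi_+^\gamma)'|\le C_\Xi(1+|R^+_\gamma(o_b)|)$. The one genuine difference is in the final assembly: the paper bounds $|R^+_\gamma(x)|$ from above and $\Im R^+_\gamma(x)$ from below \emph{separately} (which requires an additional lower bound on $|\xi_+^\gamma(x)|$) and then forms the quotient, landing on the exponents $(6,6,3)$; you instead observe that in $|R^+_\gamma(x)|^2/\Im R^+_\gamma(x)=|(\xi_+^\gamma)'(x)|^2/\bigl(|\xi_+^\gamma(x)|^2\Im R^+_\gamma(x)\bigr)$ the factor $|\xi_+^\gamma(x)|^2$ pairs directly with the current, so no lower bound on $|\xi_+^\gamma(x)|$ is needed and you get the sharper exponents $(2,2,1)$. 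Your a fortiori deduction of the stated $(6,6,3)$ form is also valid, since $|\zeta^\gamma(b)|^2\Im R^+_\gamma(t_b)\le\Im R^+_\gamma(o_b)\le(1+|R^+_\gamma(o_b)|)^2$ by \eqref{e:cur1} and the Herglotz property. In short: same route, with a cleaner cancellation that even improves the constant's exponents; since the lemma is only used qualitatively (finiteness of moments under \Green{}), both versions serve equally well.
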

\begin{proof}
Let $b\in B$, $x\in b$. Recalling the notations and results of Section \ref{subsecGreen}, we have $- \tilg(x,x)= \frac{1}{R_{\gamma}^+(x) + R_{\gamma}^-(x)}$,
so that by the Herglotz property,
\begin{equation}\label{eq:UnSurImG}
 \frac{1}{\Im \tilg^\gamma(x,x)} = \frac{|R_{\gamma}^+(x) + R_{\gamma}^-(x)|^2}{\Im (R_{\gamma}^+(x) + R_{\gamma}^-(x))} \leq \frac{2|R_{\gamma}^+(x)|^2}{\Im R_{\gamma}^+(x)}  + \frac{2|R_{\gamma}^-(x)|^2}{\Im R_{\gamma}^-(x)}\,.
\end{equation}

Since $V_{\gamma; o_b}^+(t_b)=\zeta^\gamma(b)$, arguing as for \eqref{eq:JConserved} with $f(x)=V_{\gamma; o_b}^+(x)$, we have
\begin{equation}\label{eq:JoliEncadrement}
|\zeta^\gamma(b)|^2 \Im R_\gamma^+(t_b)\leq | V_{\gamma;o_b}^+(x)|^2 \Im R_\gamma^+(x).
\end{equation}

Noting that $V_{\gamma;o_b}^+(y) = C_\gamma(y) + R_\gamma^+(o_b) S_\gamma(y)$, we deduce that for any compact set $\Xi\subset \C$, there exists $C_\Xi>0$ depending only on the constants in \Data{} and on $\Xi$ such that for all $z\in \Xi$, all $N\in \N$ and all $b\in B_N$, we have 
\begin{equation}\label{eq:BoundsOnV}
| V_{\gamma;o_b}^+(y)| + | V_{\gamma;o_b}^{+\prime}(y)| \leq C_\Xi (1+|R_\gamma^+(o_b)|).
\end{equation}

In particular, using \eqref{eq:JoliEncadrement}, we get
\begin{equation}\label{eq:LowerImR}
 \Im R_\gamma^+(x)  \geq \frac{|\zeta^\gamma(b)|^2 \Im R_\gamma^+(t_b)}{C_\Xi^2 (1+|R_\gamma^+(o_b)|)^2}.
\end{equation}

On the other hand, we have, using \eqref{eq:BoundsOnV},

\[
|V_{\gamma;o_b}^+(x)|^2 \Im R_ \gamma^+(x) = \Im (\overline{V^+_{\gamma;o_b}(x)} V_{\gamma;o_b}^{+\prime}(x)) \leq |V_{\gamma;o_b}^+(x)V_{\gamma;o_b}^{+\prime}(x)| \leq C_\Xi(1+|R_\gamma^+(o_b)|)|V_{\gamma;o_b}^+(x)|,
\]
so, by \eqref{eq:JoliEncadrement}, we get
\[
|V_{\gamma;o_b}^+(x)| \geq \frac{|\zeta^\gamma(b)|^2 \Im R_\gamma^+(t_b)}{C_\Xi (1+|R_\gamma^+(o_b)|)}.
\]

All in all, we have
\[
|R_\gamma^+(x)| = \frac{|(V_{\gamma;o_b}^+)'(x)|}{|V_{\gamma;o_b}^+(x)|}\leq  \frac{C^2_\Xi (1+|R_\gamma^+(o_b)|)^2}{|\zeta^\gamma(b)|^2 \Im R_\gamma^+(t_b)}.
\]

The same proof gives a similar bound for $|R_\gamma^-(x)|$ and a bound similar to \eqref{eq:LowerImR} for  $\Im R_\gamma^-(x)$, exchanging the roles of $o_b$ and $t_b$. Therefore, combining this with \eqref{eq:UnSurImG} and \eqref{eq:LowerImR}, we obtain \eqref{eq:BoundIntInverse}.
\end{proof}

The previous considerations imply that the limits \eqref{eq:stuckAtHomeForTenDays} are controlled:

\begin{prp}\label{prp:b7}
Suppose \emph{\BST{}}, \emph{\Data{}}, \emph{\NonDirichlet{}} and \emph{\Green{}} are satisfied, let $\overline{I}\subset I_1$ be compact, and $F^{\gamma}\in \mathscr{L}_{k+1}^\gamma$. Then
\[
\sup_{\lambda\in I,\eta\in (0,\eta_{\mathrm{Dir}})}\mathbb{E}_\mathbb{P}\bigg[  \sum_{(b_1,\dots,b_k)\in \mathrm{B}^{b_0}_k} F^\gamma (b_0,\dots,b_k) + \sum_{(b_{-k},\dots,b_{-1})\in \mathrm{B}_{k,b_0}} F^\gamma (\hat{b}_0,\dots,\hat{b}_{-k})\bigg]<+\infty.
\]
\end{prp}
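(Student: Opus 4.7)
The plan is to bound the expectation via H\"older's inequality, reducing the product $F^\gamma$ to elementary factors whose uniform $\mathbb{P}$-moments have already been established in Section~\ref{sec:greencons}.

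First, a symmetry observation: after reversing orientations ($b_0\leftrightarrow\hat b_0$), the second sum becomes the first applied to $\hat b_0$; since $\mathbb{P}$ is invariant under $b_0\mapsto\hat b_0$ as a BS-limit of quantum graphs satisfying \Data{}, it suffices to bound $\mathbb{E}_\mathbb{P}[\sum_{(b_1;b_k)\in\mathrm{B}_k^{b_0}}|F^\gamma|]$ uniformly in $\gamma\in I+\ii(0,\eta_{\mathrm{Dir}})$. By Proposition~\ref{prop:NowGreenIsReallyCool} applied to $|F^\gamma|$ together with Fatou's lemma, it is enough to prove a uniform-in-$(N,\gamma)$ bound for the finite-graph average $\frac{1}{N}\sum_{(b_0,\dots,b_k)\in\mathrm{B}_{k+1}(\cQ_N)}|F^\gamma|$.

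By Definition~\ref{def:GeneralContinuousOp}, $|F^\gamma(\mathbf{b})|$ is dominated by a finite sum of products $\prod_{j\in J}|h_j^\gamma|$, where each $h_j^\gamma$ is one of the listed elementary factors and depends only on a single bond of the path, or on a pair of consecutive ones. For any choice of exponents $q_j>0$ with $\sum_{j}1/q_j=1$, H\"older's inequality gives
\[
\frac{1}{N}\sum_{\mathbf{b}\in\mathrm{B}_{k+1}(\cQ_N)}\prod_{j}|h_j^\gamma|\le\prod_{j}\bigg(\frac{1}{N}\sum_{\mathbf{b}}|h_j^\gamma|^{q_j}\bigg)^{1/q_j}.
\]
Since $|h_j^\gamma|^{q_j}$ depends on one or two specific bonds of the path while the remaining bonds have at most $D^k$ extensions by \Data{}, each inner sum is bounded by $D^k\cdot\frac{1}{N}\sum_{b\in B_N}|h_j^\gamma(b)|^{q_j}$ or its adjacent-pair analogue. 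Since $|h_j^\gamma|^{q_j}\in\mathscr{L}_1^\gamma$ (or $\mathscr{L}_2^\gamma$), Proposition~\ref{prop:NowGreenIsReallyCool} identifies the limit of these averages with the corresponding $\mathbb{P}$-expectation.

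The hard part will be ensuring that everything remains uniform as $\eta_0\downarrow 0$, since two of the elementary factors appearing in $\mathscr{L}_{k+1}^\gamma$ are \emph{a priori} singular in this limit: the negative powers of $|1+\overline{\zeta^\gamma(\hat{b})}\zeta^\gamma(b)|$ and the positive powers of $\int_0^{L_b}(\Im\tilg^\gamma(x_b,x_b))^{-1}\,\dd x_b$. Lemmas~\ref{lem:InvZeta} and~\ref{lem:InvG} reduce these to algebraic combinations of $|G^\gamma|^{\pm s}$, $|\Im G^\gamma|^{-s}$, $|\zeta^\gamma|^{\pm s}$, $|R^\pm_\gamma|^{\pm s}$ and $(\Im R^\pm_\gamma)^{-s}$, all of whose $\mathbb{P}$-moments were shown in Section~\ref{sec:greencons} to be finite uniformly in $\eta\in(0,1)$, starting from \Green{}. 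The remaining elementary factors $S_\gamma(L_b)$, $C_\gamma(L_b)$ and $\int_0^{L_b}S^n_\gamma S^m_\gamma\,\dd x_b$ and their inverses are uniformly bounded on $I+\ii[0,1]$ via \eqref{e:cj}, which in turn uses \NonDirichlet{} and \Data{}. Assembling these uniform bounds through the H\"older product concludes the proof.
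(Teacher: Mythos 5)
Your argument is essentially the paper's own proof: decouple the product via H\"older's inequality into single-bond averages, pass to the limit with Proposition~\ref{prop:NowGreenIsReallyCool}, and control each elementary factor by the uniform moment bounds of Section~\ref{sec:greencons} (with Lemmas~\ref{lem:InvZeta} and~\ref{lem:InvG} handling the two singular factors and \eqref{e:cj} the rest). One small inaccuracy: Definition~\ref{def:GeneralContinuousOp} admits factors such as $\tilg_N^\gamma(o_{b_1},t_{b_k})$, which depend on the \emph{entire} path rather than on a single bond or an adjacent pair, so your H\"older step does not apply to them as written; the fix, which the paper states explicitly, is to invoke the multiplicative property \eqref{e:greenmul} to rewrite such a factor as $\tilg_N^\gamma(o_{b_1},o_{b_1})\zeta^\gamma(b_1)\cdots\zeta^\gamma(b_k)$, after which your single-bond factorization applies verbatim.
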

\begin{proof}
By definition $F^{\gamma}(b_0;b_k)$ is a sum or product of quantities in Definition~\ref{def:GeneralContinuousOp}. If $F^\gamma(b_0;b_k) = \prod_{i=0}^k T^\gamma_i(b_i)$, then using H\"older's inequality with $\sum_{i=0}^k \frac{1}{p_i}=1$, we have
\[
\Big|\frac{1}{N}\sum_{(b_0;b_k)} F^\gamma(b_0;b_k)\Big| \le \prod_{i=0}^k \Big(\frac{1}{N}\sum_{(b_0;b_k)} |T^\gamma_i(b_i)|^{p_i}\Big)^{1/p_i} \le C_{D,k} \prod_{i=0}^k\Big(\frac{1}{N}\sum_{b\in B_N}|T^\gamma_i(b)|^{p_i}\Big)^{1/p_i}.
\]
Taking the limit $N\To+\infty$, Proposition~\ref{prop:NowGreenIsReallyCool} thus gives
\begin{multline*}
\mathbb{E}_\mathbb{P}\bigg[  \sum_{(b_1,\dots,b_k)\in \mathrm{B}^{b_0}_k} \hat{F}^\gamma (b_0,\dots,b_k) + \sum_{(b_{-k},\dots,b_{-1})\in \mathrm{B}_{k,b_0}} \hat{F}^\gamma (\hat{b}_0,\dots,\hat{b}_{-k})\bigg]\\
\le C_{D,k}\prod_{i=0}^k\Big(\mathbb{E}_\mathbb{P}\left[ T^\gamma_i (b_0) + T^\gamma_i(\hat{b}_0)\right]^{p_i}\Big)^{1/p_i}.
\end{multline*}
We are thus reduced to the case $k=0$. But we already showed above how to bound each of the quantities $T^\gamma(b) = \zeta^\gamma(b)$, $R_\gamma^{+}(o_b)$,... etc.\ uniformly. $R^+_\gamma(t_b)$ is also controlled since $ R_\gamma^+(t_b) = \sum_{b^+\in\mathcal{N}_b^+}  R_\gamma^+(o_{b^+}) - \alpha_{t_b}$. $R^-_\gamma(t_b) = R^+_\gamma(o_{\widehat{b}})+\frac{C_\gamma(L_b)-S_\gamma'(L_b)}{S_\gamma(L_b)}$ is controlled as well.

The case when $F^\gamma(b_0;b_k) = G^\gamma(o_{b_0},t_{b_k})$ is controlled as above, we just view it as the limit of $\tilg_N^\gamma(o_{b_0},t_{b_k})$, which has a product form by \eqref{e:greenmul}.
\end{proof}

Combining Proposition~\ref{prop:NowGreenIsReallyCool} and \ref{prp:b7}, we deduce:

\begin{cor}\label{cor:ConfinedTilMay11}
Suppose $\cQ_N$ satisfy \emph{\BST{}}, \emph{\Data{}}, \emph{\NonDirichlet{}} and \emph{\Green{}}. Let $k\geq 0$, and let $F^{\gamma} \in \mathscr{L}_{k+1}^{\gamma}$. Then we have
\[
\sup_{\eta_0\in (0,\eta_{\mathrm{Dir}})}  \limsup_{N\to +\infty}  \frac{1}{N}  \sup_{\eta_1\in (-\frac{\eta_0}{2},\frac{\eta_0}{2})}\sup_{\lambda\in I_1} \sum_{(b_1,\dots,b_k)\in \mathrm{B}_k} | F^{\lambda+\ii\eta_0+\ii\eta_1}(b_0,\dots,b_k)| < +\infty.
\]
\end{cor}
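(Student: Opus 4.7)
The plan is to combine Proposition~\ref{prop:NowGreenIsReallyCool} (which gives convergence of the empirical averages to expectations with respect to $\mathbb{P}$, uniformly on compact subsets of $\C^+$) with Proposition~\ref{prp:b7} (which controls the expectations uniformly as $\Im\gamma \downarrow 0$ thanks to the moment assumption \Green{}). The delicate point is that the $\limsup_N$ appears inside the outer supremum over $\eta_0$ but outside the inner supremum over $(\eta_1,\lambda)$, so we need the convergence in $N$ to be uniform over the inner parameters for each fixed $\eta_0$.

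First I would fix $\eta_0\in(0,\eta_{\mathrm{Dir}})$ and observe that the set
\[
\Xi_{\eta_0} := \overline{I_1} + \ii\bigl[\tfrac{\eta_0}{2},\tfrac{3\eta_0}{2}\bigr]
\]
is a compact subset of $\C^{+}$, and contains every $\gamma=\lambda+\ii(\eta_0+\eta_1)$ appearing in the statement. Next I would reduce to controlling sums of products of single-bond factors: by the definition of $\mathscr{L}_{k+1}^{\gamma}$, any $F^\gamma\in\mathscr{L}_{k+1}^{\gamma}$ can be written as a finite sum of products of the form $\prod_{i=0}^{k} T_i^\gamma(b_i)$, where each $T_i^\gamma\in\mathscr{L}_1^{\gamma}$ (Green's functions $G^\gamma(o_{b_i},t_{b_j})$ are handled via \eqref{e:greenmul}, which factorises them into products of $\hat\zeta^\gamma$'s along the path and a single $G^\gamma(o_{b_0},o_{b_0})$, staying in the allowed class). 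It then suffices to bound each product using H\"older's inequality,
\[
\frac{1}{N}\sum_{(b_0;b_k)\in\mathrm{B}_{k+1}} \prod_{i=0}^{k} |T_i^\gamma(b_i)|
\;\le\; C_{D,k}\prod_{i=0}^{k}\bigg(\frac{1}{N}\sum_{b\in B_N}|T_i^\gamma(b)|^{p_i}\bigg)^{1/p_i},
\]
with $\sum 1/p_i=1$, using that the non-backtracking constraint only removes edges and that each vertex has degree at most $D$.

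The third step is to take the limit $N\to\infty$. For each factor, $|T_i^\gamma(b)|^{p_i}$ still belongs to $\mathscr{L}_1^{\gamma}$ (Definition~\ref{def:GeneralContinuousOp} is stable under taking powers of moduli because it is stable under products and complex conjugation). Applying Proposition~\ref{prop:NowGreenIsReallyCool} on $\Xi_{\eta_0}$ gives uniform convergence (in $\gamma\in\Xi_{\eta_0}$) of $\frac{1}{N}\sum_b |T_i^\gamma(b)|^{p_i}$ to $c_{d,L}\,\mathbb{E}_{\mathbb{P}}[|T_i^\gamma(b)|^{p_i}+|T_i^\gamma(\hat b)|^{p_i}]$. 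Uniform convergence lets us interchange $\limsup_N$ with $\sup_{\eta_1,\lambda}$, producing
\[
\limsup_{N\to\infty}\; \sup_{\eta_1,\lambda} \frac{1}{N}\sum_{(b_0;b_k)} |F^{\lambda+\ii(\eta_0+\eta_1)}(b_0;b_k)|
\;\le\; C_{D,k} \sup_{\gamma\in\Xi_{\eta_0}} \prod_{i=0}^{k} \bigl(\mathbb{E}_{\mathbb{P}}|T_i^\gamma|^{p_i}+\mathbb{E}_{\mathbb{P}}|T_i^\gamma(\hat\cdot)|^{p_i}\bigr)^{1/p_i}.
\]

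Finally, taking the supremum over $\eta_0\in(0,\eta_{\mathrm{Dir}})$ sends $\gamma$ through the larger strip $I_1+\ii(0,\frac{3\eta_{\mathrm{Dir}}}{2})$, on which Proposition~\ref{prp:b7} (applied to each $|T_i^\gamma|^{p_i}\in\mathscr{L}_1^{\gamma}$) bounds each expectation uniformly—for the region $\Im\gamma\in[\eta_{\mathrm{Dir}},\frac{3\eta_{\mathrm{Dir}}}{2}]$ bounded away from the real axis, the trivial resolvent estimate $\|(H-\gamma)^{-1}\|\le(\Im\gamma)^{-1}$ together with \Data{} already gives the control needed, while near the real axis we use the full strength of \Green{} as packaged in Proposition~\ref{prp:b7}. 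The main (and really only) obstacle is the uniformity as $\eta_0\downarrow 0$; this is exactly where the moment assumption \Green{} is essential, since without it the expectations on the right would blow up. Once each factor is bounded uniformly, the corollary follows.
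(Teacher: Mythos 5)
Your proof is correct and follows essentially the same route as the paper, which derives the corollary simply by "combining Proposition~\ref{prop:NowGreenIsReallyCool} and \ref{prp:b7}": uniform convergence of the empirical averages on the compact strip $\overline{I_1}+\ii[\eta_0/2,3\eta_0/2]$ to justify interchanging $\limsup_N$ with the inner suprema, followed by the uniform-in-$\eta$ moment bounds of Proposition~\ref{prp:b7} coming from \Green{}. Your write-up merely fills in the details (the H\"older factorization is exactly the one used inside the paper's proof of Proposition~\ref{prp:b7}), so there is nothing to flag.
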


\begin{prp}\label{prop:GreenisCool3}
Suppose $\cQ_N$ satisfy \emph{\BST{}}, \emph{\Data{}}, \emph{\NonDirichlet{}} and \emph{\Green{}}. Let $k\geq 0$, and let $F^{\gamma} \in \mathscr{L}_{k+1}^{\gamma}$. For almost all $\lambda\in I_1$, for all $s>0$ and all $t,t'\in (0,2)$, these operators also satisfy
\[
 \lim\limits_{\eta_0\downarrow 0} \limsup\limits_{N\to +\infty}  \frac{1}{N}   \sum_{(b_0,\dots,b_k)\in \mathrm{B}_{k+1}} \left| F^{\lambda+t\ii\eta_0}(b_0,\dots,b_k) - F^{\lambda+t'\ii\eta_0}(b_0,\dots,b_k)\right|^s =0.
\]
\end{prp}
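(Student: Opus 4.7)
The plan is to decouple the two limits by first using Benjamini--Schramm convergence to pass from a sum over $\mathrm{B}_{k+1}$ in $\cQ_N$ to an expectation under $\mathbb{P}$, and then to exploit boundary existence of $F^{\lambda+\ii 0}$ combined with a uniform integrability argument to send $\eta_0 \downarrow 0$. Concretely, for fixed $\lambda$, $\eta_0$, $t$, $t'$, $s$, the map
\[
\Phi^{\lambda}_{\eta_0,t,t'}: [\cQ,(b_0,x_0)] \longmapsto \sum_{(b_1;b_k)\in \mathrm{B}_k^{b_0}} \bigl|F^{\lambda+t\ii\eta_0}(b_0;b_k) - F^{\lambda+t'\ii\eta_0}(b_0;b_k)\bigr|^s
\]
is continuous on $\mathbf{Q}_\ast^{D,\mathrm{m},\mathrm{M}}$ by exactly the same arguments used for continuity in Proposition~\ref{prop:NowGreenIsReallyCool} (each building block of $\mathscr{L}_k^\gamma$ is continuous in $[\cQ,(b_0,x_0)]$, and continuity is preserved by the algebraic operations and $|\cdot|^s$). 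Together with its edge-reversed analogue, weak convergence $\nu_{\cQ_N} \to \mathbb{P}$ then yields
\[
\limsup_{N\to\infty} \frac{1}{N}\sum_{(b_0;b_k)\in \mathrm{B}_{k+1}} \bigl|F^{\lambda+t\ii\eta_0} - F^{\lambda+t'\ii\eta_0}\bigr|^s \le c_{d,L}\, \mathbb{E}_{\mathbb{P}}\bigl[\Phi^{\lambda}_{\eta_0,t,t'} + \widehat{\Phi}^{\lambda}_{\eta_0,t,t'}\bigr],
\]
reducing everything to showing that the right-hand side vanishes as $\eta_0 \downarrow 0$, for almost every $\lambda \in I_1$.

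For the pointwise step, I would invoke Remark~\ref{rem:limits}, enhanced by a countable union to handle all non-backtracking paths simultaneously: since $\mathscr{L}^\gamma_{k+1}$ is built from products of Herglotz functions and holomorphic functions of $\gamma$, for each fixed $(b_1,\ldots,b_k)$ the map $\eta \mapsto F^{\lambda+\ii\eta}(b_0;b_k)$ admits a finite boundary value for Lebesgue-a.e.\ $\lambda$ and $\mathbb{P}$-a.e.\ $[\cQ,(b_0,x_0)]$. A Fubini argument over the triple product (rooted graph, choice of path, energy $\lambda$) yields a Lebesgue-null set $\mathcal{A}\subset I_1$ outside of which, $\mathbb{P}$-almost surely, $F^{\lambda+\eta_j\ii}(b_0;b_k)$ converges to a common limit $F^{\lambda+\ii 0}(b_0;b_k)$ for every sequence $\eta_j \downarrow 0$ and every admissible path. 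Applying this with $\eta_j = t\eta_0$ and $t'\eta_0$ forces the integrand of $\Phi^{\lambda}_{\eta_0,t,t'}$ to converge to $0$ pointwise $\mathbb{P}$-a.s.\ for each $\lambda \notin \mathcal{A}$.

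To upgrade pointwise convergence into $L^1(\mathbb{P})$ convergence, I would establish uniform integrability using the moment bounds of Proposition~\ref{prp:b7}. Since $\mathscr{L}_{k+1}^\gamma$ is stable under taking $|\cdot|^{2s}$, the proposition gives
\[
\sup_{\eta_0 \in (0,\eta_{\mathrm{Dir}}),\, t \in (0,2)} \mathbb{E}_{\mathbb{P}}\biggl[\sum_{(b_1;b_k)\in \mathrm{B}^{b_0}_k} \bigl|F^{\lambda+t\ii\eta_0}(b_0;b_k)\bigr|^{2s}\biggr] < \infty
\]
for every $\lambda \in I_1$. Combined with the elementary inequality $|a-b|^s \le 2^s(|a|^s + |b|^s)$, this produces a uniform bound on the $L^2(\mathbb{P})$-norm of $\Phi^{\lambda}_{\eta_0,t,t'}$, hence uniform integrability. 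Vitali's theorem then converts the $\mathbb{P}$-a.s.\ convergence of the previous step into $\mathbb{E}_{\mathbb{P}}[\Phi^{\lambda}_{\eta_0,t,t'}] \to 0$, which closes the argument.

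The main delicate point is Step 2: making sure the null set of exceptional $\lambda$'s can be chosen independently of the path $(b_1,\ldots,b_k)$ and of the marked edge $b_0$. Because the set of non-backtracking paths of bounded length starting at a fixed root is countable on each tree, a standard Fubini/measurability argument (as already sketched in Remark~\ref{rem:limits}) resolves this, but some care is required because the class of admissible paths varies with $\cQ$; I would fix this by working with an ambient countable collection of combinatorial patterns indexed by $D^k$ and discarding null events pattern by pattern.
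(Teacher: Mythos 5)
Your proposal is correct and follows essentially the same route as the paper's proof: pass to the $\mathbb{P}$-expectation via Benjamini--Schramm convergence (as in \eqref{eq:stuckAtHomeForTenDays}), use Remark~\ref{rem:limits} to get $\mathbb{P}$-a.s.\ pointwise convergence of the integrand to zero for a.e.\ $\lambda$, and conclude by uniform integrability from the $L^p(\mathbb{P})$ moment bounds of Proposition~\ref{prp:b7}. The only cosmetic difference is that you phrase the last step via Vitali's theorem with an $L^2$ bound, while the paper invokes the same fact with a general $p>1$ bound.
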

\begin{proof}
As in \eqref{eq:stuckAtHomeForTenDays}, we have for every $\eta_0>0$,
\begin{align*}
&\lim\limits_{N\to +\infty}  \frac{1}{N}   \sum_{(b_0;b_k)\in \mathrm{B}_{k+1}} \left| F^{\lambda+t\ii\eta_0}(b_0;b_k) - F^{\lambda+t'\ii\eta_0}(b_0;b_k)\right|^s\\
& = c_{D,L}\mathbb{E}_\mathbb{P}\bigg[ \sum_{(b_1;b_k)\in \mathrm{B}_{k}^{b_0}} \left| F^{\lambda+t\ii\eta_0}(b_0;b_k) - F^{\lambda+t'\ii\eta_0}(b_0;b_k)\right|^s \\ &\qquad+\sum_{(b_{-k};b_{-1})\in \mathrm{B}_{k,b_0}}
\left| F^{\lambda+t\ii\eta_0}(\hat{b}_{0};\hat{b}_{-k}) - 
  F^{\lambda+t'\ii\eta_0}(\hat{b}_{0};\hat{b}_{-k})\right|^s\bigg].
\end{align*}
By Remark~\ref{rem:limits}, we know that for $\lambda\in I_1\cap \cA^c$, the integrand converges to zero as $\eta_0\downarrow 0$, $\prob$-a.s. Moreover, it follows from Proposition~\ref{prp:b7} that for any $p>1$, the integrand is bounded in $L^p(\mathbb{P})$ independently of $\eta_0$.

The result then follows by the classical fact that, if $(f_n)$ is a sequence of measurable functions on some probability space such that $f_n(x)$ converges to zero almost everywhere and $\sup\limits_n \|f_n\|_{L^p} <\infty$ for some $p>1$, then $\lim\limits_{n\to \infty} \|f_n\|_{L^1} = 0$ (indeed, the $p$-moment assumption implies the sequence is uniformly integrable). 
\end{proof}

\providecommand{\bysame}{\leavevmode\hbox to3em{\hrulefill}\thinspace}
\providecommand{\MR}{\relax\ifhmode\unskip\space\fi MR }
\providecommand{\MRhref}[2]{%
  \href{http://www.ams.org/mathscinet-getitem?mr=#1}{#2}
}
\providecommand{\href}[2]{#2}

\end{document}